\documentclass[a4paper, amsfonts, amssymb, amsmath, reprint, showkeys, nofootinbib, onecolumn, superscriptaddress]{revtex4-2}

\usepackage[english]{babel}
\usepackage[utf8]{inputenc}
\usepackage{siunitx}
\usepackage{amsthm}
\usepackage{mathtools}
\usepackage{physics}
\usepackage{xcolor}
\usepackage{graphicx}
\usepackage[left=23mm,right=13mm,top=35mm,columnsep=15pt]{geometry} 
\usepackage{adjustbox}
\usepackage{placeins}
\usepackage[T1]{fontenc}
\usepackage{lipsum}
\usepackage{csquotes}
\usepackage{caption}
\usepackage{wrapfig}
\usepackage{gensymb}
\usepackage{xparse} 
\usepackage{bbm}
\usepackage{makecell}
\usepackage{array}
\usepackage{float}
\usepackage{tikz}
\usetikzlibrary{
    arrows.meta,
    positioning,
    shapes,
    calc,
    fit
}
\usepackage{subcaption}

\renewcommand{\thesection}{\arabic{section}}
\renewcommand{\thesubsection}{\thesection.\arabic{subsection}}
\renewcommand{\thesubsubsection}{\thesubsection.\arabic{subsubsection}}

\usepackage{todonotes}

\makeatletter
\DeclareRobustCommand{\headingcite}[1]{%
  \begingroup
    \@fileswfalse
    \cite{#1}%
  \endgroup
}
\makeatother

\usepackage{hyperref} 
\usepackage[capitalize]{cleveref} 

\crefname{enumi}{}{} 
\hypersetup{
	colorlinks = true,
	linkcolor =blue,
	citecolor=blue, 
	urlcolor=blue 
}

\DeclareCaptionType{protocol}[Protocol][List of Protocols]
\crefname{appendix}{App.}{Apps.}
\crefname{section}{Sec.}{Secs.}
\crefname{footnote}{footnote}{footnotes}

\theoremstyle{definition}
\newtheorem{assumption}{Assumption}
\crefname{assumption}{Assumption}{Assumptions}

\makeatletter
\renewcommand{\p@section}{}
\renewcommand{\p@subsection}{}
\renewcommand{\p@subsubsection}{}
\makeatother

\newtheorem{theorem}{Theorem}
\newtheorem{lemma}{Lemma}

\theoremstyle{definition}
\newtheorem{definition}{Definition}
\newtheorem{remark}{Remark}

\newtheorem{assumptionbis}{Assumption}

\crefname{assumptionbis}{Assumption}{Assumptions}
\Crefname{assumptionbis}{Assumption}{Assumptions}

\usepackage{comment}

\usepackage[most]{tcolorbox}

\newtcolorbox{constraintblock}[1]{
  enhanced,
  colback=white,
  colframe=black,
  boxrule=0.5pt,
  arc=1pt,
  left=4pt,
  right=4pt,
  top=-4pt,
  bottom=4pt,
  title={#1},
  coltitle=black,
  fonttitle=\bfseries,
  attach boxed title to top left={
    xshift=5mm,
    yshift=-2mm
  },
  boxed title style={
    colback=white,
    colframe=white,
    boxrule=0pt,
    left=2pt,
    right=2pt,
    top=0pt,
    bottom=0pt
  }
}

\newlength\colsep   
\newlength\offsetX  
\newlength\offsetY  

\newcommand{\totRounds}{n}
\newcommand{\fssCutoff}{{N_\flagSpaceReg}}
\newcommand{\tagCutoff}{{N_\mathrm{ph}}} 
\newcommand{\blockVar}{m}
\newcommand{\blockVarReg}{M}

\newcommand{\epsCor}{\varepsilon_\mathrm{cor}}
\newcommand{\epsSecr}{\varepsilon_\mathrm{sec}}
\newcommand{\epsSecu}{\varepsilon}

\newcommand{\keyl}{\ell}
\newcommand{\ECcost}{\lambda_\mathrm{EC}}
\newcommand{\fFull}{\hat{f}_{\mathrm{full}}^\mathrm{QKD}}
\newcommand{\varDecReg}{\widehat{C}}
\newcommand{\interReg}{\bar{C}} 
\newcommand{\interAlph}{\bar{\mathcal{C}}}
\newcommand{\cppReg}{C_\mathrm{CPP}}
\newcommand{\varDecVal}{\hat{c}}
\newcommand{\varDecGen}{\hat{c}_\mathtt{gen}} 
\newcommand{\ftradeoff}[2]{f_{\hat c_{#1}^{#2}}}
\newcommand{\varDecAlph}{\mathcal{C}}

\newcommand{\kappaQKDVirt}{\tilde{\kappa}^\mathrm{QKD}} 

\newcommand{\annFunc}[1][]{f_\mathrm{ann}^{({#1})}}
\newcommand{\keymapFunc}[1][]{f_\mathrm{kmap}^{({#1})}}
\newcommand{\annKeyMap}[1][]{\mathcal{T}_{#1}^{\mathrm{clas}}} 

\newcommand{\gMap}{\mathcal{T}}
\newcommand{\gMapTildeFull}{\widetilde{\mathcal{T}}^\mathrm{full}} 

\newcommand{\gMapMod}{\widetilde{\mathcal{T}}} 
\newcommand{\attackCh}{\mathcal{A}} 
\newcommand{\setAttackCh}{\boldsymbol{\mathcal{A}}^\mathrm{set}} 
\newcommand{\setAttackChVirt}{\boldsymbol{\widehat{\mathcal{A}}}^\mathrm{set}} 

\newcommand{\secretReg}{S} 
\newcommand{\secretVal}{s} 
\newcommand{\preampString}{\boldsymbol{S}} 
\newcommand{\secretAlph}{\mathcal{\secretReg}} 

\newcommand{\qkdMap}{\mathcal{M}} 
\newcommand{\qkdMapMod}{\widetilde{\mathcal{M}}} 

\newcommand{\varDecFunc}{\phi} 
\newcommand{\secretFunc}{\psi} 

\newcommand{\ppMap}{\mathcal{M}_\mathrm{CPP}} 

\newcommand{\timeA}{t^A}
\newcommand{\timeB}{t^B}
\newcommand{\timeAnn}{t^\mathrm{ann}}

\newcommand{\perfectReplaceCh}{\mathcal{R}_\mathrm{ideal}} 

\newcommand{\Aprime}{A'}
\newcommand{\Amarg}{A} 
\newcommand{\Ameas}{\bar{A}}
\newcommand{\Ashield}{\hat{A}} 
\newcommand{\Aclassical}{X}
\newcommand{\AclassicalVal}{x}
\newcommand{\AclassicalAlph}{\mathcal{X}}
\newcommand{\Astate}{\sigma}
\newcommand{\AkeyReg}{K_A}
\newcommand{\AstateVirt}{\tau} 
\newcommand{\Apovmel}[1][]{{M_\AclassicalVal^{\Amarg_{#1}}}}

\newcommand{\Bmeas}{B} 
\newcommand{\BmeasSquash}{B'} 
\newcommand{\BclassicalReg}{Y} 
\newcommand{\BclassicalVal}{y}
\newcommand{\BclassicalAlph}{\mathcal{Y}} 
\newcommand{\Bpovmel}[1][]{{M_\BclassicalVal^{\Bmeas_{#1}}}}
\newcommand{\BpovmelTilde}[1][]{{\widetilde{M}_y^{\BmeasSquash_{#1}}}}
\newcommand{\BpovmelBlock}[2]{{M_{\BclassicalVal}^{\Bmeas_{#1, #2}}}}
\newcommand{\BpovmelBlockTilde}[2]{{\widetilde{M}_{\BclassicalVal}^{\Bmeas_{#1, #2}}}}

\newcommand{\BpovmelTarg}[1][]{{F_\BclassicalVal^{\BmeasSquash_{#1}}}}
\newcommand{\BkeyReg}{K_B}
\newcommand{\lambdaMin}{\lambda_\mathrm{min}}

\newcommand{\Ereg}{E} 
\newcommand{\eveCopyReg}{\widetilde{C}} 
\newcommand{\evePurReg}{\widehat{E}} 

\newcommand{\srcImpBnd}[1][]{{\delta_{#1}}}

\newcommand{\dtImpBnd}[1][]{{q_{#1}}}
\newcommand{\dtImpBndMax}[1][]{{q_{#1}^\mathrm{max}}} 
\newcommand{\srcMap}{\Psi}
\newcommand{\squashMap}{\Lambda}
\newcommand{\fssMap}{\Lambda}
\newcommand{\noiseChMap}{\Phi}
\newcommand{\flagSpaceReg}{F} 

\newcommand{\setAliceMarginalsBar}[1][]{\boldsymbol{\bar \sigma}_{\Amarg_{#1}}^\mathrm{set}}
\newcommand{\setAliceMarginalsPrime}[1][]{\boldsymbol{\sigma}_{\Amarg_{#1}}^{\mathrm{set}'}} 

\newcommand{\setAliceMarginalsVirt}[1][]{\boldsymbol{\tau}_{\Amarg_{#1}}^\mathrm{set}} 

\newcommand{\gMapVirt}[1][]{\widehat{\mathcal{T}}_{#1}}

\newcommand{\srcImpBndEnc}[1][]{{\delta_{#1}^{\mathrm{enc}}}}
\newcommand{\srcImpBndMode}[1][]{{\delta_{#1}^{\mathrm{mod}}}}

\newcommand{\characCutoff}{{N_\mathrm{ch}}} 

\newcommand{\intensityAlph}{\mathcal{I}}

\newcommand{\AstatePrepared}{\psi}

\newcommand{\srcImpBndTHA}[1][]{{\delta_{#1}^{\mathrm{tha}}}}

\newcommand{\srcImpBndInt}[1][]{{\delta_{#1}^{\mathrm{int}}}}
\newcommand{\srcImpBndPN}[1][]{{\delta_{#1}^{\mathrm{pn}}}} 

\newcommand{\phaseCutoff}{N_A}

\newcommand{\nbDetec}{N_\mathrm{det}} 
\newcommand{\dtImpBndDCR}[1][]{{q_{#1}^{\mathbf{d_B}}}} 
\newcommand{\dtImpBndEff}[1][]{{q_{#1}^{\boldsymbol{\eta}}}} 

\newcommand{\genDensity}{\rho} 
\newcommand{\genDensityAlt}{\omega} 
\newcommand{\genQReg}{Q} 
\newcommand{\genClassIndex}{z} 
\newcommand{\genClassReg}{Z} 
\newcommand{\genClassAlph}{\mathcal{Z}}
\newcommand{\genpovmel}{E_{\genClassIndex}} 
\newcommand{\genpovmelTilde}{\widetilde{E}_{\genClassIndex}} 
\newcommand{\genpovmelBlock}[1][]{E_{{\genClassIndex},{#1}}} 
\newcommand{\genpovmelBlockTilde}[1][]{\widetilde{E}_{\genClassIndex, {#1}}} 
\newcommand{\genState}{\varphi} 
\newcommand{\genPurReg}{R} 

\newcommand{\cptp}{\mathrm{CPTP}}
\newcommand{\hilbert}{\mathcal{H}}
\newcommand{\setDensity}{\mathcal{S}} 

\newcommand{\identity}{\mathbbm{1}}
\newcommand{\proj}{\Pi} 
\newcommand{\purFunc}{\mathtt{Pur}} 

\newcommand{\fRenyiUpGen}{\widetilde{H}_\alpha^{\uparrow, f}}
\newcommand{\fRenyiUpEnt}[2]{\widetilde{H}_\alpha^{\uparrow, f_{\hat c_{#1}^{#2}}}}
\newcommand{\fRenyiUpEntFull}{\widetilde{H}_\alpha^{\uparrow, \fFull}}

\newcommand{\renyidown}{\widetilde{H}_\alpha^\downarrow}
\newcommand{\renyiup}{\widetilde{H}_\alpha^\uparrow}

\newcommand{\renyidiv}{\widetilde{D}_\alpha}

\newcommand{\gramMatrix}{G}
\newcommand{\probMatrix}{P}
\newcommand{\probMatrixVec}{\boldsymbol{s}} 

\newcommand{\prob}{P}
\newcommand{\ptest}{p_\mathtt{test}}

\newcommand{\marginalMap}{\mathcal{Q}} 
\newcommand{\marginalSet}{\boldsymbol{\mathcal{Q}}^\mathrm{set}} 
\newcommand{\marginalSetVirt}{\boldsymbol{\widehat{\mathcal{Q}}}^\mathrm{set}} 
\newcommand{\marginalSetTilde}{\boldsymbol{\widetilde{\mathcal{Q}}}^\mathrm{set}} 

\newcommand{\stateSet}{\Sigma} 

\newcommand{\yield}{\boldsymbol{Y}_j}
\newcommand{\probSimplex}{\mathbb{P}}

\newcommand{\setPos}{\mathrm{Pos}} 

\newcommand{\renyiMblock}{v_\alpha}
\newcommand{\unitStatsVec}[1][]{\hat{e}_{\varDecFunc_j({#1})}}

\usepackage{tocvsec2}

\begin{document}
\title{Security framework for practical quantum key distribution with imperfect devices}

\author{Jerome Wiesemann}
\affiliation{Institute for Quantum Computing and Department of Physics and Astronomy, University of Waterloo, Waterloo, Ontario, Canada, N2L 3G1}

\author{John Burniston}
\affiliation{Institute for Quantum Computing and Department of Physics and Astronomy, University of Waterloo, Waterloo, Ontario, Canada, N2L 3G1}

\author{Devashish Tupkary}
\affiliation{Institute for Quantum Computing and Department of Physics and Astronomy, University of Waterloo, Waterloo, Ontario, Canada, N2L 3G1}

\author{Norbert Lütkenhaus}
\affiliation{Institute for Quantum Computing and Department of Physics and Astronomy, University of Waterloo, Waterloo, Ontario, Canada, N2L 3G1}

\date{\today} 

\begin{abstract}
    Practical quantum key distribution (QKD) systems inevitably exhibit imperfections in both the source and detector. At the same time, the behavior of these imperfect devices is never exactly known due to characterization uncertainty, parameter fluctuations, and potential influence by an adversary. In this work, we present a security proof for generic prepare-and-measure QKD protocols, including decoy-state BB84, with imperfect and imperfectly characterized sources and detectors using  the marginal-constrained entropy accumulation theorem (MEAT). Our approach uses a sequence of proof technique independent source maps and squashing maps, yielding a very modular framework. We show that practical key rates can be achieved even when multiple imperfections are combined. More broadly, our work provides a unified foundation that avoids the need for dedicated protocol-specific arguments and can be readily extended to other protocols and device imperfections.
\end{abstract}

\maketitle
\tableofcontents

\section{Introduction} 
\label{sec:introduction}
\noindent Quantum key distribution (QKD) allows two distant parties to establish a secure key by means of an information-theoretically secure protocol \cite{Portmann22, Mayers01, Scarani09, benor2004universalcomposablesecurityquantum, ferradini2025definingsecurityquantumkey, lmcs:9979,tupkary2025qkdsecurityproofsdecoystate}. However, practical (device-dependent) QKD systems inevitably exhibit imperfections in both the source and the detector \cite{BSI24, makarov_preparing_2024, PhysRevA.78.042333, qi2006timeshiftattackpracticalquantum,PhysRevLett.107.110501,Weier_2011,PhysRevLett.85.1330,PhysRevA.73.022320,PhysRevA.74.022313}, and these imperfections must be incorporated into the security proof to obtain rigorous security guarantees and close potential loopholes. Over the past two decades, significant progress has been made in developing security proofs for practical implementations with imperfect devices \cite{gottesman2004securityquantumkeydistribution,curras-lorenzo_security_2026, navarrete2026numericalsecurityanalysispractical, pereira_optimal_2025, Curras-Lorenzo25Optica, wang2025phaseerrorestimationpassive, nahar_imperfect_2026,sixto2026finitekeysecurityanalysisdecoystate,Tupkary2025phaseerrorrate,kamin_renyi_2025}. 

At the same time, the precise behavior of these imperfect devices is never known exactly. The characterization process inherently carries uncertainty, device parameters may fluctuate during the protocol, and some imperfections may even be partially under adversarial control. It is therefore not sufficient to account only for device imperfections. One must also account for uncertainty about which device description is realized, which we refer to as \textit{imperfect knowledge} of the devices. In this work, we address both aspects. We use the marginal-constrained entropy accumulation theorem (MEAT) \cite{arqand_marginal-constrained_2025}, which has been shown to be highly modular with respect to protocol variations \cite{tupkary_rigorous_2026} while yielding tight key rates \cite{kamin_renyi_2025}.

This work builds upon Ref.~\cite{tupkary_rigorous_2026} which provides a rigorous and complete security proof for a broad class of prepare-and-measure QKD protocols, including decoy-state BB84 \cite{Hwang03, wang2005, Lo05}, and addresses the gaps present in previous analyses~\cite{tupkary2025qkdsecurityproofsdecoystate} pertaining to protocol security of QKD. However, Ref.~\cite{tupkary_rigorous_2026} assumes that the source and detector are perfectly characterized, and does not cover imperfections. In this work, we extend the analysis to incorporate practical source and detector imperfections together with imperfect characterization. Our work can therefore be viewed as a direct extension of Ref.~\cite{tupkary_rigorous_2026}, largely following the same notation and inheriting much of its level of rigor. The key contributions can be summarized as follows:
\begin{itemize}
\renewcommand{\labelitemi}{\small$\bullet$}
    \item Develop a general framework combining a wide range of source and detector imperfections, while allowing for imperfect characterization of the devices.
    \item Show that practical key rates can still be achieved under realistic experimental parameters, even when combining multiple imperfections simultaneously.
    \item Provide a modular approach that can readily be extended to other protocols and additional device imperfections beyond those considered in this work.
\end{itemize}

Typically, QKD security proofs start with the protocol description and device parameters as input and output a valid secure key rate. If the devices are imperfectly characterized, however, their description is no longer known exactly. A natural approach is therefore to prove security for the worst case over the set of device descriptions compatible with the characterization and model assumptions. This is, in essence, the approach taken in this work. More specifically, instead of assuming that the source prepares a fixed set of states and the detectors measure with a fixed POVM, we assume that they belong to known sets of possible states and POVMs determined by the characterization of the devices. Directly optimizing over such sets may be impractical: the states and POVMs may be infinite-dimensional, and the resulting optimization problem may be numerically intractable. To resolve this, we apply a sequence of source maps \cite{gottesman2004securityquantumkeydistribution, nahar_imperfect_2023, Curras-Lorenzo_2025} and squashing maps \cite{PhysRevLett.101.093601, PhysRevA.84.020303, PhysRevA.89.012325, PRXQuantum.5.040315, PhysRevA.81.012328, PhysRevA.78.032302, PRXQuantum.2.020325, zhang_security_2021, nahar_imperfect_2026} at the protocol level before applying the MEAT.  

An advantage of this approach is that the treatment of imperfections is largely proof-technique independent. This reduction is also the key feature that makes our analysis modular, in the sense that it allows us to directly accommodate future improvements of the MEAT (e.g. potentially with a secret memory register), as well as to easily extend the techniques to other protocols. On the source side, device imperfections enter the analysis as an optimization over the Gram matrix of purifications of the signal states \cite{pereira_optimal_2025}. On the detector side, we combine the flag-state squasher \cite{zhang_security_2021} with the noise channel construction from Ref.~\cite{nahar_imperfect_2026} to reduce imperfectly characterized detectors to perfectly characterized detectors. Crucially, the type of imperfections considered in this work is made explicit through a small set of core assumptions on the source and detector, stated in \cref{sec:model_assumptions}. These assumptions specify the precise properties required and thereby state the model assumptions for the physical devices. The security analysis then builds directly on these core assumptions.

We illustrate the versatility of our work by incorporating a wide range of practical source and detector imperfections, listed in \cref{tab:device_imperfections_list}, including state preparation flaws, detector efficiency and dark-count rate mismatch, imperfect beamsplitter, Trojan-horse attacks, imperfectly block-diagonal sources (or imperfect phase randomization), intensity fluctuations and intensity leakage in decoy-state protocols \cite{BSI24}. Notably, our analysis also allows each of these imperfections to be only imperfectly characterized. To the best of our knowledge, no previous finite-size decoy-state analysis combines generic source and detector imperfections, intensity fluctuations and imperfect phase randomization. Moreover, the protocol parameters and parameters related to the device imperfections can be different for each round, and the protocol can support certain aspects of adaptive and time-dependent behavior relevant for satellite implementations \cite[Section 8.7.1]{tupkary_rigorous_2026}. 

We show that practical key rates can be achieved with realistic experimental parameters even when many of these imperfections are simultaneously combined (\cref{fig:keyRate_example}). For the numerical evaluation of the key rates, we rely on the approach of Ref.~\cite{kamin_renyi_2025} (which extends Ref.~\cite{winick_reliable_2018}, see also Ref.~\cite{navarro_finitesize_2026}), and which provides a security analysis for decoy-state protocols using the MEAT, and incorporates intensity fluctuations and phase imperfections. Our work extends the treatment to generic device uncertainties, source imperfections, detector imperfections, and combinations thereof. While we restrict our attention to source and detector imperfections that vary independently from round to round, certain correlated imperfections \cite{wang2025phaseerrorestimationpassive,marwah2025provingsecuritybb84source,curraslorenzo2026rigorousphaseerrorestimationsecurityframework} can be reduced to independent imperfections and in principle be incorporated \cite{meat_correlations}. We leave the explicit treatment of correlations to future work. We also compare our key rates with those obtained in recent works using phase-error estimation \cite{navarrete2026numericalsecurityanalysispractical,sixto2026finitekeysecurityanalysisdecoystate}, which have a similar scope in their combination of source and detector imperfections (\cref{fig:comparison}).

The remainder of this work is structured as follows. In \cref{sec:background}, we introduce the protocol, notation, and the notions of perfectly and imperfectly characterized devices. In \cref{sec:toolbox}, we develop the main mathematical tools used to incorporate device imperfections throughout this work, including the various source maps, squashing maps, and the MEAT framework. In \cref{sec:security_imperfect_devices}, we combine these tools to derive the general security statement for prepare-and-measure QKD with imperfect and imperfectly characterized devices, resulting in \cref{th:security_imperfect_devices}. In \cref{sec:device_imperfections}, we show how to use the framework in practice and explicitly construct the constraints corresponding to a broad range of source and detector imperfections. We illustrate the framework for a polarization-encoded decoy-state BB84 setup in \cref{sec:example_pol_encoding} (see \cref{fig:keyRate_example}). The convex optimization problem used for the numerics is derived in \cref{ap:numerics}, resulting in \cref{th:convex_opti_numerics_imperfect} and \cref{fig:opti_problem_final}.

For readers interested primarily in applying the framework rather than the technical derivations, it is sufficient to read the protocol description in \cref{sec:protocol_description} and the guide in \cref{sec:device_imperfections}. To facilitate practical use, we provide the numerical software used in this work \cite{numericalanalysis}. The code allows users to evaluate key rates for arbitrary combinations of imperfections and can easily be extended to incorporate additional imperfections or protocol variants.  
\section{Background and notation} \label{sec:background}

\noindent We denote registers by uppercase letters such as $\genClassReg$ and $\genQReg$. For any registers $\genQReg$ and $\genQReg'$, we use the shorthand notation $\genQReg \oplus \genQReg'$ to denote the register corresponding to the Hilbert space $\hilbert_\genQReg \oplus \hilbert_{\genQReg'}$. When handling multiple registers, we write $\genQReg_i^j$ to denote the sequence of registers $\genQReg_i \ldots \genQReg_j$. We denote by $\setDensity_=(\genQReg)$ and $\setDensity_\leq(\genQReg)$ the sets of normalized and sub-normalized density operators acting on register $\genQReg$, respectively. Accordingly, we write $\genDensity_\genQReg \in \setDensity_=(\genQReg)$ to indicate that $\genDensity_\genQReg$ is a density operator on register $\genQReg$. For a state $\genDensity_{\genQReg\genQReg'}$, we use $\genDensity_\genQReg$ to denote the marginal state on register $\genQReg$, obtained by tracing out $\genQReg'$. Conversely, for a state $\genDensity_\genQReg$, we use $\genDensity_{\genQReg\genQReg'}$ to denote an extension of $\genDensity_\genQReg$ to the larger register $\genQReg\genQReg'$. We denote by $\cptp(\genQReg, \genQReg')$ the set of completely positive and trace-preserving (CPTP) maps from $\genQReg$ to $\genQReg'$. Some additional information-theoretic background is provided in \cref{app:misc}. Throughout this work, we use the squared fidelity convention. A list of the symbols and notation used throughout this work is provided in \cref{app:notation}, i.e. \cref{tab:generic_notation,tab:notation_alice_bob_eve,tab:protocol_notation}.

\subsection{Protocol description}
\label{sec:protocol_description}

\noindent We consider the \nameref{prot:generic_qkd_protocol} described below, which is based on \cite[Protocol~1]{tupkary_rigorous_2026}, and covers a broad class of prepare-and-measure QKD protocols, including both decoy-state and non-decoy protocols.  Throughout this work, the handling of device imperfections is concerned only with the first few steps, including state preparation, measurement, public announcements and sifting. In particular, the subsequent post-processing steps are identical to \cite[Protocol~1]{tupkary_rigorous_2026}. This allows us to directly build on top of the security analysis of Ref.~\cite{tupkary_rigorous_2026}, inheriting both its structure and level of rigor.

For a first reading, the round index $j$ may largely be ignored. It is included to allow for the most general setting where the protocol parameters, signal states, measurements and device imperfections may vary (in an independent manner) from round to round. In the special case where the protocol is identical in every round, the corresponding quantities can simply be chosen independently of $j$. A summary of the notation for the various registers and operators associated with Alice, Bob, and Eve, as well as the QKD protocol parameters, is given in \cref{tab:protocol_notation,tab:notation_alice_bob_eve}.

{\captionsetup{justification=raggedright, singlelinecheck=false, labelfont=bf}
\renewcommand{\figurename}{Protocol}
\captionof{protocol}[Generic QKD Protocol]{Generic QKD Protocol}
\label{prot:generic_qkd_protocol}
}

\noindent \textbf{Parameters:}\footnote{Technically, there are additional implicit parameters, such as the security parameters. We only list the most important parameters here and refer to Ref.~\cite[Table~VIII]{tupkary_rigorous_2026} for an exhaustive list. The implicit parameters here are contained in $\ppMap$. }

\noindent\(\displaystyle
\begin{aligned}
\totRounds\in\mathbb{N} & : && \quad \text{Total number of rounds} \\
\Big\{\prob_{\Aclassical_j}\in\probSimplex_{|\AclassicalAlph|}\Big\}_{j=1}^{\totRounds} & : && \quad \text{Alice's setting choice probability distributions} \\
\Big\{\big\{(\Astate_\AclassicalVal^{(j)})_{\Aprime_j}\in\setDensity_=(\Aprime_j)\big\}_{\AclassicalVal\in\AclassicalAlph}\Big\}_{j=1}^{\totRounds} & : && \quad \text{Alice's signal states} \\
\Big\{\big\{\Bpovmel[j]\succeq 0\big\}_{\BclassicalVal\in\BclassicalAlph}:\sum_{\BclassicalVal\in\BclassicalAlph}\Bpovmel[j]=\identity_{\Bmeas_j}\Big\}_{j=1}^{\totRounds} & : && \quad \text{POVM elements describing Bob's measurements} \\
\Big\{\annFunc[j]:\AclassicalAlph\times\BclassicalAlph\rightarrow\probSimplex_{|\varDecAlph|}\Big\}_{j=1}^{\totRounds} & : && \quad \text{Probabilistic announcement maps} \\
\Big\{\keymapFunc[j]:\AclassicalAlph\times\varDecAlph\rightarrow\secretAlph\Big\}_{j=1}^{\totRounds} & : && \quad \text{Key maps} \\
\ppMap\in\cptp(\Aclassical_1^\totRounds\BclassicalReg_1^\totRounds\secretReg_1^\totRounds\varDecReg_1^\totRounds, \AkeyReg\BkeyReg\cppReg\varDecReg_1^\totRounds) & : && \quad \text{Classical post-processing map}
\end{aligned}
\) \vspace{0.3cm}

\noindent \textbf{Protocol steps:}
\begin{enumerate}
    \item For every round $j\in \{1,\ldots,\totRounds\}$, Alice and Bob perform the following operations:
    \begin{enumerate}
        \item \textbf{State preparation and transmission:} Alice prepares one of the states in $\big\{(\Astate_\AclassicalVal^{(j)})_{\Aprime_j}\big\}_{\AclassicalVal \in \AclassicalAlph}$ according to the probability distribution $\prob_{\Aclassical_j}$ where $\Astate_\AclassicalVal^{(j)} \in \setDensity_=(\Aprime_j)$. She stores the setting choice $\AclassicalVal \in \AclassicalAlph$ in a classical register $\Aclassical_j$ and sends the signal state to Bob via the quantum channel. Let $\timeA_j$ denote the time this state leaves Alice's lab. 
        \item \textbf{Measurements:} Bob measures the state received with a POVM $\big\{\Bpovmel[j]\big\}_{y\in \mathcal{Y}}$ and stores the outcome $\BclassicalVal \in \BclassicalAlph$ in a classical register $\BclassicalReg_j$. Let $\timeB_j$ denote the time this measurement is completed. 
        \item \textbf{Public announcements:} Alice and Bob make public announcements based on their registers $\Aclassical_j$ and $\BclassicalReg_j$, and store them in $\varDecReg_j$ with alphabet $\varDecAlph$. This can be represented by a stochastic map $\annFunc[j]:\AclassicalAlph\times\BclassicalAlph\rightarrow\probSimplex_{|\varDecAlph|}$.\footnote{This stochastic mapping represents the fact that Alice and Bob may make random decisions based on their outcomes, for example when randomly deciding whether a round is a test round or a generation round. Every implementable public announcement procedure can be described via such a stochastic map. The converse does not hold; an arbitrary stochastic map  need not be implementable through public communication. We therefore implicitly restrict $\annFunc[j]$ to implementable announcement procedures. If the announcements are performed deterministically, then $\annFunc[j]$ can be treated as a function which outputs an element of $\varDecAlph$.} Let $\timeAnn_j$ denote the time the announcements begin.\footnote{The register $\varDecReg_j$ stores the decision whether the round is a test round or generation round. This decision can be made by either Alice or Bob, see \cref{rem:test_gen_decision}.}
        \item \textbf{Sifting and key map:} Alice maps her private data $\Aclassical_j$ and public announcements $\varDecReg_j$ to a classical value $\secretReg_j$ with alphabet $\secretAlph$. This can be represented by a function $\keymapFunc[j]:\AclassicalAlph\times\varDecAlph\rightarrow\secretAlph$.\footnote{Similarly to the announcement map $\annFunc[j]$, the key map can also be performed probabilistically by choosing $\keymapFunc[j]$ to be a stochastic map.} She then applies a deterministic rule, based on $\varDecReg_j$, to discard certain $\secretReg_j$. This produces the pre-amplification string $\preampString$. We assume discarded rounds are encoded by setting the corresponding value of $\secretReg_j$ to some special symbol. 
    \end{enumerate}
    \item \textbf{Variable-length decision:} Based on the public announcements ${\varDecVal}_1^{\totRounds}$ stored in the registers $\varDecReg_1^\totRounds$, Alice and Bob determine the number of bits used for error correction $\ECcost({\varDecVal}_1^\totRounds)$ and the length of the key $\keyl({\varDecVal}_1^\totRounds)$ produced. If $\keyl(\varDecVal_1^\totRounds) = 0$, they abort the protocol.
    \item \textbf{Error correction:} Alice and Bob perform an error-correction protocol where the number of possible transcripts can be stored in at most $\ECcost(\varDecVal_1^\totRounds)$ bits, resulting in Bob outputting a guess for Alice's pre-amplification string.
    \item \textbf{Error verification:} Alice and Bob use two-universal hashing with a hash of length $\left\lceil \log\left(\frac{1}{\epsCor}\right)\right\rceil$ to perform error verification by comparing the hashes of the pre-amplification string and Bob's guess.
    \item \textbf{Privacy amplification:} Alice and Bob generate the output key pair $\AkeyReg, \BkeyReg$ by using two-universal hashing on their pre-amplification strings.
    
   The sifting and post-processing in steps 2-5 is described by the map $\ppMap\in\cptp(\Aclassical_1^\totRounds\BclassicalReg_1^\totRounds\secretReg_1^\totRounds\varDecReg_1^\totRounds, \AkeyReg\BkeyReg\cppReg \varDecReg_1^\totRounds)$, where the register $\cppReg$ stores all public announcements made in these  steps.\footnote{For example, the register stores  the choice of hash function for error verification and privacy amplification, along with the communication required for error correction, or for communicating the key length decision etc. See \cite[Protocol~1]{tupkary_rigorous_2026} for a more detailed discussion.} 
\end{enumerate}

\begin{remark}
    We assume that the announcements for round $j$ occur after state preparation and measurement for round $j$. That is, $\timeAnn_j \geq \timeA_j, \timeB_j$ for all $j$. Such a requirement can be enforced by appropriate protocol design \cite{tupkary_rigorous_2026} and under  the assumption that authentication behaves honestly, i.e, all messages sent are received correctly by the receiving party, some time after they are sent. This assumption can be relaxed \cite{tupkary_authentication} to more realistic assumptions on authentication, where messages sent may be rejected by the receiving party (due to tampering by Eve), and message timings may be affected. In particular, Ref.~\cite{tupkary_authentication} shows that the security of a QKD protocol that includes a small final authentication post-processing step, with realistic assumptions on authentication, can be reduced to that of the QKD protocol without the post-processing step, with the assumption that authentication behaves honestly. We refer the reader to Refs.~\cite{tupkary_rigorous_2026,tupkary_authentication} for additional details. In this work, we focus on the end result of this reduction, and assume that the above requirement on timing is satisfied, and authentication behaves honestly. We stress that this is an implicit assumption in almost all works on QKD security proofs. 
\end{remark}

For the purpose of the security analysis, given  $\timeAnn_j \geq \timeA_j, \timeB_j$ for all $j$, we may equivalently consider a modified protocol where Alice prepares and sends all signal states before Bob performs the first measurement, and public announcements occur right after Bob's measurement. This is illustrated in \cref{fig:modified_protocol_ch}, and follows from \cite[Section 5]{arqand_marginal-constrained_2025} (see also \cite[Section 8.1]{tupkary_rigorous_2026}), which argues that doing so only gives Eve more power.\footnote{Note that directly applying the MEAT to this protocol yields trivial key rates, as Alice’s marginal is classical. Consequently, a purification of the Alice–Bob state directly gives Eve a copy of Alice’s setting choice. We therefore first apply the source-replacement scheme before applying the MEAT (see \cref{sec:security_using_MEAT}).} The equivalence between the original and modified timings is formalized in \cref{lem:modified_timing_protocol}.

\begin{lemma}[Modified protocol {\protect\cite[Lemmas~8.1 and 8.2]{tupkary_rigorous_2026}}]
\label{lem:modified_timing_protocol}
    Any Alice-Bob-Eve state that can be obtained in the \nameref{prot:generic_qkd_protocol} can also be obtained in the following modified protocol. The modified protocol takes the same parameters as the \nameref{prot:generic_qkd_protocol} and is illustrated in \cref{fig:modified_protocol_ch}.
    \begin{enumerate}
    \item Alice prepares the $\totRounds$-round state 
\begin{align}
\label{eq:n_round_state_rho_AX}
\Astate_{\Aclassical_1^\totRounds(\Aprime)_1^\totRounds} &= \bigotimes_{j=1}^\totRounds \Astate_{\Aclassical_j\Aprime_j}^{(j)}\,, \\
    \Astate_{\Aclassical_j\Aprime_j}^{(j)} &= \sum_{\AclassicalVal\in\AclassicalAlph} \prob_{\Aclassical_j}(\AclassicalVal) \ketbra{\AclassicalVal}{\AclassicalVal}_{\Aclassical_j} \otimes (\Astate_\AclassicalVal^{(j)})_{\Aprime_j}\,,
\end{align}
and sends $(\Aprime)_1^\totRounds$ to Bob.
 We denote Eve's starting register as $\Ereg_0 = (\Aprime)_1^\totRounds$.
    \item For every round $j\in \{1,\ldots,\totRounds\}$, the following maps are applied:
    \begin{enumerate}
        \item Eve implements her attack $\attackCh_j\in \setAttackCh_j \subseteq \cptp(\Ereg_{j-1}, \Bmeas_j \Ereg_j')$ and forwards $\Bmeas_j$ to Bob.
        
        The measurement, public announcement, sifting and key map steps are identical to the \nameref{prot:generic_qkd_protocol}. We denote $\widetilde C_j$ a copy of the announcements, which is available to Eve. These steps can be described by a map $\gMapTildeFull_j\in\cptp(\Aclassical_j\Bmeas_j, \secretReg_j\Aclassical_j\BclassicalReg_j\varDecReg_j\eveCopyReg_j)$. We also define $\gMapMod_j\coloneq \Tr_{\Aclassical_j\BclassicalReg_j}\circ \gMapTildeFull_j$, i.e. the map $\gMapTildeFull_j$ with outputs restricted to the secret register and the announcements.
        
        \item The registers $\Ereg_j'$ and the copy of the classical announcements $\widetilde C_j$ are combined into a register $\Ereg_j$, which is forwarded to the next round.

        The state output is given by
        \begin{align}
        \label{eq:state_output_protocol}
            \genDensity_{\secretReg_1^\totRounds \Aclassical_1^\totRounds \BclassicalReg_1^\totRounds \varDecReg_1^\totRounds \Ereg_\totRounds} &= \gMapTildeFull_\totRounds\circ \attackCh_\totRounds\circ \ldots \circ \gMapTildeFull_1\circ \attackCh_1\left[\Astate_{\Aclassical_1^\totRounds(\Aprime)_1^\totRounds} \right] \\
            \genDensity_{\secretReg_1^\totRounds \varDecReg_1^\totRounds \Ereg_\totRounds} &= \gMapMod_\totRounds\circ \attackCh_\totRounds\circ \ldots \circ \gMapMod_1\circ \attackCh_1\left[\Astate_{\Aclassical_1^\totRounds(\Aprime)_1^\totRounds} \right]\,,
        \end{align}
        and each round can be described by a map $\qkdMapMod_j \coloneqq \gMapMod_j\circ \attackCh_j$ (see \cref{fig:modified_protocol_ch}).
    \end{enumerate}
    \item Alice and Bob perform the remaining post-processing steps from the \nameref{prot:generic_qkd_protocol} described by $\ppMap\in\cptp(\Aclassical_1^\totRounds\BclassicalReg_1^\totRounds\secretReg_1^\totRounds\varDecReg_1^\totRounds, \AkeyReg\BkeyReg\cppReg \varDecReg_1^\totRounds)$.
\end{enumerate}
\end{lemma}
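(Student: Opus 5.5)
The plan is a simulation argument. For an arbitrary Eve strategy in the \nameref{prot:generic_qkd_protocol}, I will construct attack maps $\attackCh_1,\ldots,\attackCh_\totRounds$ in the modified protocol that reproduce the same final Alice--Bob--Eve state. The starting point is the timing assumption $\timeAnn_j \geq \timeA_j, \timeB_j$ together with honest authentication. With these, the most general attack in the original protocol can be written as an interleaved sequence of channels. Alice's round-$j$ signal enters Eve's memory at time $\timeA_j$, and Eve outputs Bob's round-$j$ system $\Bmeas_j$ at some time before $\timeB_j$. The announcements $\varDecReg_j$ become available to Eve only after both of these times. Eve's actions between events can therefore be collected into CPTP maps, each acting on her current memory, on any signals $\Aprime_k$ already emitted, and on the announcements already made.

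\textbf{Step 1 (causal ordering).} I will write the original protocol as a global sequence of ``events'', namely Alice emitting $\Aprime_k$, Bob measuring $\Bmeas_j$, and announcements, ordered by time. The key fact is that every announcement $\varDecReg_j$ and every measurement outcome is a classical function (possibly stochastic) of registers that are local to Alice and Bob. These outcomes influence Eve only through the copy $\eveCopyReg_j$, and since $\timeAnn_j \geq \timeA_j,\timeB_j$, that copy reaches her only after the round-$j$ quantum operations are finished. Alice's preparation does not depend on Eve at all: the state $\Astate_{\Aclassical_j\Aprime_j}^{(j)}$ is fixed in advance, and by the product structure \eqref{eq:n_round_state_rho_AX} it can be prepared at any earlier time.

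\textbf{Step 2 (moving state preparation earlier).} All preparations will be moved to time zero, so that Eve holds $\Ereg_0=(\Aprime)_1^\totRounds$ from the start. Any original attack can be simulated from there: whenever the original Eve would have received $\Aprime_k$ at time $\timeA_k$, the simulating Eve simply leaves that register untouched until she needs it. This holds because maps on disjoint registers commute, and preparing a state and then idling is the same as preparing it later. As a result, holding extra systems earlier can only enlarge Eve's set of achievable strategies.

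\textbf{Step 3 (grouping Eve's operations).} Next I will group Eve's operations. $\attackCh_j$ is defined as the composition of everything Eve does after receiving $\eveCopyReg_{j-1}$ and before $\Bmeas_j$ is measured, with output $\Bmeas_j \Ereg_j'$. Bob's measurements and the announcement maps for different rounds act on disjoint registers. Each $\gMapTildeFull_j$ can therefore be commuted past the Eve operations that act only on her own registers, provided those operations do not depend on the round-$j$ announcements. This is exactly what the timing condition guarantees. If Bob measures round $j+1$ before the round-$j$ announcement, I will reorder the announcement map forward: it acts only on Alice and Bob's registers, and Eve's intermediate operations do not depend on it, so it commutes. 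Performing the announcement at this point and passing the copy $\eveCopyReg_j$ into $\Ereg_j$ only gives Eve information earlier, and she may simply ignore it. By induction on $j$ the two sequences of maps agree, so the final state equals \eqref{eq:state_output_protocol}. Applying the same $\ppMap$ afterwards then gives the same final state, since the post-processing in steps 2--5 is identical in both protocols.

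\textbf{Main obstacle.} The hard step is the rigorous commutation argument when the rounds interleave arbitrarily. In the original protocol Bob may measure several rounds before any announcement is made, and Eve may act adaptively in between. I plan to handle this through an explicit time-ordered decomposition into channels, followed by repeated use of two facts: channels on disjoint registers commute, and giving Eve classical copies earlier only enlarges her strategy set. The formal details, including the treatment of authentication, would follow \cite[Lemmas~8.1 and 8.2]{tupkary_rigorous_2026} and \cite[Section~5]{arqand_marginal-constrained_2025}. Since the statement here is a citation of those lemmas, I would rely on them for the final bookkeeping.
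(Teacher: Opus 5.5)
Your simulation argument is correct and takes essentially the paper's route. The paper gives no independent proof: it relies on \cite[Lemmas~8.1 and 8.2]{tupkary_rigorous_2026} and \cite[Section~5]{arqand_marginal-constrained_2025}, whose content is that handing all of Alice's fixed product signals to Eve at the start and moving each round's announcement to right after Bob's measurement can only enlarge Eve's set of achievable states, given $\timeAnn_j \geq \timeA_j, \timeB_j$ and honest authentication. One point to make explicit in your Step~3: rounds must be indexed in the order Bob measures them, $\timeB_1\leq\timeB_2\leq\cdots$, since together with $\timeAnn_{j'}\geq\timeB_{j'}$ this is what guarantees that no announcement from a round $j'>j$ can influence Eve's output in $\attackCh_j$.
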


\newcommand{\ModifiedProtocolChFig}{

\begin{tikzpicture}[>=Stealth,auto,
  box/.style  ={draw,minimum width=1cm,minimum height=1cm,
                align=center,fill=red!20},
  cpbox/.style ={draw,minimum width=1.4cm,minimum height=0.8cm,
                align=center,fill=gray!20},
  sig/.style  ={->,rounded corners=4pt},
  merge/.style={circle,fill,inner sep=1.2pt}
]


\node[box]   (N1) {$\attackCh_1$};
\node[cpbox, below right=\offsetY and \offsetX of N1] (L1) {$\gMapMod_1$};

\node[box,   right=\colsep of N1] (N2) {$\attackCh_2$};
\node[cpbox, below right=\offsetY and \offsetX of N2] (L2) {$\gMapMod_2$};

\node[right=\colsep of N2] (dots) {$\cdots$};
\node[box,   right= of dots] (Nn) {$\attackCh_n$};
\node[cpbox, below right=\offsetY and \offsetX of Nn] (Ln) {$\gMapMod_n$};

\node[left =1.4cm of N1] (E0) {};
\node[right=4cm   of Nn] (En) {};

\coordinate (join1) at ($(L1.east |- N1) + (1.0,0)$);
\coordinate (join2) at ($(L2.east |- N2) + (1.0,0)$);
\coordinate (joinn) at ($(Ln.east |- Nn) + (1.0,0)$);

\draw[sig] (E0) node[above=2pt, xshift=10pt]{$E_{0}=(A')_1^n$} -- (N1.west);

\draw[sig] (N1.east) node[above=2pt, xshift=10pt] {$E'_1$} -- (join1);
\draw[sig] (L1.east) node[above=2pt, xshift=10pt]   {$\eveCopyReg_1$}
           -- ++(0.6,0) |- (join1);
\node[merge] at (join1) {};
\draw[sig] (join1) -- node[above=2pt,xshift=1pt] {$E_{1}$} (N2.west);

\draw[sig] (N2.east) node[above=2pt, xshift=10pt] {$E'_2$} -- (join2);
\draw[sig] (L2.east) node[above=2pt, xshift=10pt]   {$\eveCopyReg_2$}
           -- ++(0.6,0) |- (join2);
\node[merge] at (join2) {};
\draw[sig] (join2) -- node[above=2pt,xshift=1pt] {$E_{2}$} (dots.west);


\draw[sig] (Nn.east) node[above=2pt, xshift=10pt] {$E'_n$} -- (joinn);
\draw[sig] (Ln.east) node[above=2pt, xshift=10pt]   {$\eveCopyReg_n$}
           -- ++(0.6,0) |- (joinn);
\node[merge] at (joinn) {};

\draw[sig] (dots.east) node[above=2pt,xshift = 3pt] {$E_{n-1}$} -- (Nn.west);
\draw[sig] (joinn) -- node[above=2pt,xshift=1pt] {$E_{n}$} (En);

\foreach \i/\N/\L in {1/N1/L1, 2/N2/L2, n/Nn/Ln}{
  \draw[sig] (\N.south) -- ++(0,-0.6)
             |- node[above right=3pt] {$B_{\i}$} (\L.west);
}

\foreach \i/\L in {1/L1, 2/L2, n/Ln}{
  \node[below=1.5cm of \L] (S\i) {$S_{\i}\widehat{C}_{\i}$};
  \draw[sig] (\L.south) -- (S\i);

  \node[above=2.5cm of \L] (A\i) {$\Aclassical_{\i}$};
  \draw[sig,<-] (\L.north) -- (A\i);
}

\draw[red,thick,rounded corners]
  ($ (N1.north west) + (-3pt,3pt)$)  
  rectangle
  ($ (join1 |- L1.south) + (3pt,-6pt)$);   
\node[text=red] at ($($(N1.north west) + (0pt, 10pt)$)$) {\(\qkdMapMod_1\)};
\draw[red,thick,rounded corners]
  ($ (N2.north west) + (-3pt,3pt)$)
  rectangle
  ($ (join2 |- L2.south) + (3pt,-6pt)$);
\node[text=red] at ($($(N2.north west) + (0pt, 10pt)$)$) {\(\qkdMapMod_2\)};
\draw[red,thick,rounded corners]
  ($ (Nn.north west) + (-3pt,3pt)$)
  rectangle
  ($ (joinn |- Ln.south) + (3pt,-6pt)$);
\node[text=red] at ($($(Nn.north west) + (0pt, 10pt)$)$) {\(\qkdMapMod_n\)};
\end{tikzpicture}

}

\begin{figure}
    \centering
    \scalebox{0.9}{\ModifiedProtocolChFig}
    \caption{Sequence of channels describing the evolution of the state described in the modified protocol from \cref{lem:modified_timing_protocol}.}
    \label{fig:modified_protocol_ch}
\end{figure}

\begin{definition}
\label{def:tuple_protocol}
    We denote by $\big\{\big\{\Astate^{(j)}_{\Aclassical_j\Aprime_j}, \big\{\Bpovmel[j]\big\}_{\BclassicalVal\in\BclassicalAlph}, \annKeyMap[j]\big\}_{j=1}^\totRounds, \ppMap\big\}$
    an instance of the modified protocol described in \cref{lem:modified_timing_protocol}, where the elements of the tuple specify the input parameters. Here, for notational convenience, we let $\annKeyMap[j]$ denote the map corresponding to the announcement map $\annFunc[j]$ and key map $\keymapFunc[j]$ in the $j$th round, and implements the round-by-round classical processing.
\end{definition}

We now define security in terms of Eve's round-by-round attack channels $\attackCh_j$. The notion of restricted set of attack channels becomes relevant when discussing squashing maps as we can restrict Eve's attack to prove security for a protocol where Eve is unrestricted. We therefore formulate security with respect to a family of attack sets $\{\setAttackCh_j\}_j$. Ultimately, however, our goal remains to prove security of the \nameref{prot:generic_qkd_protocol} against arbitrary attacks, i.e. $\attackCh_j \in \cptp(\Ereg_{j-1},\Bmeas_j\Ereg_j')$ for all rounds $j$.

\begin{definition}[QKD security {\protect\cite[Def.~8.2]{tupkary_rigorous_2026} \cite{Portmann22,benor2004universalcomposablesecurityquantum,ferradini2025definingsecurityquantumkey}}]
\label{def:epsilon_security}
    Consider the protocol $\big\{\big\{\Astate_{\Aclassical_j\Aprime_j}^{(j)}, \big\{\Bpovmel[j]\big\}_{\BclassicalVal\in\BclassicalAlph}, \annKeyMap[j]\big\}_{j=1}^\totRounds, \ppMap\big\}$, cf. \cref{def:tuple_protocol}. The protocol is said to be $\epsSecu$-secure against all attacks in $\{\setAttackCh_j\}_j$ if
    \begin{equation}
        \sup_{\{\attackCh_j\in\setAttackCh_j\}_j} \norm{\left(\identity - \perfectReplaceCh\right)\ppMap\circ \gMapTildeFull_\totRounds \circ\attackCh_\totRounds\ldots\gMapTildeFull_1 \circ\attackCh_1[\Astate_{ \Aclassical_1^\totRounds (\Aprime)_1^\totRounds}]}_1 \leq \epsSecu\,,
    \end{equation}
    where $\perfectReplaceCh\in\cptp(\AkeyReg\BkeyReg,\AkeyReg\BkeyReg)$ is the ideal replacement channel defined in \cite[Eq.~(13)]{tupkary_rigorous_2026}, which replaces the output keys by perfect secure keys of the appropriate length, and $\gMapTildeFull_j$ is defined in \cref{lem:modified_timing_protocol}. Whenever the attack sets $\{\setAttackCh_j\}_j$ are omitted, we implicitly mean unrestricted attacks, i.e. $\setAttackCh_j = \cptp(\Ereg_{j-1},\Bmeas_j\Ereg_j')$ for all rounds $j$.
    
\end{definition}

\subsection{Imperfectly characterized devices}
\label{sec:imperfect_characterization_discussion}

\noindent Consider the \nameref{prot:generic_qkd_protocol}. The source is said to be \textit{imperfectly characterized} if the signal states are not exactly known, but are only known to lie in a given set of possible signal states. Similarly, the detector is said to be \textit{imperfectly characterized} if the POVMs describing the measurements are not exactly known, but are only known to lie in a given set of possible POVMs.

The MEAT framework \cite{arqand_marginal-constrained_2025, kamin_renyi_2025, tupkary_rigorous_2026} naturally handles perfectly characterized (yet potentially imperfect) sources and detectors, as the resulting optimization problem for computing the key rate can directly be evaluated for the (imperfect, known) states Alice prepares and the (imperfect, known) POVMs describing Bob's measurement. However, the devices used in the protocol can never be perfectly characterized. This can be due to various reasons, including inherent uncertainty in the characterization process, partial control by an adversary, or parameter fluctuations occurring during the protocol. It is therefore crucial to incorporate this uncertainty about the devices into the security analysis, which is precisely the aim of this work.

Throughout this work, we choose particular metrics to define the sets of possible signal states and POVMs. These choices are motivated by previous analyses of imperfectly characterized devices \cite{pereira_optimal_2025,nahar_imperfect_2026}, where they were shown to be operationally meaningful as they can be related to experimental device parameters. Moreover, they allow us to make use of the corresponding tools developed in those works. We emphasize, however, that these particular metrics are not fundamental. Rather, they constitute one natural choice for which the required mathematical tools are currently available.

More specifically, on the source side, we define the set of possible signal states by bounding the fidelity to some known \textit{target states}. This will be formalized in \cref{as:imperfectly_characterized_source} (\cref{sec:model_assumptions}) when specifying the assumptions on the devices. On the detector side, we quantify the uncertainty in the POVMs through a specific operator inequality relative to some known \textit{target POVMs}. This will be formalized in \cref{as:imperfectly_characterized_detector}.

Note that if the characterization process itself is subject to uncertainty, such that the signal states and POVMs are only known to lie in the corresponding sets up to a given confidence level, this uncertainty can be rigorously incorporated into the analysis using the approach from Ref.~\cite{tan2025incorporatingdevicecharacterizationsecurity}.

\section{Toolbox}
\label{sec:toolbox}

\noindent In this section, we introduce the main mathematical tools used throughout this work. We use source maps (\cref{sec:source_maps}) and squashing maps (\cref{sec:squashing_maps}) to handle source and detector imperfections, respectively, in \cref{sec:security_imperfect_devices}. An overview of the various source and squashing maps introduced is given in \cref{tab:overview_tools_imperfections}. Informally, they allow us to reduce the problem of proving security for a protocol with given signal states and detector POVMs to that of a new protocol with more convenient states and more convenient POVMs, as formalized in \cref{sec:security_source_squashing}. Finally, in \cref{sec:security_using_MEAT}, we introduce an equivalent entanglement-based picture and provide an overview of the MEAT framework based on Ref.~\cite{tupkary_rigorous_2026} to introduce the relevant notation and, importantly, identify where device imperfections and uncertainty about the devices enter the analysis.

\begin{table}[t]
\centering
\renewcommand{\arraystretch}{1.4}
\setlength{\tabcolsep}{12pt}
\begin{tabular}{lll}
\hline
 & \makecell[c]{\textbf{Infinite dimensions}} & \makecell[c]{\textbf{Imperfect characterization}} \\
\hline

\textbf{Source} & Tagging source map (\cref{lem:tagging_source_map})* & Partial charac. source map (\cref{lem:source_map_imperfect_source}) \\

\textbf{Detector} & Flag-state squasher (\cref{lem:flag_state_squasher})* & Noise channel (\cref{lem:noise_channel})\\

\hline
\end{tabular}
\caption{Overview of the various tools used to incorporate imperfect devices in the MEAT. (*) assumes block-diagonal structure. We handle imperfectly block-diagonal sources in \cref{sec:imperfect_block_diagonal}. }
\label{tab:overview_tools_imperfections}
\end{table}

\subsection{Source maps}
\label{sec:source_maps}

\noindent A source is described by a set of states $\{\genDensity_\genClassIndex\}_{\genClassIndex\in\genClassAlph}$ and the probability distribution $\prob_{\genClassReg}$ according to which the states are prepared. The states may be infinite-dimensional (e.g. describing optical pulses) and they may be imperfectly characterized (see \cref{sec:imperfect_characterization_discussion}). We handle both aspects using the idea of \textit{source maps} \cite{gottesman2004securityquantumkeydistribution, nahar_imperfect_2023, Curras-Lorenzo_2025} (\cref{def:source_map}). Source maps are powerful tools that allow us to relate the security of a given protocol where the source prepares $\{\genDensity_\genClassIndex\}_{\genClassIndex\in\genClassAlph}$ to that of a new protocol where the source prepares $\{\tilde \genDensity_\genClassIndex\}_{\genClassIndex\in\genClassAlph}$ instead. The new set of states $\{\tilde \genDensity_\genClassIndex\}_{\genClassIndex\in\genClassAlph}$ is then typically chosen to be more convenient to work with. The security aspects will be formalized in \cref{lem:epsilon_security_source_maps}. For now, we describe the two source maps used to (a) perform the dimension reduction and (b) handle imperfect characterization.

\begin{definition}[Source map]
\label{def:source_map}
    Let $\genDensity_{\genClassReg\genQReg}\in \setDensity_\leq(\genClassReg\genQReg)$ and $\tilde \genDensity_{\genClassReg\genQReg'} \in \setDensity_\leq(\genClassReg\genQReg')$ be classical in $\genClassReg$ and let $\srcMap\in\cptp(\genQReg', \genQReg)$. If $\genDensity_{\genClassReg\genQReg} = \srcMap[\tilde \genDensity_{\genClassReg\genQReg'}]$, then $\srcMap$ is called a \textit{source map}. Equivalently, if $\genDensity_{\genQReg|\genClassReg=\genClassIndex} = \srcMap[\tilde \genDensity_{\genQReg'|\genClassReg=\genClassIndex}]$ for all $\genClassIndex$, then $\srcMap$ is a source map.
\end{definition}

A particular construction of a source map, which always exists when the states $\genDensity_\genClassIndex$ are block-diagonal in the same basis, is the \textit{tagging source map} (\cref{lem:tagging_source_map}). It allows us to remove a block and replace it with classical flags storing the label $\genClassIndex\in\genClassAlph$. Given that this does not depend on the dimensionality of the block removed, in \cref{sec:problem_reductions} we use the tagging source map to remove infinite-dimensional blocks from a set of infinite-dimensional states to reduce the problem to a new set of states which are finite-dimensional. An example of infinite-dimensional states includes weak-coherent pulses which are commonly used for decoy-state protocols. Intuitively, the cost for performing this replacement is upper-bounded by the amount of information leaked through the classical flags, which is given by the probability weight of the removed block.

\begin{lemma}[Tagging source map \cite{gottesman2004securityquantumkeydistribution}]
\label{lem:tagging_source_map}
    Let $\genDensity_{\genClassReg\genQReg}\in \setDensity_\leq(\genClassReg\genQReg)$ be classical in $\genClassReg$ and of the block-diagonal form
    \begin{equation}
        \genDensity_{\genClassReg\genQReg} \coloneqq \sum_{\genClassIndex\in \genClassAlph} \prob_{\genClassReg}(\genClassIndex) \ketbra{\genClassIndex}{\genClassIndex}_\genClassReg \otimes \left(\prob(\blockVar=0|\genClassIndex) \genDensity_{\genQReg_0|\genClassReg=\genClassIndex}^{\blockVar=0} \oplus (1 -\prob(\blockVar=0|\genClassIndex)) \genDensity_{\genQReg_1|\genClassReg=\genClassIndex}^{\blockVar=1} \right)\,,
    \end{equation}
    where $\genQReg = \genQReg_0 \oplus \genQReg_1$. Let
    \begin{equation}
        \tilde \genDensity_{\genClassReg\genQReg'} \coloneqq \sum_{\genClassIndex\in \genClassAlph} \prob_{\genClassReg}(\genClassIndex) \ketbra{\genClassIndex}{\genClassIndex}_\genClassReg \otimes \left(\prob(\blockVar=0|\genClassIndex) \genDensity_{\genQReg_0|\genClassReg=\genClassIndex}^{\blockVar=0} \oplus (1 -\prob(\blockVar=0|\genClassIndex)) \ketbra{\genClassIndex}{\genClassIndex}_\flagSpaceReg\right)\,,
    \end{equation}
    where $\{\ket{\genClassIndex}\}_{\genClassIndex\in \genClassAlph}$ forms an orthonormal basis and $\genQReg' = \genQReg_0 \oplus \flagSpaceReg$. Then there exists a source map $\srcMap\in\cptp(\genQReg',\genQReg)$, also called \textit{tagging source map}, such that $\srcMap\left[\tilde\genDensity_{\genClassReg\genQReg'}\right] = \genDensity_{\genClassReg\genQReg}.$
\end{lemma}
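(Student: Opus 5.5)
We construct $\srcMap$ explicitly as a direct sum of two maps acting on the two blocks of $\genQReg' = \genQReg_0\oplus\flagSpaceReg$. The identity acts on the untouched block $\genQReg_0$. On the flag block $\flagSpaceReg$, a measure-and-prepare channel reads the label and prepares the corresponding removed block state in $\genQReg_1$. Concretely, let $\proj_0$ and $\proj_\flagSpaceReg$ be the orthogonal projectors onto the two summands of $\hilbert_{\genQReg'}$. We then define
\begin{equation}
    \srcMap[X] \coloneqq \proj_0 X \proj_0 \oplus \sum_{\genClassIndex\in\genClassAlph} \bra{\genClassIndex}_\flagSpaceReg X \ket{\genClassIndex}_\flagSpaceReg \, \genDensity_{\genQReg_1|\genClassReg=\genClassIndex}^{\blockVar=1}\,,
\end{equation}
where $\proj_0 X\proj_0$ is regarded as an operator on $\genQReg_0\subseteq\genQReg$. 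The second term is regarded as an operator on $\genQReg_1\subseteq\genQReg$, with $\bra{\genClassIndex}_\flagSpaceReg$ understood as $\bra{\genClassIndex}_\flagSpaceReg\proj_\flagSpaceReg$.

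The first step is to verify that $\srcMap$ is CPTP. We can write it as the composition of two channels. The first is the pinching $X\mapsto \proj_0X\proj_0 + \proj_\flagSpaceReg X\proj_\flagSpaceReg$, which is CPTP. The second is the direct sum of the identity channel on $\genQReg_0$ with the measure-and-prepare channel $\flagSpaceReg\to\genQReg_1$. The latter is CPTP because the $\{\ketbra{\genClassIndex}{\genClassIndex}\}$ form a complete projective measurement on $\flagSpaceReg$, and each $\genDensity^{\blockVar=1}_{\genQReg_1|\genClassReg=\genClassIndex}$ is a normalized state. One detail needs care: if some conditional block state is subnormalized or $\prob(\blockVar=0|\genClassIndex)=1$, we take $\genDensity^{\blockVar=1}_{\genQReg_1|\genClassReg=\genClassIndex}$ to be an arbitrary fixed state, since its weight is zero. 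A direct sum of CPTP maps into orthogonal subspaces of $\genQReg=\genQReg_0\oplus\genQReg_1$ is again CPTP, because the trace adds across blocks.

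The second step is to check $\srcMap[\tilde\genDensity_{\genClassReg\genQReg'}]=\genDensity_{\genClassReg\genQReg}$. By \cref{def:source_map}, it suffices to do this conditionally on each $\genClassIndex$. The conditional input is block-diagonal with no off-diagonal terms, so the pinching leaves it unchanged. The $\genQReg_0$ block passes through unchanged. On the flag block, $\ketbra{\genClassIndex}{\genClassIndex}_\flagSpaceReg$ is mapped to $\genDensity^{\blockVar=1}_{\genQReg_1|\genClassReg=\genClassIndex}$ with the weight $1-\prob(\blockVar=0|\genClassIndex)$ preserved, by orthonormality of the flags. The output therefore equals the conditional block of $\genDensity$, and linearity in the classical register $\genClassReg$ finishes the argument.

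No step is a real obstacle. The only subtleties are the bookkeeping of direct sums and embeddings, and making sure the map does not depend on $\genClassReg$. It does not, because it only reads the flag. This is exactly why the flags must carry the label $\genClassIndex$: the map may act only on $\genQReg'$, not on $\genClassReg$.
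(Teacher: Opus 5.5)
Your construction is the same as the paper's: the identity on the $\genQReg_0$ block and, on the flag block, a measurement of $\ket{\genClassIndex}_{\flagSpaceReg}$ followed by preparation of $\genDensity^{\blockVar=1}_{\genQReg_1|\genClassReg=\genClassIndex}$. Your CPTP and output checks, which the paper leaves implicit, are correct.

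The one slip is your side remark on subnormalized block states. Suppose $\Tr[\genDensity^{\blockVar=1}_{\genQReg_1|\genClassReg=\genClassIndex}]<1$ while $\prob(\blockVar=0|\genClassIndex)<1$ and $\prob_{\genClassReg}(\genClassIndex)>0$. Then the conditional blocks of $\tilde\genDensity_{\genClassReg\genQReg'}$ and $\genDensity_{\genClassReg\genQReg}$ have different traces, so no CPTP map can relate them, and replacing the state by an arbitrary one changes the output. The lemma, like the paper's proof, implicitly takes the block states to be normalized. An arbitrary choice is harmless only when $\prob(\blockVar=0|\genClassIndex)=1$.
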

\begin{proof}
    Let the source map be given by the channel that projects onto $\ketbra{\genClassIndex}{\genClassIndex}_\flagSpaceReg$ and prepares $\genDensity_{\genQReg_1|\genClassReg=\genClassIndex}^{\blockVar=1}$, then the statement directly follows.
\end{proof}

When the source is imperfectly characterized (see \cref{sec:imperfect_characterization_discussion}) with bounded fidelity between the signal states and known target states, then the construction described in \cref{lem:source_map_imperfect_source} below allows the problem to be reduced to pure state preparation, with partially known overlaps. In \cref{sec:problem_reductions}, this construction is used to incorporate imperfectly characterized sources by introducing an additional optimization over the Gram matrix containing the unknown overlaps, following the approach of Ref.~\cite{pereira_optimal_2025}.

\begin{lemma}[Partial characterization source map \cite{Curras-Lorenzo_2025}]
\label{lem:source_map_imperfect_source}
    Let $\genDensity_{\genClassReg\genQReg} = \sum_\genClassIndex \prob_{\genClassReg}(\genClassIndex) \ketbra{\genClassIndex}{\genClassIndex}_\genClassReg \otimes \genDensity_{\genQReg|\genClassReg=\genClassIndex}$ and target states $\tilde \genDensity_{\genClassReg\genQReg} = \sum_\genClassIndex \prob_{\genClassReg}(\genClassIndex) \ketbra{\genClassIndex}{\genClassIndex}_\genClassReg \otimes \tilde \genDensity_{\genQReg|\genClassReg=\genClassIndex}$ be classical in $\genClassReg$ with $F(\genDensity_{\genQReg|\genClassReg=\genClassIndex}, \tilde \genDensity_{\genQReg|\genClassReg=\genClassIndex}) \geq 1- \srcImpBnd_\genClassIndex$ for all $\genClassIndex\in \genClassAlph$ and $\ket{\tilde \genDensity^{\genClassIndex}}_{\genQReg\genPurReg}$ be any purification of $\tilde \genDensity_{\genQReg}^{\genClassIndex}$. Let
    \begin{equation}
        \rho'_{\genClassReg\genQReg\genPurReg\genPurReg'} \coloneqq \sum_{\genClassIndex\in\genClassAlph} \prob_{\genClassReg}(\genClassIndex) \ketbra{\genClassIndex}{\genClassIndex}_\genClassReg \otimes \ketbra{\genState^\genClassIndex}{\genState^\genClassIndex}_{\genQReg\genPurReg\genPurReg'}\,,
    \end{equation}
    with 
    \begin{equation}
    \label{eq:new_state_gram_matrix}
        \ket{\genState^\genClassIndex}_{\genQReg\genPurReg\genPurReg'} \coloneqq \sqrt{1-\srcImpBnd_\genClassIndex}\ket{\tilde \genDensity^{\genClassIndex}}_{\genQReg\genPurReg\genPurReg'} + \sqrt{\srcImpBnd_\genClassIndex}\ket{\tilde \genDensity^{\genClassIndex, \perp}}_{\genQReg\genPurReg\genPurReg'}\,,
    \end{equation}
    where $\ket{\tilde \genDensity^{\genClassIndex}}_{\genQReg\genPurReg\genPurReg'} = \ket{\tilde \genDensity^{\genClassIndex}}_{\genQReg\genPurReg}\otimes\ket{0}_{\genPurReg'}$. Then there exists a choice of normalized states $\ket{\tilde \genDensity^{\genClassIndex,\perp}}_{\genQReg\genPurReg\genPurReg'}$ satisfying $\bra{\tilde \genDensity^{\genClassIndex}}\ket{\tilde \genDensity^{\genClassIndex,\perp}}=0$ such that $\srcMap=\Tr_{\genPurReg\genPurReg'}$ is a source map satisfying $\srcMap[\rho'_{\genClassReg\genQReg\genPurReg\genPurReg'}]=\genDensity_{\genClassReg\genQReg}$.
\end{lemma}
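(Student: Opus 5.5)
This follows from Uhlmann's theorem applied state by state. The classical register $\genClassReg$ is untouched by $\srcMap=\Tr_{\genPurReg\genPurReg'}$. By \cref{def:source_map} it therefore suffices to fix $\genClassIndex$ and construct a normalized $\ket{\tilde \genDensity^{\genClassIndex,\perp}}_{\genQReg\genPurReg\genPurReg'}$ orthogonal to $\ket{\tilde \genDensity^{\genClassIndex}}_{\genQReg\genPurReg\genPurReg'}$ such that $\Tr_{\genPurReg\genPurReg'}\ketbra{\genState^\genClassIndex}{\genState^\genClassIndex}=\genDensity_{\genQReg|\genClassReg=\genClassIndex}$. Throughout I write $\genDensity_\genClassIndex$ and $\tilde\genDensity_\genClassIndex$ for the conditional states.

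\textbf{Step 1 (Uhlmann).} Choose $\genPurReg'$ large enough, for instance a qubit tensored with a copy of $\genQReg$ (or a copy of $\genQReg\genPurReg$), so that $\genPurReg\genPurReg'$ can host a purification of $\genDensity_\genClassIndex$. The target purification $\ket{\tilde\genDensity^\genClassIndex}_{\genQReg\genPurReg}\otimes\ket{0}_{\genPurReg'}$ is itself a purification of $\tilde\genDensity_\genClassIndex$ on $\genQReg\genPurReg\genPurReg'$. With the squared-fidelity convention, Uhlmann's theorem gives a purification $\ket{\chi^\genClassIndex}_{\genQReg\genPurReg\genPurReg'}$ of $\genDensity_\genClassIndex$ with
\begin{equation}
|\langle \tilde\genDensity^\genClassIndex|\chi^\genClassIndex\rangle|^2=F(\genDensity_\genClassIndex,\tilde\genDensity_\genClassIndex)\geq 1-\srcImpBnd_\genClassIndex.
\end{equation}
A global phase on $\ket{\chi^\genClassIndex}$ makes this overlap real and nonnegative. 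Uhlmann's theorem requires the purifying space to be at least as large as the support of $\genDensity_\genClassIndex$, which is why $\genPurReg'$ is chosen as above.

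\textbf{Step 2 (matching the exact coefficient).} Decompose $\ket{\chi^\genClassIndex}=\sqrt{f}\,\ket{\tilde\genDensity^\genClassIndex}+\sqrt{1-f}\,\ket{\xi}$, where $f\geq 1-\srcImpBnd_\genClassIndex$ and $\ket\xi\perp\ket{\tilde\genDensity^\genClassIndex}$. The lemma requires the coefficient to be exactly $\sqrt{1-\srcImpBnd_\genClassIndex}$, so the overlap must be lowered while the $\genQReg$-marginal stays equal to $\genDensity_\genClassIndex$. Since the fidelity is only a lower bound, this is the main obstacle.

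I would use an extra qubit $\genPurReg''$, absorbed into $\genPurReg'$, and define
\begin{equation}
\ket{\genState^\genClassIndex}=\ket{\chi^\genClassIndex}\otimes\big(\cos\theta\ket{0}+\sin\theta\ket{1}\big)_{\genPurReg''},
\end{equation}
where the target purification carries $\ket{0}_{\genPurReg''}$. The overlap with the target is then $\sqrt f\cos\theta$. The factor $\cos\theta$ ranges continuously over $[0,1]$ and $\sqrt f\geq\sqrt{1-\srcImpBnd_\genClassIndex}$, so some $\theta$ gives overlap exactly $\sqrt{1-\srcImpBnd_\genClassIndex}$. The qubit is traced out, so the $\genQReg$-marginal is still $\genDensity_\genClassIndex$.

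\textbf{Step 3 (conclusion).} Define
\begin{equation}
\ket{\tilde\genDensity^{\genClassIndex,\perp}}\coloneqq\frac{\ket{\genState^\genClassIndex}-\sqrt{1-\srcImpBnd_\genClassIndex}\,\ket{\tilde\genDensity^\genClassIndex}}{\sqrt{\srcImpBnd_\genClassIndex}}.
\end{equation}
It is orthogonal to $\ket{\tilde\genDensity^\genClassIndex}$ by construction, and it has unit norm because $\ket{\genState^\genClassIndex}$ is normalized with overlap exactly $\sqrt{1-\srcImpBnd_\genClassIndex}$. This reproduces \cref{eq:new_state_gram_matrix}.

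The degenerate case $\srcImpBnd_\genClassIndex=0$ forces $\genDensity_\genClassIndex=\tilde\genDensity_\genClassIndex$. There any orthogonal normalized vector works, because its coefficient vanishes.

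Summing over $\genClassIndex$ with weights $\prob_\genClassReg(\genClassIndex)$ gives $\srcMap[\rho']=\genDensity_{\genClassReg\genQReg}$.
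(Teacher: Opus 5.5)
Your proposal is correct and follows the paper's argument essentially step by step. Uhlmann's theorem gives a purification of $\rho_{Q|Z=z}$ whose overlap with the target purification is at least $\sqrt{1-\delta_z}$; appending a flag qubit in the state $\cos\theta\ket{0}+\sin\theta\ket{1}$ lowers this overlap to exactly $\sqrt{1-\delta_z}$ without changing the $Q$-marginal; and splitting into parallel and orthogonal components gives the required form. You also enlarge $R'$ so that Uhlmann's theorem has enough room, which the paper takes for granted even when $R$ is too small (e.g.\ a pure target with a mixed actual state), and you treat $\delta_z=0$ separately.
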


\begin{proof}
    The result is that of \cite[Lemma~1]{Curras-Lorenzo_2025} and we briefly summarize the main idea. By Uhlmann's theorem, for every $\genClassIndex\in\genClassAlph$ there exists a purification $\ket{\genDensity^{\genClassIndex}}_{\genQReg\genPurReg}$ of $\genDensity_{\genQReg|\genClassReg=\genClassIndex}$ and a value $\srcImpBnd'_{\genClassIndex}\leq\srcImpBnd_{\genClassIndex}$ such that, after fixing an irrelevant global phase, $\braket{\tilde{\genDensity}^{\genClassIndex}}{\genDensity^{\genClassIndex}}=\sqrt{1-\srcImpBnd'_{\genClassIndex}}$. The difference between $\srcImpBnd'_{\genClassIndex}$ and $\srcImpBnd_{\genClassIndex}$ can be absorbed by appending a fictitious two-dimensional flag register $\genPurReg'$ with orthogonal states $\ket{0}_{\genPurReg'}$ and $\ket{1}_{\genPurReg'}$. The corresponding amplitudes can be chosen such that the overlap with $\ket{\tilde{\genDensity}^{\genClassIndex}}_{\genQReg\genPurReg}\otimes\ket{0}_{\genPurReg'}$ is exactly $\sqrt{1-\srcImpBnd_{\genClassIndex}}$, while leaving the marginal on $\genQReg$ unchanged. Decomposing the resulting purification into a component parallel to $\ket{\tilde{\genDensity}^{\genClassIndex}}_{\genQReg\genPurReg\genPurReg'}$ and an orthogonal component gives \cref{eq:new_state_gram_matrix}. Tracing out $\genPurReg\genPurReg'$ therefore recovers the original state, proving that $\srcMap=\Tr_{\genPurReg\genPurReg'}$ is a source map.

    \cref{lem:source_map_imperfect_source} is an existence statement. For every source compatible with the fidelity bounds, there exists a choice of the states $\ket{\tilde \genDensity^{\genClassIndex,\perp}}$ for which the source map is exact. Therefore, the uncertainty is in the state $\ket{\tilde \genDensity^{\genClassIndex,\perp}}$, rather than in the source map itself. These states need not be known explicitly as one can later optimize over all possible states compatible with the fidelity bounds. 
\end{proof}

\subsection{Squashing maps}
\label{sec:squashing_maps}

\noindent A detector is described by a POVM $\{\genpovmel\}_{\genClassIndex\in\genClassAlph}$ with outcomes labeled as $\genClassIndex\in\genClassAlph$. The POVM elements may be infinite-dimensional (e.g. describing threshold detectors) and they may be imperfectly characterized (see \cref{sec:imperfect_characterization_discussion}). We handle both aspects using the idea of \textit{squashing maps} \cite{PhysRevLett.101.093601, PhysRevA.84.020303, PhysRevA.89.012325, PRXQuantum.5.040315, PhysRevA.81.012328, PhysRevA.78.032302, PRXQuantum.2.020325, zhang_security_2021, nahar_imperfect_2026}, which can be viewed as the analogue of source maps on the detector side. Similarly to source maps, squashing maps are powerful tools as they allow us to relate the security of a given protocol where the measurement is described by a POVM $\{\genpovmel\}_{\genClassIndex\in\genClassAlph}$ to that of a new protocol where the measurement is instead described by $\{\genpovmel'\}_{\genClassIndex\in\genClassAlph}$. The new POVM $\{\genpovmel'\}_{\genClassIndex\in\genClassAlph}$ is then typically chosen to be more convenient to work with. The security aspects will be formalized in \cref{lem:epsilon_security_squashing}. For now, we describe the two squashing maps used to (a) perform the dimension reduction and (b) handle imperfect characterization.

\begin{definition}[Squashing map]
\label{def:squashing_map}
    Let $\{\genpovmel\}_{\genClassIndex}$ be a POVM acting on the register $\genQReg$, $\{\genpovmel'\}_{\genClassIndex}$ be a POVM acting on the register $\genQReg'$ and $\squashMap \in \cptp(\genQReg, \genQReg')$. If $\genpovmel = \squashMap^\dagger[\genpovmel']$ for all $\genClassIndex$, then $\squashMap$ is called a \textit{squashing map}. Equivalently, if $\Tr[\genpovmel'\squashMap[\genDensity]] = \Tr[\genpovmel\genDensity]$ for all $\genDensity\in\setDensity_=(\genQReg)$ and all $\genClassIndex$, then $\squashMap$ is a squashing map.
\end{definition}

A particular construction of a squashing map, which always exists when the POVM elements are block-diagonal in the same basis, is the so-called \textit{flag-state squasher} (\cref{lem:flag_state_squasher}). It allows us to remove a block and replace it with classical flags storing the label $\genClassIndex \in \genClassAlph$, analogous to the tagging source map on the source side. For example, POVMs describing measurements with threshold detectors are infinite-dimensional as they act on the full Fock space. Then, the flag-state squasher allows the problem to be reduced to finite-dimensional POVMs analogously to the tagging source map, which will be used in \cref{sec:problem_reductions}. 
\begin{lemma}[Flag-state squasher {\protect\cite[Theorem~1]{zhang_security_2021}}]
\label{lem:flag_state_squasher}
    Let $\{\genpovmel\}_{\genClassIndex}$ be a POVM acting on the register $\genQReg_0 \oplus \genQReg_1$ with block-diagonal elements, i.e. we can write $\genpovmel = \genpovmel^{\genQReg_0} \oplus \genpovmel^{\genQReg_1}$. Let $\{\genpovmel'\}_{\genClassIndex}$ be a POVM with elements defined as 
    \begin{equation}
        \genpovmel' \coloneqq \genpovmel^{\genQReg_0} \oplus \ketbra{\genClassIndex}{\genClassIndex}_{\flagSpaceReg}\,,
    \end{equation}
    with orthonormal basis $\{\ket{\genClassIndex}\}_\genClassIndex$ on the \textit{flag space} $\hilbert_{\flagSpaceReg}$ orthogonal to $\hilbert_{\genQReg_0}$, i.e. where the second block is replaced by classical flags. Then there exists a squashing map $\fssMap$, also called \textit{flag-state squasher}, such that $\genpovmel = \fssMap^\dagger[\genpovmel']$ for all $\genClassIndex$.
    The space $\hilbert_{\genQReg_0}$ is also called the \textit{preserved subspace} and $\hilbert_{\genQReg_1}$ the \textit{non-preserved subspace}.
\end{lemma}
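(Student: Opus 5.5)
We construct $\fssMap$ explicitly and check the adjoint condition directly, mirroring the tagging source map in \cref{lem:tagging_source_map}. Let $\proj_0$ and $\proj_1$ be the orthogonal projectors onto $\hilbert_{\genQReg_0}$ and $\hilbert_{\genQReg_1}$, so that $\proj_0+\proj_1=\identity$. The map first performs the projective measurement $\{\proj_0,\proj_1\}$. On outcome $0$ it keeps the projected state in the preserved subspace. On outcome $1$ it measures the non-preserved block with the POVM $\{\genpovmel^{\genQReg_1}\}_{\genClassIndex}$ and prepares the corresponding flag. Concretely,
\begin{equation}
    \fssMap[\genDensity] \coloneqq \proj_0 \genDensity \proj_0 \oplus \sum_{\genClassIndex} \Tr\!\left[\genpovmel^{\genQReg_1} \proj_1\genDensity\proj_1\right] \ketbra{\genClassIndex}{\genClassIndex}_\flagSpaceReg ,
\end{equation}
viewed as a map from $\genQReg_0\oplus\genQReg_1$ to $\genQReg_0\oplus\flagSpaceReg$.

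The steps are then routine.

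First, we establish that $\{\genpovmel^{\genQReg_1}\}_\genClassIndex$ is itself a POVM on $\genQReg_1$. Each element is positive because $\genpovmel\succeq 0$ and block-diagonality makes each block positive. The elements sum to $\proj_1$ because $\sum_\genClassIndex \genpovmel=\identity$ and block-diagonality means the blocks sum separately. The same argument shows $\{\genpovmel^{\genQReg_0}\}_\genClassIndex$ sums to the identity on $\genQReg_0$, so $\{\genpovmel'\}_\genClassIndex$ is a valid POVM.

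Second, we verify that $\fssMap$ is CPTP. It is a composition of a projective instrument, a measure-and-prepare channel on the second branch, and a direct-sum embedding. Trace preservation follows from $\Tr[\proj_0\genDensity]+\sum_\genClassIndex \Tr[\genpovmel^{\genQReg_1}\proj_1\genDensity\proj_1]=\Tr[\proj_0\genDensity]+\Tr[\proj_1\genDensity]=1$.

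Third, we check the squashing condition in the form given in \cref{def:squashing_map}, namely $\Tr[\genpovmel'\fssMap[\genDensity]]=\Tr[\genpovmel\genDensity]$. The left side equals $\Tr[\genpovmel^{\genQReg_0}\proj_0\genDensity\proj_0]+\Tr[\genpovmel^{\genQReg_1}\proj_1\genDensity\proj_1]$, since the flag term picks out exactly outcome $\genClassIndex$. Because $\genpovmel=\genpovmel^{\genQReg_0}\oplus\genpovmel^{\genQReg_1}$ commutes with $\proj_0,\proj_1$, the off-diagonal blocks of $\genDensity$ do not contribute, and the sum equals $\Tr[\genpovmel\genDensity]$. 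Since this holds for every $\genDensity$, we obtain $\fssMap^\dagger[\genpovmel']=\genpovmel$.

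The only step needing care is the third: it must be justified that discarding the coherences between the two subspaces is harmless. This is exactly where block-diagonality of every POVM element is used, since $\Tr[(A\oplus B)\genDensity]$ only sees the diagonal blocks of $\genDensity$. Infinite dimensionality of $\genQReg_1$ causes no issue, because that block is only ever traced against bounded operators.
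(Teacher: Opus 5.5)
Your proposal is correct and follows essentially the same route the paper relies on: your map is the flag-state squasher of Zhang et al.\ that the paper describes in words after the lemma, and it is the $q_{m,j}=0$ case of the explicit map in \cref{eq:explicit_noise_channel_construction} (that map additionally dephases the preserved subspace into individual photon-number blocks, which makes no essential difference). Your trace-matching check is the same computation the paper carries out in \cref{ap:proof_combination_fss_noise_channel}.
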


Intuitively, the flag-state squasher first measures the input state with projector $\proj_{\genQReg_0}$ onto the preserved subspace and projector $\proj_{\genQReg_1}$ onto the non-preserved subspace. If the measurement outcome corresponds to the preserved subspace, the input state is left unchanged. If the measurement outcome corresponds to the non-preserved subspace, a measurement $\{\genpovmel^{\genQReg_1}\}_{\genClassIndex}$ is performed and the outcome $\genClassIndex$ forwarded as a classical flag. In this way, any weight in the non-preserved subspace is transferred to the flag space. The consequence of this is that if the weight of the input state in the non-preserved subspace is unbounded, then the weight in the flag space is unbounded as well. In the extreme case, the entire input could lie in the non-preserved subspace and be fully converted into flags, and Eve knows the measurement outcomes deterministically (and only trivial key rates are obtained). Therefore, we must derive bounds on the weight in the non-preserved subspace.

This is commonly achieved by choosing an outcome $W^{\genQReg}$ with non-zero eigenvalue $\lambda_\mathrm{min}$ in the non-preserved subspace where $\genQReg = \genQReg_0 \oplus \genQReg_1$.\footnote{The outcome $W^{\genQReg}$ may correspond to any of Bob’s POVM elements or a coarse-graining thereof. More generally, it may be any outcome whose statistics are observed during the protocol.} This implies that $W^{\genQReg} \geq \lambda_\mathrm{min} \proj_{\genQReg_1}$. Therefore, for all input states $\genDensity\in\setDensity_\leq(\genQReg)$ we can upper bound the weight in the flag space by
\begin{equation}
\label{eq:bound_subspace}
    \Tr[W^{\genQReg}\genDensity] \geq \lambda_\mathrm{min} \Tr[\proj_{\genQReg_1}\genDensity] = \lambda_\mathrm{min} \Tr[\proj_{\flagSpaceReg}\squashMap(\genDensity)]\,,
\end{equation}
where $\proj_{\flagSpaceReg}$ is the projector onto the flag space.
Hence, the weight in the flag space can be bounded directly from the statistics of the outcome $W^{\genQReg}$. This bound will be used in \cref{sec:security_source_squashing}.

When the detector is imperfectly characterized (see \cref{sec:imperfect_characterization_discussion}), the \textit{noise channel} construction (\cref{lem:noise_channel}) can be used to reduce the problem to that of a perfectly characterized detector. 

\begin{lemma}[Noise channel {\protect\cite[Theorem~3]{nahar_imperfect_2026}}]
\label{lem:noise_channel}
    Let $\{\genpovmel\}_{\genClassIndex}$ and $\{\genpovmelTilde\}_{\genClassIndex}$ be POVMs acting on the register $\genQReg$ with the same block-diagonal structure, i.e. $\genpovmel = \oplus_\blockVar \genpovmelBlock[\blockVar]$ and $\genpovmelTilde = \oplus_\blockVar \genpovmelBlockTilde[\blockVar]$, and $1>q_\blockVar\geq 0$ such that
    \begin{equation}
        \genpovmelBlock[\blockVar] - (1-\dtImpBnd_\blockVar) \genpovmelBlockTilde[\blockVar] \geq 0
    \end{equation}
    for all $\genClassIndex$ and $\blockVar$. Let $\{\genpovmel'\}_{\genClassIndex}$ be the target POVM with elements defined as 
    \begin{equation}
        \genpovmel' \coloneqq \left(\oplus_\blockVar \genpovmelBlockTilde[\blockVar]\right) \oplus \ketbra{\genClassIndex}{\genClassIndex}_{\flagSpaceReg}\,.
    \end{equation}
    Then there exists a squashing map $\noiseChMap \in \cptp(Q, Q\oplus \flagSpaceReg)$, also called \textit{noise channel}, such that
    \begin{align}
         \noiseChMap^\dagger[\genpovmel'] &= \genpovmel \\
         \noiseChMap^\dagger[\proj_\blockVar] &= (1-\dtImpBnd_\blockVar)\proj_\blockVar
    \end{align}
    for all $\genClassIndex$, where $\proj_\blockVar$ is a projector onto the block $\blockVar$.
\end{lemma}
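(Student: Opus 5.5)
We construct the noise channel explicitly in the Heisenberg picture and verify that it is completely positive and unital. Complete positivity of the adjoint gives complete positivity of $\noiseChMap$, and unitality of the adjoint gives trace preservation. Since everything is block-diagonal, we first define the map on each block $\blockVar$ and then take the direct sum. Within block $\blockVar$ the dual map is a convex split: with weight $1-\dtImpBnd_\blockVar$ the input is kept, and with the remaining weight it is replaced by classical flags. Concretely, we take the Schr\"odinger-picture map
\begin{equation}
    \noiseChMap[\genDensity] \coloneqq \sum_\blockVar (1-\dtImpBnd_\blockVar)\, \proj_\blockVar \genDensity \proj_\blockVar \;\oplus\; \sum_{\genClassIndex}\sum_\blockVar \Tr\!\left[N_{\genClassIndex,\blockVar}\, \genDensity\right] \ketbra{\genClassIndex}{\genClassIndex}_{\flagSpaceReg}\,,
\end{equation}
where $N_{\genClassIndex,\blockVar} \coloneqq \genpovmelBlock[\blockVar] - (1-\dtImpBnd_\blockVar)\genpovmelBlockTilde[\blockVar]$, regarded as an operator supported on block $\blockVar$. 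This map first pinches onto the block structure and then either keeps the block state or measures it with the POVM $\{N_{\genClassIndex,\blockVar}/\dtImpBnd_\blockVar\}_\genClassIndex$ and writes the outcome into the flag.

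The key steps, in order, are as follows.

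First, check that $\{N_{\genClassIndex,\blockVar}/\dtImpBnd_\blockVar\}_{\genClassIndex}$ is a valid POVM on block $\blockVar$ whenever $\dtImpBnd_\blockVar>0$. Positivity is exactly the hypothesis. For normalization, use $\sum_\genClassIndex \genpovmelBlock[\blockVar] = \sum_\genClassIndex \genpovmelBlockTilde[\blockVar] = \proj_\blockVar$, which gives
\begin{equation}
    \sum_\genClassIndex N_{\genClassIndex,\blockVar} = \dtImpBnd_\blockVar \proj_\blockVar\,.
\end{equation}
If $\dtImpBnd_\blockVar = 0$, then each $N_{\genClassIndex,\blockVar}$ is positive and the $N_{\genClassIndex,\blockVar}$ sum to zero, so all of them vanish and the flag branch is simply absent.

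Second, conclude that $\noiseChMap$ is CPTP. It is a pinching, followed by a sum of a scaled identity and a measure-and-prepare channel, both of which are completely positive. Trace preservation follows because the total weight on block $\blockVar$ is
\begin{equation}
    (1-\dtImpBnd_\blockVar)+\dtImpBnd_\blockVar = 1\,.
\end{equation}

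Third, compute the adjoint on the target POVM element $\genpovmel' = (\oplus_\blockVar \genpovmelBlockTilde[\blockVar]) \oplus \ketbra{\genClassIndex}{\genClassIndex}_\flagSpaceReg$. The first branch contributes $\sum_\blockVar (1-\dtImpBnd_\blockVar)\genpovmelBlockTilde[\blockVar]$. The flag branch contributes $\sum_\blockVar N_{\genClassIndex,\blockVar}$. Adding these gives $\oplus_\blockVar \genpovmelBlock[\blockVar] = \genpovmel$. Similarly, $\proj_\blockVar$ is orthogonal to the flag space, so $\noiseChMap^\dagger[\proj_\blockVar] = (1-\dtImpBnd_\blockVar)\proj_\blockVar$.

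Since the construction is explicit, there is no real obstacle; the only delicate point is the bookkeeping. One has to be careful that the input register is $\genQReg$ while the output is $\genQReg\oplus\flagSpaceReg$. One also has to make sure that the pinching does not alter the statistics of the original POVM, which holds because $\genpovmel$ is block-diagonal. In infinite dimensions, the sum over $\blockVar$ should be understood in the strong sense, which is harmless since the $\proj_\blockVar$ are orthogonal and sum to the identity.
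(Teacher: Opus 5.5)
Your construction is correct and matches the one the paper uses. The map in \cref{eq:explicit_noise_channel_construction}, from the proof of \cref{lem:combination_FSS_noise_channel} with the flag-state-squasher part dropped, keeps block $\blockVar$ with weight $1-\dtImpBnd[\blockVar]$ and otherwise measures with $\widetilde{R}_{\genClassIndex,\blockVar}=N_{\genClassIndex,\blockVar}/\dtImpBnd[\blockVar]$ and writes the outcome to a flag, which is checked by the same adjoint computation you give. The only difference is the case $\dtImpBnd[\blockVar]=0$: the paper inserts an arbitrary POVM there, whereas you observe that all $N_{\genClassIndex,\blockVar}$ vanish, which is slightly cleaner.
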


Similarly to the flag-state squasher, the noise channel first performs a measurement onto the blocks with projectors $\proj_\blockVar$. Based on the outcome $\blockVar$, the input state is preserved with probability $1-\dtImpBnd_\blockVar$ and, with probability $\dtImpBnd_\blockVar$, a measurement is performed and the outcome forwarded as a classical flag. As with the flag-state squasher, this means that part of the input weight is transferred to the flag space. Consequently, when the flag-state squasher and the noise channel are combined, the total weight mapped to the flag space is determined by the combination of both squashers. A bound on the combined weight in the flag space, similarly to \cref{eq:bound_subspace}, is given by \cref{lem:combination_FSS_noise_channel}.

\subsection{Security with source and squashing maps}
\label{sec:security_source_squashing}

\noindent Informally, if a source map relating two sets of states exists, then proving security for a protocol where Alice prepares one set of states implies security for the protocol where Alice instead prepares the other set of states. Similarly, if a squashing map relating two POVMs exists, then proving security for a protocol where the measurements are described by one POVM implies security for a protocol where the measurements are instead described by the other POVM. Notably, in the case of squashing maps, this implication holds even when the proof for the new protocol only considers a restricted set of attack channels, while security of the original protocol is given against arbitrary attacks. We formalize this for source maps in \cref{lem:epsilon_security_source_maps} and for squashing maps in \cref{lem:epsilon_security_squashing}.

\begin{lemma}[Security with source maps {\protect\cite[Lemma~9.1]{tupkary_rigorous_2026}}]
\label{lem:epsilon_security_source_maps}
    Consider the protocol $\big\{\big\{\Astate^{(j)}_{\Aclassical_j\Aprime_j}, \big\{\Bpovmel[j]\big\}_{\BclassicalVal\in\BclassicalAlph}, \annKeyMap[j]\big\}_{j=1}^\totRounds,$ $\ppMap\big\}$. Let $\big\{\big\{\tilde{\Astate}^{(j)}_{\Aclassical_j\Aprime'_j}, \big\{\Bpovmel[j]\big\}_{\BclassicalVal\in\BclassicalAlph}, \annKeyMap[j]\big\}_{j=1}^\totRounds, \ppMap\big\}$ be a new protocol with different signal states. Assume there exists a source map $\srcMap\in\cptp((\Aprime')_1^\totRounds, (\Aprime)_1^\totRounds)$ such that
    \begin{equation}
        \srcMap[\tilde \Astate_{\Aclassical_1^\totRounds(\Aprime')_1^\totRounds}] = \Astate_{\Aclassical_1^n(\Aprime)_1^\totRounds}\,.
    \end{equation}
    Then, if the new protocol is $\epsSecu$-secure, then the original protocol is $\epsSecu$-secure.
\end{lemma}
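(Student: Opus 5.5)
The plan is to absorb the source map $\srcMap$ into Eve's first attack channel, so that every attack on the original protocol becomes an admissible attack on the new protocol producing exactly the same output state. By \cref{def:epsilon_security}, security is a supremum over unrestricted attacks $\{\attackCh_j\in\cptp(\Ereg_{j-1},\Bmeas_j\Ereg_j')\}_j$. Here the only difference between the two protocols is the input state $\Astate_{\Aclassical_1^\totRounds(\Aprime)_1^\totRounds}$ versus $\tilde\Astate_{\Aclassical_1^\totRounds(\Aprime')_1^\totRounds}$. In the modified protocol of \cref{lem:modified_timing_protocol}, the quantum registers $(\Aprime)_1^\totRounds$, respectively $(\Aprime')_1^\totRounds$, form Eve's initial register $\Ereg_0$. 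So the natural move is to let Eve apply $\srcMap$ herself.

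The steps are as follows. Fix an arbitrary attack sequence $\{\attackCh_j\}_j$ for the original protocol, with $\attackCh_1\in\cptp((\Aprime)_1^\totRounds,\Bmeas_1\Ereg_1')$. Define a new attack for the new protocol by $\attackCh_1'\coloneqq\attackCh_1\circ\srcMap\in\cptp((\Aprime')_1^\totRounds,\Bmeas_1\Ereg_1')$ and $\attackCh_j'\coloneqq\attackCh_j$ for $j\geq2$. This is a legitimate CPTP attack, and since attacks are unrestricted it lies in the allowed set. Because $\srcMap$ acts only on the $\Aprime'$ registers and not on $\Aclassical_1^\totRounds$, it commutes with the identity on Alice's classical registers. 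Using the source-map hypothesis, we then get
\begin{equation*}
\gMapTildeFull_\totRounds\circ\attackCh_\totRounds'\circ\cdots\circ\gMapTildeFull_1\circ\attackCh_1'[\tilde\Astate_{\Aclassical_1^\totRounds(\Aprime')_1^\totRounds}]
=\gMapTildeFull_\totRounds\circ\attackCh_\totRounds\circ\cdots\circ\gMapTildeFull_1\circ\attackCh_1[\Astate_{\Aclassical_1^\totRounds(\Aprime)_1^\totRounds}].
\end{equation*}
The maps $\gMapTildeFull_j$, $\ppMap$ and $\perfectReplaceCh$ are identical in both protocols, since Bob's POVMs and the classical processing are unchanged. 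Hence the trace-norm distance from ideal is the same for the two attacks. Taking the supremum over the original attacks, that quantity is bounded by the supremum over the new protocol's attacks, which is at most $\epsSecu$ by assumption.

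The only point needing care is the bookkeeping of registers. We must check that $\attackCh_1'$ has the correct input register $\Ereg_0=(\Aprime')_1^\totRounds$ for the new protocol. We must also check that applying $\srcMap$ before $\attackCh_1$ is equivalent to applying it to Alice's prepared state, i.e. that $\srcMap$ acts trivially on $\Aclassical_1^\totRounds$ and on the later $\gMapTildeFull_j$ inputs. Both hold because $\gMapTildeFull_j$ acts only on $\Aclassical_j\Bmeas_j$. No real obstacle is expected: the statement is essentially that a source map is a channel Eve could have implemented herself.
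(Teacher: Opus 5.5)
Your proposal is correct and matches the standard argument behind the cited \cite[Lemma~9.1]{tupkary_rigorous_2026}; the paper itself only cites that lemma and gives no proof. In the modified protocol Eve's first attack acts on all of $\Ereg_0$ at once, so composing $\attackCh_1$ with $\srcMap$ turns every attack on the original protocol into an admissible unrestricted attack on the new protocol with an identical output state, and the supremum in \cref{def:epsilon_security} for the original protocol is therefore bounded by that of the new one.
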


\begin{lemma}[Security with squashing maps {\protect\cite[Lemma~9.3]{tupkary_rigorous_2026}}]
\label{lem:epsilon_security_squashing}
    Consider the protocol $\big\{\big\{\Astate^{(j)}_{\Aclassical_j\Aprime_j}, \big\{\Bpovmel[j]\big\}_{\BclassicalVal\in\BclassicalAlph}, \annKeyMap[j]\big\}_{j=1}^\totRounds,$ $\ppMap\big\}$. Let $\big\{\big\{\Astate^{(j)}_{\Aclassical_j\Aprime_j}, \big\{\BpovmelTilde[j]\big\}_{\BclassicalVal\in\BclassicalAlph}, \annKeyMap[j]\big\}_{j=1}^\totRounds, \ppMap\big\}$ be a new protocol with different POVMs. Assume there exist squashing maps $\squashMap_j \in \cptp(\Bmeas_j, \BmeasSquash_j)$ such that
    \begin{equation}
        \Bpovmel[j] = \squashMap_j^\dagger\left[\BpovmelTilde[j]\right]
    \end{equation}
    for all $j$ and $\BclassicalVal \in \BclassicalAlph$. Then, if the new protocol is $\epsSecu$-secure against all attacks in $\{\setAttackCh_j\}_j$ where
    \begin{equation}
    \label{eq:condition_restricted_attack_set_squashing}
        \setAttackCh_j \supset \squashMap_j \circ \cptp(\Ereg_{j-1}, \Bmeas_j \Ereg_j)
    \end{equation}
    for all $j$, then the original protocol is $\epsSecu$-secure against all attacks in $\{\cptp(\Ereg_{j-1}, \Bmeas_j \Ereg_j)\}_j$.
\end{lemma}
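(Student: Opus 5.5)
The plan is a simulation argument. For every attack on the original protocol, we construct an attack on the new protocol that produces exactly the same final state, including the outputs of $\ppMap$ and $\perfectReplaceCh$. Since the security supremum for the new protocol is taken over a set containing all these simulated attacks, the original supremum is bounded by $\epsSecu$.

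\textbf{Step 1 (single round).} Fix $j$. We show that $\gMapTildeFull_j$ built from $\{\Bpovmel[j]\}$ equals $\gMapTildeFull_j$ built from $\{\BpovmelTilde[j]\}$ composed with $\squashMap_j$ on $\Bmeas_j$. The map $\gMapTildeFull_j$ does three things in order:
\begin{itemize}
\item it measures $\Bmeas_j$ with the POVM, writing the outcome $\BclassicalVal$ into $\BclassicalReg_j$;
\item it applies the classical maps $\annKeyMap[j]$ to $\Aclassical_j\BclassicalReg_j$, producing $\varDecReg_j$, $\eveCopyReg_j$ and $\secretReg_j$;
\item it keeps $\Aclassical_j$.
\end{itemize}
Its action on any operator on $\Aclassical_j\Bmeas_j\Ereg_j'$ depends on $\Bmeas_j$ only through the maps $X\mapsto \Tr_{\Bmeas_j}[(\Bpovmel[j]\otimes\identity)X]$. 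The squashing condition $\Bpovmel[j]=\squashMap_j^\dagger[\BpovmelTilde[j]]$ gives
\begin{equation}
\Tr_{\Bmeas_j}\bigl[(\Bpovmel[j]\otimes\identity)X\bigr]=\Tr_{\BmeasSquash_j}\bigl[(\BpovmelTilde[j]\otimes\identity)(\squashMap_j\otimes\mathrm{id})[X]\bigr]
\end{equation}
for all $X$, by the definition of the adjoint extended to $\squashMap_j\otimes\mathrm{id}$. Hence the two expressions for $\gMapTildeFull_j$ agree as channels on $\Aclassical_j\Bmeas_j\Ereg_j'$.

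\textbf{Step 2 (chaining).} Take arbitrary attacks $\attackCh_j\in\cptp(\Ereg_{j-1},\Bmeas_j\Ereg_j')$ on the original protocol, and define $\attackCh'_j\coloneqq\squashMap_j\circ\attackCh_j$. By the hypothesis in \cref{eq:condition_restricted_attack_set_squashing}, each $\attackCh'_j$ lies in the new protocol's attack set $\setAttackCh_j$. Using Step 1 round by round in the expression \cref{eq:state_output_protocol}, we get
\begin{equation}
\gMapTildeFull_\totRounds\circ\attackCh_\totRounds\circ\cdots\circ\gMapTildeFull_1\circ\attackCh_1=\widetilde{\gMapTildeFull}_\totRounds\circ\attackCh'_\totRounds\circ\cdots\circ\widetilde{\gMapTildeFull}_1\circ\attackCh'_1 ,
\end{equation}
where the tilde denotes the maps built from the new POVMs. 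This needs a little care about ordering: $\squashMap_j$ acts only on $\Bmeas_j$, which is produced by $\attackCh_j$ and consumed immediately by $\gMapTildeFull_j$. It therefore commutes with everything acting on Eve's register and on the other rounds' registers. The Eve registers $\Ereg_j$, built from $\Ereg_j'$ and $\eveCopyReg_j$, have the same form in both protocols. Applying the same $\ppMap$ and $(\identity-\perfectReplaceCh)$ to both sides then gives equal trace norms.

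\textbf{Step 3 (conclusion).} Every original trace norm equals a trace norm for some attack in the new protocol's attack set, so it is bounded by the supremum in \cref{def:epsilon_security} for the new protocol, which is at most $\epsSecu$. Taking the supremum over the original attacks completes the proof.

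The only real obstacle is Step 1. There we must check that the squashing identity $\Bpovmel[j]=\squashMap_j^\dagger[\BpovmelTilde[j]]$, stated for input states on $\Bmeas_j$ alone, still gives a channel identity when $\Bmeas_j$ is correlated with $\Aclassical_j$ and Eve's system. This follows because the adjoint relation is linear and extends to $\squashMap_j\otimes\mathrm{id}$, so it holds on arbitrary joint operators and not just on product inputs.
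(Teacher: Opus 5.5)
Your proposal is correct and follows the standard argument behind the cited Lemma~9.3 of Ref.~\cite{tupkary_rigorous_2026}; the paper itself gives no proof beyond that citation. You absorb the squashing map into Eve's attack, $\mathcal{A}'_j=\Lambda_j\circ\mathcal{A}_j$, which lies in the restricted attack set by hypothesis. The adjoint relation $M_y^{B_j}=\Lambda_j^\dagger[\widetilde{M}_y^{B'_j}]$, extended by linearity to $\Lambda_j\otimes\mathrm{id}$ on correlated inputs, then makes the round maps, and hence the final states and trace distances, of the two protocols identical.
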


As discussed in the previous section, after applying the flag-state squasher, if Eve is unrestricted, she may place arbitrary weight of the input state into the non-preserved subspace. However, by using the statistics of an outcome $W^{\Bmeas_j}$ with non-zero minimum eigenvalue $\lambdaMin$ in the non-preserved subspace, we can bound this weight via \cref{eq:bound_subspace}. Equivalently, this restriction can be incorporated into the restricted attack sets as
\begin{equation}
    \setAttackCh_j =\left\{\attackCh_j \in \cptp(\Ereg_{j-1}, \BmeasSquash_j \Ereg_j) : \attackCh_j^\dagger\left[\widetilde{W}^{\BmeasSquash_j} \otimes \identity_{\Ereg_j} \right] \geq  \lambdaMin   \attackCh_j^\dagger\left[\proj_\flagSpaceReg^{\BmeasSquash_j} \otimes \identity_{\Ereg_j} \right]\right\}\,,
\end{equation}
where $\widetilde{W}^{\BmeasSquash_j}$ is the squashed outcome corresponding to $W^{\Bmeas_j}$, i.e. $\squashMap^\dagger[\widetilde{W}^{\BmeasSquash_j}] = W^{\Bmeas_j}$. Then, \cref{eq:condition_restricted_attack_set_squashing} trivially holds for this construction of the restricted attack sets \cite{nahar_proof-technique-independent_2026,tupkary_rigorous_2026}. 

We later use the combination of flag-state squasher and noise channel, and therefore formalize the construction of the combined restricted set of attack channels in \cref{lem:combination_FSS_noise_channel}. 
First, the flag-state squasher isolates the finite-dimensional block 
$\blockVar \leq \fssCutoff$, where the detector is imperfectly characterized, and replaces the non-preserved subspace with classical flags. Second, the detector imperfections within the preserved subspace are probabilistically absorbed as additional flags through the noise channel construction. These additional flags are stored in the same flag space introduced by the flag-state squasher. This reduces the detector POVM to a \textit{finite-dimensional} and \textit{perfectly characterized} POVM.

\begin{lemma}
\label{lem:combination_FSS_noise_channel}
    Consider the protocol $\big\{\big\{\Astate^{(j)}_{\Aclassical_j\Aprime_j}, \big\{\Bpovmel[j]\big\}_{\BclassicalVal\in\BclassicalAlph}, \annKeyMap[j]\big\}_{j=1}^\totRounds,$ $\ppMap\big\}$ where Bob's measurement POVMs are block-diagonal in Fock space, i.e. $\Bpovmel[j] = \big(\oplus_{\blockVar=0}^{\fssCutoff}\BpovmelBlock{j}{\blockVar}\big)\oplus\BpovmelBlock{j}{\blockVar>\fssCutoff}$. Let $W^{\Bmeas_j}$ be an outcome and $\lambdaMin>0$ such that $ W^{\Bmeas_j} \geq \lambda_\mathrm{min} \proj_{>\fssCutoff}^{\Bmeas_j}$ with photon-number cutoff $\fssCutoff$, where $\proj_{>\fssCutoff}^{\Bmeas_j}$ is a projector onto the space $\blockVar>\fssCutoff$. Assume that there exists a set of known POVM elements $\{\BpovmelBlockTilde{j}{\blockVar}\}_{j, \BclassicalVal\in \BclassicalAlph, \blockVar\leq\fssCutoff}$, such that
    \begin{equation}
    \label{eq:op_ineq_combination_fss}
        \BpovmelBlock{j}{\blockVar} - (1 - \dtImpBnd[\blockVar, j]) \BpovmelBlockTilde{j}{\blockVar} \geq 0\,,
    \end{equation}
    with $1>\dtImpBnd[\blockVar, j]\geq 0$ for all $\blockVar \leq \fssCutoff$ and all $\BclassicalVal\in\BclassicalAlph$. Define a new set of POVMs $\big\{\BpovmelTarg[j]\big\}_{j, \BclassicalVal\in \BclassicalAlph}$ where
    \begin{equation}
    \label{eq:povm_virt}
        \BpovmelTarg[j] \coloneqq \left(\oplus_{\blockVar = 0}^\fssCutoff \BpovmelBlockTilde{j}{\blockVar} \right) \oplus \ketbra{\BclassicalVal}{\BclassicalVal}_\flagSpaceReg\,.
    \end{equation}
    Then there exist squashing maps $\squashMap_j\in\cptp(\Bmeas_j, \BmeasSquash_j)$ such that
    \begin{align}
        \squashMap_j^\dagger \left[\BpovmelTarg[j]\right] &= \Bpovmel[j]
    \end{align}
    and if the restricted attack channel is constructed as
    \begin{align}
    \label{eq:restricted_attack_set_fss_noise}
         \setAttackCh_j = \left\{\attackCh_j \in \cptp(\Ereg_{j-1}, \BmeasSquash_j \Ereg_j) : \attackCh_j^\dagger\left[\widetilde{W}^{\Bmeas_j'} \otimes \identity_{\Ereg_j} \right] \geq  \lambdaMin  \left( \attackCh_j^\dagger\left[\identity_{\BmeasSquash_j\Ereg_j} \right] - \sum_{\blockVar=0}^{\fssCutoff} \frac{1}{1 - \dtImpBnd[\blockVar, j]} \attackCh_j^\dagger\left[\proj_\blockVar^{\BmeasSquash_j} \otimes \identity_{\Ereg_j} \right]\right)\right\}
    \end{align}
    where $\squashMap_j^\dagger \left[\widetilde{W}^{\Bmeas_j'}\right] = W^{\Bmeas_j}$, then
    \begin{equation}
        \setAttackCh_j \supset \squashMap_j \circ \cptp(\Ereg_{j-1}, \Bmeas_j \Ereg_j)
    \end{equation}
    for all rounds $j$.
\end{lemma}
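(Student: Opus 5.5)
Construct $\squashMap_j$ as the composition $\noiseChMap_j\circ\fssMap_j$ of the flag-state squasher (\cref{lem:flag_state_squasher}) and the noise channel (\cref{lem:noise_channel}), with all flags written into one shared flag space $\flagSpaceReg$. Then check two things: the adjoint identity on POVM elements, and that squashed attacks satisfy the operator inequality in \cref{eq:restricted_attack_set_fss_noise}.

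First, apply \cref{lem:flag_state_squasher} with preserved subspace $\genQReg_0=\oplus_{\blockVar\le\fssCutoff}\hilbert_\blockVar$ and non-preserved subspace $\genQReg_1=\hilbert_{>\fssCutoff}$. This gives $\fssMap_j$ with
\[
\fssMap_j^\dagger\big[(\oplus_{\blockVar\le\fssCutoff}\BpovmelBlock{j}{\blockVar})\oplus\ketbra{\BclassicalVal}{\BclassicalVal}_\flagSpaceReg\big]=\Bpovmel[j].
\]
Second, apply \cref{lem:noise_channel} on the register $\genQReg_0\oplus\flagSpaceReg$. On the blocks $\blockVar\le\fssCutoff$ we use $\dtImpBnd[\blockVar,j]$ and \cref{eq:op_ineq_combination_fss}. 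On the flag block we use $\dtImpBnd=0$: it is already perfectly known, and the inequality holds trivially with $\ketbra{\BclassicalVal}{\BclassicalVal}$ as both actual and target element. The new flags produced by the noise channel are merged into the same $\flagSpaceReg$, i.e. the noise channel output $\ketbra{\BclassicalVal}{\BclassicalVal}_{\flagSpaceReg'}$ is relabelled onto $\flagSpaceReg$. This is legitimate because the target element's flag part is $\ketbra{\BclassicalVal}{\BclassicalVal}_\flagSpaceReg$ in both cases, so merging is a coarse-graining channel whose adjoint preserves the relevant operators. Composing the adjoints gives $\squashMap_j^\dagger[\BpovmelTarg[j]]=\Bpovmel[j]$. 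We also record the noise channel property $\noiseChMap_j^\dagger[\proj_\blockVar]=(1-\dtImpBnd[\blockVar,j])\proj_\blockVar$ for $\blockVar\le\fssCutoff$. Since the flag-state squasher acts as the identity on the preserved blocks, this yields
\[
\squashMap_j^\dagger[\proj_\blockVar^{\BmeasSquash_j}]=(1-\dtImpBnd[\blockVar,j])\proj_\blockVar^{\Bmeas_j}.
\]

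For the attack-set inclusion, take $\attackCh_j=(\squashMap_j\otimes\mathrm{id}_{\Ereg_j})\circ\attackCh'_j$ with $\attackCh'_j$ arbitrary, so $\attackCh_j^\dagger=\attackCh_j'^\dagger\circ(\squashMap_j^\dagger\otimes\mathrm{id})$. Then:
\begin{itemize}
\item The left-hand side of the defining inequality becomes $\attackCh_j'^\dagger[W^{\Bmeas_j}\otimes\identity]$.
\item Using the projector relation above and unitality of $\squashMap_j^\dagger$, the bracket on the right becomes
\[
\attackCh_j'^\dagger\Big[\big(\identity_{\Bmeas_j}-\textstyle\sum_{\blockVar\le\fssCutoff}\proj_\blockVar^{\Bmeas_j}\big)\otimes\identity\Big]=\attackCh_j'^\dagger\big[\proj_{>\fssCutoff}^{\Bmeas_j}\otimes\identity\big].
\]
\end{itemize}
The required inequality therefore reduces to $\attackCh_j'^\dagger[(W^{\Bmeas_j}-\lambdaMin\proj_{>\fssCutoff}^{\Bmeas_j})\otimes\identity]\ge0$. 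This follows from the hypothesis $W^{\Bmeas_j}\ge\lambdaMin\proj_{>\fssCutoff}^{\Bmeas_j}$, because $\attackCh_j'^\dagger$ is completely positive, hence positive on the extended register.

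The main obstacle is bookkeeping rather than a deep step. We must make sure the composed map really satisfies $\squashMap_j^\dagger[\proj_\blockVar^{\BmeasSquash_j}]=(1-\dtImpBnd[\blockVar,j])\proj_\blockVar^{\Bmeas_j}$, where $\proj_\blockVar^{\BmeasSquash_j}$ projects only onto the preserved block $\blockVar$ (not onto flags). This needs the flag-state squasher to act trivially on $\oplus_{\blockVar\le\fssCutoff}$, and it needs the flag merging to leave the preserved blocks untouched. The rest uses only positivity and $\squashMap_j^\dagger[\identity]=\identity$. I would also verify that the noise channel lemma tolerates a block with $\dtImpBnd=0$; if it does not, apply the noise channel only on $\genQReg_0$ and extend by the identity on $\flagSpaceReg$.
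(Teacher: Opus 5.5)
Your proposal is correct. The attack-set inclusion step is exactly the paper's: write $\attackCh_j^\dagger=\attackCh_j^{\prime\dagger}\circ(\squashMap_j^\dagger\otimes\mathrm{id})$, use $\squashMap_j^\dagger[\proj_\blockVar^{\BmeasSquash_j}]=(1-\dtImpBnd[\blockVar,j])\proj_\blockVar^{\Bmeas_j}$ and unitality to reduce to $W^{\Bmeas_j}\geq\lambdaMin\proj_{>\fssCutoff}^{\Bmeas_j}$, then apply positivity of $\attackCh_j^{\prime\dagger}$.

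The difference lies in how $\squashMap_j$ is obtained. You compose the black-box maps of \cref{lem:flag_state_squasher} and \cref{lem:noise_channel} with a flag-merging channel. Your $\dtImpBnd=0$ flag block is already allowed by the hypothesis $1>\dtImpBnd\geq0$ of \cref{lem:noise_channel}, so the fallback is unnecessary. The paper instead writes the combined map as a single formula, \cref{eq:explicit_noise_channel_construction}. It measures the block label and sends $\blockVar>\fssCutoff$ to flags via $\{\BpovmelBlock{j}{\blockVar>\fssCutoff}\}_{\BclassicalVal}$. For $\blockVar\leq\fssCutoff$ it keeps the block with probability $1-\dtImpBnd[\blockVar,j]$; otherwise it measures $\widetilde{R}_{\BclassicalVal,j,\blockVar}=\big(\BpovmelBlock{j}{\blockVar}-(1-\dtImpBnd[\blockVar,j])\BpovmelBlockTilde{j}{\blockVar}\big)/\dtImpBnd[\blockVar,j]$ and outputs the flag. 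After merging, your composite is this same channel.

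What the paper's explicit route buys:
\begin{itemize}
\item It is self-contained: the adjoint identity and the projector relation follow from a direct trace computation, with no merging channel.
\item It does not rely on $\fssMap_j^\dagger[\proj_\blockVar]=\proj_\blockVar$ for $\blockVar\leq\fssCutoff$. That property holds for the standard flag-state squasher construction but is not part of \cref{lem:flag_state_squasher} as stated, as you rightly flagged.
\item It supplies the concrete $\squashMap_j$ that \cref{th:security_imperfect_devices} uses to define $\widetilde{W}^{\Bmeas_j'}$.
\end{itemize}

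What your modular route buys: it makes transparent that the combined squasher is literally the flag-state squasher followed by the noise channel. It would also transfer unchanged to any other pair of squashers with the same adjoint properties.
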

\begin{proof}
    While the construction is technically inferred by Refs.~\cite{nahar_imperfect_2026,tupkary_rigorous_2026}, to the best of our knowledge no formal proof has previously appeared in the literature. We therefore provide a proof in \cref{ap:proof_combination_fss_noise_channel} for completeness. 
\end{proof}

Having introduced the various source and squashing maps, we provide a brief overview of the sequence of source  and squashing maps used to incorporate imperfect devices. By combining the tagging source map (\cref{lem:tagging_source_map}) with the partial characterization source map (\cref{lem:source_map_imperfect_source}) we reduce the problem of proving security for a protocol with imperfectly characterized and infinite-dimensional source to that of a protocol with finite-dimensional source and an additional optimization for the imperfect characterization (through \cref{lem:epsilon_security_source_maps}). Similarly, for the detector, combining the flag-state squasher (\cref{lem:flag_state_squasher}) and the noise channel construction (\cref{lem:noise_channel}) we are able to reduce the problem of proving security for a protocol with imperfectly characterized and infinite-dimensional detector to that of a protocol with perfectly characterized and finite-dimensional detector (through \cref{lem:epsilon_security_squashing,lem:combination_FSS_noise_channel}). Thus, we can combine these source maps and squashing maps to handle imperfectly characterized sources and detectors, which we argue rigorously in \cref{sec:problem_reductions}. 
Recall that an overview of the various tools used is given in \cref{tab:overview_tools_imperfections}. Before doing so, we consider some other necessary elements of the security proof.

\subsection{Security using the MEAT}
\label{sec:security_using_MEAT}

\noindent Consider the protocol $\big\{\big\{\Astate^{(j)}_{\Aclassical_j\Aprime_j}, \big\{\Bpovmel[j]\big\}_{\BclassicalVal\in\BclassicalAlph}, \annKeyMap[j]\big\}_{j=1}^\totRounds, \ppMap\big\}$. Following \cite[Lemma~8.5]{tupkary_rigorous_2026}, $\epsSecu$-security is first decomposed into $\epsCor$-correctness and $\epsSecr$-secrecy. Correctness follows directly from the error verification step in the \nameref{prot:generic_qkd_protocol}, which is identical to that considered in \cite[Lemma~8.6]{tupkary_rigorous_2026}. Therefore, the main task is to prove $\epsSecr$-secrecy. In \cref{sec:entanglement_based_picture}, we construct an equivalent entanglement-based protocol required for our analysis. Then, in \cref{sec:introduction_meat}, we apply the MEAT to the constructed protocol.

\subsubsection{Entanglement-based picture}
\label{sec:entanglement_based_picture}

\noindent Informally, instead of preparing states in register $\Aprime$ according to a probability distribution over $\Aclassical$, Alice can be viewed as preparing an entangled state in registers $\Ameas\Ashield\Aprime$ with shield system $\Ashield$, keeping $\Amarg\coloneqq \Ameas\Ashield$, and sending $\Aprime$ to Bob. At a later stage, Alice measures $\Amarg$ and stores the outcome in the classical register $\Aclassical$. This is a standard argument referred to as the \textit{source-replacement scheme} (which we describe in greater detail in \cref{lem:source_replacement_scheme}).
Using this, we can describe the evolution of the state from the \nameref{prot:generic_qkd_protocol}, after the time-commuting argument from \cref{lem:modified_timing_protocol}, using an equivalent \nameref{prot:entanglement_qkd_protocol}. Therefore, for the purpose of proving security, we may equivalently consider this entanglement-based protocol, cf. \cref{lem:relation_entanglement_picture}. The resulting sequence of channels is illustrated in \cref{fig:MEAT_channels}. 

\newcommand{\MEATChFig}{

\begin{tikzpicture}[>=Stealth,auto,
  box/.style  ={draw,minimum width=1cm,minimum height=1cm,
                align=center,fill=red!20},
  cpbox/.style ={draw,minimum width=1.4cm,minimum height=0.8cm,
                align=center,fill=gray!20},
  sig/.style  ={->,rounded corners=4pt},
  merge/.style={circle,fill,inner sep=1.2pt}
]


\node[box]   (N1) {$\attackCh_1$};
\node[cpbox, below right=\offsetY and \offsetX of N1] (L1) {$\gMap_1$};

\node[box,   right=\colsep of N1] (N2) {$\attackCh_2$};
\node[cpbox, below right=\offsetY and \offsetX of N2] (L2) {$\gMap_2$};

\node[right=\colsep of N2] (dots) {$\cdots$};
\node[box,   right= of dots] (Nn) {$\attackCh_n$};
\node[cpbox, below right=\offsetY and \offsetX of Nn] (Ln) {$\gMap_n$};

\node[left =1.4cm of N1] (E0) {};
\node[right=4cm   of Nn] (En) {};

\coordinate (join1) at ($(L1.east |- N1) + (1.0,0)$);
\coordinate (join2) at ($(L2.east |- N2) + (1.0,0)$);
\coordinate (joinn) at ($(Ln.east |- Nn) + (1.0,0)$);

\draw[sig] (E0) node[above=2pt, xshift=10pt]{$E_{0}=(A')_1^n$} -- (N1.west);

\draw[sig] (N1.east) node[above=2pt, xshift=10pt] {$E'_1$} -- (join1);
\draw[sig] (L1.east) node[above=2pt, xshift=10pt]   {$\eveCopyReg_1$}
           -- ++(0.6,0) |- (join1);
\node[merge] at (join1) {};
\draw[sig] (join1) -- node[above=2pt,xshift=1pt] {$E_{1}$} (N2.west);

\draw[sig] (N2.east) node[above=2pt, xshift=10pt] {$E'_2$} -- (join2);
\draw[sig] (L2.east) node[above=2pt, xshift=10pt]   {$\eveCopyReg_2$}
           -- ++(0.6,0) |- (join2);
\node[merge] at (join2) {};
\draw[sig] (join2) -- node[above=2pt,xshift=1pt] {$E_{2}$} (dots.west);


\draw[sig] (Nn.east) node[above=2pt, xshift=10pt] {$E'_n$} -- (joinn);
\draw[sig] (Ln.east) node[above=2pt, xshift=10pt]   {$\eveCopyReg_n$}
           -- ++(0.6,0) |- (joinn);
\node[merge] at (joinn) {};

\draw[sig] (dots.east) node[above=2pt,xshift = 3pt] {$E_{n-1}$} -- (Nn.west);
\draw[sig] (joinn) -- node[above=2pt,xshift=1pt] {$E_{n}$} (En);

\foreach \i/\N/\L in {1/N1/L1, 2/N2/L2, n/Nn/Ln}{
  \draw[sig] (\N.south) -- ++(0,-0.6)
             |- node[above right=3pt] {$B_{\i}$} (\L.west);
}

\foreach \i/\L in {1/L1, 2/L2, n/Ln}{
  \node[below=1.5cm of \L] (S\i) {$S_{\i}\widehat{C}_{\i}$};
  \draw[sig] (\L.south) -- (S\i);

  \node[above=2.5cm of \L] (A\i) {$A_{\i}$};
  \draw[sig,<-] (\L.north) -- (A\i);
}

\draw[red,thick,rounded corners]
  ($ (N1.north west) + (-3pt,3pt)$)  
  rectangle
  ($ (join1 |- L1.south) + (3pt,-6pt)$);   
\node[text=red] at ($($(N1.north west) + (0pt, 10pt)$)$) {\(\mathcal{M}_1\)};
\draw[red,thick,rounded corners]
  ($ (N2.north west) + (-3pt,3pt)$)
  rectangle
  ($ (join2 |- L2.south) + (3pt,-6pt)$);
\node[text=red] at ($($(N2.north west) + (0pt, 10pt)$)$) {\(\mathcal{M}_2\)};
\draw[red,thick,rounded corners]
  ($ (Nn.north west) + (-3pt,3pt)$)
  rectangle
  ($ (joinn |- Ln.south) + (3pt,-6pt)$);
\node[text=red] at ($($(Nn.north west) + (0pt, 10pt)$)$) {\(\mathcal{M}_n\)};
\end{tikzpicture}

}

\begin{figure}
    \centering
    \scalebox{0.9}{\MEATChFig}
    \caption{Sequence of channels describing the evolution of the state described in the \nameref{prot:entanglement_qkd_protocol} \cite[Fig.~4]{tupkary_rigorous_2026}.}
    \label{fig:MEAT_channels}
\end{figure}

{\captionsetup{justification=raggedright, singlelinecheck=false, labelfont=bf}
\renewcommand{\figurename}{Protocol}
\captionof{protocol}[Virtual entanglement-based Protocol]{Virtual entanglement-based Protocol}
\label{prot:entanglement_qkd_protocol}
}

\noindent \textbf{Parameters:}

\noindent\(\displaystyle
\begin{aligned}
\totRounds\in\mathbb{N} & : && \quad \text{Total number of rounds} \\
\Big\{\prob_{\Aclassical_j}\in\probSimplex_{|\AclassicalAlph|}\Big\}_{j=1}^{\totRounds} & : && \quad \text{Alice's setting choice probability distributions} \\
\Big\{\big\{(\Astate_\AclassicalVal^{(j)})_{\Aprime_j}\in\setDensity_=(\Aprime_j)\big\}_{\AclassicalVal\in\AclassicalAlph}\Big\}_{j=1}^{\totRounds} & : && \quad \text{Alice's signal states} \\
\Big\{\big\{\Bpovmel[j]\succeq 0\big\}_{\BclassicalVal\in\BclassicalAlph}:\sum_{\BclassicalVal\in\BclassicalAlph}\Bpovmel[j]=\identity_{\Bmeas_j}\Big\}_{j=1}^{\totRounds} & : && \quad \text{POVM elements describing Bob's measurements} \\
\Big\{\annFunc[j]:\AclassicalAlph\times\BclassicalAlph\rightarrow\probSimplex_{|\varDecAlph|}\Big\}_{j=1}^{\totRounds} & : && \quad \text{Probabilistic announcement maps} \\
\Big\{\keymapFunc[j]:\AclassicalAlph\times\varDecAlph\rightarrow\secretAlph\Big\}_{j=1}^{\totRounds} & : && \quad \text{Key maps} \\
\ppMap\in\cptp(\Aclassical_1^\totRounds\BclassicalReg_1^\totRounds\secretReg_1^\totRounds\varDecReg_1^\totRounds, \AkeyReg\BkeyReg\cppReg\varDecReg_1^\totRounds) & : && \quad \text{Classical post-processing map}
\end{aligned}
\) \vspace{0.3cm}

\noindent \textbf{Protocol steps:}

\begin{enumerate}
    \item Alice prepares the source-replaced state $\genDensity_{\Ameas_1^\totRounds \Ashield_1^\totRounds (\Aprime)_1^\totRounds} = \bigotimes_{j=1}^\totRounds \genDensity_{\Ameas_j \Ashield_j \Aprime_j}^{(j)}$ (\cref{lem:source_replacement_scheme}), sends the registers $(\Aprime)_1^\totRounds$ to Eve and keeps $\Ameas_1^\totRounds \Ashield_1^\totRounds$. We denote Eve's starting register as $\Ereg_0$.
    \item For every round $j\in \{1,\ldots,\totRounds\}$, the following maps are applied:
    \begin{enumerate}
        \item Eve implements her attack $\attackCh_j\in \setAttackCh_j \subseteq \cptp(\Ereg_{j-1}, \Bmeas_j \Ereg_j')$.
        \item \textbf{Measurements:} Eve forwards $\Bmeas_j$ to Bob. Bob measures the quantum register $\Bmeas_j$ with a POVM $\{\Bpovmel[j]\}_{\BclassicalVal\in \BclassicalAlph}$ and stores the outcome in a register $\BclassicalReg_j$ with alphabet $\BclassicalAlph$. Alice measures the quantum registers $\Ameas_j\Ashield_j$ with a POVM $\{\Apovmel[j]\}_{\AclassicalVal\in \AclassicalAlph}$ with $\Apovmel[j] = \ketbra{\AclassicalVal}{\AclassicalVal}_{\Ameas_j}\otimes \identity_{\Ashield_j}$ and stores the outcome in a register $\Aclassical_j$ with alphabet $\AclassicalAlph$. 
        \item \textbf{Public announcements:} Alice and Bob make public announcements based on their registers $\Aclassical_j$ and $\BclassicalReg_j$. We denote $\varDecReg_j$ the register storing all public announcements in the $j$th round and $\varDecAlph$ its alphabet. The mapping between their outcomes and the announcements may be probabilistic and can be represented by a stochastic map $\annFunc[j]:\AclassicalAlph \times \BclassicalAlph \rightarrow \probSimplex_{|\varDecAlph|}$. In particular, the register $\varDecReg_j$ stores the decision whether the round is a test or generation round, cf. \cref{rem:test_gen_decision}. A copy $\widetilde C_j$ of all announcements is given to Eve. 
        \item \textbf{Sifting and key map:} Alice maps her private data $\Aclassical_j$ and public announcements $\varDecReg_j$ to a classical value $\secretReg_j$ with alphabet $\secretAlph$. This can be represented by a function $\keymapFunc[j]:\AclassicalAlph\times\varDecAlph\rightarrow\secretAlph$. The last three steps can be described by a map $\gMap_j \in \cptp(\Amarg_j\Bmeas_j, \secretReg_j \varDecReg_j \widetilde C_j)$ where $\Amarg_j = \Ameas_j\Ashield_j$.
        \item The registers $\Ereg_j'$ and the copy of the classical announcements $\widetilde C_j$ are combined into a register $\Ereg_j$, which is forwarded to the next round.
        
        Each round can therefore be described by a map $\qkdMap_j\in\cptp(\Ereg_{j-1}\Amarg_j, \secretReg_j\varDecReg_j \Ereg_j)$ given by $\qkdMap_j \coloneqq \gMap_j\circ \attackCh_j$ such that the state output is can be written as
        \begin{equation}
        \label{eq:state_output_protocol}
            \genDensity_{\secretReg_1^\totRounds \varDecReg_1^\totRounds \Ereg_\totRounds} = \qkdMap_\totRounds\circ \ldots \circ \qkdMap_1\left(\genDensity_{\Ameas_1^\totRounds \Ashield_1^\totRounds (\Aprime)_1^\totRounds} \right)\,.
        \end{equation}
    \end{enumerate}
    \item Alice and Bob perform the remaining post-processing steps from the \nameref{prot:generic_qkd_protocol} described by $\ppMap\in\cptp(\Aclassical_1^\totRounds\BclassicalReg_1^\totRounds\secretReg_1^\totRounds\varDecReg_1^\totRounds, \AkeyReg\BkeyReg\cppReg\varDecReg_1^\totRounds)$.
\end{enumerate}

\begin{lemma}[{\protect\cite[Lemma~8.4]{tupkary_rigorous_2026}}]
\label{lem:relation_entanglement_picture}
    For any protocol $\big\{\big\{\Astate^{(j)}_{\Aclassical_j\Aprime_j}, \big\{\Bpovmel[j]\big\}_{\BclassicalVal\in\BclassicalAlph}, \annKeyMap[j]\big\}_{j=1}^\totRounds, \ppMap\big\}$, the evolution of the state can be described by a \nameref{prot:entanglement_qkd_protocol}, where the inputs are given by the elements of the tuple, cf. \cref{def:tuple_protocol}.
\end{lemma}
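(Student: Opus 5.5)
The plan is to build the entanglement-based description round by round via the source-replacement scheme, and then check that composing the per-round maps reproduces exactly the state in \cref{eq:state_output_protocol} of \cref{lem:modified_timing_protocol}. We start from the modified protocol of \cref{lem:modified_timing_protocol}, which we may use without loss of generality. For each round $j$ we choose the source-replaced state
\begin{equation*}
\ket{\genState^{(j)}}_{\Ameas_j\Ashield_j\Aprime_j} = \sum_{\AclassicalVal} \sqrt{\prob_{\Aclassical_j}(\AclassicalVal)}\, \ket{\AclassicalVal}_{\Ameas_j} \otimes \ket{\Astate_\AclassicalVal^{(j)}}_{\Ashield_j\Aprime_j},
\end{equation*}
where $\ket{\Astate_\AclassicalVal^{(j)}}_{\Ashield_j\Aprime_j}$ is any purification of $(\Astate_\AclassicalVal^{(j)})_{\Aprime_j}$ with the shield $\Ashield_j$ as purifying system. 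We set $\genDensity^{(j)}_{\Ameas_j\Ashield_j\Aprime_j}$ to be the corresponding projector, as in \cref{lem:source_replacement_scheme}.

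The first key step is a direct computation. Applying Alice's POVM $\Apovmel[j] = \ketbra{\AclassicalVal}{\AclassicalVal}_{\Ameas_j}\otimes\identity_{\Ashield_j}$ and recording the outcome in $\Aclassical_j$ gives a measure-and-prepare channel $\mathcal{N}_j \in \cptp(\Amarg_j, \Aclassical_j)$. After tracing out $\Ashield_j$, it satisfies $\mathcal{N}_j[\genDensity^{(j)}_{\Amarg_j\Aprime_j}] = \Astate^{(j)}_{\Aclassical_j\Aprime_j}$. The cross terms vanish because the $\ket{\AclassicalVal}_{\Ameas_j}$ are orthonormal, and the partial trace over $\Ashield_j$ returns $\Astate_\AclassicalVal^{(j)}$.

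The second step uses commutation. The channel $\mathcal{N}_j$ acts only on $\Amarg_j$. Eve's attacks $\attackCh_k$ act only on $\Ereg_{k-1}$, Bob's measurements act on $\Bmeas_k$, and the processing of earlier rounds $k<j$ acts only on $\Amarg_k$. All of these operate on disjoint registers, so they commute. We may therefore postpone each $\mathcal{N}_j$ until just before the classical processing of round $j$. Concretely, we define $\gMap_j \coloneqq \gMapMod_j \circ \mathcal{N}_j$, i.e.\ Alice's measurement, Bob's measurement, announcements, and key map, which is CPTP on $\Amarg_j\Bmeas_j$. We then set $\qkdMap_j = \gMap_j\circ\attackCh_j$. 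Plugging in and using $\bigotimes_j \mathcal{N}_j[\genDensity^{(j)}] = \Astate_{\Aclassical_1^\totRounds(\Aprime)_1^\totRounds}$, we get
\begin{equation*}
\qkdMap_\totRounds\circ\cdots\circ\qkdMap_1\bigl(\genDensity_{\Ameas_1^\totRounds\Ashield_1^\totRounds(\Aprime)_1^\totRounds}\bigr) = \gMapMod_\totRounds\circ\attackCh_\totRounds\circ\cdots\circ\gMapMod_1\circ\attackCh_1\bigl[\Astate_{\Aclassical_1^\totRounds(\Aprime)_1^\totRounds}\bigr].
\end{equation*}
The same argument applies with $\gMapTildeFull_j$ in place of $\gMapMod_j$ if the registers $\Aclassical_j\BclassicalReg_j$ are retained for $\ppMap$. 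The final post-processing $\ppMap$ then acts identically in both pictures.

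The main obstacle is the bookkeeping in this commutation step rather than any deep estimate. We must confirm that Eve's register $\Ereg_0=(\Aprime)_1^\totRounds$ is the same in both pictures, so that she receives no part of $\Ameas$ or $\Ashield$ and the set of admissible attacks is identical. We must also check that $\Aclassical_j$ is available to $\gMapTildeFull_j$ at the point where it is used. Since Alice's registers are never touched by Eve, this holds for every attack sequence $\{\attackCh_j\}_j$. It follows that the two output states coincide for all attacks, which is the claimed equivalence.
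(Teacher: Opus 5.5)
Your proposal is correct and follows essentially the same route the paper relies on via \cite[Lemma~8.4]{tupkary_rigorous_2026}: start from the modified-timing protocol of \cref{lem:modified_timing_protocol}, apply the source-replacement scheme of \cref{lem:source_replacement_scheme} in each round, and commute Alice's measurement $\ketbra{\AclassicalVal}{\AclassicalVal}_{\Ameas_j}\otimes\identity_{\Ashield_j}$ past all operations on disjoint registers so that it is absorbed into $\gMap_j$. The paper only sketches this argument and defers to the reference; your bookkeeping supplies the details it omits, namely that $\Ereg_0=(\Aprime)_1^\totRounds$ in both pictures and that $\Aclassical_j\BclassicalReg_j$ are retained for $\ppMap$.
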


Applying the MEAT to the \nameref{prot:entanglement_qkd_protocol} reduces the security analysis to a single-round optimization problem, as described in the next section. For the security analysis, we need to consider all attack channels $\attackCh_j \in \setAttackCh_j$. Instead, we would like to consider a related set of maps $\marginalSet_j \subseteq \cptp(A'_j, B_j)$, introduced in \cite[Sec.~8.5]{tupkary_rigorous_2026}. Doing so allows us to evaluate the optimization problem in the MEAT (\cref{eq:opti_original_meat}) by optimizing over a set of states obtained by applying the channels $\marginalMap_j \in \marginalSet_j$ to the source-replaced state (which has fixed dimensions) instead of the channels $\attackCh_j$ (which involve the systems $E_j$ of unknown dimension). 
To do so, we require that applying maps in $\marginalSet_j$ and then giving a purification to Eve is at least as powerful as applying maps in $\setAttackCh_j$.  

\begin{definition}[{\protect Marginal of an attack set \cite[Definition~8.4]{tupkary_rigorous_2026}}]
\label{def:marginal}
    Let $\setAttackCh_j\subseteq \cptp(\Ereg_{j-1}, \Bmeas_j\Ereg_j')$ be an attack set and $\marginalSet_j\subseteq \cptp(\Aprime_j, \Bmeas_j)$. The set $\marginalSet_j$ is said to be a \textit{marginal} of $\setAttackCh_j$ if for every $\omega_{\Amarg_j\Ereg_{j-1}\evePurReg}$ such that $\omega_{\Amarg_j} = \sigma^{(j)}_{\Amarg_j}$ and $\evePurReg$ is a purifying register for $\Amarg_j\Ereg_{j-1}$ and every $\attackCh_j \in \setAttackCh_j$, there exist $\marginalMap_j \in \marginalSet_j$ such that
    \begin{equation}
        \marginalMap_j\left[\sigma^{(j)}_{\Amarg_j\Aprime_j}\right] = \Tr_{\Ereg_j \evePurReg} \circ \attackCh_j\left[\omega_{\Amarg_j\Ereg_{j-1}\evePurReg}\right]\,,
    \end{equation}
    where $\sigma^{(j)}_{\Amarg_j\Aprime_j}$ is the source-replaced state, cf. \cref{lem:source_replacement_scheme}.
\end{definition}

\subsubsection{Application of the MEAT}
\label{sec:introduction_meat}

\noindent Recall that we are now only concerned with proving $\epsSecr$-secrecy. In the following, we provide an overview of the security proof based on \cite[Theorem~8.3]{tupkary_rigorous_2026}, where perfectly characterized and finite-dimensional devices are assumed. This allows us to introduce the relevant notation while identifying where imperfect knowledge about the devices enters the analysis. The extension of the security proof to imperfectly characterized devices is deferred to \cref{sec:security_imperfect_devices}. We refer the interested reader to Ref.~\cite{tupkary_rigorous_2026} for a detailed discussion of the steps that remain unchanged when considering imperfectly characterized devices.

We consider the equivalent \nameref{prot:entanglement_qkd_protocol}, whose sequence of channels is illustrated in \cref{fig:MEAT_channels}. Recall that $\genDensity_{\Amarg_j\Aprime_j}$ is Alice's source-replaced state in round $j$ with marginal constraint $\genDensity_{\Amarg_j}=\Astate^{(j)}_{\Amarg_j}$ given by \cref{lem:source_replacement_scheme}, where $\Amarg_j \coloneqq \Ameas_j \Ashield_j$. Secrecy is established via \cite[Theorem~8.1]{tupkary_rigorous_2026} by constructing a global tradeoff function $\fFull$ satisfying
\begin{equation}
    \fRenyiUpEntFull(\secretReg^{\totRounds}_1|\varDecReg_1^\totRounds\Ereg_{\totRounds}) \geq 0
\end{equation}
for the output state $\genDensity_{\secretReg_1^\totRounds\varDecReg_1^\totRounds\Ereg_\totRounds}$, cf.~\cref{eq:state_output_protocol}, and the $f$-weighted Rényi entropy is defined in \cref{def:f_weighted_renyi_entropy}. The global tradeoff function is then obtained using the MEAT as follows \cite[Sec.~8.7.1]{tupkary_rigorous_2026}.

\begin{enumerate}
    \item For every round $j$ and every realization $\varDecVal_1^{j-1}$, choose a tradeoff function $\ftradeoff{1}{j-1}:\varDecReg_j\rightarrow\mathbb{R}.$ The choice of tradeoff function is otherwise arbitrary.

    \item Construct the restricted set of attack channels $\setAttackCh_j\subseteq\cptp(\Ereg_{j-1},\Bmeas_j\Ereg_j)$ and a corresponding marginal set $\marginalSet_j$. These sets incorporate constraints resulting from potential squashing map reductions. 

    \item Let $\kappaQKDVirt\left(\ftradeoff{1}{j-1},\Astate^{(j)}_{\Amarg_j},\gMap_j\right)$ be any value satisfying
    \begin{align}
    \label{eq:kappa_without_uncertainty}
    \kappaQKDVirt\left(\ftradeoff{1}{j-1},\Astate^{(j)}_{\Amarg_j},\gMap_j\right)
    &\leq \tilde{\kappa}\left(\ftradeoff{1}{j-1},\Astate^{(j)}_{\Amarg_j},\gMap_j, \marginalSet_j\right) \\
    &\coloneqq \inf_{\genDensity_{\Amarg_j\Bmeas_j}\in\stateSet_j(\marginalSet_j)}\fRenyiUpEnt{1}{j-1}(\secretReg_j|\eveCopyReg_j\varDecReg_j\evePurReg)_{\gMap_j[\purFunc(\genDensity_{\Amarg_j\Bmeas_j})]} \label{eq:opti_original_meat}
    \end{align}
    where the set of output states is given by
    \begin{equation}
    \label{eq:def_set_out_marginal}
    \stateSet_j(\marginalSet_j)\coloneqq\left\{\marginalMap_j[\genDensity_{\Amarg_j\Aprime_j}]:\genDensity_{\Amarg_j}=\Astate^{(j)}_{\Amarg_j},\marginalMap_j\in\marginalSet_j\right\}\,,
    \end{equation}
    with purifying function $\purFunc$ from $\Amarg_j\Bmeas_j$ onto $\evePurReg$ (see \cite[Def.~6.2]{tupkary_rigorous_2026}), and the map $\gMap_j$ is defined in the \nameref{prot:entanglement_qkd_protocol} (and depends on Bob's POVM $\{\Bpovmel[j]\}_{\BclassicalVal\in \BclassicalAlph}$).

    \item Define the global tradeoff function
    \begin{equation}
    \fFull\coloneqq\sum_{j=1}^{\totRounds}\left(\ftradeoff{1}{j-1}(\varDecVal_j)+\kappaQKDVirt\left(\ftradeoff{1}{j-1},\Astate^{(j)}_{\Amarg_j},\gMap_j\right)\right)\,.
    \end{equation}

    \item Finally, the protocol is $\epsilon$-secure when the output key length satisfies
    \begin{equation}
    \keyl(\varDecVal^\totRounds_1)=\max\left\{0,\left\lfloor\fFull(\varDecVal^\totRounds_1)-\ECcost(\varDecVal^\totRounds_1)-\left\lceil\log\left(\frac{1}{\epsCor}\right)\right\rceil-\frac{\alpha}{\alpha-1}\log\left(\frac{1}{\epsSecr}\right)+2\right\rfloor\right\}\,.
    \end{equation}
\end{enumerate}

If we now consider a protocol with uncertainty about the devices, the issue is that we can no longer directly evaluate step 3 above, since the marginal states $\big\{\sigma_{\Amarg_j}^{(j)}\big\}_j$ and the protocol maps $\big\{\gMap_j\big\}_j$ are not known exactly (recall that the protocol maps are determined by the measurement POVMs, which are now imperfectly characterized). Instead, they are only known to belong to corresponding sets of possible marginal states and protocol maps, determined by the characterization and modeling of the devices. Therefore, we need to additionally optimize \cref{eq:kappa_without_uncertainty} over all elements in these sets.

While this reduces the problem to an optimization over compatible marginals and protocol maps, the resulting optimization problem may not be numerically tractable. Moreover, the signal states and detector POVMs may be infinite-dimensional, in which case \cite[Theorem~8.3]{tupkary_rigorous_2026} cannot even be applied, cf.~\cref{sec:security_using_MEAT}. To resolve these issues, we
\begin{enumerate}
    \item[(a)] apply a suitable sequence of squashing maps to reduce imperfect and potentially infinite-dimensional detectors to finite-dimensional, perfectly characterized POVMs,
    \item[(b)] apply a suitable sequence of source maps to reduce imperfect and potentially infinite-dimensional sources to finite-dimensional source states (which are not perfectly known), and
    \item[(c)] construct the resulting set of possible marginals describing the remaining uncertainty about the virtual source.
\end{enumerate}

Having identified the points at which imperfect devices enter the security proof of \cite[Theorem~8.3]{tupkary_rigorous_2026}, and having introduced all the required tools in the previous sections, we are now in a position to explicitly carry out these steps in the following section.
\section{Security with imperfect devices}
\label{sec:security_imperfect_devices}

\noindent In this section, we combine the tools from \cref{sec:toolbox} to prove security with imperfect and imperfectly characterized devices. We first state the model assumptions on the physical source and detector and apply a sequence of source maps and squashing maps, cf. \cref{sec:source_maps,sec:squashing_maps}, to reduce security of the original protocol to security of a virtual finite-dimensional protocol with constraints induced by the device imperfections. Then, in \cref{sec:construction_set_output_states} we construct the set of output states entering the $\tilde{\kappa}$ optimization in \cref{eq:opti_original_meat}, and derive the resulting security statement for the \nameref{prot:generic_qkd_protocol} with imperfect and imperfectly characterized devices in \cref{sec:security_statement}.

\subsection{Model assumptions}
\label{sec:model_assumptions}

\noindent We denote $\AclassicalVal \in \AclassicalAlph$ Alice's setting choice where $\AclassicalVal=(a,\mu)$ includes the encoding $a\in\tilde{\AclassicalAlph}$ (basis choice and bit value) and, for decoy-state protocols, the intensity choice $\mu\in \intensityAlph$. For non-decoy state protocols, e.g. without intensity choice or when only one intensity is used, $\intensityAlph$ is a singleton. This allows us to describe both decoy and non-decoy protocols uniformly in the following. 

Consider the protocol $\big\{\big\{\Astate^{(j)}_{\Aclassical_j\Aprime_j}, \big\{\Bpovmel[j]\big\}_{\BclassicalVal\in\BclassicalAlph}, \annKeyMap[j]\big\}_{j=1}^\totRounds, \ppMap\big\}$ as defined in \cref{def:tuple_protocol}. We make the following assumptions on the inputs of the protocol, which translate into requirements on the physical devices. 

\begin{assumption}[Block-diagonal source]
    \label{as:block_diagonal_source} Alice's signal states are block-diagonal in the Fock basis at least up to a photon-number cutoff $\tagCutoff$, i.e. we can write 
    \begin{equation}
        \label{eq:assumption_state_prepared}
        (\Astate_\AclassicalVal^{(j)})_{\Aprime_j} = \sum_{\blockVar=0}^\tagCutoff \prob_{\blockVarReg_j|\Aclassical_j}(\blockVar|\AclassicalVal) (\Astate_{\AclassicalVal,\blockVar}^{(j)})_{\Aprime_j} + \left(1-\sum_{\blockVar=0}^\tagCutoff \prob_{\blockVarReg_j|\Aclassical_j}(\blockVar|\AclassicalVal)\right) (\Astate_{\AclassicalVal,\blockVar>\tagCutoff}^{(j)})_{\Aprime_j}
    \end{equation}
    where $(\Astate_{\AclassicalVal,\blockVar}^{(j)})_{\Aprime_j}$ is finite-dimensional and supported on the $\blockVar$-photon subspace for all $\blockVar \leq \tagCutoff$, and $(\Astate_{\AclassicalVal,\blockVar>\tagCutoff}^{(j)})_{\Aprime_j}$ is supported on the subspace with more than $\tagCutoff$ photons. In \cref{sec:imperfect_block_diagonal}, we relax this assumption by considering the case where the states are not perfectly block-diagonal, e.g. due to imperfect phase randomization. 
\end{assumption}

\begin{assumption}[Imperfectly characterized signal states]
    \label{as:imperfectly_characterized_source} Alice's signal states are imperfectly characterized for each $\blockVar$-photon block with $\blockVar\leq \tagCutoff$, cf. \cref{sec:imperfect_characterization_discussion}. That is, there exists a known set of states $\{(\tilde \Astate_{\AclassicalVal,\blockVar}^{(j)})_{\Aprime_j}\}_{j, \AclassicalVal\in \AclassicalAlph, \blockVar\leq\tagCutoff}$ such that
        \begin{equation}
        \label{eq:fidelity_bound_assumption}
            F(\Astate_{\AclassicalVal,\blockVar}^{(j)}, \tilde \Astate_{\AclassicalVal,\blockVar}^{(j)}) \geq 1-\srcImpBnd[\AclassicalVal,\blockVar,j]
        \end{equation}
        for all $\AclassicalVal\in\AclassicalAlph$, $\blockVar\leq\tagCutoff$ and all rounds $j$.
\end{assumption}

\begin{assumption}[Imperfectly characterized photon-number distribution]
\label{as:probability_bounds} The probability distribution $\prob_{\blockVarReg_j|\Aclassical_j}$ is imperfectly characterized, i.e. there exist known upper and lower bounds such that
    \begin{equation}
        \prob_{\blockVarReg_j|\Aclassical_j}^{\mathrm{L}}(\blockVar|\AclassicalVal) \leq \prob_{\blockVarReg_j|\Aclassical_j}(\blockVar|\AclassicalVal) \leq \prob_{\blockVarReg_j|\Aclassical_j}^{\mathrm{U}}(\blockVar|\AclassicalVal)
    \end{equation}
    for all $\AclassicalVal\in\AclassicalAlph$, $\blockVar\leq\tagCutoff$ and all rounds $j$.
\end{assumption}

\begin{assumption}[Block-diagonal detector]
    \label{as:block_diagonal_detector} Bob's POVMs are block-diagonal in the Fock basis at least up to a photon-number cutoff $\fssCutoff$, i.e. we can write
        \begin{equation}
            \label{eq:assumption_detector_POVM_structure}
            \Bpovmel[j] = \left(\oplus_{\blockVar=0}^\fssCutoff\BpovmelBlock{j}{\blockVar}\right)\oplus\BpovmelBlock{j}{\blockVar>\fssCutoff}
        \end{equation}
        for all $\BclassicalVal\in\BclassicalAlph$ and all rounds $j$, and the POVM elements up to $\fssCutoff$ are finite-dimensional.
\end{assumption}

\begin{assumption}[Imperfectly characterized detector]
    \label{as:imperfectly_characterized_detector} Bob's POVMs are imperfectly characterized for each $\blockVar$-photon block with $\blockVar\leq \fssCutoff$, cf. \cref{sec:imperfect_characterization_discussion}. That is, there exists a known set of POVM elements $\{\BpovmelBlockTilde{j}{\blockVar}\}_{j, \BclassicalVal\in \BclassicalAlph, \blockVar\leq\fssCutoff}$ and $0\leq \dtImpBnd[\blockVar, j]<1$, such that
        \begin{equation}
        \label{eq:povm_deviation_assumption}
            \BpovmelBlock{j}{\blockVar} - (1 - \dtImpBnd[\blockVar, j]) \BpovmelBlockTilde{j}{\blockVar} \geq 0\,,
        \end{equation}
        for all $\BclassicalVal\in\BclassicalAlph$, $\blockVar \leq \fssCutoff$ and all rounds $j$.
\end{assumption}

The assumptions above are intentionally formulated in a way that is agnostic to both the dimensionality and the structure of the underlying physical systems. Overall, these assumptions encompass a broad class of device imperfections, as demonstrated in \cref{sec:device_imperfections}, as well as prepare-and-measure protocols and implementations, including both decoy-state protocols and non-decoy protocols such as single-photon protocols. We briefly comment on the assumptions above.

\begin{enumerate}
    \item \cref{as:block_diagonal_source,as:block_diagonal_detector} do not impose any restriction on the dimension of the blocks with photon number $\blockVar>\tagCutoff$ and $\blockVar>\fssCutoff$. In particular, the states in block $\blockVar>\tagCutoff$ may act on infinite-dimensional Hilbert spaces, as is the case for weak coherent pulses, or even be trivial (i.e. vanish entirely). 
    
    \item While \cref{as:block_diagonal_source,as:block_diagonal_detector} are stated in terms of a block-diagonal decomposition, this structure is not restrictive in practice. For example, in the special case where only the $m=1$ block is present (e.g. an ideal single-photon source), the decomposition becomes trivial by choosing $\tagCutoff = 1$ and setting $\prob_{\blockVarReg_j|\Aclassical_j}(\blockVar|\AclassicalVal) = 0$ for all $\blockVar\neq 1$.

    \item \cref{as:imperfectly_characterized_source,as:probability_bounds,as:imperfectly_characterized_detector} on the characterization of the devices are highly general in that they allow for arbitrary target models of the devices. In particular, the sets of target states and POVM elements can be chosen freely based on the physical implementation, as long as a valid bound on the deviation (e.g. in fidelity or operator inequality) is known, cf. \cref{sec:imperfect_characterization_discussion}. This enables us to incorporate a wide range of device imperfections in a unified manner\footnote{As discussed in \cref{sec:imperfect_characterization_discussion}, if the devices are imperfectly characterized and there is an additional probability that the conditions in \cref{eq:fidelity_bound_assumption,eq:povm_deviation_assumption} do not hold, then the approach from Ref.~\cite{tan2025incorporatingdevicecharacterizationsecurity} can be used to incorporate this into the security parameter.}.

    \item \cref{as:imperfectly_characterized_source,as:probability_bounds,as:imperfectly_characterized_detector} naturally include the case of perfectly characterized devices. In particular, if some or all device parameters are known exactly, the corresponding bounds can be chosen to be trivial (e.g. zero deviation or exact probabilities), and the analysis reduces to the perfectly characterized setting for the known parameters.

    \item \cref{as:block_diagonal_source} is a natural assumption for single-photon sources and phase-randomized weak coherent pulses commonly used in decoy-state protocols. This assumption is relaxed to imperfectly block-diagonal states in \cref{sec:imperfect_block_diagonal}, which, e.g., may result from imperfect phase randomization of weak coherent pulses or imperfect single-photon sources. \cref{as:block_diagonal_detector} is a natural assumption for threshold detectors, which are commonly used in QKD implementations.
\end{enumerate}

\subsection{Handling device imperfections}
\label{sec:problem_reductions}

\begin{figure}
    \centering
    \begin{tikzpicture}[
        >=Latex,
        node distance=0.9cm,
        every node/.style={font=\small},
        alicebox/.style={
            signal,
            signal to=east,
            signal from=nowhere,
            draw,
            semithick,
            minimum width=2cm,
            minimum height=1cm,
            align=center,
            fill=black!8
        },
        midbox/.style={
            draw,
            semithick,
            minimum width=2cm,
            minimum height=1cm,
            align=center
        },
        bobbox/.style={
            rounded rectangle,
            rounded rectangle west arc=180,
            draw,
            semithick,
            minimum width=2cm,
            minimum height=1cm,
            align=center,
            fill=black!8
        },
        sourcemap/.style={
            midbox,
            fill=blue!8,
            draw=blue!60!black
        },
        squasher/.style={
            midbox,
            fill=green!8,
            draw=green!50!black
        }
    ]
    
    \node[alicebox] (alice) {\textbf{Alice}};
    \node[sourcemap, right=of alice] (b1) {Partial charac. \\ source map};
    \node[sourcemap, right=of b1] (b2) {Tagging \\ source map};
    \node[squasher, right=of b2] (b3) {Flag-state \\ squasher};
    \node[squasher, right=of b3] (b4) {Noise channel};
    \node[bobbox, right=of b4] (bob) {\textbf{Bob}};

    \node[below=0.18cm of b2] {\cref{lem:tagging_source_map}};

    \node[below=0.18cm of b1] {\cref{lem:source_map_imperfect_source}};

    \node[
        draw=gray,
        semithick,
        fit=(b3)(b4),
        inner sep=0.18cm
    ] (detbox) {};
    
    \node[
        above=0.12cm of detbox
    ]
    {\cref{lem:combination_FSS_noise_channel}};
    
    \draw[->, semithick] (alice) -- (b1);
    \draw[dotted, semithick]
    ($(alice.east)!0.5!(b1.west)$)
    --
    ++(0,1.2);
    
    \node
    at ($($(alice.east)!0.5!(b1.west)$)+(0,1.45)$)
    {$(\AstateVirt_\AclassicalVal^{(j)})_{\Aprime''_j}$};

    \draw[->, semithick] (b1) -- (b2);
    \draw[dotted, semithick]
    ($(b1.east)!0.5!(b2.west)$)
    --
    ++(0,1.2);
    
    \node
    at ($($(b1.east)!0.5!(b2.west)$)+(0,1.45)$)
    {$(\Astate_\AclassicalVal'^{(j)})_{\Aprime'_j}$};

    \draw[->, semithick] (b2) -- (b3);

    \draw[dotted, semithick]
    ($(b2.east)!0.5!(b3.west)$)
    --
    ++(0,1.2);
    
    \node
    at ($($(b2.east)!0.5!(b3.west)$)+(0,1.45)$)
    {$(\Astate_\AclassicalVal^{(j)})_{\Aprime_j}$};
    
    \draw[dotted, semithick]
    ($(b2.east)!0.5!(b3.west)$)
    --
    ++(0,-1.2);
    
    \node
    at ($($(b2.east)!0.5!(b3.west)$)+(0.25,-1.6)$)
    {$\big\{\Bpovmel[j]\big\}_{\BclassicalVal\in\BclassicalAlph}$};
    
    \draw[->, semithick] (b3) -- (b4);
    \draw[dotted, semithick]
    ($(b4.east)!0.5!(bob.west)$)
    --
    ++(0,-1.2);
    
    \node
    at ($($(b4.east)!0.5!(bob.west)$)+(0.25,-1.6)$)
    {$\big\{\BpovmelTarg[j]\big\}_{\BclassicalVal\in\BclassicalAlph}$};
    \draw[->, semithick] (b4) -- (bob);

    \draw[
        semithick,
        orange!70!black,
        dashed
    ]
    ([xshift=-0.4cm, yshift=1.5cm]alice.east)
    arc[
        start angle=50,
        end angle=-50,
        x radius=1.5cm,
        y radius=2.0cm
    ];
    
    \draw[
        semithick,
        orange!70!black,
        dashed
    ]
    ([xshift=0.3cm, yshift=1.5cm]bob.west)
    arc[
        start angle=130,
        end angle=230,
        x radius=1.5cm,
        y radius=2.0cm
    ];
    
    \end{tikzpicture}
    \caption{Sequence of source maps and squashing maps used in \cref{sec:problem_reductions}.}
    \label{fig:sequence_source_squashing_maps}
\end{figure}

\noindent Based on the assumptions from the previous section, we apply the sequence of source maps and squashing maps illustrated in \cref{fig:sequence_source_squashing_maps} to reduce the original protocol to a new protocol with finite-dimensional states and POVMs. The imperfect characterization of the devices appears as corresponding constraints. 

\textbf{Tagging source map:} Using the block-diagonality of Alice's signal states (\cref{as:block_diagonal_source}) we apply the tagging source map (\cref{lem:tagging_source_map}) to reduce the analysis on the source side to that of the blocks from $0$ to $\tagCutoff$, and flags. The entire $\blockVar>\tagCutoff$ block in \cref{eq:assumption_state_prepared}, irrespective of its dimension, is replaced by classical flags. Hence, the resulting signal states are finite-dimensional and given by\footnote{In this work, we use $\Aprime_j$, $\Aprime'_j$, and $\Aprime''_j$ to distinguish the signal registers before and after the successive source maps. We emphasize that the distinction is purely notational as all three denote the signal states at the corresponding stage of the reduction.}
\begin{equation}
\label{eq:application_tagging_source_map}
    (\Astate_\AclassicalVal'^{(j)})_{\Aprime'_j} = \sum_{\blockVar=0}^\tagCutoff \prob_{\blockVarReg_j|\Aclassical_j}(\blockVar|\AclassicalVal) (\Astate_{\AclassicalVal,\blockVar}^{(j)})_{\Aprime'_j} + \left(1 - \sum_{\blockVar=0}^\tagCutoff \prob_{\blockVarReg_j|\Aclassical_j}(\blockVar|\AclassicalVal)\right)\ketbra{\AclassicalVal}{\AclassicalVal}_{\Aprime'_j}\,,
\end{equation}
with orthonormal basis $\{\ket{\AclassicalVal}\}_{\AclassicalVal\in\AclassicalAlph}$ orthogonal to the span of the states $\big\{\Astate_{\AclassicalVal,\blockVar}^{(j)}\big\}_{j, \AclassicalVal\in \AclassicalAlph, \blockVar\leq\tagCutoff}$. As mentioned in \cref{sec:source_maps}, the cost of this replacement is given by the weight in the block $> \tagCutoff$, which is converted to classical flags. By increasing the cutoff, this step can be made arbitrarily tight. 

\textbf{Partial characterization source map:} Additionally, using \cref{as:imperfectly_characterized_source} on the characterization of the source, we apply the partial characterization source map (\cref{lem:source_map_imperfect_source}) on the blocks from $0$ to $\tagCutoff$, which yields new states
\begin{align}
\label{eq:def_new_virtual_states_alice}
    (\AstateVirt_\AclassicalVal^{(j)})_{\Aprime''_j} &= \sum_{\blockVar=0}^\tagCutoff \prob_{\blockVarReg_j|\Aclassical_j}(\blockVar|\AclassicalVal) \ketbra{\AstateVirt_{\AclassicalVal,\blockVar}^{(j)}}{\AstateVirt_{\AclassicalVal,\blockVar}^{(j)}}_{\Aprime''_j} + \left(1 - \sum_{\blockVar=0}^\tagCutoff \prob_{\blockVarReg_j|\Aclassical_j}(\blockVar|\AclassicalVal)\right)\ketbra{\AclassicalVal}{\AclassicalVal}_{\Aprime''_j}\,, \\
    \ket{\AstateVirt_{\AclassicalVal,\blockVar}^{(j)}}_{\Aprime''_j} &= \sqrt{1-\srcImpBnd[\AclassicalVal,\blockVar,j]}\ket{\tilde \Astate_{\AclassicalVal,\blockVar}^{(j)}}_{\Aprime''_j} +\sqrt{\srcImpBnd[\AclassicalVal,\blockVar,j]} \ket{\tilde \Astate_{\AclassicalVal,\blockVar}^{(j)\perp}}_{\Aprime''_j}\,, \label{eq:definition_tau_x_m}
\end{align}
where $\left\{\ket{\tilde \Astate_{\AclassicalVal,\blockVar}^{(j)}}_{\Aprime''_j}\right\}_{j, \AclassicalVal\in \AclassicalAlph, \blockVar\leq\tagCutoff}$ is defined in terms of any set of purifications of the target states $\big\{\tilde \Astate_{\AclassicalVal,\blockVar}^{(j)}\big\}_{j, \AclassicalVal\in \AclassicalAlph, \blockVar\leq\tagCutoff}$ from \cref{as:imperfectly_characterized_source}, while the states $\ket{\tilde \Astate_{\AclassicalVal,\blockVar}^{(j)\perp}}_{\Aprime''_j}$ are unknown and $\bra{\tilde \Astate_{\AclassicalVal,\blockVar}^{(j)}}\ket{\tilde \Astate_{\AclassicalVal,\blockVar}^{(j)\perp}} = 0$ for all $\AclassicalVal\in\AclassicalAlph$, $\blockVar\leq\tagCutoff$, and all rounds $j$. We stress that we do not know the states $(\AstateVirt_\AclassicalVal^{(j)})_{\Aprime''_j}$ perfectly. Later, we express the marginal constraint after the source replacement with $(\AstateVirt_\AclassicalVal^{(j)})_{\Aprime''_j}$ as an optimization over all the possible Gram matrices describing the overlaps with the unknown states \cite{pereira_optimal_2025}. This replacement captures the worst-case scenario. In particular, they may occupy a larger Hilbert space and be mutually orthogonal across different values of $\AclassicalVal$ and $\blockVar$, even if the original states did not.

\textbf{Flag-state squasher:} We proceed analogously on the detector side. Using the block-diagonality of Bob's POVMs (\cref{as:block_diagonal_detector}) we apply the flag-state squasher (\cref{lem:flag_state_squasher}), to reduce the analysis on the detector side to that of the blocks from $0$ to $\fssCutoff$. The entire $\blockVar>\fssCutoff$ block in \cref{eq:assumption_detector_POVM_structure}, irrespective of its dimension, is replaced by classical flags. Hence, the resulting POVMs are finite-dimensional. Similarly to the tagging source map, and as discussed in \cref{sec:squashing_maps}, the cost of this replacement is given by the weight in the block $\blockVar> \fssCutoff$, which is converted to classical flags. By increasing the cutoff, this step can in principle be made arbitrarily tight. We use an outcome $W^{\Bmeas_j}$ satisfying $W^{\Bmeas_j} \geq \lambda_\mathrm{min} \proj_{>\fssCutoff}^{\Bmeas_j}$ with $\lambda_\mathrm{min}> 0$ for the weight estimation in the block $\blockVar> \fssCutoff$ (see \cref{sec:squashing_maps}), where $\proj_{>\fssCutoff}^{\Bmeas_j}$ is a projector onto the space $\blockVar>\fssCutoff$.\footnote{As an example, the multi-click outcome can be used when considering a passive detection setup, for which the corresponding $\lambda_\mathrm{min}$ can be computed for detectors with imperfect efficiency and dark counts \cite{wang2025phaseerrorestimationpassive,PhysRevResearch.6.043223}.}

\textbf{Noise channel:} Using the imperfect characterization assumption (\cref{as:imperfectly_characterized_detector}) we can apply the noise channel construction from \cref{lem:noise_channel}. The construction of the two squashers combined is given by \cref{lem:combination_FSS_noise_channel}. Therefore, the resulting POVMs are given by the elements
\begin{equation}
\label{eq:def_new_virtual_povm_elements}
    \BpovmelTarg[j] \coloneqq \left(\oplus_{\blockVar = 0}^\fssCutoff \BpovmelBlockTilde{j}{\blockVar} \right) \oplus \ketbra{\BclassicalVal}{\BclassicalVal}_\flagSpaceReg\,,
\end{equation}
with orthonormal basis $\{\ket{\BclassicalVal}\}_{\BclassicalVal\in\BclassicalAlph}$ on the flag space $\hilbert_\flagSpaceReg$ and the known set of POVM elements $\{\BpovmelBlockTilde{j}{\blockVar}\}_{j, \BclassicalVal\in \BclassicalAlph, \blockVar\leq\fssCutoff}$ is given by \cref{as:imperfectly_characterized_detector}. Additionally, following \cref{lem:combination_FSS_noise_channel}, after combining the flag-state squasher and noise channel, Eve's restricted attack sets can be constructed as in \cref{eq:restricted_attack_set_fss_noise}. We denote $\big\{\setAttackChVirt_j\big\}_j$ Eve's restricted attack sets for the specific construction of flag-state squasher and noise channel above.

Overall, we may therefore restrict our attention to the protocol with the new states and POVMs obtained by applying the source and squashing maps described above (see \cref{fig:sequence_source_squashing_maps}) to prove security of the original protocol, as formalized in \cref{lem:security_virtual_after_reductions}.

\begin{lemma}
\label{lem:security_virtual_after_reductions}
    Consider the protocol $\big\{\big\{\Astate^{(j)}_{\Aclassical_j\Aprime_j}, \big\{\Bpovmel[j]\big\}_{\BclassicalVal\in\BclassicalAlph}, \annKeyMap[j]\big\}_{j=1}^\totRounds, \ppMap\big\}$ where \cref{as:block_diagonal_source,as:imperfectly_characterized_source,as:probability_bounds,as:block_diagonal_detector,as:imperfectly_characterized_detector} are satisfied by the inputs. Let $\big\{\big\{\AstateVirt_{\Aclassical_j\Aprime''_j}^{(j)}, \big\{\BpovmelTarg[j]\big\}_{\BclassicalVal\in\BclassicalAlph}, \annKeyMap[j]\big\}_{j=1}^\totRounds, \ppMap\big\}$ be a new protocol with (known) POVMs given by \cref{eq:def_new_virtual_povm_elements} and imperfectly characterized signal states satisfying \cref{eq:def_new_virtual_states_alice}. Then, $\epsSecu$-security against $\{\setAttackChVirt_j\}_j$ of the new protocol for all possible signal states satisfying \cref{eq:def_new_virtual_states_alice} implies $\epsSecu$-security against $\{\cptp(\Ereg_{j-1}, \Bmeas_j \Ereg_j)\}_j$ of the original protocol, where $\setAttackChVirt_j$ is given by \cref{eq:restricted_attack_set_fss_noise}.
\end{lemma}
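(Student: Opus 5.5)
The plan is to chain the earlier security-transfer lemmas in the order suggested by \cref{fig:sequence_source_squashing_maps}. We go from the original protocol to an intermediate one with the original states but squashed POVMs, and then to the fully virtual protocol. The key observation is that the source maps act only on Alice's signal registers $(\Aprime)_1^\totRounds$, while the squashing maps act only on Bob's received registers $\Bmeas_j$. The two reductions therefore commute and can be applied independently.

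\textbf{Detector side.} By \cref{as:block_diagonal_detector,as:imperfectly_characterized_detector}, the hypotheses of \cref{lem:combination_FSS_noise_channel} hold round by round with the chosen outcome $W^{\Bmeas_j}$. That lemma supplies squashing maps $\squashMap_j$ with $\squashMap_j^\dagger[\BpovmelTarg[j]]=\Bpovmel[j]$, and guarantees $\setAttackChVirt_j\supset\squashMap_j\circ\cptp(\Ereg_{j-1},\Bmeas_j\Ereg_j)$. Applying \cref{lem:epsilon_security_squashing} then gives the following: $\epsSecu$-security against $\{\setAttackChVirt_j\}_j$ of the protocol with the original states $\Astate^{(j)}_{\Aclassical_j\Aprime_j}$ and POVMs $\BpovmelTarg[j]$ implies unrestricted $\epsSecu$-security of the original protocol.

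\textbf{Source side.} It remains to show that $\epsSecu$-security against $\{\setAttackChVirt_j\}_j$ of the protocol with states $\AstateVirt^{(j)}$, for every admissible choice of the unknown orthogonal components, implies the same for the intermediate protocol with states $\Astate^{(j)}$. First, \cref{as:block_diagonal_source} and \cref{lem:tagging_source_map}, applied per round and per $\AclassicalVal$, give a source map $\srcMap^{(1)}_j$ taking \cref{eq:application_tagging_source_map} to $\Astate^{(j)}_\AclassicalVal$. Second, \cref{as:imperfectly_characterized_source} and \cref{lem:source_map_imperfect_source}, applied to each block $\blockVar\le\tagCutoff$, produce a choice of $\ket{\tilde\Astate^{(j)\perp}_{\AclassicalVal,\blockVar}}$ such that tracing out the purifying registers maps $\ket{\AstateVirt^{(j)}_{\AclassicalVal,\blockVar}}$ to $\Astate^{(j)}_{\AclassicalVal,\blockVar}$. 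Extending this trace by the identity on the flag subspace yields $\srcMap^{(2)}_j$.

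Taking the tensor product over rounds, $\srcMap=\bigotimes_j\srcMap^{(1)}_j\circ\srcMap^{(2)}_j$ is a source map from the virtual $n$-round state to the original one. Since \cref{as:probability_bounds} only bounds the true probabilities $\prob_{\blockVarReg_j|\Aclassical_j}$, and \cref{eq:def_new_virtual_states_alice} uses those same true probabilities, the virtual states obtained this way are among those "satisfying \cref{eq:def_new_virtual_states_alice}". Hence the assumed security covers them.

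The main obstacle is that \cref{lem:epsilon_security_source_maps} is stated for unrestricted attacks, whereas here security only holds against the restricted sets $\setAttackChVirt_j$. I would reprove it by the same argument. Any attack on the intermediate protocol acting on $E_0=(\Aprime)_1^\totRounds$ is simulated in the virtual protocol by Eve first applying $\srcMap$ to her initial register and then running the same attack. The composite first-round map is $\attackCh_1\circ\srcMap$, and each subsequent $\attackCh_j$ is unchanged. The defining constraint in \cref{eq:restricted_attack_set_fss_noise} is an operator inequality on $\attackCh_j^\dagger[\cdot\otimes\identity_{\Ereg_j}]$. Precomposing with a CPTP map $\srcMap$ on the input register preserves it, because $\srcMap^\dagger$ is positive and unital. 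Membership in $\setAttackChVirt_j$ is therefore preserved, and the output states of the two protocols coincide, which transfers the security bound.
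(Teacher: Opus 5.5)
Your proof is correct, and it uses the same ingredients as the paper's proof. These are the tagging and partial-characterization source maps, the combined flag-state squasher and noise channel of \cref{lem:combination_FSS_noise_channel}, and the transfer lemmas \cref{lem:epsilon_security_source_maps,lem:epsilon_security_squashing}. The difference is only the order in which you chain them.

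You first squash the protocol that still has the original states. Your source-side step therefore needs a restricted-attack version of \cref{lem:epsilon_security_source_maps}, which you supply by letting Eve precompose her first attack with $\srcMap$.

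The paper just cites both lemmas, and this is already rigorous with the lemmas as stated if they are chained the other way round. \cref{lem:epsilon_security_squashing} holds for arbitrary signal states. The inclusion $\setAttackChVirt_j\supset\squashMap_j\circ\cptp(\Ereg_{j-1},\Bmeas_j\Ereg_j)$ also does not depend on what Eve's input register is. So you can first apply \cref{lem:epsilon_security_squashing} to the protocol with the admissible virtual states $\AstateVirt^{(j)}$, which gives unrestricted security with the original POVMs. You then apply \cref{lem:epsilon_security_source_maps} in its unrestricted form. Your own remark that the two reductions commute is exactly what permits this reordering, so the ``main obstacle'' can be avoided.

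Your detour is nonetheless valid. Writing the constraint in \cref{eq:restricted_attack_set_fss_noise} as $\attackCh_j^\dagger[H\otimes\identity_{\Ereg_j}]\succeq 0$ for a fixed Hermitian $H$ shows that only positivity of $\srcMap^\dagger$ is needed; unitality plays no role. What the detour buys is a reusable fact: attack sets of this form are closed under precomposition with source maps. The paper relies on the same mechanism later, in \cref{lem:source_map_restricted_attacks}, where source maps must be applied after the squashing constraints are already in place.
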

\begin{proof}
        The states $(\AstateVirt_\AclassicalVal^{(j)})_{\Aprime''_j}$ given by \cref{eq:def_new_virtual_states_alice} are obtained by applying the tagging source map (\cref{lem:tagging_source_map}) and the partial characterization source map (\cref{lem:source_map_imperfect_source}) as discussed above. Then, there exist source maps $\srcMap_j$ such that $\srcMap_j\big[(\AstateVirt_\AclassicalVal^{(j)})_{\Aprime''_j}\big] = (\Astate_\AclassicalVal^{(j)})_{\Aprime_j}$ for all $\AclassicalVal\in\AclassicalAlph$ and all rounds $j$. Additionally, the POVMs $\big\{\BpovmelTarg[j]\big\}_{\BclassicalVal\in\BclassicalAlph}$ given by \cref{eq:def_new_virtual_povm_elements} are obtained by applying the flag-state squasher (\cref{lem:flag_state_squasher}) and noise channel (\cref{lem:noise_channel}) as discussed above. Then, there exist squashing maps $\Lambda_j$ such that $\squashMap_j^\dagger \big[\BpovmelTarg[j]\big] = \Bpovmel[j]$ for all $\BclassicalVal\in\BclassicalAlph$ and all rounds $j$. Given the existence of the source map and squashing map, the statement directly follows from \cref{lem:epsilon_security_source_maps,lem:epsilon_security_squashing}. 
\end{proof}

\subsection{Construction of the set of states output by the single-round attack channels}
\label{sec:construction_set_output_states}

\noindent In the following, we consider the equivalent \nameref{prot:entanglement_qkd_protocol} corresponding to the protocol $\big\{\big\{\AstateVirt_{\Aclassical_j\Aprime''_j}^{(j)}, \big\{\BpovmelTarg[j]\big\}_{\BclassicalVal\in\BclassicalAlph}, \annKeyMap[j]\big\}_{j=1}^\totRounds, \ppMap\big\}$ from \cref{lem:security_virtual_after_reductions}, obtained after applying the source map and squashing map reductions from \cref{sec:problem_reductions}. Since both the signal states and POVMs are now finite-dimensional, the security analysis from \cite[Theorem~8.3]{tupkary_rigorous_2026} applies directly, and so does the discussion in \cref{sec:introduction_meat}. 

The POVM elements $\BpovmelTarg[j]$ are perfectly characterized and given by \cref{eq:def_new_virtual_povm_elements}. Therefore, the corresponding protocol map from the \nameref{prot:entanglement_qkd_protocol} is fixed, which we denote $\gMapVirt[j]\in\cptp(\Amarg_j\BmeasSquash_j, \secretReg_j \varDecReg_j \widetilde C_j)$. In contrast, the signal states in $\Aprime''_j$ are imperfectly characterized, therefore the corresponding source-replaced states $\AstateVirt_{\Amarg_j\Aprime''_j}^{(j)}$ (\cref{lem:source_replacement_scheme}) are also imperfectly characterized, where $\Amarg_j = \Ameas_j \Ashield_j$. We can nevertheless apply the MEAT for any specific choice of signal states, which results in an optimization problem (see \cref{eq:opti_original_meat})
\begin{align}
    \tilde{\kappa}\big(\ftradeoff{1}{j-1},\AstateVirt^{(j)}_{\Amarg_j},\gMapVirt[j], \marginalSetVirt_j\big) &= \inf_{\genDensity_{\Amarg_j\BmeasSquash_j}\in\stateSet_j(\marginalSetVirt_j)}\fRenyiUpEnt{1}{j-1}(\secretReg_j|\eveCopyReg_j\varDecReg_j\evePurReg)_{\gMapVirt[j][\purFunc(\genDensity_{\Amarg_j\BmeasSquash_j})]}\label{eq:opti_problem_virtual} \\
    \stateSet_j(\marginalSetVirt_j)&=\left\{\marginalMap_j[\genDensity_{\Amarg_j\Aprime''_j}]:\genDensity_{\Amarg_j}=\AstateVirt^{(j)}_{\Amarg_j},\marginalMap_j\in\marginalSetVirt_j\right\}\,, \label{eq:state_set_virt}
\end{align}
where $\purFunc$ is a purifying function from $\Amarg_j\BmeasSquash_j$ onto $\evePurReg$ and $\marginalSetVirt_j$ denotes the marginal of the restricted set of attack channels $\setAttackChVirt_j$ given by
\begin{equation}
\label{eq:marginal_definition_fss_noise}
    \marginalSetVirt_j \coloneqq \left\{\marginalMap_j\in\cptp(\Aprime''_j,\BmeasSquash_j): \marginalMap_j^\dagger\left[\widetilde{W}^{\Bmeas_j'} \right] \geq  \lambdaMin  \left( \marginalMap_j^\dagger\left[\identity_{\BmeasSquash_j} \right] - \sum_{\blockVar=0}^{\fssCutoff} \frac{1}{1 - \dtImpBnd[\blockVar, j]} \marginalMap_j^\dagger\left[\proj_\blockVar^{\BmeasSquash_j} \right]\right)\right\}\,.
\end{equation}
That \cref{eq:marginal_definition_fss_noise} defines a valid
marginal of $\setAttackChVirt_j$ follows from \cite[Lemma~9.4]{tupkary_rigorous_2026}. The
statement there is given for the flag-state squasher alone, but its proof applies verbatim when
replacing $F_o^{(Q)}$ by $\widetilde{W}^{\BmeasSquash_j}$ and $\Pi^{\mathrm{Sq}}$ by the operator
appearing on the r.h.s. of the inequality above.

Recall that we wish to lower bound \cref{eq:opti_problem_virtual} and that $\AstateVirt^{(j)}_{\Amarg_j}$ is not known exactly. Let $\setAliceMarginalsVirt[j]$ be a known set of possible marginals such that $\AstateVirt^{(j)}_{\Amarg_j}\in \setAliceMarginalsVirt[j]$. Since we do not know $\AstateVirt^{(j)}_{\Amarg_j}$, we will obtain a lower bound by additionally optimizing over all possible marginals
\begin{equation}
    \inf_{\tilde{\AstateVirt}^{(j)}_{\Amarg_j} \in \setAliceMarginalsVirt[j]}\tilde{\kappa}\big(\ftradeoff{1}{j-1},\tilde{\AstateVirt}^{(j)}_{\Amarg_j},\gMapVirt[j], \marginalSetVirt_j\big) \leq \tilde{\kappa}\big(\ftradeoff{1}{j-1},\AstateVirt^{(j)}_{\Amarg_j},\gMapVirt[j], \marginalSetVirt_j\big)\,. \label{eq:lower_bound_with_marginal_set}
\end{equation}
For the purpose of the security statement, we do not need to explicitly construct $\setAliceMarginalsVirt[j]$, although we require it to be convex to preserve convexity of the resulting optimization problem. We defer the explicit construction of $\setAliceMarginalsVirt[j]$ to \cref{ap:construction_marginal_constraint}, as it is required to compute the key rates in \cref{ap:numerics}.\footnote{Informally, the set $\setAliceMarginalsVirt[j]$ of possible marginals is constructed from \cref{eq:def_new_virtual_states_alice} together with \cref{as:block_diagonal_source,as:imperfectly_characterized_source,as:probability_bounds}.}

We can replace the optimization over the marginal attack channels $\marginalMap_j\in\marginalSetVirt_j$ in \cref{eq:state_set_virt} by an optimization directly over the corresponding output states $\genDensity_{\Amarg_j\BmeasSquash_j}$. The necessary conditions that the output states must satisfy are given by the explicit form of $\marginalSetVirt_j$ in \cref{eq:marginal_definition_fss_noise}, together with the fact that $\marginalMap_j$ acts only on $\Aprime''_j$. This directly yields a lower bound on $\tilde{\kappa}\big(\ftradeoff{1}{j-1},\AstateVirt^{(j)}_{\Amarg_j},\gMapVirt[j], \marginalSetVirt_j\big)$ without optimization over the marginal attack channels, formalized in \cref{lem:lower_bound_meat_output_states} below. Intuitively, this can be thought of as rewriting the constraint from the flag-state squasher and noise channel directly into $\rho_{A_j B'_j}$, rather than expressing it as a part of Eve's attack channel.

\begin{lemma}
\label{lem:lower_bound_meat_output_states}
    Consider the \nameref{prot:entanglement_qkd_protocol} corresponding to the protocol $\big\{\big\{\AstateVirt_{\Aclassical_j\Aprime''_j}^{(j)}, \big\{\BpovmelTarg[j]\big\}_{\BclassicalVal\in\BclassicalAlph},$ $ \annKeyMap[j]\big\}_{j=1}^\totRounds, \ppMap\big\}$ from \cref{lem:security_virtual_after_reductions} with (known) POVMs given by \cref{eq:def_new_virtual_povm_elements} and imperfectly characterized signal states satisfying \cref{eq:def_new_virtual_states_alice}. Let $\AstateVirt_{\Amarg_j\Aprime''_j}^{(j)}$ denote the corresponding source-replaced state in the $j$th round and $\setAliceMarginalsVirt[j]$ be a set of possible marginal states, such that $\AstateVirt^{(j)}_{\Amarg_j} \in \setAliceMarginalsVirt[j]$. Define 
     \begin{align}
\widehat{\stateSet}_j(\setAliceMarginalsVirt[j]) \coloneqq& \left\{\genDensity_{\Amarg_j\BmeasSquash_j} \in \setDensity_=(\Amarg_j\BmeasSquash_j) : \Tr_{\BmeasSquash_j}\left[\genDensity_{\Amarg_j\BmeasSquash_j}\right] \in \setAliceMarginalsVirt[j]\,,\right. \label{eq:restricted_set_output_states} \\
    &\left. \Tr\left[\genDensity_{\Amarg_j\BmeasSquash_j}\widetilde{W}^{\Bmeas_j'}\right] \geq \lambdaMin\left(1 - \sum_{\blockVar =0}^{\fssCutoff} \frac{1}{1 - \dtImpBnd[\blockVar, j]} \Tr\left[\genDensity_{\Amarg_j\BmeasSquash_j}\Pi_\blockVar^{\BmeasSquash_j}\right]\right)\right\}\,, \nonumber
\end{align}
    for all rounds $j$, where $\gMapVirt[j]$ is the protocol map corresponding to the known POVM $\big\{\BpovmelTarg[j]\big\}_{\BclassicalVal\in\BclassicalAlph}$ and $\marginalSetVirt_j$ is the marginal (\cref{def:marginal}) of the restricted set of attack channels $\setAttackChVirt_j$, given by \cref{eq:marginal_definition_fss_noise}. Then,
    \begin{equation}
        \inf_{\genDensity_{\Amarg_j\BmeasSquash_j}\in\widehat{\stateSet}_j(\setAliceMarginalsVirt[j])}\fRenyiUpEnt{1}{j-1}(\secretReg_j|\eveCopyReg_j\varDecReg_j\evePurReg)_{\gMapVirt[j]\left[\purFunc(\genDensity_{\Amarg_j\BmeasSquash_j})\right]} \leq \tilde{\kappa}\left(\ftradeoff{1}{j-1},\AstateVirt^{(j)}_{\Amarg_j},\gMapVirt[j], \marginalSetVirt_j\right)\,.
        \label{eq:lower_bound_kappa_virt}
    \end{equation}
   
\end{lemma}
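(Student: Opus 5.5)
The plan is to show the set inclusion $\stateSet_j(\marginalSetVirt_j) \subseteq \widehat{\stateSet}_j(\setAliceMarginalsVirt[j])$. Since the objective $\fRenyiUpEnt{1}{j-1}(\secretReg_j|\eveCopyReg_j\varDecReg_j\evePurReg)_{\gMapVirt[j][\purFunc(\cdot)]}$ is the same function of $\genDensity_{\Amarg_j\BmeasSquash_j}$ in both optimizations, an infimum over a larger set is no larger, and \cref{eq:lower_bound_kappa_virt} follows immediately from the definition of $\tilde{\kappa}$ in \cref{eq:opti_problem_virtual}. Note that the purification function $\purFunc$ depends only on $\genDensity_{\Amarg_j\BmeasSquash_j}$, so no subtlety arises from which channel produced the state.

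To prove the inclusion, take an arbitrary element $\genDensity_{\Amarg_j\BmeasSquash_j} = \marginalMap_j[\genDensity_{\Amarg_j\Aprime''_j}]$ with $\genDensity_{\Amarg_j} = \AstateVirt^{(j)}_{\Amarg_j}$ and $\marginalMap_j \in \marginalSetVirt_j$. First, it is normalized because $\marginalMap_j$ is CPTP and acts as $\identity_{\Amarg_j}\otimes\marginalMap_j$. Second, because $\marginalMap_j$ acts only on $\Aprime''_j$ and is trace preserving, $\Tr_{\BmeasSquash_j}[\genDensity_{\Amarg_j\BmeasSquash_j}] = \genDensity_{\Amarg_j} = \AstateVirt^{(j)}_{\Amarg_j} \in \setAliceMarginalsVirt[j]$, by the hypothesis on $\setAliceMarginalsVirt[j]$. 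Third, the operator inequality defining $\marginalSetVirt_j$ in \cref{eq:marginal_definition_fss_noise} gives
\begin{equation*}
\marginalMap_j^\dagger\big[\widetilde{W}^{\Bmeas_j'}\big] - \lambdaMin\Big(\identity_{\Aprime''_j} - \sum_{\blockVar=0}^{\fssCutoff}\tfrac{1}{1-\dtImpBnd[\blockVar,j]}\marginalMap_j^\dagger\big[\proj_\blockVar^{\BmeasSquash_j}\big]\Big) \geq 0 ,
\end{equation*}
using $\marginalMap_j^\dagger[\identity] = \identity$ by trace preservation. Tensoring with $\identity_{\Amarg_j}$ preserves positivity. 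Taking the trace against the positive operator $\genDensity_{\Amarg_j\Aprime''_j}$ and using the duality $\Tr[(\identity\otimes\marginalMap_j^\dagger)[X]\,\genDensity] = \Tr[X\,(\identity\otimes\marginalMap_j)[\genDensity]]$ yields exactly the linear inequality in \cref{eq:restricted_set_output_states}, with the constant term $\lambdaMin\Tr[\genDensity] = \lambdaMin$.

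There is no real obstacle here. The only points needing care are bookkeeping: applying the adjoint correctly on the extended system $\Amarg_j\Aprime''_j$, and ensuring positivity is preserved under tensoring with identity and pairing with a state. Both are standard. The argument makes no use of how $\setAliceMarginalsVirt[j]$ is constructed, only the membership $\AstateVirt^{(j)}_{\Amarg_j}\in\setAliceMarginalsVirt[j]$.
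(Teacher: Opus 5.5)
Your proof is correct and follows the paper's argument: both show that every output state $\marginalMap_j[\genDensity_{\Amarg_j\Aprime''_j}]$ with $\marginalMap_j\in\marginalSetVirt_j$ lies in $\widehat{\stateSet}_j(\setAliceMarginalsVirt[j])$, and both conclude by enlarging the feasible set of the infimum. The differences are cosmetic. You skip the intermediate infimum over all marginals in $\setAliceMarginalsVirt[j]$ from \cref{eq:lower_bound_with_marginal_set}, which this inequality does not need. You also spell out the adjoint and duality step that turns the operator inequality into the trace inequality, which the paper only asserts.
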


\begin{proof}
    \cref{eq:lower_bound_with_marginal_set} first provides a lower bound by optimizing over all possible marginal states in $\setAliceMarginalsVirt[j]$. It remains to relax the corresponding optimization over the marginal attack channels $\marginalMap_j\in\marginalSetVirt_j$ to an optimization over output states. Since $\marginalMap_j$ only acts on $\Aprime''_j$, every corresponding output state $\genDensity_{\Amarg_j\BmeasSquash_j}$ satisfies $\Tr_{\BmeasSquash_j}\left[\genDensity_{\Amarg_j\BmeasSquash_j}\right]\in\setAliceMarginalsVirt[j]$. Moreover, the explicit form of $\marginalSetVirt_j$ in \cref{eq:marginal_definition_fss_noise} implies that every such output state satisfies
    \begin{equation}
        \Tr\left[\genDensity_{\Amarg_j\BmeasSquash_j}\widetilde{W}^{\Bmeas_j'}\right] \geq \lambdaMin\left(1 - \sum_{\blockVar =0}^{\fssCutoff} \frac{1}{1 - \dtImpBnd[\blockVar, j]} \Tr\left[\genDensity_{\Amarg_j\BmeasSquash_j}\Pi_\blockVar^{\BmeasSquash_j}\right]\right)\,.
    \end{equation}
    Hence, all output states appearing in the optimization on the l.h.s. of \cref{eq:lower_bound_with_marginal_set} are contained in $\widehat{\stateSet}_j(\setAliceMarginalsVirt[j])$. Enlarging the set over which the infimum is taken can only decrease its value, therefore the l.h.s. of \cref{eq:lower_bound_kappa_virt} is smaller or equal to the l.h.s. of \cref{eq:lower_bound_with_marginal_set}, from which \cref{eq:lower_bound_kappa_virt} follows. 
\end{proof}

\begin{remark}
\label{rem:we_do_output_state}
    Since Alice’s signal states are not perfectly known, even in the new protocol, the optimization in \cref{eq:lower_bound_kappa_virt} cannot directly be formulated as an optimization over just the attack channels, as done for example in Ref.~\cite{kamin_renyi_2025}. Indeed, such a formulation requires an explicit description of the source-replaced state which is unknown under \cref{as:imperfectly_characterized_source}. We therefore instead optimize over the corresponding set of compatible output states induced by $\setAliceMarginalsVirt[j]$. Under suitable assumptions, both formulations are equivalent, as discussed in Ref.~\cite[Sec.~8]{tupkary_rigorous_2026}.
\end{remark}

\subsection{Security statement}
\label{sec:security_statement}

\noindent Combining the reduction of the \nameref{prot:generic_qkd_protocol} with imperfect, potentially infinite-dimensional and imperfectly characterized devices from \cref{sec:problem_reductions} with the construction of the set of output states from \cref{sec:construction_set_output_states} yields the following security statement for the \nameref{prot:generic_qkd_protocol} with device uncertainty.

\begin{theorem}[Security with imperfect and imperfectly characterized devices]
\label{th:security_imperfect_devices}
    Consider the \nameref{prot:generic_qkd_protocol} given by $\big\{\big\{\Astate^{(j)}_{\Aclassical_j\Aprime_j}, \big\{\Bpovmel[j]\big\}_{\BclassicalVal\in\BclassicalAlph}, \annKeyMap[j]\big\}_{j=1}^\totRounds, \ppMap\big\}$ where \cref{as:block_diagonal_source,as:imperfectly_characterized_source,as:probability_bounds,as:block_diagonal_detector,as:imperfectly_characterized_detector} are satisfied by the inputs. Let $W^{\Bmeas_j}$ be an outcome and $\lambdaMin> 0$ satisfying $W^{\Bmeas_j} \geq \lambdaMin \proj_{>\fssCutoff}^{\Bmeas_j}$, where $\proj_{>\fssCutoff}^{\Bmeas_j}$ is the projector onto the subspace with $\blockVar>\fssCutoff$ photons. For each round $j$ and every value $\varDecVal_1^{j-1}$, let $\ftradeoff{1}{j-1}$ be a tradeoff function on the register $\varDecReg_j$. Then the protocol is $(\epsSecr + \epsCor)$-secure for all $\alpha\in(1,2)$ if the final key length $\keyl$ satisfies
    \begin{equation}
        \keyl(\varDecVal^\totRounds_1)= \max \left\{0, \left\lfloor \fFull(\varDecVal^\totRounds_1) - \ECcost(\varDecVal^\totRounds_1) - \left\lceil \log\left(\frac{1}{\epsCor}\right)\right\rceil - \frac{\alpha}{\alpha-1} \log\left(\frac{1}{\epsSecr}\right) + 2 \right\rfloor \right\}\,,
    \end{equation}
    where
    \begin{equation}
        \fFull(\varDecVal^\totRounds_1) \coloneqq \sum_{j=1}^\totRounds\left(\ftradeoff{1}{j-1}(\varDecVal_j) + \kappaQKDVirt_j\left(\ftradeoff{1}{j-1},\setAliceMarginalsVirt[j],\gMapVirt[j]\right)\right)\,,
    \end{equation}
    and
    \begin{align}
    \label{eq:optimization_imperfect_devices}
        \kappaQKDVirt_j\left(\ftradeoff{1}{j-1},\setAliceMarginalsVirt[j],\gMapVirt[j]\right) &\leq \inf_{\genDensity_{\Amarg_j\BmeasSquash_j}\in\widehat{\stateSet}_j(\setAliceMarginalsVirt[j])}\fRenyiUpEnt{1}{j-1}(\secretReg_j|\eveCopyReg_j\varDecReg_j\evePurReg)_{\gMapVirt[j]\left[\purFunc(\genDensity_{\Amarg_j\BmeasSquash_j})\right]} \\
        \widehat{\stateSet}_j(\setAliceMarginalsVirt[j]) &= \left\{\genDensity_{\Amarg_j\BmeasSquash_j} \in \setDensity_=(\Amarg_j\BmeasSquash_j) : \Tr_{\BmeasSquash_j}\left[\genDensity_{\Amarg_j\BmeasSquash_j}\right] \in \setAliceMarginalsVirt[j]\,,\right. \label{eq:set_output_states_security_statement} \\
        &\left. \Tr\left[\genDensity_{\Amarg_j\BmeasSquash_j}\widetilde{W}^{\Bmeas_j'}\right] \geq \lambdaMin\left(1 - \sum_{\blockVar =0}^{\fssCutoff} \frac{1}{1 - \dtImpBnd[\blockVar, j]} \Tr\left[\genDensity_{\Amarg_j\BmeasSquash_j}\Pi_\blockVar^{\BmeasSquash_j}\right]\right)\right\}\,. \nonumber
    \end{align}
    Here, $\AstateVirt_{\Amarg_j\Aprime''_j}^{(j)}$ denotes the source-replaced state in the $j$th round, corresponding to $\AstateVirt_{\Aclassical_j\Aprime''_j}^{(j)}$, cf. \cref{eq:def_new_virtual_states_alice}, with marginal constraint $\Tr_{\Aprime''_j}[\AstateVirt_{\Amarg_j\Aprime''_j}^{(j)}] = \AstateVirt_{\Amarg_j}^{(j)}$. $\purFunc$ is a purifying function of $\Amarg_j\BmeasSquash_j$ onto $\evePurReg$. The protocol maps $\{\gMapVirt[j]\}_j$ from the \nameref{prot:entanglement_qkd_protocol} are given in terms of the squashed POVMs $\big\{\BpovmelTarg[j]\big\}_{j,\BclassicalVal\in\BclassicalAlph}$ from \cref{eq:def_new_virtual_povm_elements} (see \cref{def:qkd_protocol_map}). Furthermore, the set of marginals $\setAliceMarginalsVirt[j]$ is defined such that $\AstateVirt_{\Amarg_j}^{(j)} \in \setAliceMarginalsVirt[j]$ and explicitly constructed in \cref{ap:construction_marginal_constraint}. Finally, $\squashMap_j^\dagger \big[\widetilde{W}^{\Bmeas_j'}\big] = W^{\Bmeas_j}$, where $\squashMap_j$ is given by \cref{eq:explicit_noise_channel_construction}.
\end{theorem}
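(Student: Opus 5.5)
The proof is a chain of reductions that end in the perfectly characterized MEAT security statement of \cite[Theorem~8.3]{tupkary_rigorous_2026}. The steps are as follows.

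\emph{Step 1: reduce to the virtual protocol.} By \cref{lem:security_virtual_after_reductions}, it suffices to show $(\epsSecr+\epsCor)$-security of the virtual protocol $\big\{\big\{\AstateVirt_{\Aclassical_j\Aprime''_j}^{(j)}, \big\{\BpovmelTarg[j]\big\}_{\BclassicalVal}, \annKeyMap[j]\big\}_{j}, \ppMap\big\}$. This must hold against the restricted attack sets $\{\setAttackChVirt_j\}_j$ from \cref{eq:restricted_attack_set_fss_noise}, and uniformly over every admissible choice of the unknown components $\ket{\tilde\Astate^{(j)\perp}_{\AclassicalVal,\blockVar}}$ and every photon-number distribution compatible with \cref{as:probability_bounds}. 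Fix one such admissible choice. Because the key length $\keyl$ in the statement does not depend on this choice, a proof for an arbitrary fixed choice gives security for the original protocol.

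\emph{Step 2: apply the MEAT to the fixed virtual protocol.} Security splits into $\epsCor$-correctness and $\epsSecr$-secrecy as in \cite[Lemma~8.5]{tupkary_rigorous_2026}. Correctness follows verbatim from the error-verification step \cite[Lemma~8.6]{tupkary_rigorous_2026}, because that step is unaffected by the reductions.

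For secrecy, pass to the \nameref{prot:entanglement_qkd_protocol} via \cref{lem:relation_entanglement_picture}. After the reductions, the states and POVMs are finite-dimensional, and the map $\gMapVirt[j]$ is known. This lets us invoke \cite[Theorem~8.3]{tupkary_rigorous_2026} with the restricted attack sets $\setAttackChVirt_j$ and their marginal $\marginalSetVirt_j$ from \cref{eq:marginal_definition_fss_noise}. That this set is a valid marginal follows from \cite[Lemma~9.4]{tupkary_rigorous_2026}, as noted above. The theorem guarantees secrecy with the stated key length, provided that each $\kappaQKDVirt_j$ is at most $\tilde\kappa\big(\ftradeoff{1}{j-1},\AstateVirt^{(j)}_{\Amarg_j},\gMapVirt[j],\marginalSetVirt_j\big)$.

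\emph{Step 3: verify the required bound on $\kappaQKDVirt_j$.} We must show that the value defined in \cref{eq:optimization_imperfect_devices} satisfies this bound for the fixed choice. This follows directly from \cref{lem:lower_bound_meat_output_states}. That lemma applies because $\AstateVirt^{(j)}_{\Amarg_j}\in\setAliceMarginalsVirt[j]$, which holds by construction of $\setAliceMarginalsVirt[j]$ (\cref{ap:construction_marginal_constraint}) for every admissible choice. Hence the infimum over $\widehat\stateSet_j(\setAliceMarginalsVirt[j])$ lower-bounds $\tilde\kappa$ for the true but unknown marginal. The resulting bound depends only on $\setAliceMarginalsVirt[j]$, so $\fFull$ is a valid global tradeoff function simultaneously for all admissible virtual sources.

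\emph{Main obstacle.} The delicate point is the uniformity and the validity of the MEAT application when the marginal is unknown. The MEAT is stated for a fixed product source-replaced state $\bigotimes_j\AstateVirt^{(j)}_{\Amarg_j\Aprime''_j}$ together with a tradeoff function that must be fixed in advance. I would handle this by emphasising the following: for each fixed unknown choice, the virtual source is still a product, independent state with a definite marginal. The only requirement in the MEAT chain is an inequality $\kappaQKDVirt_j\le\tilde\kappa$, and this inequality holds pointwise. One also has to check that the restriction to $\setAttackChVirt_j$ interacts correctly with the MEAT's purification argument, namely that the constraint in \cref{eq:restricted_set_output_states} is satisfied by every state $\marginalMap_j[\genDensity_{\Amarg_j\Aprime''_j}]$. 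This is exactly the content of \cref{lem:lower_bound_meat_output_states} together with \cref{lem:combination_FSS_noise_channel}, so the remaining work is bookkeeping of the quantifier order. Finally, $\alpha\in(1,2)$ is inherited from the MEAT statement.
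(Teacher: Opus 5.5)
Your proposal is correct and follows essentially the same route as the paper. The paper reduces to the virtual protocol via \cref{lem:security_virtual_after_reductions}, applies \cite[Theorem~8.3]{tupkary_rigorous_2026} with $\marginalSetVirt_j$ chosen as the marginal of the restricted attack set, and lower-bounds $\tilde\kappa$ using \cref{lem:lower_bound_meat_output_states}. Your explicit treatment of the quantifier order (fixing an arbitrary admissible virtual source and noting that $\keyl$ and $\fFull$ do not depend on it) spells out a step the paper leaves implicit, and does not change the argument.
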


\begin{proof}
    First, we apply the sequence of source maps and squashing maps from \cref{lem:security_virtual_after_reductions} to reduce security of the original protocol to security of a new protocol $\big\{\big\{\AstateVirt_{\Aclassical_j\Aprime''_j}^{(j)}, \big\{\BpovmelTarg[j]\big\}_{\BclassicalVal\in\BclassicalAlph}, \annKeyMap[j]\big\}_{j=1}^\totRounds, \ppMap\big\}$ against attacks $\{\setAttackChVirt_j\}_j$, where $\setAttackChVirt_j$ is given by \cref{eq:restricted_attack_set_fss_noise}, the POVM elements are given by \cref{eq:def_new_virtual_povm_elements}, and the signal states are given by \cref{eq:def_new_virtual_states_alice}.

    Since the new protocol is finite-dimensional, we apply \cite[Theorem~8.3]{tupkary_rigorous_2026}, where, instead of choosing $\marginalSetVirt_j = \cptp(\Aprime''_j,\BmeasSquash_j)$, we choose $\marginalSetVirt_j$ to be a marginal of the restricted attack set $\setAttackChVirt_j$. The discussion from \cref{sec:security_using_MEAT} then applies. Finally, we apply \cref{lem:lower_bound_meat_output_states} which provides a computable lower bound for $\kappa$, from which the statement follows.
\end{proof}

\begin{remark}
    The statement of \cref{th:security_imperfect_devices} holds for arbitrary choices of the tradeoff functions $\ftradeoff{1}{j-1}$ in each round $j$. In practice, however, obtaining a positive key rate typically requires these functions to be chosen carefully. An optimal choice can be determined using \cite[Lemma~4.12]{arqand_marginal-constrained_2025} (see also \cite[Sec.~VI.B]{kamin_renyi_2025} and \cite[Sec.~10.2]{tupkary_rigorous_2026}).
\end{remark}

The formulation in \cref{th:security_imperfect_devices} is intentionally generic and abstract, providing a unified description of a broad class of device imperfections, all of which enter the analysis through the constraints defining $\widehat{\stateSet}_j(\setAliceMarginalsVirt[j])$. The resulting optimization problem is made numerically tractable in \cref{ap:numerics}, where we also explicitly construct the constraints defining $\widehat{\stateSet}_j(\setAliceMarginalsVirt[j])$ and derive the corresponding convex optimization problem formalized in \cref{th:convex_opti_numerics_imperfect}. Combining the tools developed here to account for simultaneous source and detector imperfections and device uncertainty requires several non-trivial steps, particularly to preserve convexity of the final optimization, which is discussed in detail in \cref{ap:numerics}.

In the next section, we illustrate the modularity of the framework by considering various real-world device imperfections.
\section{Device imperfections}
\label{sec:device_imperfections}

\noindent We first provide a brief overview for how to use the framework in \cref{sec:how_to_use_framework}. Then, in \cref{sec:example_pol_encoding}, we compute key rates for a polarization-encoded decoy-state BB84 implementation combining various device imperfections for practical device parameters. We also compare our key rates with those obtained using the analyses of Refs.~\cite{sixto2026finitekeysecurityanalysisdecoystate,navarrete2026numericalsecurityanalysispractical} for device imperfections covered by all three approaches.

\subsection{How to use the framework}
\label{sec:how_to_use_framework}

\begin{table}[t]
\centering
\renewcommand{\arraystretch}{1.4}
\setcellgapes{4pt}
\makegapedcells
\setlength{\tabcolsep}{12pt} 

\begin{tabular}{c|c|c}
\hline
\textbf{Device imperfection} & \textbf{Affects} & \textbf{Parameter/constraint} \\
\hline \hline

\multicolumn{3}{c}{\textbf{Source imperfections}} \\
\hline

Imperfect state encoding & \cref{as:imperfectly_characterized_source} & $\srcImpBndEnc[\blockVar]$ \\
\hline

Trojan-horse attack & \cref{as:imperfectly_characterized_source} & $\srcImpBndTHA[\blockVar]$ \\
\hline

Intensity fluctuations & \cref{as:probability_bounds} & $\prob_{\blockVarReg_j|\Aclassical_j}^{\mathrm{L}}(\blockVar|\AclassicalVal), \prob_{\blockVarReg_j|\Aclassical_j}^{\mathrm{U}}(\blockVar|\AclassicalVal)$ \\
\hline

Imperfect photon-number distribution & \cref{as:probability_bounds}   & $\prob_{\blockVarReg_j|\Aclassical_j}^{\mathrm{L}}(\blockVar|\AclassicalVal), \prob_{\blockVarReg_j|\Aclassical_j}^{\mathrm{U}}(\blockVar|\AclassicalVal)$ \\
\hline

Imperfectly block-diagonal source & \cref{as:block_diagonal_source} & Approx. block diagonalization \\
\hline\hline

\multicolumn{3}{c}{\textbf{Detector imperfections}} \\
\hline

Dark-count rate mismatch & \cref{as:imperfectly_characterized_detector} & $\dtImpBndDCR[\blockVar]$ \\
\hline

Detection-efficiency mismatch & \cref{as:imperfectly_characterized_detector} & $\dtImpBndEff[\blockVar, \eta_*]$ \\
\hline

Imperfect passive basis choice & \cref{as:imperfectly_characterized_detector} & $q_{\blockVar}^{s}$ \\
\hline\hline

\end{tabular}

\caption{Examples of device imperfections to illustrate the framework in \cref{sec:device_imperfections}. The models for the device imperfections are detailed in \cref{ap:models_device_imperfections}.}
\label{tab:device_imperfections_list}

\end{table}

\noindent The aim of the following sections is to provide an overview for incorporating device imperfections into the security framework developed in \cref{sec:security_imperfect_devices}. For each physical imperfection considered, we require two ingredients: a set of \textit{target} states or POVM elements, and a bound on the \textit{deviation} of the real devices from these targets. At a high level, incorporating a device imperfection consists of the following steps: 

\begin{enumerate}
    \item identify the assumptions in \cref{th:security_imperfect_devices} (\cref{sec:model_assumptions}) through which the device imperfection enters the analysis,
    \item choose suitable known target states or POVM elements,
    \item derive bounds in the form of \cref{as:imperfectly_characterized_source,as:probability_bounds,as:imperfectly_characterized_detector}, quantifying the deviation between the physical devices and the targets (this step may require an experimental characterization),
    \item solve the resulting optimization problem for $\kappaQKDVirt_j\left(\ftradeoff{1}{j-1},\setAliceMarginalsVirt[j],\gMapVirt[j]\right)$ from \cref{th:security_imperfect_devices} with the chosen target states, target POVMs and the determined bounds. This is formulated as a numerically tractable convex optimization problem in \cref{th:convex_opti_numerics_imperfect}.
\end{enumerate}

Importantly, the target states and POVM elements can be chosen arbitrarily and do not need to correspond to ``perfect'' devices.\footnote{In the sense that they need to be known exactly, but do not need to involve exact $H,V,A,D$ state preparation, or unit efficiency of detectors, etc.} The only requirement is that the targets themselves are known. The analysis therefore naturally distinguishes between two different effects: imperfections of the physical devices, and uncertainty in the characterization of these devices.

Both contribute to the final key rate, but they enter the analysis in different ways, as illustrated in \cref{rem:two_different_imperfection_types}. In \cref{sec:imperfect_block_diagonal}, we relax the exact block-diagonal structure assumed in \cref{as:block_diagonal_source} and show how approximate block-diagonal structure, which may arise for example from imperfect phase randomization, is incorporated into the analysis.

\begin{remark}
\label{rem:two_different_imperfection_types}
    As an example, one may choose perfectly encoded states as the targets in \cref{th:security_imperfect_devices}. In this case, the target model itself is ideal, but the real implementation may deviate significantly from it. The corresponding penalty is then captured entirely through the source fidelity bounds, i.e. through the Gram matrix optimization.

    On the other hand, one may choose target states that are themselves imperfectly encoded, but perfectly characterized. In this case, the source fidelity bounds vanish, since the real states coincide with the targets, and the penalty instead comes from the fact that the target states appearing in \cref{th:security_imperfect_devices} are already imperfect.
    
    Given that both the Gram matrix optimization and the noise channel construction consider worst-case scenarios in terms of device uncertainty, it is in general best to choose target states and POVMs that minimize the source fidelity bounds and detector POVM deviations (even if they are chosen to be imperfect).
\end{remark}

\subsection{Example: polarization-encoded decoy-state BB84}
\label{sec:example_pol_encoding}

\subsubsection{Passive decoy-state BB84}

\begin{figure}
\centering
\begin{tikzpicture}[
    >=Latex,
    every node/.style={font=\small},
    optic/.style={
        draw,
        semithick,
        minimum width=1.15cm,
        minimum height=0.72cm,
        align=center,
        fill=black!8
    },
    det/.style={
        draw,
        semithick,
        shape=semicircle,
        shape border rotate=270,
        minimum width=0.58cm,
        minimum height=0.55cm,
        inner sep=1pt,
        align=center,
        fill=black!8
    }
]

\coordinate (laser) at (0,0);
\coordinate (polenc) at (2.0,0);
\coordinate (im) at (4.5,0);
\coordinate (iso) at (6.5,0);
\coordinate (chanstart) at (8.0,0);
\coordinate (bs1) at (10.0,0);
\coordinate (pbsZ) at (10.0,1.9);
\coordinate (rot) at (12.0,0);
\coordinate (pbsX) at (14.0,0);

\node[optic] (L) at (laser) {Laser};
\node[optic] (P) at (polenc) {Pol. enc.};
\node[optic] (I) at (im) {IM};
\node[optic] (O) at (iso) {Iso.};

\node[anchor=north east] at (0.25,1.85) {\textbf{Alice}};

\coordinate (aset) at ($(P.north)+(0,0.5)$);
\draw[->, semithick] (aset) -- (P.north)
    node[pos=0, above] {$a \in \{H,V,D,A\}$};

\node[below=0.12cm of P] {$\delta_{\mathrm{SPF}}$};

\coordinate (mset) at ($(I.north)+(0,0.5)$);
\draw[->, semithick] (mset) -- (I.north)
node[pos=0, above] {$\mu \in \{\mu_{\mathrm S}, \mu_{\mathrm D}\}$};
\node[below=0.09cm of I] {$\srcImpBndInt$};

\node[below=0.15cm of O] {$\eta_\mathrm{ISO}$};

\draw[->, semithick] (L.east) -- (P.west);
\draw[->, semithick] (P.east) -- (I.west);
\draw[->, semithick] (I.east) -- (O.west);

\draw[->, semithick] (O.east) -- (bs1);

\begin{scope}[shift={($(O.east)!0.38!(bs1)$)}, scale=0.45]
    \draw[red, semithick, smooth, domain=-0.9:0.9, samples=80]
        plot (\x,{1.2*exp(-10*(\x+0.10)^2)});
    \draw[blue, semithick, smooth, domain=-0.9:0.9, samples=80]
        plot (\x,{1.2*exp(-10*(\x-0.10)^2)});
\end{scope}

\draw[semithick] ($(bs1)+(-0.18,-0.18)$) -- ($(bs1)+(0.18,0.18)$);
\node[below=0.15cm of bs1] {$s/(1-s)$};

\draw[semithick] (bs1) -- (pbsZ);
\draw[semithick] ($(pbsZ)+(-0.18,-0.18)$) -- ($(pbsZ)+(0.18,0.18)$);
\node[left=0.18cm of pbsZ] {PBS};

\draw[semithick] (pbsZ) -- ++(0,1.05) node[det, above] (detV) {$V$};
\node[right=0.12cm of detV] {$\eta_{\mathrm V}, \mathrm d_{\mathrm V}$};

\draw[semithick] (pbsZ) -- ++(1.35,0) node[det, right] (detH) {$H$};
\node[above=0.12cm of detH] {$\eta_{\mathrm H}, \mathrm d_{\mathrm H}$};

\node[optic] (R) at (rot) {Pol. rot.};

\draw[semithick] (bs1) -- (R.west);
\draw[semithick] (R.east) -- (pbsX);

\draw[semithick] ($(pbsX)+(-0.18,-0.18)$) -- ($(pbsX)+(0.18,0.18)$);
\node[below=0.12cm of pbsX] {PBS};

\draw[semithick] (pbsX) -- ++(0,1.05) node[det, above] (detA) {$A$};
\node[right=0.12cm of detA] {$\eta_{\mathrm A}, \mathrm d_{\mathrm A}$};

\draw[semithick] (pbsX) -- ++(1.35,0) node[det, right] (detD) {$D$};
\node[above=0.12cm of detD] {$\eta_{\mathrm D}, \mathrm d_{\mathrm D}$};

\node[anchor=north east] at (16.1,4.1) {\textbf{Bob}};

\draw[dashed, semithick, orange!70!black, rounded corners]
    (-1.0,-1.1) rectangle (7.5,2.0);

\draw[dashed, semithick, orange!70!black, rounded corners]
    (8.9,-1.1) rectangle (16.3,4.3);

\end{tikzpicture}
\caption{Passive polarization-encoded decoy-state BB84 setup with various device imperfections, including imperfect state encoding ($\delta_{\mathrm{SPF}}$) with angular uncertainty ($\Delta\theta$), intensity fluctuations ($\srcImpBndInt$), Trojan-horse attacks ($\eta_\mathrm{ISO}$), imperfect detection efficiencies ($\eta_{i}$) with uncertainty $\Delta_{\eta}$, dark counts ($\mathrm d_{i}$) and imperfect passive basis choice beamsplitter $(s)$ with uncertainty ($\Delta_s$), cf. \cref{tab:device_imperfections_list}.}
\label{fig:pol_enc_setup}
\end{figure}

\noindent We now illustrate the framework for a polarization-encoded decoy-state BB84 setup shown in \cref{fig:pol_enc_setup}. A non-exhaustive list of practical source and detector imperfections, based on Ref.~\cite{BSI24}, is given in \cref{tab:device_imperfections_list}. We briefly describe the different imperfections in the main text, while the models used are detailed in \cref{ap:models_device_imperfections}. \cref{tab:device_imp_parameters} relates representative experimental parameters to the corresponding theoretical metrics. The remaining protocol and device parameters are listed in \cref{tab:protocol_parameters_example}.

To illustrate the combined impact of these imperfections for practical scenarios, we evaluate the key rate by considering combinations of device imperfection parameters, listed in \cref{tab:device_imp_parameters}, and present the results in \cref{fig:keyRate_example}.

\textbf{Alice} prepares phase-randomized weak coherent pulses with two intensity choices, the signal intensity $\mu_\mathrm{S}$ and decoy intensity $\mu_\mathrm{D}$. The intensity $\tilde{\mu}_i$ may fluctuate within $\tilde{\mu}_i \in [\mu_i(1-\srcImpBndInt),\mu_i(1+\srcImpBndInt)]$, where $i\in\{\mathrm{S},\mathrm{D}\}$\footnote{We emphasize that the probability distribution of the actual intensity within this interval need not be known. The interval may account for both physical intensity fluctuations and uncertainty in the characterization of the intensity.}. Whenever the decoy intensity is chosen, the round is used for testing, while the signal intensity is used for both key generation and testing (see $p_{\mathtt{test}|Z}$ in \cref{tab:protocol_parameters_example}). For illustration purposes, Alice's target polarization states are given by \cref{eq:state_preparation_flaw_model} and include state preparation flaws quantified by $\delta_{\mathrm{SPF}}$. The signal states are, in turn, imperfectly characterized relative to these targets, with the uncertainty in the polarization encoding quantified by the angular uncertainty $\Delta\theta$ (which is related to the source fidelity bound $\srcImpBndEnc$ via \cref{eq:relation_source_fidelity_delta_enc}). Thus, $\delta_{\mathrm{SPF}}$ describes a known imperfection included in the target states, whereas $\Delta\theta$ quantifies the uncertainty of the signal states around that model. Alice's source is not perfectly isolated, so Eve may inject light into Alice's setup and analyse the
back-reflections in a Trojan-horse attack. We quantify this
leakage through the corresponding source fidelity bound $\srcImpBndTHA$, defined such that the mean
photon number of the light back-reflected out of Alice's setup is at most $\srcImpBndTHA/2$ in every
round (which is directly related to Alice's attenuation $\eta_\mathrm{ISO}$, see \cref{eq:fidelity_bound_THA}). We assume that Alice does not leak information about her intensity choice. The total source fidelity bound is then given by \cref{eq:formula_combination_source_fidelity_bound} combining $\srcImpBndEnc$ and $\srcImpBndTHA$.

\textbf{Bob} has a passive detection setup with efficiencies characterized by $\tilde{\eta}_i \in [\eta_i(1-\Delta_{\eta}),\eta_i(1+\Delta_{\eta})]$, with known $\eta_i$ and characterization uncertainty $\Delta_{\eta}$. The detector dark-count probabilities are upper bounded by $\mathrm{d}_i^\mathrm{U}$. Bob performs the basis choice using a beamsplitter with splitting ratio $\tilde{s}\in[s(1-\Delta_s),s(1+\Delta_s)]$ with known $s$ (which we optimize over) and characterization uncertainty $\Delta_s$. We use the multi-click outcome for the subspace-weight estimation in \cref{th:security_imperfect_devices}, cf. \cref{ap:numerics}. The resulting detector POVM deviations $q_0$ and $q_1$ are given by \cref{eq:q_param_0_full,eq:q_param_1_full}. 

\textbf{Channel.} We model the channel by its transmittance and a misalignment angle $\theta_\mathrm{mis}$, which corresponds to a rotation of Bob's polarizing measurement relative to Alice's signal preparation.

\begin{table}[t]
\centering
\renewcommand{\arraystretch}{1.4}
\begin{tabular}{lllll}
\hline
\makecell[c]{\rule{0pt}{2.5ex}\textbf{Experimental} \\ \textbf{parameter}} & \makecell[c]{\rule{0pt}{2.5ex}\textbf{Value}} & \makecell[c]{\rule{0pt}{2.5ex}\textbf{Theoretical} \\ \textbf{parameter}}  & \makecell[c]{\rule{0pt}{2.5ex}\textbf{Value}} & \makecell[c]{\rule{0pt}{2.5ex}\textbf{Description}}\\ [0.2cm]

\hline
$\Delta\theta$ & $0.06\degree$, $0.5\degree,2\degree$ & $\srcImpBndEnc$ (\cref{eq:relation_source_fidelity_delta_enc}) & $\approx 10^{-6}, 10^{-4}, 10^{-3}$  & Uncertainty about Alice's polarization encoding \\

$\eta_\mathrm{ISO}$\footnote{When relating $\eta_\mathrm{ISO}$ to the corresponding source fidelity bound $\srcImpBndTHA$ via \cref{eq:fidelity_bound_THA}, we assume for illustration a laser-induced damage threshold of $50\,\mathrm{W}$, a repetition rate of $500\,\mathrm{MHz}$, and an operating wavelength of $1550\,\mathrm{nm}$.} & $200\,$dB & $\srcImpBndTHA$ (\cref{eq:fidelity_bound_THA}) & $\approx 10^{-8}$  & Two-way attenuation from Alice's isolators \\

$\srcImpBndInt$ & $10\,$\% & $\srcImpBndInt$ & $0.1$  & Alice's intensity fluctuations \\

$\Delta_{\eta}$ & $2.5\%$, $5\,$\%, $10\,$\% & $\dtImpBndEff[\blockVar = 1]$ (\cref{eq:q_param_efficiency}) & $\approx 0.04, 0.07, 0.15$  & Uncertainty about Bob's detector efficiencies \\

$\mathrm{d}_{i}^\mathrm{U}, i\in\{\mathrm{H,V,D,A}\}$ & $10^{-8}$ \cite{pittaluga_600-km_2021} & $\dtImpBndDCR[\blockVar=0,1]$ (\cref{eq:q_param_dark_counts}) & $\approx 4\cdot 10^{-8}$  & Upper bound on Bob's dark-count probability\\

$\Delta_{s}$ & $2.5\%$, $5\,$\%, $10\,$\% & $q_{\blockVar=1}^{s}$ (\cref{eq:q_param_splitting}) & - & Uncertainty about Bob's beamsplitting ratio \\
\hline
\end{tabular}
\caption{Example relation between device imperfection parameters and the corresponding theoretical metrics for the setup described in \cref{sec:example_pol_encoding} with protocol and device parameters given by \cref{tab:protocol_parameters_example}. The theoretical metrics are computed using the models described in \cref{ap:models_device_imperfections}.}
\label{tab:device_imp_parameters}
\end{table}

\begin{table}[t]
\centering
\renewcommand{\arraystretch}{1.4}
\begin{tabular}{lll}
\hline
\makecell[c]{\textbf{Parameter}} & \makecell[c]{\textbf{Value}} & \makecell[c]{\textbf{Description}}\\
\hline

$\totRounds$ & $10^{12}$ & Total number of signals sent \\

$\epsSecu$ & $10^{-15}$ & Security parameter \\

$p_Z^A$ & 0.95 & Probability for Alice to choose the Z-basis \\

$s$ & \textit{optimized} & Bob's beamsplitting ratio\footnote{For the active setup considered below, we set Bob's Z-basis probability $p_Z^B =p_Z^A=0.95$.} \\

$p_{\mathtt{test}|Z}$ & 0.05 & Probability for a test round given Z-basis \\

$\mu_\mathrm{S}, \mu_\mathrm{D}$ & \textit{optimized} & Signal and decoy intensities \\

$p_{\mu_\mathrm{S}}, p_{\mu_\mathrm{D}}$ & 0.95, 0.05 & Probability for signal and decoy intensities\footnote{For the active setup considered below, two decoy intensities are chosen and we set $p_{\mu_\mathrm{S}}=0.9$, $ p_{\mu_\mathrm{D1}}= p_{\mu_\mathrm{D2}}=0.05$.} \\

$\delta_{\mathrm{SPF}}$ & 0.063\footnote{For illustration we take $\delta_{\mathrm{SPF}} = 0.063$, corresponding to an extinction ratio of
$30$\,dB, a level achievable with a
phase-stabilized source \cite{PhysRevA.92.032305}.} & State preparation flaws \\

$\eta_\mathrm{i}$ & 0.73 \cite{pittaluga_600-km_2021} & Bob's detector efficiencies \\

$\theta_\mathrm{mis}$ & 0.1 & Misalignment angle \\

$\alpha$ & \textit{optimized} & Rényi $\alpha$-parameter \\

$\tagCutoff$ & 8 & Tagging source map cutoff \\

$\fssCutoff$ & 1 & Flag-state squasher cutoff \\

\hline
\end{tabular}
\caption{Protocol and device parameters used for the example in \cref{sec:example_pol_encoding}.}
\label{tab:protocol_parameters_example}
\end{table}

\begin{figure}
    \centering
    \includegraphics[]{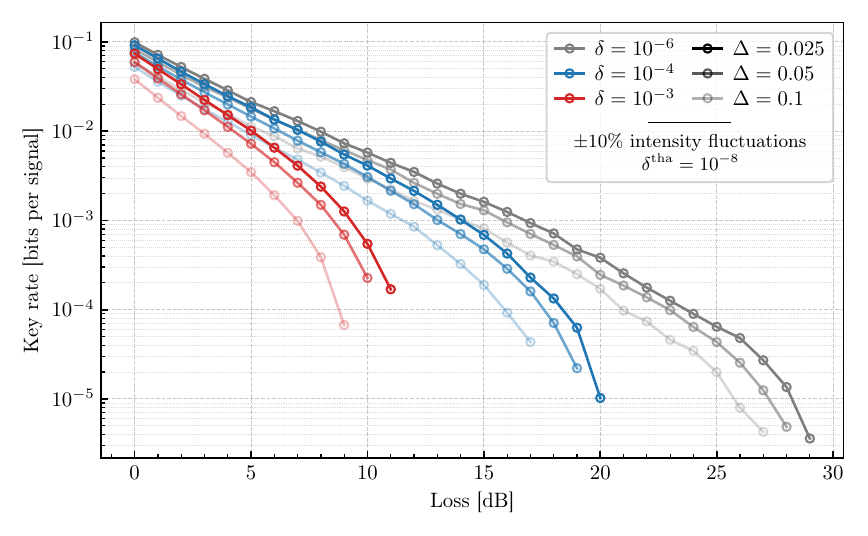}
    \caption{Finite-size secure key rate for the passive decoy-state BB84 setup depicted in \cref{fig:pol_enc_setup} with the imperfections listed in \cref{tab:device_imp_parameters} and the parameters listed in \cref{tab:protocol_parameters_example}. We show various combinations of single-photon source fidelity bounds $\delta$ (e.g. encoding uncertainty $\srcImpBndEnc$) and detector characterization uncertainty $\Delta$. For simplicity, the detector efficiency and beamsplitting ratio uncertainties are taken to be equal, $\Delta_{\eta}=\Delta_s=\Delta$. Every curve additionally includes $\pm10\%$ intensity fluctuations, Trojan-horse leakage $\srcImpBndTHA=10^{-8}$, state-preparation flaws $\delta_{\mathrm{SPF}} = 0.063$, and upper-bounded detector dark-count probabilities $\mathrm{d}_{i}^\mathrm{U}=10^{-8}$.}
    \label{fig:keyRate_example}
\end{figure}

The secure key rate for the setup described above and illustrated in \cref{fig:pol_enc_setup} is shown in \cref{fig:keyRate_example}, including state-preparation flaws and encoding uncertainty, intensity fluctuations, Trojan-horse attacks, imperfectly characterized detector efficiencies, dark counts, and beamsplitting ratio. 

We observe that positive key rates can be achieved for source fidelity bounds $\srcImpBnd$ on the order of $10^{-3}$, which, for example, corresponds to $\Delta\theta\approx2\degree$ angular uncertainty in the polarization encoding. Nevertheless, source imperfections have a significant impact on the key rate. This is due to the reduction used to incorporate imperfect characterization, which only constrains the overlaps of the known components of the signal states through the Gram matrix. The remaining unknown components are treated in the worst case and may lie in an arbitrarily large Hilbert space while being mutually orthogonal for different signal states. Consequently, the unknown component may become perfectly distinguishable, allowing Eve to effectively perform an unambiguous state discrimination attack whose effectiveness increases with channel loss. Intuitively, for source imperfections on the order of $\srcImpBnd$, such an attack can give full information about the key once the channel transmittance is also on the order of $\srcImpBnd$.

In contrast, the results are comparatively robust against detector characterization uncertainty. Relative uncertainties of $10\%$ (and more) in the detector still yield high key rates, whereas small source fidelity bound values already have a significant impact on the tolerable loss. Within the device models and approach considered here, accurate source characterization therefore appears to have the largest impact on performance.

\subsubsection{Comparison with previous work}

\begin{figure}[t]
    \centering
    \includegraphics[width=\linewidth]{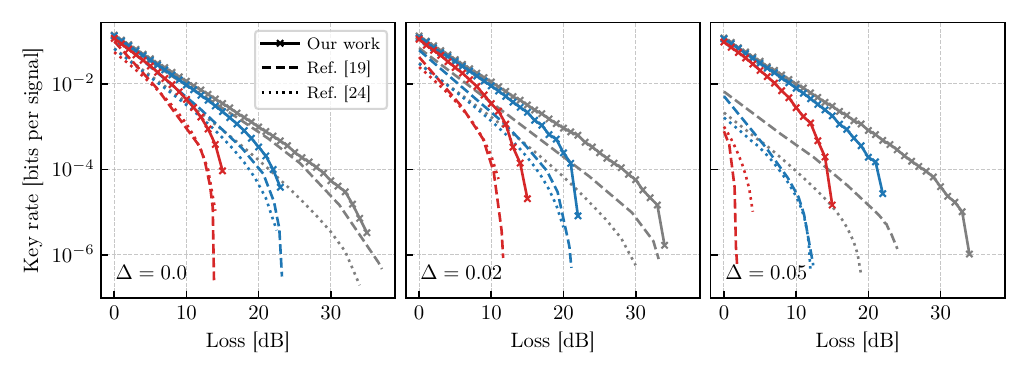}
    \caption{Key rates for active decoy-state BB84 obtained with the present work and the analyses of Ref.~\cite{navarrete2026numericalsecurityanalysispractical,sixto2026finitekeysecurityanalysisdecoystate}. The detector characterization uncertainty is $\Delta_{\mathrm{d}_\mathrm{B}} = \Delta_\eta = \Delta$, with $\Delta = 0$, $0.02$, and $0.05$ from left to right. The source fidelity bounds are $\delta = 10^{-3}$ (red), $10^{-4}$ (blue), and $10^{-6}$ (grey). The device imperfections are described in the main text, and the protocol parameters are listed in \cref{tab:protocol_parameters_example} (with two decoy intensities instead).}
    \label{fig:comparison}
\end{figure}

\noindent For comparison, we present the decoy-state BB84 key rates obtained using the recent analyses of both Refs.~\cite{navarrete2026numericalsecurityanalysispractical,sixto2026finitekeysecurityanalysisdecoystate} in \cref{fig:comparison} for the same device imperfections. Both works have a scope similar to ours as they combine a broad range of source and detector imperfections for finite-size decoy-state protocols, but follow the phase-error estimation approach, using numerical and analytical techniques, respectively. For example, both works account for state encoding flaws and source leakage satisfying the fidelity bounds in \cref{as:imperfectly_characterized_source}, as well as detector efficiency and dark count rate mismatches. Nevertheless, there are several important differences in scope with respect to the present work, in addition to their different proof technique. Therefore, in \cref{fig:comparison} we only include the imperfections covered by all three analyses.

Refs.~\cite{navarrete2026numericalsecurityanalysispractical,sixto2026finitekeysecurityanalysisdecoystate} account for correlations between successive signal states. Such source correlations are not treated explicitly in the present work, although they could be incorporated (together with detector correlations) \cite{meat_correlations}. We leave this extension for future work and, for a fair comparison, do not include source correlations in \cref{fig:comparison}. Ref.~\cite{sixto2026finitekeysecurityanalysisdecoystate} considers intensity fluctuations, but assumes that the complete underlying intensity distribution is known. Together with perfect phase randomization, this determines the corresponding photon-number probabilities exactly.\footnote{Thus, the intensity fluctuations merely replace the Poisson distribution by a different photon-number distribution, which is still assumed to be known exactly.} This assumption may be restrictive when the fluctuations originate, for example, from imperfect device characterization as the prepared intensity may only be known to lie within a specified interval, without any reliable model for its probability distribution. The present work, similarly to Ref.~\cite{kamin_renyi_2025}, requires only a known support for the intensity fluctuations and makes no assumption about the distribution within that support (see \cref{as:probability_bounds}). It directly covers situations in which the distribution is known and remains applicable when only characterization bounds are available.

Furthermore, Refs.~\cite{navarrete2026numericalsecurityanalysispractical,sixto2026finitekeysecurityanalysisdecoystate} assume perfect phase randomization, whereas the present work explicitly treats imperfect phase randomization (see \cref{sec:imperfect_block_diagonal}). Ref.~\cite{sixto2026finitekeysecurityanalysisdecoystate} also assumes that Alice's emitted states do not leak information about the selected intensity. This assumption can be violated, for example, by a Trojan-horse attack, in which Eve probes Alice's preparation device and the returning light carries information about the intensity setting (see \cref{ap:trojan_horse_attack}). We account for such intensity leakage through the relaxed decoy-state constraints introduced in \cref{ap:relaxed_decoy}. In addition, the protocols analyzed in Refs.~\cite{navarrete2026numericalsecurityanalysispractical,sixto2026finitekeysecurityanalysisdecoystate} do not cover on-the-fly announcements, in which Alice and Bob communicate during the distribution and measurement stage, which are permitted in our work. Such announcements significantly reduce the classical memory requirements of practical implementations.

Finally, Refs.~\cite{navarrete2026numericalsecurityanalysispractical,sixto2026finitekeysecurityanalysisdecoystate} consider active basis choice, whereas our work applies to both passive and active basis choice. The passive receiver is treated in the main text, while the active receiver is treated in \cref{ap:active_basis_choice}. On the other hand, Ref.~\cite{sixto2026finitekeysecurityanalysisdecoystate} obtains analytical key rate expressions, which makes the numerical certification of its bounds comparatively direct. For both Ref.~\cite{navarrete2026numericalsecurityanalysispractical} and the present work, numerical certification requires the construction of a valid dual-feasible point, which can be more challenging to ensure.

To obtain a direct numerical comparison, we consider the same active decoy-state BB84 protocol and include only the imperfections common to both analyses. Specifically, we use the protocol parameters listed in \cref{tab:protocol_parameters_example} (with two decoy intensities instead) and consider several combinations of source imperfections, parametrized by the source fidelity bound $\srcImpBnd$,\footnote{We set $\delta_\blockVar = \blockVar\delta$ (see \cref{ap:models_source_imperfections}). As in the present work, Ref.~\cite{navarrete2026numericalsecurityanalysispractical} uses fidelity bounds for individual photon-number blocks, whereas Ref.~\cite{sixto2026finitekeysecurityanalysisdecoystate} uses only the single-photon overlap.} and detector characterization uncertainties, parametrized by $\Delta_{\mathrm{d}_\mathrm{B}} = \Delta_{\eta}$ (see \cref{rem:active_basis_choice_q_params}). Following the discussion above, source correlations, intensity fluctuations, intensity leakage, and imperfect phase randomization are excluded from this comparison.

\cref{fig:comparison} presents the resulting key rates for several values of the detector characterization uncertainty and source fidelity bounds. Our work consistently produces higher key rates than the analyses of Refs.~\cite{navarrete2026numericalsecurityanalysispractical,sixto2026finitekeysecurityanalysisdecoystate}. This improvement over EUR-based approaches has been previously noted in Ref.~\cite{kamin_renyi_2025}. 

We stress that any conclusion drawn from such comparisons is subject to change as these proof techniques are constantly being improved upon. More concretely, we expect that the EUR key rates can be improved by utilizing recent results on tight Leftover Hashing Lemmas \cite{regula2026rethinkingquantumsmoothentropies}. Furthermore, our superior performance against detector imperfections is largely due to the techniques from Ref.~\cite{nahar_imperfect_2026}, which can also be applied to EUR-based proofs \cite{nahar_proof-technique-independent_2026}. Doing so would likely result in an improved performance of those techniques against detector imperfections.

\section{Conclusion}
\label{sec:conclusion}
\noindent In this work, we present a rigorous finite-size security proof for a wide class of prepare-and-measure QKD protocols, including qubit and decoy-state BB84, incorporating a broad range of practical source and detector imperfections in a modular manner. Importantly, we incorporate the case where the devices are not perfectly characterized (see \cref{sec:imperfect_characterization_discussion}), as characterization procedures inherently involve uncertainty and some imperfections may be under Eve’s influence. This extends the security proof from Ref.~\cite{tupkary_rigorous_2026} to the setting with imperfect and imperfectly characterized devices.

We have shown that practical key rates can be obtained even when several imperfections are present simultaneously (\cref{fig:keyRate_example}), obtaining higher key rates than those reported in previous works. As examples, we consider state encoding flaws, detector efficiency and dark-count rate mismatch, imperfectly characterized beamsplitter, Trojan-horse attacks, imperfectly block-diagonal sources (or imperfect phase randomization), intensity fluctuations, and intensity leakage. We directly compare our results with the recent phase-error estimation analyses of Refs.~\cite{navarrete2026numericalsecurityanalysispractical,sixto2026finitekeysecurityanalysisdecoystate}, which have a similar scope, for the imperfections covered by all three works (\cref{fig:comparison}). 

Our approach involves the application of a sequence of source maps and squashing maps to reduce the original problem with infinite-dimensional and imperfect devices to an equivalent finite-dimensional virtual protocol. Using the MEAT yields the general security statement in \cref{th:security_imperfect_devices}. We then explicitly expand the result into the convex optimization problem of \cref{th:convex_opti_numerics_imperfect}, making the results numerically tractable. On the source side, imperfections are incorporated through an optimization over the Gram matrix of Alice’s signal states \cite{pereira_optimal_2025}, while on the detector side they are incorporated as an additional constraint similar to the flag-state squasher constraint \cite{nahar_imperfect_2026}.

A key feature of the framework is its modularity. In contrast to many existing security proofs, where individual imperfections require dedicated and highly protocol-specific arguments, the approach presented here allows different source and detector imperfections, together with imperfect characterization, to be incorporated and combined in a proof-technique independent manner.

While many steps of our analysis are tight, we make specific choices pertaining to the imperfection parameters and their bounds, which may be improved upon. For instance, we characterize the signal states through their fidelity to known target states, cf. \cref{as:imperfectly_characterized_source}, motivated by previous work \cite{pereira_optimal_2025}. However, the source-replaced marginal is determined directly by the overlaps between Alice's signal states (see \cref{ap:construction_marginal_constraint}). It therefore appears natural to characterize the source directly through bounds on these overlaps, rather than through their fidelity to arbitrary target states. Similarly, the metric used for the characterization of the detectors from \cref{as:imperfectly_characterized_detector} is motivated by previous work \cite{nahar_imperfect_2026}.

There are also some device imperfections that are not currently covered by our approach. For example, light-injection attacks where Eve partially controls Alice's state preparation (e.g. through injection-locking) can, without on-the-fly announcements, be treated similarly to the Trojan-horse attack (\cref{ap:trojan_horse_attack}). With on-the-fly announcements, however, the modified timing argument (\cref{lem:modified_timing_protocol}) and the source-replacement scheme no longer directly apply, since Alice's state preparation depends on previous announcements. To the best of our knowledge, there is currently no security proof covering this scenario. Similarly, active attacks on the detector, such as detector-blinding attacks \cite{BSI24}, are not straightforward to model through \cref{as:imperfectly_characterized_detector}. In fact, handling such attacks is currently a general open problem in the field and may require a more detailed model of the detector behavior together with suitable experimental countermeasures.

An interesting direction for future work is to better understand the fundamental limits imposed by source and detector imperfections on the achievable key rates. While we restrict the analysis to source and detector imperfections that vary independently from round to round, correlated imperfections can in principle also be incorporated as they reduce to independent imperfections \cite{meat_correlations}. An explicit treatment is left to future work. More broadly, our work provides a general approach for incorporating practical device imperfections into QKD security proofs and can be extended to other protocols, such as entanglement-based \cite{PhysRevLett.67.661} and measurement-device-independent QKD \cite{PhysRevLett.108.130503}.
\section*{Acknowledgments} 
\label{sec:acknowledgements}

\noindent We particularly thank Xoel Sixto and Víctor Zapatero for providing the data used in our comparison with Ref.~\cite{sixto2026finitekeysecurityanalysisdecoystate}, and Álvaro Navarrete and Guillermo Currás-Lorenzo for the comparison with Ref.~\cite{navarrete2026numericalsecurityanalysispractical}. We also thank Mariana Navarro, Carlos Pascual-García, and Lars Kamin for stimulating discussions on the numerical analysis, and Shlok Nahar, Aodhán Corrigan, and Zhiyao Wang for helpful discussions on the theory. This work was funded by the NSERC Alliance QUINT and DND-MicroNet. It was conducted at the Institute for Quantum Computing, University of Waterloo, which is funded by the Government of Canada through ISED. D.T. was partially funded by the Mike and Ophelia Lazaridis Fellowship.
\section*{Author contributions}
\noindent J.W. carried out the research. J.B. implemented the numerics with the support of J.W. D.T. and N.L. supervised the research. All authors contributed to discussions and reviewed the manuscript.

\section*{Code availability}
\label{sec:code_availability}

\noindent We provide the numerical software used to compute the key rates in Ref.~\cite{numericalanalysis}, using the open source OpenQKDSecurity package \cite{burniston_2024_14262569}, which is designed to be easily adapted to different QKD implementations and combinations of device imperfections.

\bibliographystyle{apsrev4-2}
\bibliography{lit}{}

\appendix

\renewcommand{\thesection}{\Alph{section}}
\renewcommand{\thesubsection}{\thesection.\arabic{subsection}}
\renewcommand{\thesubsubsection}{\thesubsection.\arabic{subsubsection}}
\crefalias{section}{appendix}
\crefalias{subsection}{appendix}
\crefalias{subsubsection}{appendix}

\makeatletter
\let\oldaddcontentsline\addcontentsline
\renewcommand{\addcontentsline}[3]{%
  \def\argone{#1}%
  \def\argtwo{#2}%
  \def\tocname{toc}%
  \def\subsecname{subsection}%
  \def\subsubsecname{subsubsection}%
  \ifx\argone\tocname
    \ifx\argtwo\subsecname
    \else
      \ifx\argtwo\subsubsecname
      \else
        \oldaddcontentsline{#1}{#2}{#3}%
      \fi
    \fi
  \else
    \oldaddcontentsline{#1}{#2}{#3}%
  \fi
}
\makeatother

\addtocontents{toc}{\protect\setcounter{tocdepth}{1}}

\section{Combination of flag-state squasher and noise channel}
\label{ap:proof_combination_fss_noise_channel}

\noindent In this section we provide a proof of \cref{lem:combination_FSS_noise_channel} from \cref{sec:security_source_squashing}. While the construction is technically inferred by Refs.~\cite{nahar_imperfect_2026,tupkary_rigorous_2026}, to the best of our knowledge no formal proof has previously appeared in the literature. 
\begingroup
\renewcommand{\thelemma}{\ref{lem:combination_FSS_noise_channel}}
\begin{lemma}
    Consider the protocol $\big\{\big\{\Astate^{(j)}_{\Aclassical_j\Aprime_j}, \big\{\Bpovmel[j]\big\}_{\BclassicalVal\in\BclassicalAlph}, \annKeyMap[j]\big\}_{j=1}^\totRounds,$ $\ppMap\big\}$ where Bob's measurement POVMs are block-diagonal in Fock space, i.e. $\Bpovmel[j] = \big(\oplus_{\blockVar=0}^{\fssCutoff}\BpovmelBlock{j}{\blockVar}\big)\oplus\BpovmelBlock{j}{\blockVar>\fssCutoff}$. Let $W^{\Bmeas_j}$ be an outcome and $\lambdaMin>0$ such that $ W^{\Bmeas_j} \geq \lambda_\mathrm{min} \proj_{>\fssCutoff}^{\Bmeas_j}$ with photon-number cutoff $\fssCutoff$, where $\proj_{>\fssCutoff}^{\Bmeas_j}$ is a projector onto the space $\blockVar>\fssCutoff$. Assume that there exists a set of known POVM elements $\{\BpovmelBlockTilde{j}{\blockVar}\}_{j, \BclassicalVal\in \BclassicalAlph, \blockVar\leq\fssCutoff}$, such that
    \begin{equation}
        \BpovmelBlock{j}{\blockVar} - (1 - \dtImpBnd[\blockVar, j]) \BpovmelBlockTilde{j}{\blockVar} \geq 0\,,
    \end{equation}
    with $1>\dtImpBnd[\blockVar, j]\geq 0$ for all $\blockVar \leq \fssCutoff$ and all $\BclassicalVal\in\BclassicalAlph$. Define a new set of POVMs $\big\{\BpovmelTarg[j]\big\}_{j, \BclassicalVal\in \BclassicalAlph}$ where
    \begin{equation}
    \label{eq:povm_virt_ap}
        \BpovmelTarg[j] \coloneqq \left(\oplus_{\blockVar = 0}^\fssCutoff \BpovmelBlockTilde{j}{\blockVar} \right) \oplus \ketbra{\BclassicalVal}{\BclassicalVal}_\flagSpaceReg\,.
    \end{equation}
    Then there exist squashing maps $\squashMap_j\in\cptp(\Bmeas_j, \BmeasSquash_j)$ such that
    \begin{align}
        \squashMap_j^\dagger \left[\BpovmelTarg[j]\right] &= \Bpovmel[j]
    \end{align}
    and if the restricted attack channel is constructed as
    \begin{align}
         \setAttackCh_j = \left\{\attackCh_j \in \cptp(\Ereg_{j-1}, \BmeasSquash_j \Ereg_j) : \attackCh_j^\dagger\left[\widetilde{W}^{\Bmeas_j'} \otimes \identity_{\Ereg_j} \right] \geq  \lambdaMin  \left( \attackCh_j^\dagger\left[\identity_{\BmeasSquash_j\Ereg_j} \right] - \sum_{\blockVar=0}^{\fssCutoff} \frac{1}{1 - \dtImpBnd[\blockVar, j]} \attackCh_j^\dagger\left[\proj_\blockVar^{\BmeasSquash_j} \otimes \identity_{\Ereg_j} \right]\right)\right\}
    \end{align}
    where $\squashMap_j^\dagger \left[\widetilde{W}^{\Bmeas_j'}\right] = W^{\Bmeas_j}$, then
    \begin{equation}
        \setAttackCh_j \supset \squashMap_j \circ \cptp(\Ereg_{j-1}, \Bmeas_j \Ereg_j)
    \end{equation}
    for all rounds $j$.
\end{lemma}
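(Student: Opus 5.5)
The plan is to build $\squashMap_j$ explicitly as the composition of the flag-state squasher (\cref{lem:flag_state_squasher}) and the noise channel (\cref{lem:noise_channel}). The flags of both maps are written into the single flag space $\flagSpaceReg$, so that $\BmeasSquash_j=\big(\oplus_{\blockVar=0}^{\fssCutoff}\Bmeas_{j,\blockVar}\big)\oplus\flagSpaceReg$. I will then verify the adjoint identities directly and deduce the attack-set inclusion from an operator inequality.

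\textbf{Construction.} Throughout I assume, as implicit in the statement, that $\{\BpovmelBlockTilde{j}{\blockVar}\}_{\BclassicalVal}$ is a POVM on block $\blockVar$, i.e. $\sum_\BclassicalVal \BpovmelBlockTilde{j}{\blockVar}=\proj_\blockVar$. The channel $\squashMap_j$ first performs the block measurement $\{\proj_0,\ldots,\proj_{\fssCutoff},\proj_{>\fssCutoff}\}$ and then acts as follows.
\begin{itemize}
\item[(i)] On outcome $\proj_{>\fssCutoff}$, it measures $\{\BpovmelBlock{j}{\blockVar>\fssCutoff}\}_\BclassicalVal$ and prepares $\ketbra{\BclassicalVal}{\BclassicalVal}_\flagSpaceReg$.
\item[(ii)] On outcome $\proj_\blockVar$ with $\blockVar\le\fssCutoff$, it applies the instrument with Kraus operator $\sqrt{1-\dtImpBnd[\blockVar,j]}\,\proj_\blockVar$, which keeps the state in block $\blockVar$. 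The remaining operators are $N_{\BclassicalVal,\blockVar}\coloneqq\BpovmelBlock{j}{\blockVar}-(1-\dtImpBnd[\blockVar,j])\BpovmelBlockTilde{j}{\blockVar}\geq0$, which are positive by \cref{eq:op_ineq_combination_fss}; their outcome $\BclassicalVal$ is written as the flag $\ketbra{\BclassicalVal}{\BclassicalVal}_\flagSpaceReg$.
\end{itemize}
Since $\sum_\BclassicalVal N_{\BclassicalVal,\blockVar}=\proj_\blockVar-(1-\dtImpBnd[\blockVar,j])\proj_\blockVar=\dtImpBnd[\blockVar,j]\proj_\blockVar$, the second instrument is trace preserving on block $\blockVar$. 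Hence $\squashMap_j$ is CPTP.

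\textbf{Adjoint identities.} Computing $\squashMap_j^\dagger[\BpovmelTarg[j]]$ block by block, the preserved part of block $\blockVar$ contributes $(1-\dtImpBnd[\blockVar,j])\BpovmelBlockTilde{j}{\blockVar}$. The flag $\BclassicalVal$ contributes $N_{\BclassicalVal,\blockVar}$ from block $\blockVar$ and $\BpovmelBlock{j}{\blockVar>\fssCutoff}$ from the high block. Summing these gives $\Bpovmel[j]$, so $\squashMap_j$ is a squashing map. Similarly, for the projector onto the preserved block one gets $\squashMap_j^\dagger[\proj_\blockVar^{\BmeasSquash_j}]=(1-\dtImpBnd[\blockVar,j])\proj_\blockVar^{\Bmeas_j}$. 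Because $W^{\Bmeas_j}$ is one of Bob's outcomes or a coarse-graining of them, I define $\widetilde{W}^{\Bmeas_j'}$ as the corresponding sum of the elements $\BpovmelTarg[j]$. Then $\squashMap_j^\dagger[\widetilde{W}^{\Bmeas_j'}]=W^{\Bmeas_j}$ follows by linearity.

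\textbf{Inclusion.} Take any $\attackCh_j=\squashMap_j\circ\attackCh'_j$ with $\attackCh'_j\in\cptp(\Ereg_{j-1},\Bmeas_j\Ereg_j)$. Then $\attackCh_j^\dagger=\attackCh_j'^\dagger\circ(\squashMap_j^\dagger\otimes\mathrm{id}_{\Ereg_j})$. Applying this to the right-hand side of the constraint in \cref{eq:restricted_attack_set_fss_noise} gives
\begin{equation}
\lambdaMin\,\attackCh_j'^\dagger\Big[\Big(\identity_{\Bmeas_j}-\sum_{\blockVar=0}^{\fssCutoff}\proj_\blockVar^{\Bmeas_j}\Big)\otimes\identity_{\Ereg_j}\Big]=\lambdaMin\,\attackCh_j'^\dagger\big[\proj_{>\fssCutoff}^{\Bmeas_j}\otimes\identity_{\Ereg_j}\big].
\end{equation}
Here the factors $1/(1-\dtImpBnd[\blockVar,j])$ cancel exactly against $\squashMap_j^\dagger[\proj_\blockVar^{\BmeasSquash_j}]=(1-\dtImpBnd[\blockVar,j])\proj_\blockVar^{\Bmeas_j}$, and unitality of the adjoint gives $\squashMap_j^\dagger[\identity]=\identity$. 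The left-hand side of the constraint becomes $\attackCh_j'^\dagger[W^{\Bmeas_j}\otimes\identity_{\Ereg_j}]$. The required inequality then follows from $W^{\Bmeas_j}-\lambdaMin\proj_{>\fssCutoff}^{\Bmeas_j}\geq0$, tensored with $\identity_{\Ereg_j}$, because the adjoint of a CP map preserves positivity.

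I expect the main obstacle to be bookkeeping rather than conceptual. The construction only works if the noise-channel flags and the flag-state-squasher flags share the same flag basis, so that both contributions add up to exactly $\Bpovmel[j]$. It also relies on the completeness $\sum_\BclassicalVal\BpovmelBlockTilde{j}{\blockVar}=\proj_\blockVar$, which is what makes the noise instrument trace preserving and what makes the weights $1/(1-\dtImpBnd[\blockVar,j])$ cancel exactly.
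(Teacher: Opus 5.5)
Your proposal is correct and follows the paper's proof essentially step by step. It uses the same explicit channel: a block measurement, then keeping block $\blockVar\leq\fssCutoff$ with weight $1-\dtImpBnd[\blockVar,j]$ while writing the residual outcomes and the $>\fssCutoff$ block into one common flag space. It also uses the same adjoint identities $\squashMap_j^\dagger[\BpovmelTarg[j]]=\Bpovmel[j]$ and $\squashMap_j^\dagger[\proj_\blockVar^{\BmeasSquash_j}]=(1-\dtImpBnd[\blockVar,j])\proj_\blockVar^{\Bmeas_j}$, and the same pull-back of $W^{\Bmeas_j}\geq\lambdaMin\proj_{>\fssCutoff}^{\Bmeas_j}$ through $\attackCh_j^{\prime\dagger}$. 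The only difference is cosmetic: you work with the unnormalized operators $N_{\BclassicalVal,\blockVar}$ rather than the paper's $\widetilde{R}_{\BclassicalVal,j,\blockVar}=N_{\BclassicalVal,\blockVar}/\dtImpBnd[\blockVar,j]$, which avoids the paper's separate treatment of the case $\dtImpBnd[\blockVar,j]=0$.
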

\endgroup
\begin{proof}
    For convenience, in this section, whenever we write block operators such as $\BpovmelBlock{j}{\blockVar}$ or $\BpovmelBlockTilde{j}{\blockVar}$, we implicitly identify them with their embeddings into the full Hilbert space through the corresponding projectors, i.e.
    \begin{equation}
        \BpovmelBlock{j}{\blockVar} = \proj_\blockVar^{\Bmeas_j} \Bpovmel[j] \proj_\blockVar^{\Bmeas_j}\,,
    \end{equation}
    and similarly for all other block operators. We assume $\dtImpBnd[\blockVar,j]<1$ for all $\blockVar\leq\fssCutoff$. For every $\blockVar \leq \fssCutoff$, define
    \begin{equation}
    \label{eq:def_r_tilde}
        \widetilde{R}_{\BclassicalVal,j,\blockVar} \coloneqq \frac{\BpovmelBlock{j}{\blockVar} - (1-\dtImpBnd[\blockVar,j])\BpovmelBlockTilde{j}{\blockVar}}{\dtImpBnd[\blockVar,j]}
    \end{equation}
    whenever $\dtImpBnd[\blockVar,j]>0$, and choose any POVM $\{\widetilde{R}_{\BclassicalVal,j,\blockVar}\}_{\BclassicalVal\in\BclassicalAlph}$ on the $\blockVar$-photon subspace otherwise. By definition, $\widetilde{R}_{\BclassicalVal,j,\blockVar}\geq 0$. Moreover, since both $\big\{\BpovmelBlock{j}{\blockVar}\big\}_{\BclassicalVal\in\BclassicalAlph}$ and $\big\{\BpovmelBlockTilde{j}{\blockVar}\big\}_{\BclassicalVal\in\BclassicalAlph}$ are POVMs on the $\blockVar$-photon subspace, we have
    \begin{equation}
        \sum_{\BclassicalVal\in\BclassicalAlph}\widetilde{R}_{\BclassicalVal,j,\blockVar} = \proj_\blockVar^{\Bmeas_j}\,.
    \end{equation}
    We explicitly construct the squashing map $\squashMap_j\in\cptp(\Bmeas_j,\BmeasSquash_j)$ as follows
    \begin{align}
    \label{eq:explicit_noise_channel_construction}
        \squashMap_j[\genDensity] &\coloneqq \sum_{\blockVar=0}^{\fssCutoff}\left((1-\dtImpBnd[\blockVar,j])\proj_\blockVar^{\Bmeas_j}\genDensity\proj_\blockVar^{\Bmeas_j} + \dtImpBnd[\blockVar,j]\sum_{\BclassicalVal\in\BclassicalAlph}\Tr\left[\widetilde{R}_{\BclassicalVal,j,\blockVar}\proj_\blockVar^{\Bmeas_j}\genDensity\proj_\blockVar^{\Bmeas_j}\right]\left(0_{\hilbert_{\leq\fssCutoff}} \oplus \ketbra{\BclassicalVal}{\BclassicalVal}_{\flagSpaceReg}\right)\right) \\
        &+ \sum_{\BclassicalVal\in\BclassicalAlph}\Tr\left[\BpovmelBlock{j}{\blockVar>\fssCutoff}\proj_{>\fssCutoff}^{\Bmeas_j}\genDensity\proj_{>\fssCutoff}^{\Bmeas_j}\right]\left(0_{\hilbert_{\leq\fssCutoff}} \oplus \ketbra{\BclassicalVal}{\BclassicalVal}_{\flagSpaceReg}\right)\,.\nonumber
    \end{align}
    The map first measures the block label $\blockVar$. Photon-number blocks with $\blockVar>\fssCutoff$ are mapped to classical flags. For blocks with $\blockVar\leq\fssCutoff$, a fraction $\dtImpBnd[\blockVar,j]$ is mapped to flags, while the remaining fraction is left unchanged. Therefore, it is easy to see that this map is completely positive and trace-preserving. We now show that $\squashMap_j^\dagger[\BpovmelTarg[j]]=\Bpovmel[j]$. It suffices to prove that
    \begin{equation}
    \label{eq:trace_equivalence_proof}
        \Tr\left[\BpovmelTarg[j]\squashMap_j[\genDensity]\right] = \Tr\left[\Bpovmel[j]\genDensity\right]
    \end{equation}
    for all $\genDensity\in\setDensity_\leq(\Bmeas_j)$. Using \cref{eq:povm_virt_ap,eq:explicit_noise_channel_construction}, the left-hand side is
    \begin{align}
        \Tr\left[\BpovmelTarg[j]\squashMap_j[\genDensity]\right]
        =& \sum_{\blockVar=0}^{\fssCutoff}(1-\dtImpBnd[\blockVar,j])\Tr\left[\BpovmelBlockTilde{j}{\blockVar}\proj_\blockVar^{\Bmeas_j}\genDensity\proj_\blockVar^{\Bmeas_j}\right] 
        + \sum_{\blockVar=0}^{\fssCutoff}\dtImpBnd[\blockVar,j]\Tr\left[\widetilde{R}_{\BclassicalVal,j,\blockVar}\proj_\blockVar^{\Bmeas_j}\genDensity\proj_\blockVar^{\Bmeas_j}\right] \nonumber\\
        & + \Tr\left[\BpovmelBlock{j}{\blockVar>\fssCutoff}\proj_{>\fssCutoff}^{\Bmeas_j}\genDensity\proj_{>\fssCutoff}^{\Bmeas_j}\right] \\
        =& \sum_{\blockVar=0}^{\fssCutoff}\Tr\left[\BpovmelBlock{j}{\blockVar}\proj_\blockVar^{\Bmeas_j}\genDensity\proj_\blockVar^{\Bmeas_j}\right]
        \quad + \Tr\left[\BpovmelBlock{j}{\blockVar>\fssCutoff}\proj_{>\fssCutoff}^{\Bmeas_j}\genDensity\proj_{>\fssCutoff}^{\Bmeas_j}\right] \\
        =& \Tr\left[\Bpovmel[j]\genDensity\right].
    \end{align}
    The second equality follows from \cref{eq:def_r_tilde}, while the final equality follows from the block-diagonal structure of $\Bpovmel[j]$. Hence $\squashMap_j^\dagger[\BpovmelTarg[j]]=\Bpovmel[j]$. It remains to prove the inclusion of the set of attack channels. Let $\attackCh_j=\squashMap_j\circ\attackCh'_j$ with $\attackCh'_j\in\cptp(\Ereg_{j-1},\Bmeas_j\Ereg_j)$. By construction,
    \begin{equation}
        \squashMap_j^\dagger\left[\proj_\blockVar^{\BmeasSquash_j}\right] = (1-\dtImpBnd[\blockVar,j])\proj_\blockVar^{\Bmeas_j}
    \end{equation}
    for all $\blockVar\leq\fssCutoff$. Moreover, using $\squashMap_j^\dagger[\widetilde{W}^{\Bmeas_j'}]=W^{\Bmeas_j}$ and $W^{\Bmeas_j}\geq\lambdaMin\proj_{>\fssCutoff}^{\Bmeas_j}$, we have
    \begin{equation}
        \squashMap_j^\dagger[\widetilde{W}^{\Bmeas_j'}] \geq \lambdaMin\left(\identity_{\Bmeas_j}-\sum_{\blockVar=0}^{\fssCutoff}\proj_\blockVar^{\Bmeas_j}\right)\,,
    \end{equation}
    and therefore
    \begin{equation}
        \squashMap_j^\dagger[\widetilde{W}^{\Bmeas_j'}] \geq \lambdaMin\left(\identity_{\Bmeas_j}-\sum_{\blockVar=0}^{\fssCutoff}\frac{1}{1-\dtImpBnd[\blockVar,j]}\squashMap_j^\dagger\left[\proj_\blockVar^{\BmeasSquash_j}\right]\right)\,.
    \end{equation}
    Applying $\attackCh_j^{\prime\dagger}$ to both sides and using $\attackCh_j^\dagger=\attackCh_j^{\prime\dagger}\circ\squashMap_j^\dagger$ gives
    \begin{equation}
        \attackCh_j^\dagger\left[\widetilde{W}^{\Bmeas_j'}\otimes\identity_{\Ereg_j}\right] \geq \lambdaMin\left(\attackCh_j^\dagger\left[\identity_{\BmeasSquash_j}\otimes\identity_{\Ereg_j}\right]-\sum_{\blockVar=0}^{\fssCutoff}\frac{1}{1-\dtImpBnd[\blockVar,j]}\attackCh_j^\dagger\left[\proj_\blockVar^{\BmeasSquash_j}\otimes\identity_{\Ereg_j}\right]\right)\,.
    \end{equation}
    Hence $\attackCh_j\in\setAttackCh_j$, and therefore $\setAttackCh_j \supset \squashMap_j\circ\cptp(\Ereg_{j-1},\Bmeas_j\Ereg_j)$ as this holds for all $\attackCh_j=\squashMap_j\circ\attackCh'_j$.
\end{proof}

\section{Numerics}
\label{ap:numerics}

\noindent We consider the optimization problem for $\kappaQKDVirt_j$ from \cref{th:security_imperfect_devices} (\cref{eq:optimization_imperfect_devices}), where the source and detector are imperfect and imperfectly characterized according to \cref{as:block_diagonal_source,as:imperfectly_characterized_source,as:probability_bounds,as:block_diagonal_detector,as:imperfectly_characterized_detector}. In this section, we reformulate this problem as a numerically tractable convex optimization problem. We first explicitly construct the protocol map $\gMap_j$ from the \nameref{prot:entanglement_qkd_protocol} (see \cref{fig:MEAT_channels}) in \cref{ap:protocol_map}. Then, in \cref{ap:expanding_objective,sec:expanding_constraints}, we expand the objective function and the constraints, leading to the convex optimization problem stated in \cref{th:convex_opti_numerics_imperfect}. The definitions of the relevant entropic quantities can be found in \cref{app:misc}. 

The analysis in this section relies strongly on the numerical analysis presented in Ref.~\cite{kamin_renyi_2025} (which generalizes Ref.~\cite{winick_reliable_2018}), including several of its entropy reductions and convexity results.  We refer the reader to Ref.~\cite[Section 10]{tupkary_rigorous_2026} for a discussion on various subtleties regarding these techniques, and Ref.~\cite{navarro_finitesize_2026} for another approach to solving the optimization problem.
We note that Ref.~\cite{kamin_renyi_2025} formulates the corresponding optimization over Eve's attack channel. Here, since the signal states are imperfectly characterized (\cref{as:imperfectly_characterized_source}), we instead optimize over the set of compatible output states, as discussed in \cref{rem:we_do_output_state}.

For computing key rates, we require that the tradeoff function $f(\hat c)$ takes the same value for all $\hat c\in\varDecReg_j$ associated with generation rounds. (This simplifies certain technical calculations from Ref.~\cite{kamin_renyi_2025}, and is made more apparent in \cref{ap:conditioning_classical_registers}).
However, this condition is  difficult to directly impose numerically. We therefore implement this condition via an additional register, as we explain below.

\subsection{Protocol map}
\label{ap:protocol_map}

\noindent In this section, we explicitly construct the protocol map $\gMap_j\in\cptp(\Amarg_j\Bmeas_j,  \secretReg_j\varDecReg_j \eveCopyReg_j)$ for the \nameref{prot:entanglement_qkd_protocol}, involving Alice and Bob's measurements, announcements and the key map. 
Alice and Bob make probabilistic announcements, which are represented by the stochastic announcement map $\annFunc[j]:\AclassicalAlph \times \BclassicalAlph \rightarrow \probSimplex_{|\varDecAlph|}$ (see \cref{rem:test_gen_decision}).  To implement the restriction that $f(\hat c)$ takes the same value for all $\hat c$ in generation rounds, we will, solely for the purpose of the numerical calculations, introduce an intermediate register $\interReg_j$ with alphabet $\interAlph$, which stores all of Alice and Bob's classical announcements in round $j$. We write the announcement map $\annFunc[j]:\AclassicalAlph \times \BclassicalAlph \rightarrow \probSimplex_{|\interAlph|}$ on this new register and partition $\interAlph =  \interAlph_{\mathtt{test}} \cup \interAlph_{\mathtt{gen}}$ into the set of announcements corresponding to test and generation rounds. Then, the value of the register $\varDecReg_j$ is computed with a function $\varDecFunc_j: \interAlph \rightarrow \varDecAlph$ which enforces that the register $\varDecReg_j$ is fixed to $\varDecGen$ in all generation rounds, i.e. 
\begin{equation}
    \varDecFunc_j(\bar c) = 
    \begin{cases}
        \varDecGen & \text{if } \bar c \in \interAlph_{\mathtt{gen}}\\
        \bar c & \text{if } \bar c \in \interAlph_{\mathtt{test}}.
    \end{cases}
\end{equation}
Therefore, the alphabet of $\varDecReg_j$ can be partitioned as $\varDecAlph=\varDecAlph_\mathtt{test} \cup \varDecAlph_\mathtt{gen}$ with $\varDecAlph_\mathtt{test} = \interAlph_{\mathtt{test}}$ and $\varDecAlph_\mathtt{gen} = \{\varDecGen\}$. We emphasize that this construction is solely for numerical computation purposes and that it does not restrict Alice and Bob in the protocol, nor does it change the value of the objective function. This is because a copy of the actual public announcements is anyway given to Eve. 

Finally, the mapping from Alice’s setting choice, Bob’s measurement outcome and the register $\interReg_j$ to the secret register can be represented by a function $\secretFunc_j:\AclassicalAlph\times\BclassicalAlph\times\interAlph \rightarrow \secretAlph$ which includes the key map. With these definitions, we explicitly construct the protocol map $\gMap_j$ from the \nameref{prot:entanglement_qkd_protocol}.

\begin{definition}[Protocol map $\gMap_j$]
    \label{def:qkd_protocol_map}    
    Consider the \nameref{prot:entanglement_qkd_protocol}. We define the \textit{protocol map} $\gMap_j\in\cptp(\Amarg_j\Bmeas_j,  \secretReg_j\varDecReg_j \eveCopyReg_j)$ as
    \begin{equation}
        \gMap_j[\genDensity] \coloneqq \sum_{\secretVal\in\secretAlph}\sum_{\bar c \in \interAlph} \ketbra{\secretVal}{\secretVal}_{\secretReg_j}\otimes \ketbra{\varDecFunc_j(\bar c)}{\varDecFunc_j(\bar c)}_{\varDecReg_j}\otimes \ketbra{\bar c}{\bar c}_{\eveCopyReg_j} \Tr\left[M_{\secretVal,\bar c}^{\Amarg_j\Bmeas_j} \genDensity\right]\,,
    \label{eq:definition_G_map}
    \end{equation}
    where
    \begin{equation}
    \label{eq:povm_element_grouped}
        M_{\secretVal,\bar c}^{\Amarg_j\Bmeas_j} \coloneqq \sum_{\substack{(\AclassicalVal, \BclassicalVal)\in\AclassicalAlph\times\BclassicalAlph \\ \text{s.t. } (\AclassicalVal, \BclassicalVal, \bar c) \in \secretFunc_j^{-1}(\secretVal)}} \annFunc[j](\bar c|\AclassicalVal,\BclassicalVal) \Apovmel[j] \otimes \Bpovmel[j]
    \end{equation}
    is the effective POVM element corresponding to announcement $\bar c \in \interAlph$ and secret register value $\secretVal\in\secretAlph$. 
\end{definition}

\begin{remark}
\label{rem:test_gen_decision}
    The public announcements in round $j$ may be probabilistic according to $\annFunc[j]$. While the security proof supports any probabilistic announcement, we only require that the probability of declaring a test round (and hence also generation round) is fixed in advance (see \cite[Section~10.1]{tupkary_rigorous_2026}).\footnote{We emphasize that this is by no means a fundamental limitation of the approach, and the numerics can be extended to handle the case where the test/generation split is not fixed, although no work has done so.}  More precisely, let $\varDecAlph = \varDecAlph_{\mathtt{test}}\cup \varDecAlph_{\mathtt{gen}}$, where $\varDecAlph_{\mathtt{test}}$ ($\varDecAlph_{\mathtt{gen}}$) denotes the set of announcements corresponding to a test (generation) round. Then for every $\AclassicalVal \in \AclassicalAlph$, there exists a predetermined value $p_{\mathtt{test}}^{(j)}(\AclassicalVal)$ such that
    \begin{equation}
        \sum_{\varDecVal\in \mathcal \varDecAlph_{\mathtt{test}}}
    \annFunc[j](\varDecVal|\AclassicalVal,\BclassicalVal) = p_{\mathtt{test}}^{(j)}(\AclassicalVal)
    \end{equation}
    for all $\BclassicalVal \in \BclassicalAlph$. Then the test round probability is fixed and given by $p_{\mathtt{test}}^{(j)} = \sum_{\AclassicalVal\in\AclassicalAlph}\prob_{\Aclassical_j}(\AclassicalVal)  p_{\mathtt{test}}^{(j)}(\AclassicalVal)$. Notably, both Alice and Bob can make the decision, but the test round probability cannot depend on Bob’s measurement outcome and hence cannot be influenced by Eve’s attack. 
\end{remark}

\subsection{Expansion of the objective function}
\label{ap:expanding_objective}

\noindent In this section, we exploit the block-diagonality of the source and apply the duality relation \cite[Lemma~16, Eq.~(A2)]{kamin_renyi_2025} for down-arrow sandwiched Rényi entropies to expand the objective function from \cref{eq:optimization_imperfect_devices}. As discussed in \cref{sec:model_assumptions}, we handle a broad class of protocols uniformly, including decoy-state and non-decoy protocols. In particular, any block-diagonal structure (or decoy constraints) naturally become trivial whenever these properties are not present.

\subsubsection{Conditioning on the classical registers}
\label{ap:conditioning_classical_registers}

\noindent Alice’s signal states are block diagonal by \cref{as:block_diagonal_source}. Therefore, in each round, Eve may perform a QND measurement of the block label without disturbing the state emitted by Alice, and store the outcome (corresponding to either the photon number $\blockVar\in\{0,\ldots,\tagCutoff\}$ or the flag label $\AclassicalVal\in\AclassicalAlph$) in a classical register $\blockVarReg_j$ \cite[Sec.~IX]{kamin_renyi_2025}. We absorb this QND measurement into Eve’s attack, resulting in a state $\genDensity_{\Amarg_j\BmeasSquash_j\blockVarReg_j}$ in each round, without loss of generality. 

Let $\tagCutoff$ denote the photon-number cutoff introduced by the tagging source map, cf. \cref{lem:tagging_source_map}. Here, we introduce an additional cutoff $\characCutoff \leq \tagCutoff$ to reduce the numerical complexity of the optimization similarly to Ref.~\cite{kamin_renyi_2025}. More precisely, we explicitly characterize the photon-number blocks $\blockVar\in\{0,\ldots,\characCutoff\}$, while the remaining blocks $\blockVar\in\{\characCutoff+1,\ldots,\tagCutoff\}$ are represented only through (looser) yields describing the relevant measurement statistics. This allows us to retain partial information about higher photon-number blocks without the numerical cost of characterizing the full states in the optimization. The yields are introduced in \cref{ap:simplifications}.

Let $\nu_{\secretReg_j\varDecReg_j\eveCopyReg_j\blockVarReg_j\evePurReg} \coloneqq \gMapVirt[j]\left[\purFunc\left(\genDensity_{\Amarg_j\BmeasSquash_j\blockVarReg_j}\right)\right]$ denote the state output in the $j$th round of the protocol, where we recall that $\gMapVirt[j]$ is the protocol map of the protocol resulting from the source map and squashing map reductions from \cref{sec:problem_reductions}. Concretely, $\gMapVirt[j]$ is given by \cref{def:qkd_protocol_map} with appropriate replacement of the POVM elements with Bob's squashed POVM elements (\cref{eq:def_new_virtual_povm_elements}). We begin by expanding the objective function from \cref{eq:optimization_imperfect_devices} in terms of the classical register $\varDecReg_j$ using \cref{def:f_weighted_renyi_entropy}, yielding
\begin{equation}
\label{eq:expanded_entropy_f}
    \fRenyiUpEnt{1}{j-1}(\secretReg_j|\eveCopyReg_j\varDecReg_j\blockVarReg_j\evePurReg)_\nu \geq \frac{\alpha}{1-\alpha}\log\left((1-\ptest^{(j)})2^{\frac{1-\alpha}{\alpha} \left(\renyidown(\secretReg_j|\eveCopyReg_j\blockVarReg_j\evePurReg)_{\nu_{|\mathtt{gen}}} - \ftradeoff{1}{j-1}(\varDecGen)\right)} + \sum_{\varDecVal\in\varDecAlph_\mathtt{test}}\nu_{\varDecReg_j}(\varDecVal)2^{-\frac{1-\alpha}{\alpha}\ftradeoff{1}{j-1}(\varDecVal)}\right)
\end{equation}
where we split the sum up into generation rounds ($\varDecAlph_\mathtt{gen}$) and test rounds ($\varDecAlph_\mathtt{test}$), and used the fact that $\renyidown(\secretReg_j|\eveCopyReg_j\evePurReg) = 0$ in test rounds. Given that the register $\varDecReg_j$ is set to $\varDecGen$ in generation rounds and the generation round probability is fixed (see \cref{rem:test_gen_decision}), this implies $\nu_{\varDecReg_j}(\varDecGen) = 1 - \ptest^{(j)}$. This explains why we require the stochastic announcement map $\annFunc[j]$ to have fixed generation and testing probabilities (and the register $\varDecReg_j$ to take a fixed value in generation rounds) as this allows the generation round contribution to be factored out, simplifying the numerical analysis. This is also the approach taken in Ref.~\cite{kamin_renyi_2025}. However, we note that this assumption is not fundamental, and the numerical optimization can be extended to handle announcement maps where the generation and testing probabilities are not fixed, and depend on Eve's actions. Lastly, we used the fact that $\renyiup \geq \renyidown$ \cite[Sec.~5.2]{tomamichel_quantum_2016}, yielding the inequality above.\footnote{It is also possible to keep $\renyiup$, apply the conditioning step in \cref{eq:expand_entropy_block}, and later use $\renyiup \geq \renyidown$. This may lead to a slight improvement in key rates due to a slightly different scaling in the Rényi $\alpha$ parameter.} For the classical distribution on $\varDecReg_j$ generated by $\nu_{\secretReg_j\varDecReg_j\eveCopyReg_j}$, we have
\begin{equation}
\label{eq:def_nu_c_hat}
    \nu_{\varDecReg_j} = \Tr_{\secretReg_j\eveCopyReg_j} \left[\gMapVirt[j][\rho_{\Amarg_j\BmeasSquash_j}]\right] = \sum_{\bar c \in \interAlph}\sum_{\substack{\AclassicalVal\in \AclassicalAlph \\ \BclassicalVal\in \BclassicalAlph}} \annFunc[j](\bar c|\AclassicalVal,\BclassicalVal) \Tr[\left(\Apovmel[j] \otimes \BpovmelTarg[j]\right) \rho_{\Amarg_j\BmeasSquash_j}] \unitStatsVec[\bar c]\,,\end{equation}
where $\unitStatsVec[\bar c] \coloneqq \ketbra{\varDecFunc_j(\bar c)}{\varDecFunc_j(\bar c)}_{\varDecReg_j}$ and the function $\varDecFunc_j: \interAlph \rightarrow \varDecAlph$ is described in \cref{ap:protocol_map}. We expand the Rényi entropy in \cref{eq:expanded_entropy_f} further by using the conditioning on the block register $\blockVarReg_j$ \cite[Proposition~5.4]{tomamichel_quantum_2016}, yielding 
\begin{equation}
\label{eq:expand_entropy_block}
    \renyidown(\secretReg_j|\blockVarReg_j\eveCopyReg_j\evePurReg)_{\nu_{|\mathtt{gen}}} \geq \frac{1}{1-\alpha}\log\left(\sum_{\blockVar=0}^\characCutoff \prob_{\blockVarReg_j|\varDecReg_j}(\blockVar|\mathtt{gen}) \left(2^{(1-\alpha) \renyidown(\secretReg_j|\eveCopyReg_j\evePurReg)_{\nu_{|\mathtt{gen}\land \blockVarReg_j=\blockVar}}} - 1\right) + 1\right) 
\end{equation}
where we only evaluate the entropy term for $\blockVar\leq \characCutoff$ and, for the blocks $\blockVar> \characCutoff$ and blocks containing flags, we used the fact that $\nu_{|\mathtt{gen}}$ is classical on $\secretReg_j$, therefore $\renyidown(\secretReg_j|\eveCopyReg_j\evePurReg)_{\nu_{|\mathtt{gen}\land \blockVarReg_j=\blockVar/\AclassicalVal}}\geq 0$. Here,
\begin{equation}
\label{eq:nu_from_G_map_with_proj}
    \nu_{\secretReg_j \eveCopyReg_j \evePurReg|\mathtt{gen}\land \blockVarReg_j=\blockVar} = \Tr_{\varDecReg_j}\left[\frac{\proj_\mathtt{gen}^{\varDecReg_j}\gMapVirt[j]\left[\genDensity_{\Amarg_j\BmeasSquash_j\evePurReg|\blockVarReg_j=\blockVar}\right]\proj_\mathtt{gen}^{\varDecReg_j}}{1-p_{\mathtt{test}|\blockVarReg_j=\blockVar}^{(j)}}\right]\,,
\end{equation}
with projector $\proj_\mathtt{gen}^{\varDecReg_j} \coloneqq \ketbra{\varDecGen}{\varDecGen}_{\varDecReg_j}$ onto generation rounds.

\subsubsection{Reduction of the Rényi entropy for generation rounds}

\noindent Next, we wish to evaluate the Rényi entropy conditioned on generation rounds appearing in \cref{eq:expand_entropy_block}. Similarly to \cite[App.~H.1]{kamin_renyi_2025}, we simplify \cref{eq:nu_from_G_map_with_proj} by conditioning $\genDensity_{\Amarg_j\BmeasSquash_j\evePurReg|\blockVarReg_j=\blockVar}$ on generation rounds and replacing the protocol map $\gMapVirt[j]$ by the corresponding generation round protocol map $\gMapVirt[j]^\mathtt{gen}$, which is CPTP. This reformulation will allow us to apply the duality relation from \cite[Lemma~16, Eq.~(A2)]{kamin_renyi_2025} and remove the dependency on Eve's register $\evePurReg$. For this purpose, we define the partition operator for generation rounds
\begin{equation}
    \Lambda_\mathtt{gen}^{(j)} \coloneqq \sum_{\secretVal\in\secretAlph}\sum_{\bar c \in \interAlph_\mathtt{gen}} F_{\secretVal,\bar c}^{\Amarg_j\BmeasSquash_j}\,,
\end{equation}
where $F_{\secretVal,\bar c}^{\Amarg_j\BmeasSquash_j}$ is defined analogously to \cref{eq:povm_element_grouped} with Bob's squashed POVM elements given by \cref{eq:def_new_virtual_povm_elements}.
Then, it holds that $\sqrt{\left(\Lambda_\mathtt{gen}^{(j)}\right)^{-1}}\sqrt{\Lambda_\mathtt{gen}^{(j)}} = \proj_\mathtt{gen}^{\Amarg_j\BmeasSquash_j}$ is a projector onto the generation round subspace, where the inverse denotes the Penrose pseudo-inverse. Following \cite[App.~H]{kamin_renyi_2025}, we write the conditional state
\begin{equation}
    \genDensity_{\Amarg_j\BmeasSquash_j\evePurReg|\mathtt{gen}\land \blockVarReg_j=\blockVar} = \frac{\sqrt{\Lambda_\mathtt{gen}^{(j)}} \genDensity_{\Amarg_j\BmeasSquash_j\evePurReg|\blockVarReg_j=\blockVar} \sqrt{\Lambda_\mathtt{gen}^{(j)}}}{1-p_{\mathtt{test}|\blockVarReg_j=\blockVar}^{(j)}}
\end{equation}
and the conditional POVM elements
\begin{equation}
    F_{\secretVal,\bar c}^{\Amarg_j\BmeasSquash_j|\mathtt{gen}} \coloneqq \sqrt{\left(\Lambda_\mathtt{gen}^{(j)}\right)^{-1}} F_{\secretVal,\bar c}^{\Amarg_j\BmeasSquash_j}\sqrt{\left(\Lambda_\mathtt{gen}^{(j)}\right)^{-1}}
\end{equation}
as well as the generation round protocol map 
\begin{equation}
    \gMapVirt[j]^\mathtt{gen}[\genDensity] \coloneq \sum_{\secretVal\in\secretAlph}\sum_{\bar c \in \interAlph_\mathtt{gen}} \ketbra{\secretVal}{\secretVal}_{\secretReg_j}\otimes \ketbra{\varDecFunc_j(\bar c)}{\varDecFunc_j(\bar c)}_{\varDecReg_j}\otimes \ketbra{\bar c}{\bar c}_{\eveCopyReg_j} \Tr\left[F_{\secretVal,\bar c}^{\Amarg_j\BmeasSquash_j|\mathtt{gen}} \genDensity\right]\,,
\end{equation}
which is CPTP on $\mathrm{supp}(\Lambda_\mathtt{gen}^{(j)})$.
Then, it is easy to verify that
\begin{equation}
\label{eq:nu_from_G_map}
    \nu_{\secretReg_j \eveCopyReg_j \evePurReg|\mathtt{gen}\land \blockVarReg_j=\blockVar} = \Tr_{\varDecReg_j}\left[\gMapVirt[j]^\mathtt{gen}\left[\genDensity_{\Amarg_j\BmeasSquash_j\evePurReg|\mathtt{gen}\land\blockVarReg_j=\blockVar}\right]\right]
\end{equation}
corresponds to \cref{eq:nu_from_G_map_with_proj}. We can now use the duality argument from \cite[Corollary~9]{kamin_renyi_2025} and remove the dependency on Eve's register $\evePurReg$. 
We apply \cite[Lemma~16, Eq.~(A2)]{kamin_renyi_2025} where we identify the map given by \cref{eq:nu_from_G_map} as $\mathcal{M}$ in \cite[Lemma~16, Eq.~(A2)]{kamin_renyi_2025} and we define its Stinespring dilation as
\begin{equation}
    V_j \coloneqq \sum_{\secretVal\in\secretAlph}\sum_{\bar c \in \interAlph_{\mathtt{gen}}} \ket{\secretVal}_{\secretReg_j}\ket{\secretVal}_{\tilde \secretReg_j} \ket{\varDecFunc_j(\bar c)}_{\varDecReg_j} \ket{\bar c}_{\eveCopyReg_j} \ket{\bar c}_{\interReg_j} \sqrt{F_{\secretVal,\bar c}^{\Amarg_j\BmeasSquash_j|\mathtt{gen}}}
\end{equation}
yielding
\begin{equation}
    \renyidown(\secretReg_j|\eveCopyReg_j\evePurReg)_{\nu_{|\mathtt{gen}\land\blockVarReg_j=\blockVar}} = \frac{1}{1-\alpha}\log\Tr\left[\left(\Tr_{\secretReg_j}\left[\Tr_{\eveCopyReg_j}\left[\left(V_j \genDensity_{\Amarg_j\BmeasSquash_j|\mathtt{gen}\land \blockVarReg_j=\blockVar} V_j^\dagger \right)\right]^{\frac{1}{\alpha}}\right]\right)^\alpha\right]\,,
\end{equation}
following \cite[Eq.~(A2)]{kamin_renyi_2025}. Having expressed the objective function in terms of the conditional output states for each photon-number block $\blockVar$, we now turn to the constraints of the optimization problem.

\subsection{Constraints}
\label{sec:expanding_constraints}

\noindent In this section, we consider the constraints appearing in the set of output states $\widehat \stateSet_j(\setAliceMarginalsVirt[j])$ from \cref{th:security_imperfect_devices} (\cref{eq:set_output_states_security_statement}) and, again, exploit the block-diagonal structure of Alice's signal states to expand the constraints.

\subsubsection{Source constraints}

\noindent By \cref{as:block_diagonal_source}, Alice's signal states are block diagonal. Consequently, every marginal in $\setAliceMarginalsVirt[j]$ is block diagonal with respect to the shield system $\Ashield_j$ (see \cref{ap:construction_marginal_constraint}). Since Eve's attack only acts on $\Aprime''_j$, the block structure is preserved in the state after her attack, which can be written as
\begin{equation}
\label{eq:block_diagonal_output_state}
    \genDensity_{\Amarg_j\BmeasSquash_j\evePurReg} = \sum_{\blockVar=0}^\tagCutoff \prob_{\blockVarReg_j}(\blockVar) \ketbra{\blockVar}{\blockVar}_{\Ashield_j}\otimes\genDensity_{\Ameas_j\BmeasSquash_j\evePurReg| \Ashield_j=\blockVar} + \sum_{\AclassicalVal\in\AclassicalAlph} \prob_{\blockVarReg_j}(\AclassicalVal) \ketbra{\AclassicalVal}{\AclassicalVal}_{\Ashield_j} \otimes \genDensity_{\Ameas_j\BmeasSquash_j\evePurReg| \Ashield_j=\AclassicalVal}\,,
\end{equation}
where the first term corresponds to the $\blockVar$-photon blocks and the second term corresponds to the flags resulting from applying the tagging source map (\cref{eq:application_tagging_source_map}). Let us consider the marginal constraint in \cref{eq:set_output_states_security_statement}, which is given by $\setAliceMarginalsVirt[j]$, cf. \cref{eq:explicit_construction_marginal_constraint}. We can write the marginal constraint on each block in the shield system $\Ashield_j$ as
\begin{align}
     \frac{\bra{\AclassicalVal}_{\Ameas_j}\genDensity_{\Ameas_j\land\Ashield_j=\blockVar} \ket{\AclassicalVal'}_{\Ameas_j}}{\sqrt{\prob_{\Aclassical_j}(\AclassicalVal)\prob_{\Aclassical_j}(\AclassicalVal')}} &= \sqrt{\prob_{\blockVarReg_j|\Aclassical_j}(\blockVar|\AclassicalVal)\prob_{\blockVarReg_j|\Aclassical_j}(\blockVar|\AclassicalVal')}\braket{\AstateVirt_{\AclassicalVal',\blockVar}^{(j)}}{\AstateVirt_{\AclassicalVal,\blockVar}^{(j)}} \quad \forall \AclassicalVal, \AclassicalVal' \in\AclassicalAlph, \forall \blockVar\leq \characCutoff\,, \label{eq:marginal_for_opti}\\
    \bra{\AclassicalVal'}_{\Ameas_j}\genDensity_{\Ameas_j \land\Ashield_j=\AclassicalVal} \ket{\AclassicalVal''}_{\Ameas_j} &= \bar \delta_{\AclassicalVal\AclassicalVal'}\bar \delta_{\AclassicalVal\AclassicalVal''}\prob_{\Aclassical_j}(\AclassicalVal)\left(1 -\sum_{\blockVar=0}^\tagCutoff \prob_{\blockVarReg_j|\Aclassical_j}(\blockVar|\AclassicalVal)\right) \quad \forall \AclassicalVal,\AclassicalVal', \AclassicalVal'' \in\AclassicalAlph\,, \label{eq:marginal_for_opti_flags}
\end{align}
with Kronecker deltas $\bar \delta_{\AclassicalVal\AclassicalVal'}$, $\bar \delta_{\AclassicalVal\AclassicalVal''}$, and where 
\begin{align}
    \braket{\AstateVirt_{\AclassicalVal,\blockVar}^{(j)}}{\AstateVirt_{\AclassicalVal',\blockVar}^{(j)}} &= \sqrt{\left(1-\srcImpBnd[\AclassicalVal,\blockVar,j]\right)\left(1-\srcImpBnd[\AclassicalVal',\blockVar,j]\right)}\bra{\tilde \Astate_{\AclassicalVal,\blockVar}^{(j)}}\ket{\tilde \Astate_{\AclassicalVal',\blockVar}^{(j)}} + \sqrt{\left(1-\srcImpBnd[\AclassicalVal,\blockVar,j]\right)\srcImpBnd[\AclassicalVal',\blockVar,j]} \bra{\tilde \Astate_{\AclassicalVal,\blockVar}^{(j)}}\ket{\tilde \Astate_{\AclassicalVal',\blockVar}^{(j)\perp}} \nonumber\\
    &+ \sqrt{\srcImpBnd[\AclassicalVal,\blockVar,j]\left(1-\srcImpBnd[\AclassicalVal',\blockVar,j]\right)} \bra{\tilde \Astate_{\AclassicalVal,\blockVar}^{(j)\perp}}\ket{\tilde \Astate_{\AclassicalVal',\blockVar}^{(j)}} + \sqrt{\srcImpBnd[\AclassicalVal,\blockVar,j]\srcImpBnd[\AclassicalVal',\blockVar,j]} \bra{\tilde \Astate_{\AclassicalVal,\blockVar}^{(j)\perp}}\ket{\tilde \Astate_{\AclassicalVal',\blockVar}^{(j)\perp}} \label{eq:marginal_with_overlaps}
\end{align}
with known overlaps $\bra{\tilde \Astate_{\AclassicalVal,\blockVar}^{(j)}}\ket{\tilde \Astate_{\AclassicalVal',\blockVar}^{(j)}}$ and $\bra{\tilde \Astate_{\AclassicalVal,\blockVar}^{(j)}}\ket{\tilde \Astate_{\AclassicalVal,\blockVar}^{(j)\perp}} = 0$ for all $\AclassicalVal\in\AclassicalAlph$, and the remaining overlaps are unknown due to imperfect source characterization (see \cref{sec:problem_reductions}).  We can rewrite the constraint given by \cref{eq:marginal_with_overlaps} in terms of Gram matrices over which we later optimize the unknown overlaps, thereby extending the approach from Ref.~\cite{pereira_optimal_2025} to decoy-state protocols. To achieve this, we introduce the Gram matrices $\gramMatrix^{(\blockVar, j)}$ of the unions
\begin{equation}
    \left\{\sqrt{\prob_{\blockVarReg_j|\Aclassical_j}(\blockVar|\AclassicalVal)}\ket{\tilde \Astate_{\AclassicalVal, \blockVar}^{(j)}}\right\}_{\AclassicalVal\in\AclassicalAlph}\cup\left\{\sqrt{\prob_{\blockVarReg_j|\Aclassical_j}(\blockVar|\AclassicalVal)}\ket{\tilde \Astate_{\AclassicalVal, \blockVar}^{(j)\perp}}\right\}_{\AclassicalVal\in\AclassicalAlph}\,,
\end{equation}
for each block $\blockVar\leq \characCutoff$. The number of setting choices is $|\AclassicalAlph|$. By definition, the following constraints hold
\begin{align}
    \gramMatrix^{(\blockVar, j)} &\succeq 0 \quad \forall \blockVar \in \{0, \ldots,\characCutoff\}\label{eq:gram_constraint_1}\\
    \gramMatrix^{(\blockVar, j)}_{\AclassicalVal, \AclassicalVal'} &= \sqrt{\prob_{\blockVarReg_j|\Aclassical_j}(\blockVar|\AclassicalVal)\prob_{\blockVarReg_j|\Aclassical_j}(\blockVar|\AclassicalVal')}\bra{\tilde \Astate_{\AclassicalVal, \blockVar}^{(j)}}\ket{\tilde \Astate_{\AclassicalVal', \blockVar}^{(j)}} \quad \forall \AclassicalVal, \AclassicalVal'\in\AclassicalAlph, \AclassicalVal\neq \AclassicalVal', \blockVar \in \{0, \ldots,\characCutoff\} \label{eq:gram_constraint_2}  \\ 
    \gramMatrix^{(\blockVar, j)}_{\AclassicalVal, \AclassicalVal} &= \gramMatrix^{(\blockVar, j)}_{\AclassicalVal + |\AclassicalAlph|, \AclassicalVal + |\AclassicalAlph|} = \prob_{\blockVarReg_j|\Aclassical_j}(\blockVar|\AclassicalVal) \quad \forall \AclassicalVal\in\AclassicalAlph, \blockVar \in \{0, \ldots,\characCutoff\} \label{eq:gram_constraint_3}\\ 
    \gramMatrix^{(\blockVar, j)}_{\AclassicalVal + |\AclassicalAlph|, \AclassicalVal} &= \gramMatrix^{(\blockVar, j)}_{\AclassicalVal, \AclassicalVal + |\AclassicalAlph|} = 0 \quad \forall \AclassicalVal\in\AclassicalAlph, \blockVar \in \{0, \ldots,\characCutoff\}\,, \label{eq:gram_constraint_4}
\end{align}
and we can replace the occurrences in \cref{eq:marginal_for_opti} by the corresponding elements of the Gram matrices. Finally, we recall that the photon-number distribution $\prob_{\blockVarReg_j}$ is not exactly known by \cref{as:probability_bounds}. Therefore, we introduce the matrix $\probMatrix^{(\blockVar,j)}=\probMatrixVec_{\blockVar, j}\probMatrixVec_{\blockVar, j}^T$ where $\probMatrixVec_{\blockVar, j} = \sum_{\AclassicalVal\in\AclassicalAlph} \sqrt{\prob_{\blockVarReg_j|\Aclassical_j}(\blockVar|\AclassicalVal)} \hat{e}_\AclassicalVal$ as an optimization variable.\footnote{Technically, $\probMatrix^{(\blockVar,j)}$ is rank one. Since the rank-one constraint is nonconvex, we relax it by requiring only positive semi-definiteness.} It holds that $\probMatrix^{(\blockVar,j)} \succeq 0$, by definition, and by \cref{as:probability_bounds} we have the constraint
\begin{equation}
    \sqrt{\prob_{\blockVarReg_j|\Aclassical_j}^\mathrm{L}(\blockVar|\AclassicalVal)\prob_{\blockVarReg_j|\Aclassical_j}^\mathrm{L}(\blockVar|\AclassicalVal')}\leq \probMatrix^{(\blockVar,j)}_{\AclassicalVal,\AclassicalVal'} \leq \sqrt{\prob_{\blockVarReg_j|\Aclassical_j}^\mathrm{U}(\blockVar|\AclassicalVal)\prob_{\blockVarReg_j|\Aclassical_j}^\mathrm{U}(\blockVar|\AclassicalVal')} \quad \forall \AclassicalVal, \AclassicalVal'\in\AclassicalAlph, \blockVar\in\{0, \ldots, \tagCutoff\}\,. \label{eq:p_matrix_constraints}
\end{equation}
We can therefore replace the occurrences of the probability distribution $\prob_{\blockVarReg_j}$ by the corresponding elements in the matrix $\probMatrix^{(\blockVar,j)}$ and optimize it subject to the constraint given by \cref{eq:p_matrix_constraints}. This motivates defining the Gram matrices using the sub-normalized states as the probability distributions in \cref{eq:gram_constraint_2,eq:gram_constraint_3} can then be replaced by the corresponding entries of $\probMatrix^{(\blockVar,j)}$, while the known overlaps remain fixed, yielding linear constraints.

\subsubsection{Detector and statistics constraints}

\noindent We exploit the block-diagonality by plugging \cref{eq:block_diagonal_output_state} into the flag-state squashing and noise channel constraint 
\begin{equation}
\label{eq:fss_noise_constraint}
    \Tr\left[\widetilde{W}^{\Bmeas_j'}\rho_{\Amarg_j\BmeasSquash_j}\right] \geq \lambdaMin\left(1 - \sum_{\blockVar =0}^{\fssCutoff} \frac{1}{1 - \dtImpBnd[\blockVar, j]} \Tr\left[\Pi_\blockVar^{\BmeasSquash_j}\rho_{\Amarg_j\BmeasSquash_j}\right]\right)
\end{equation}
from \cref{eq:set_output_states_security_statement}, yielding the equivalent formulation in terms of sub-normalized states
\begin{align}
    &\sum_{\blockVar=0}^\tagCutoff \Tr\left[\widetilde{W}^{\Bmeas_j'}\rho_{\BmeasSquash_j\land\Ashield_j=\blockVar}\right] + \sum_{\AclassicalVal \in \AclassicalAlph} \Tr\left[\widetilde{W}^{\Bmeas_j'}\rho_{\BmeasSquash_j\land\Ashield_j=\AclassicalVal}\right] \nonumber\\ 
    &\geq \lambdaMin\left(1 - \sum_{\blockVar =0}^{\fssCutoff} \frac{1}{1 - \dtImpBnd[\blockVar, j]} \left(\sum_{\blockVar'=0}^\tagCutoff \Tr\left[\proj_\blockVar^{\BmeasSquash_j}\rho_{\BmeasSquash_j\land\Ashield_j=\blockVar'}\right] + \sum_{\AclassicalVal \in \AclassicalAlph} \Tr\left[\proj_\blockVar^{\BmeasSquash_j}\rho_{\BmeasSquash_j\land\Ashield_j=\AclassicalVal}\right]\right)\right)\,, \label{eq:detector_constraints}
\end{align}
where Alice's marginal $\Amarg_j$ is traced out as all operators act on $\BmeasSquash_j$. We can further exploit the block-diagonality by plugging \cref{eq:block_diagonal_output_state} into the expression for $\nu_{\varDecReg_j}$ from \cref{eq:def_nu_c_hat}, yielding
\begin{equation}
    \nu_{\varDecReg_j} = \sum_{\bar c \in \interAlph}\sum_{\substack{\AclassicalVal\in \AclassicalAlph \\ \BclassicalVal\in \BclassicalAlph}} \left(\annFunc[j](\bar c|\AclassicalVal,\BclassicalVal)\left(\sum_{\blockVar =0}^\tagCutoff   \Tr[\BpovmelTarg[j] \genDensity_{\BmeasSquash_j\land \Aclassical_j=\AclassicalVal\land\Ashield_j=\blockVar}] + \Tr[\BpovmelTarg[j] \genDensity_{\BmeasSquash_j\land\Ashield_j=\AclassicalVal}]\right)\unitStatsVec[\bar c]\right) \,,\label{eq:expanded_statistics_constraints}
\end{equation}
where we additionally use the fact that Alice's measurement $\{\Apovmel[j]\}_{\AclassicalVal\in \AclassicalAlph}$ is given by $\Apovmel[j] = \ketbra{\AclassicalVal}{\AclassicalVal}_{\Ameas_j}\otimes \identity_{\Ashield_j}$ to trace out Alice's marginal system $\Amarg_j$, and define the sub-normalized post-measurement state $\genDensity_{\BmeasSquash_j\land \Aclassical_j=\AclassicalVal} \coloneqq \Tr_{\Ameas_j}\left[\ketbra{\AclassicalVal}{\AclassicalVal}_{\Ameas_j}\genDensity_{\Ameas_j\BmeasSquash_j}\right]$. We further recall that Bob's POVMs are block diagonal and given by \cref{eq:def_new_virtual_povm_elements}, a property that can also be exploited in the numerical implementation.

\begin{remark}
    The flag-state squashing and noise channel constraint follows from the operator inequality defining the restricted set of attack channels in \cref{eq:marginal_definition_fss_noise}. It therefore holds for every output state of an allowed attack channel. Since conditioning on Alice's shield register (after it is traced out) commutes with the attack channel, we can tighten the optimization by imposing \cref{eq:fss_noise_constraint} separately on the $\blockVar$-photon states $\rho_{\Ameas_j\BmeasSquash_j|\Ashield_j=\blockVar}$ with $\blockVar\leq\characCutoff$.
\end{remark}

\subsubsection{Relaxed decoy constraints}
\label{ap:relaxed_decoy}

\noindent For decoy-state protocols, in the ideal case, Alice’s signal states (conditioned on sending $\blockVar$ photons) do not leak information about the intensity choice. However, due to source imperfections, cf. \cref{as:block_diagonal_source,as:imperfectly_characterized_source}, the signal states may partially leak this information (e.g. due to a Trojan-horse attack, cf. \cref{ap:trojan_horse_attack}). We therefore relax the usual decoy constraint to the case where the intensity leakage is bounded. For non-decoy protocols (or whenever only a single intensity is used), the constraints discussed in this section become trivial since the set of possible intensity choices $\intensityAlph$ contains only one element. In this way, both cases are handled uniformly. For simplicity, in the following we assume that the target states chosen in \cref{as:imperfectly_characterized_source} do not leak information about Alice’s intensity choice, i.e.
\begin{equation}
\label{eq:overlap_target_intensity_leakage}
    \left|\braket{\tilde \Astate_{\AclassicalVal_a\AclassicalVal_\mu,\blockVar}^{(j)}}{\tilde \Astate_{\AclassicalVal_a\AclassicalVal_{\mu'},\blockVar}^{(j)}}\right|^2 = 1
\end{equation}
for all $\AclassicalVal_a\in \tilde{\AclassicalAlph}$, $\AclassicalVal_\mu, \AclassicalVal_{\mu'}\in\intensityAlph$ and $\blockVar\leq \tagCutoff$. Note, however, that the argument can easily be extended to the case where only a lower bound on \cref{eq:overlap_target_intensity_leakage} is known, since the target states are explicitly known (see \cref{eq:relaxed_decoy_block} where this scenario is considered).
Then, by \cref{eq:definition_tau_x_m}, it follows that the overlap of the signal states for different intensities is bounded by
\begin{equation}
\label{eq:intensity_leakage_bound_overlap}
    \left|\braket{\AstateVirt_{\AclassicalVal_a\AclassicalVal_\mu,\blockVar}^{(j)}}{\AstateVirt_{\AclassicalVal_a\AclassicalVal_{\mu'},\blockVar}^{(j)}}\right|^2 \geq 1 - \left(\sqrt{\srcImpBnd[\AclassicalVal_a,\AclassicalVal_\mu,\blockVar,j]} + \sqrt{\srcImpBnd[\AclassicalVal_a,\AclassicalVal_{\mu'},\blockVar,j]}\right)^2
\end{equation}
for all $\AclassicalVal_a\in \tilde{\AclassicalAlph}$, $\AclassicalVal_\mu, \AclassicalVal_{\mu'}\in\intensityAlph$ and $\blockVar\leq \tagCutoff$. Recall that we write $\AclassicalVal = (\AclassicalVal_a, \AclassicalVal_\mu)$. This expression bounds how much information Alice’s signal states leak about her intensity choice. After applying the source maps from \cref{sec:problem_reductions}, the data-processing inequality for the fidelity implies that the fidelity between states forwarded to Bob with different intensities is bounded, 
\begin{equation}
\label{eq:fidelity_relaxed_decoy}
    F(\genDensity_{\BmeasSquash_j| \Aclassical_j=\AclassicalVal_a\AclassicalVal_\mu\land\Ashield_j=\blockVar}, \genDensity_{\BmeasSquash_j| \Aclassical_j=\AclassicalVal_a\AclassicalVal_{\mu'} \land \Ashield_j=\blockVar}) \geq 1 - \zeta_{\AclassicalVal_a,\AclassicalVal_\mu, \AclassicalVal_{\mu'},\blockVar,j}
\end{equation}
for all $\AclassicalVal_a\in \tilde{\AclassicalAlph}$, $\AclassicalVal_\mu, \AclassicalVal_{\mu'}\in\intensityAlph$ and $\blockVar\leq \tagCutoff$, where we define $\zeta_{\AclassicalVal_a,\AclassicalVal_\mu, \AclassicalVal_{\mu'},\blockVar,j} \coloneqq \left(\sqrt{\srcImpBnd[\AclassicalVal_a,\AclassicalVal_\mu,\blockVar,j]} + \sqrt{\srcImpBnd[\AclassicalVal_a,\AclassicalVal_{\mu'},\blockVar,j]}\right)^2$ for convenience. As we later optimize over sub-normalized states $\rho_{\Ameas_j\BmeasSquash_j\land\Ashield_j=\blockVar}$, we cannot directly write the constraint above as the probability distribution $\prob_{\blockVarReg_j}$ is not exactly known, following \cref{as:probability_bounds} (and we can therefore not relate the sub-normalized states to the normalized ones). However, following \cref{eq:fidelity_relaxed_decoy}, the statistics of the states with different intensity choices are related by
\begin{align}
    &  \frac{\Tr\left[\BpovmelTarg[j]\genDensity_{\BmeasSquash_j\land\Aclassical_j=\AclassicalVal_a\AclassicalVal_\mu\land \Ashield_j=\blockVar}\right]}{\prob_{\blockVarReg_j,\Aclassical_j}^{\mathrm{L}}(\blockVar,\AclassicalVal_a\AclassicalVal_\mu)} \geq \frac{\Tr\left[\BpovmelTarg[j]\genDensity_{\BmeasSquash_j\land\Aclassical_j=\AclassicalVal_a \AclassicalVal_\mu'\land\Ashield_j=\blockVar }\right]}{\prob_{\blockVarReg_j,\Aclassical_j}^{\mathrm{U}}(\blockVar,\AclassicalVal_a\AclassicalVal_\mu')} - \sqrt{\zeta_{\AclassicalVal_a,\AclassicalVal_\mu, \AclassicalVal_{\mu'},\blockVar,j}} \label{eq:relaxed_decoy_1}\\
    &  \frac{\Tr\left[\proj_{\blockVar'}^{\BmeasSquash_j}\genDensity_{\BmeasSquash_j\land\Aclassical_j=\AclassicalVal_a\AclassicalVal_\mu\land \Ashield_j=\blockVar}\right]}{\prob_{\blockVarReg_j,\Aclassical_j}^{\mathrm{L}}(\blockVar,\AclassicalVal_a\AclassicalVal_\mu)} \geq \frac{\Tr\left[\proj_{\blockVar'}^{\BmeasSquash_j}\genDensity_{\BmeasSquash_j\land\Aclassical_j=\AclassicalVal_a \AclassicalVal_\mu'\land\Ashield_j=\blockVar }\right]}{\prob_{\blockVarReg_j,\Aclassical_j}^{\mathrm{U}}(\blockVar,\AclassicalVal_a\AclassicalVal_\mu')} - \sqrt{\zeta_{\AclassicalVal_a,\AclassicalVal_\mu, \AclassicalVal_{\mu'},\blockVar,j}} \label{eq:relaxed_decoy_2}
\end{align}
for all $\BclassicalVal\in\BclassicalAlph, \blockVar\in\{0, \ldots,\characCutoff\},  \blockVar'\in\{0, \ldots,\fssCutoff\}, \AclassicalVal_a\in \tilde{\AclassicalAlph}, \AclassicalVal_\mu, \AclassicalVal_\mu' \in \intensityAlph$, where $\prob_{\blockVarReg_j, \Aclassical_j}^\mathrm{U/L}(\blockVar, \AclassicalVal) = \prob_{\Aclassical_j}(\AclassicalVal)\prob_{\blockVarReg_j|\Aclassical_j}^\mathrm{U/L}(\blockVar|\AclassicalVal)$ and $\prob_{\blockVarReg_j| \Aclassical_j}^\mathrm{U/L}(\blockVar|\AclassicalVal)$ is known by \cref{as:probability_bounds}. 

The source imperfections from \cref{as:imperfectly_characterized_source} are allowed to leak information about Alice's intensity choice, which results in the relaxed decoy constraints. If, instead, one assumes in \cref{as:imperfectly_characterized_source} that the source imperfections do not leak any information about the intensity choice, then \cref{eq:fidelity_relaxed_decoy} becomes trivial and the relaxed decoy constraints reduce to the standard decoy constraints,
\begin{equation}
    \genDensity_{\BmeasSquash_j| \Aclassical_j=\AclassicalVal_a\AclassicalVal_\mu\land\Ashield_j=\blockVar} = \genDensity_{\BmeasSquash_j| \Aclassical_j=\AclassicalVal_a\AclassicalVal_{\mu'}\land\Ashield_j=\blockVar}
\end{equation}
for all $\AclassicalVal_a\in \tilde{\AclassicalAlph}$, $\AclassicalVal_\mu, \AclassicalVal_{\mu'}\in\intensityAlph$, and $\blockVar\leq \tagCutoff$.

\begin{remark}
    In practice, a source imperfection may affect only certain degrees of freedom and need not leak any information about the intensity choice. This additional structure could be incorporated by introducing fidelity bounds between signal states of different intensities, thereby distinguishing between intensity-leaking and non-intensity-leaking source imperfections, but we do not treat this scenario explicitly in this work.
\end{remark}

\subsubsection{Simplifications}
\label{ap:simplifications}
\noindent In this section, we perform simplifications to reduce the numerical complexity and improve numerical stability, while only slightly affecting the tightness, similarly to \cite[Sec.~IX.A]{kamin_renyi_2025}. For simplicity, let the outcome $\widetilde{W}^{\Bmeas_j'}$ be one of Bob’s POVM elements, i.e. there exists $\BclassicalVal\in\BclassicalAlph$ such that $\BpovmelTarg[j]=\widetilde{W}^{\Bmeas_j'}$. We make use of the cutoff $\characCutoff \leq \tagCutoff$ to replace the detection statistics with yields. In particular, for all $\blockVar\in\{\characCutoff+1, \ldots, \tagCutoff\}$, we have
\begin{equation}
\label{eq:yield_introduction}
    \Tr[\BpovmelTarg[j] \genDensity_{\BmeasSquash_j| \Aclassical_j=\AclassicalVal_a\AclassicalVal_\mu\land\Ashield_j=\blockVar}] = \prob_{\BclassicalReg_j|\blockVarReg_j,\Aclassical_j}(\BclassicalVal|\blockVar, \AclassicalVal_a,\AclassicalVal_\mu) \eqqcolon (\yield^{\AclassicalVal_a, \AclassicalVal_\mu, \blockVar})_\BclassicalVal\,
\end{equation}
where, following the relaxed decoy constraint (\cref{eq:fidelity_relaxed_decoy}), the deviation between yields for different intensity choices is bounded as
\begin{equation}
    \left|(\yield^{\AclassicalVal_a, \AclassicalVal_\mu, \blockVar})_\BclassicalVal - (\yield^{\AclassicalVal_a, \AclassicalVal_{\mu'}, \blockVar})_\BclassicalVal\right| \leq \sqrt{\zeta_{\AclassicalVal_a,\AclassicalVal_\mu, \AclassicalVal_{\mu'},\blockVar,j}}\,.
\end{equation}
Additionally, we can bound the first sum in the r.h.s. of \cref{eq:detector_constraints} using
\begin{equation}
\label{eq:bound_with_q_max}
    \sum_{\blockVar=0}^\fssCutoff\frac{1}{1 - \dtImpBnd[\blockVar, j]}\sum_{\blockVar'=\characCutoff+1}^\tagCutoff \Tr\left[\proj_\blockVar^{\BmeasSquash_j}\rho_{\BmeasSquash_j\land\Ashield_j=\blockVar'}\right] \leq \sum_{\blockVar'=\characCutoff+1}^\tagCutoff \sum_{\AclassicalVal\in\AclassicalAlph} \prob_{\Aclassical_j}(\AclassicalVal)\frac{\probMatrix^{(\blockVar',j)}_{\AclassicalVal,\AclassicalVal}}{1 - \dtImpBndMax[j]}\,,
\end{equation}
where $\dtImpBndMax[j] \coloneqq \max_{\blockVar\in\{0,\ldots,\fssCutoff\}} \dtImpBnd[\blockVar, j]$. \cref{eq:yield_introduction,eq:bound_with_q_max} allow us to replace all terms involving $\rho_{\BmeasSquash_j\land\Ashield_j=\blockVar}$ for $\blockVar\in\{\characCutoff+1, \ldots, \tagCutoff\}$ by the corresponding bounds.

\begin{remark}
    Since the probability that Alice sends a flag is typically small, especially for large $\tagCutoff$, one can further simplify the optimization problem by using
    \begin{equation}
        \Tr[\genDensity_{\BmeasSquash_j\land\Ashield_j=\AclassicalVal}] = \prob_{\Aclassical_j}(\AclassicalVal)\left(1 -\sum_{\blockVar=0}^\tagCutoff \probMatrix^{(\blockVar,j)}_{\AclassicalVal,\AclassicalVal}\right)
    \end{equation}
    and therefore bound any trace involving $\genDensity_{\BmeasSquash_j\land \Ashield_j=\AclassicalVal}$. This removes the corresponding optimization variable with only a small penalty.
\end{remark}

\subsubsection{Convex optimization problem}

\begin{theorem}[Convex optimization problem for \cref{th:security_imperfect_devices}]
\label{th:convex_opti_numerics_imperfect}
    Consider the scenario in \cref{th:security_imperfect_devices}. Let $\characCutoff\leq \tagCutoff$ be a pre-defined cutoff and let the target states satisfy Eq.~\eqref{eq:overlap_target_intensity_leakage}. Then 
    \begin{equation}
        \kappaQKDVirt_j\left(\ftradeoff{1}{j-1},\setAliceMarginalsVirt[j],\gMapVirt[j]\right)  \leq \inf_{\genDensity_{\Amarg_j\BmeasSquash_j}\in\widehat \stateSet_j(\setAliceMarginalsVirt[j])}\fRenyiUpEnt{1}{j-1}(\secretReg_j|\eveCopyReg_j\varDecReg_j\evePurReg)_{\gMapVirt[j]\left[\purFunc(\genDensity_{\Amarg_j\BmeasSquash_j})\right]}
    \end{equation}
    holds for any $\kappaQKDVirt_j\left(\ftradeoff{1}{j-1},\setAliceMarginalsVirt[j],\gMapVirt[j]\right) $ chosen according to the convex optimization problem described in \cref{fig:opti_problem_final}, where
    \begin{equation}
        \renyiMblock(\nu_{|\mathtt{gen}}) \coloneqq \frac{1}{1-\alpha}\log\left(\sum_{\blockVar=0}^\characCutoff \left(2^{(1-\alpha) \renyidown(\secretReg_j|\eveCopyReg_j\evePurReg)_{\nu_{\land \blockVarReg_j=\blockVar | \mathtt{gen}}}} - \Tr[\rho_{\Ameas_j\BmeasSquash_j\land\Ashield_j=\blockVar|\mathtt{gen}}]\right) + 1\right) 
    \end{equation}
    and
    \begin{equation}
    \renyidown(\secretReg_j|\eveCopyReg_j\evePurReg)_{\nu_{\land\blockVarReg_j=\blockVar|\mathtt{gen}}} = \frac{1}{1-\alpha}\log\Tr\left[\left(\Tr_{\secretReg_j}\left[\Tr_{\eveCopyReg_j}\left[\left(V_j \genDensity_{\Ameas_j\BmeasSquash_j\land\Ashield_j= \blockVar|\mathtt{gen}} V_j^\dagger \right)\right]^{\frac{1}{\alpha}}\right]\right)^\alpha\right]\,.
    \end{equation}
    
\end{theorem}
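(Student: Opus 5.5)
The plan is to show that the value of the convex program in \cref{fig:opti_problem_final} lower bounds the infimum in \cref{eq:optimization_imperfect_devices}. Concretely, every $\genDensity_{\Amarg_j\BmeasSquash_j}\in\widehat\stateSet_j(\setAliceMarginalsVirt[j])$ will be mapped to a feasible point of the program whose objective value is no larger than the $f$-weighted Rényi entropy of $\gMapVirt[j]\left[\purFunc(\genDensity_{\Amarg_j\BmeasSquash_j})\right]$. Taking infima on both sides then gives the claim. So the proof is a relaxation argument: each step either weakens the objective to a lower bound, or replaces the true feasible set by a larger one.

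\textbf{Objective.} Fix a feasible $\genDensity$. Because the marginals in $\setAliceMarginalsVirt[j]$ are block diagonal in $\Ashield_j$ (\cref{as:block_diagonal_source}), I will first let Eve measure the block label into $\blockVarReg_j$. This only adds conditioning to Eve's side, so the entropy can only decrease. I then apply the chain of bounds derived in \cref{ap:expanding_objective}.
\begin{enumerate}
\item Split over test and generation rounds via \cref{eq:expanded_entropy_f}. This uses the fixed test probability of \cref{rem:test_gen_decision} and $\renyiup\geq\renyidown$.
\item Condition on the block register via \cref{eq:expand_entropy_block}. Blocks $\blockVar>\characCutoff$ and flag blocks are dropped using nonnegativity of the entropy for classical $\secretReg_j$.
\item Remove $\evePurReg$ via the duality of \cite[Lemma~16]{kamin_renyi_2025}, using the Stinespring isometry $V_j$.
\end{enumerate}
This yields exactly $\renyiMblock(\nu_{|\mathtt{gen}})$, written in terms of the sub-normalized conditional states $\rho_{\Ameas_j\BmeasSquash_j\land\Ashield_j=\blockVar|\mathtt{gen}}$. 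Since every map in the chain is monotone, the composite objective is a lower bound.

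\textbf{Constraints.} I will check that the block states extracted from $\genDensity$, together with the true Gram matrices $\gramMatrix^{(\blockVar,j)}$ and the rank-one matrices $\probMatrix^{(\blockVar,j)}$, satisfy every constraint of \cref{fig:opti_problem_final}.
\begin{enumerate}
\item The marginal constraints \cref{eq:marginal_for_opti,eq:marginal_for_opti_flags} follow from $\Tr_{\BmeasSquash_j}\genDensity\in\setAliceMarginalsVirt[j]$ together with \cref{eq:marginal_with_overlaps}.
\item The Gram-matrix constraints \cref{eq:gram_constraint_1,eq:gram_constraint_2,eq:gram_constraint_3,eq:gram_constraint_4} hold by construction.
\item The bounds \cref{eq:p_matrix_constraints} follow from \cref{as:probability_bounds}. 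Dropping the rank-one condition on $\probMatrix^{(\blockVar,j)}$ only enlarges the feasible set.
\item The squashing constraint \cref{eq:detector_constraints}, and its blockwise version, follow from \cref{eq:set_output_states_security_statement} after inserting \cref{eq:block_diagonal_output_state}.
\item The statistics constraint is \cref{eq:expanded_statistics_constraints}.
\item The relaxed decoy constraints \cref{eq:relaxed_decoy_1,eq:relaxed_decoy_2} follow from \cref{eq:intensity_leakage_bound_overlap} and data processing for the fidelity, which gives \cref{eq:fidelity_relaxed_decoy}. I then use the Fuchs--van de Graaf bound $|\Tr[E(\rho-\sigma)]|\leq\sqrt{1-F}$ for $0\leq E\leq\identity$. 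The unknown normalizations are handled by dividing by the upper or lower probability bounds.
\item The yield substitution \cref{eq:yield_introduction} and the bound \cref{eq:bound_with_q_max}, with $\dtImpBndMax[j]$, are outer relaxations.
\end{enumerate}

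\textbf{Convexity and main obstacle.} The program must also be genuinely convex. The constraints are linear or semidefinite in the joint variables $(\rho,\gramMatrix,\probMatrix)$, since the bilinear products in \cref{eq:marginal_for_opti} are absorbed into the subnormalized Gram entries. The objective is convex by the convexity results of \cite{kamin_renyi_2025} for the $\alpha$-sandwiched expression and the outer $\log$-sum. The step I expect to be delicate is keeping the marginal constraint linear when both the overlaps and the photon-number probabilities are unknown. My approach is to define $\gramMatrix$ on the sub-normalized vectors, so that each product $\sqrt{p\,p'}\braket{\cdot}{\cdot}$ becomes a single variable, and to couple the known overlaps linearly through $\probMatrix$. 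I would verify that the relaxed decoy inequalities and the $\renyiMblock$ term are written consistently with this normalization, so that no bilinear term survives.
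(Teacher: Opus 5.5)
Your proposal is correct and follows essentially the same relaxation argument as the paper. The objective is lower bounded via \cref{eq:expanded_entropy_f,eq:expand_entropy_block} and the duality of \cite[Lemma~16]{kamin_renyi_2025}, each constraint of \cref{fig:opti_problem_final} is checked (or shown to be an outer relaxation) for the true states, Gram matrices and rank-one photon-number matrices, and convexity follows from the affine/PSD constraints together with \cite[App.~F]{kamin_renyi_2025}. One small correction: the \emph{blockwise} squashing constraint does not follow from membership in $\widehat{\stateSet}_j(\setAliceMarginalsVirt[j])$, which only imposes the aggregate scalar inequality; it would require the operator inequality in \cref{eq:marginal_definition_fss_noise}. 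Since \cref{fig:opti_problem_final} contains only the aggregate constraint, you can simply drop that clause.
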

\begin{proof}
    The statement follows from a combination of arguments from \cref{ap:expanding_objective,sec:expanding_constraints}. We obtain the objective function from \cref{ap:expanding_objective}, i.e. \cref{eq:expanded_entropy_f,eq:expand_entropy_block}, where we replace the states conditioned on the block label $\blockVar$ by sub-normalized states instead. The source constraints follow from \cref{eq:marginal_for_opti} where the Gram matrix is substituted and the Gram matrix constraints are given by \cref{eq:gram_constraint_1,eq:gram_constraint_2,eq:gram_constraint_3,eq:gram_constraint_4}. The imperfect photon-number distribution is captured by $\probMatrix^{(\blockVar,j)}$ as defined in \cref{sec:expanding_constraints} with constraints given by \cref{eq:p_matrix_constraints}.
    
    The detector constraints are given by \cref{eq:detector_constraints} where we substituted the yields from \cref{eq:yield_introduction} for Alice's blocks $\blockVar\in\{\characCutoff+1, \ldots, \tagCutoff\}$ and the simplification from \cref{eq:bound_with_q_max}. Similarly, the statistics constraints are given by \cref{eq:expanded_statistics_constraints} where we substituted the yields for blocks $\blockVar\in\{\characCutoff+1, \ldots, \tagCutoff\}$. Finally, the relaxed decoy constraints follow from \cref{ap:relaxed_decoy}, more specifically \cref{eq:relaxed_decoy_1,eq:relaxed_decoy_2,eq:yield_introduction}. 

    The feasible set in \cref{fig:opti_problem_final} is convex, since all constraints are affine equalities or inequalities, and positive-semidefinite constraints. The convexity of the objective function follows from \cite[App.~F]{kamin_renyi_2025}. Therefore, the optimization problem is convex.
    
\end{proof}

\begin{figure}[!htbp]
        \centering
        \caption{Convex optimization problem for \cref{th:convex_opti_numerics_imperfect}.}
        \vspace*{-0.2cm}
        \label{fig:opti_problem_final}
    \begin{constraintblock}{Objective}
    \[
    \begin{aligned}
        & \mathrm{minimize} \quad \frac{\alpha}{1-\alpha} \log\left((1-p_\mathtt{test}^{(j)}) 2^{\frac{1-\alpha}{\alpha} \left(\renyiMblock(\nu_{|\mathtt{gen}}) - f(\varDecGen)\right)}+ \sum_{\varDecVal \in \varDecAlph_\mathtt{test}} \nu(\varDecVal) 2^{\frac{\alpha - 1}{\alpha}f(\varDecVal)} \right)
    \end{aligned}
    \]
    \end{constraintblock}

    \begin{constraintblock}{Optimization variables}
    \[
    \begin{aligned}
        &\{\rho_{\Ameas_j\BmeasSquash_j\land\Ashield_j=\blockVar} \in \setDensity_\leq(\Ameas_j\BmeasSquash_j)\}_{\blockVar=0}^\characCutoff & \{\gramMatrix^{(\blockVar, j)} \in \setPos(2|\AclassicalAlph|)\}_{\blockVar \in \{0, \ldots,\characCutoff\}} & \qquad\nu_{\varDecReg_j} \in \probSimplex_{|\varDecAlph|}  \\ 
        & \{\rho_{\BmeasSquash_j\land\Ashield_j=\AclassicalVal}\in \setDensity_\leq(\BmeasSquash_j)\}_{\AclassicalVal\in\AclassicalAlph} & \{\yield^{\AclassicalVal,\blockVar}\in \probSimplex_{|\BclassicalAlph|}\}_{\AclassicalVal\in\AclassicalAlph, \blockVar \in \{\characCutoff+1, \ldots, \tagCutoff\}} & \qquad\{\probMatrix^{(\blockVar,j)} \in \setPos(|\AclassicalAlph|)\}_{\blockVar\in\{0,\ldots,\tagCutoff\}} &
    \end{aligned}
    \]
    \end{constraintblock}

    \begin{constraintblock}{Source constraints}
    \[
    \begin{aligned}
        & \Tr[\genDensity_{\BmeasSquash_j\land \Ashield_j=\AclassicalVal}] = \prob_{\Aclassical_j}(\AclassicalVal)\left(1 -\sum_{\blockVar=0}^\tagCutoff \probMatrix^{(\blockVar,j)}_{\AclassicalVal,\AclassicalVal}\right) & \forall \AclassicalVal\in\AclassicalAlph\\
        &\frac{\bra{\AclassicalVal'}_{\Ameas_j}\genDensity_{\Ameas_j\land\Ashield_j=\blockVar} \ket{\AclassicalVal}_{\Ameas_j}}{\sqrt{\prob_{\Aclassical_j}(\AclassicalVal)\prob_{\Aclassical_j}( \AclassicalVal')}} = \sqrt{\left(1-\srcImpBnd[\AclassicalVal,\blockVar,j]\right)\left(1-\srcImpBnd[\AclassicalVal',\blockVar,j]\right)}\gramMatrix^{(\blockVar, j)}_{\AclassicalVal, \AclassicalVal'} + \sqrt{\left(1-\srcImpBnd[\AclassicalVal,\blockVar,j]\right)\srcImpBnd[\AclassicalVal',\blockVar,j]} \gramMatrix^{(\blockVar, j)}_{\AclassicalVal, \AclassicalVal'+|\AclassicalAlph|} \hspace*{-5cm} \nonumber \\
        &+ \sqrt{\srcImpBnd[\AclassicalVal,\blockVar,j]\left(1-\srcImpBnd[\AclassicalVal',\blockVar,j]\right)} \gramMatrix^{(\blockVar, j)}_{\AclassicalVal+|\AclassicalAlph|, \AclassicalVal'} + \sqrt{\srcImpBnd[\AclassicalVal,\blockVar,j]\srcImpBnd[\AclassicalVal',\blockVar,j]} \gramMatrix^{(\blockVar, j)}_{\AclassicalVal+|\AclassicalAlph|, \AclassicalVal'+|\AclassicalAlph|}&\forall \AclassicalVal, \AclassicalVal'\in\AclassicalAlph, \blockVar \in \{0, \ldots,\characCutoff\}\nonumber \\
        & \gramMatrix^{(\blockVar, j)}_{\AclassicalVal, \AclassicalVal'} = \probMatrix^{(\blockVar,j)}_{\AclassicalVal,\AclassicalVal'}\bra{\tilde \Astate_{\AclassicalVal, \blockVar}^{(j)}}\ket{\tilde \Astate_{\AclassicalVal', \blockVar}^{(j)}} &\forall \AclassicalVal, \AclassicalVal'\in\AclassicalAlph, \AclassicalVal\neq \AclassicalVal', \blockVar \in \{0, \ldots,\characCutoff\} \nonumber  \\ 
        & \gramMatrix^{(\blockVar, j)}_{\AclassicalVal, \AclassicalVal} = \gramMatrix^{(\blockVar, j)}_{\AclassicalVal + |\AclassicalAlph|, \AclassicalVal + |\AclassicalAlph|} = \probMatrix^{(\blockVar,j)}_{\AclassicalVal,\AclassicalVal} &\forall \AclassicalVal\in\AclassicalAlph, \blockVar \in \{0, \ldots,\characCutoff\}\nonumber \\ 
        & \gramMatrix^{(\blockVar, j)}_{\AclassicalVal + |\AclassicalAlph|, \AclassicalVal} = \gramMatrix^{(\blockVar, j)}_{\AclassicalVal, \AclassicalVal + |\AclassicalAlph|} = 0 &\forall \AclassicalVal\in\AclassicalAlph, \blockVar \in \{0, \ldots,\characCutoff\}\nonumber \\
        & \probMatrix^{(\blockVar,j)}_{\AclassicalVal,\AclassicalVal'} \mathrel{\substack{\leq\\\geq}} \sqrt{\prob_{\blockVarReg_j|\Aclassical_j}^{\mathrm{U}/ \mathrm{L}}(\blockVar|\AclassicalVal)\prob_{\blockVarReg_j|\Aclassical_j}^{\mathrm{U}/\mathrm{L}}(\blockVar|\AclassicalVal')} &\forall \AclassicalVal, \AclassicalVal'\in\AclassicalAlph, \blockVar\in\{0, \ldots, \tagCutoff\} \nonumber
    \end{aligned}
    \]
    \end{constraintblock}

    \begin{constraintblock}{Detector constraints}
    \[
    \begin{aligned}
        &\sum_{\blockVar=0}^\characCutoff \Tr\left[\widetilde{W}^{\Bmeas_j'}\rho_{\BmeasSquash_j\land\Ashield_j=\blockVar} \right] + \sum_{\blockVar=\characCutoff+1}^\tagCutoff \sum_{\AclassicalVal\in\AclassicalAlph} \prob_{\Aclassical_j, \blockVarReg_j}^\mathrm{U}(\AclassicalVal, \blockVar) (\yield^{\AclassicalVal, \blockVar})_{\BclassicalVal=o}  + \sum_{\AclassicalVal \in \AclassicalAlph} \Tr\left[\widetilde{W}^{\Bmeas_j'}\rho_{\BmeasSquash_j\land\Ashield_j=\AclassicalVal}\right] \nonumber \\ 
        &\geq \lambdaMin\left(1 - \sum_{\blockVar =0}^{\fssCutoff} \frac{1}{1 - \dtImpBnd[\blockVar, j]} \left(\sum_{\blockVar'=0}^\characCutoff \Tr\left[\proj_\blockVar^{\BmeasSquash_j}\rho_{\BmeasSquash_j\land\Ashield_j=\blockVar'}\right]  + \sum_{\AclassicalVal \in \AclassicalAlph} \Tr\left[\proj_\blockVar^{\BmeasSquash_j}\rho_{\BmeasSquash_j\land\Ashield_j=\AclassicalVal}\right]\right) - \sum_{\blockVar'=\characCutoff+1}^\tagCutoff \sum_{\AclassicalVal\in\AclassicalAlph} \prob_{\Aclassical_j}(\AclassicalVal)\frac{\probMatrix^{(\blockVar',j)}_{\AclassicalVal,\AclassicalVal}}{1 - \dtImpBndMax[j]} \right)\nonumber 
    \end{aligned}
    \]
    \end{constraintblock}

    \begin{constraintblock}{Statistics constraints}
    \[
    \begin{aligned}
        \nu_{\varDecReg_j} \mathrel{\substack{\leq\\\geq}} \sum_{\bar c \in \interAlph}\sum_{\substack{\AclassicalVal\in \AclassicalAlph \\ \BclassicalVal\in \BclassicalAlph}} \Bigg(&\annFunc[j](\bar c|\AclassicalVal,\BclassicalVal)\Bigg(\sum_{\blockVar =0}^\characCutoff \Tr[\BpovmelTarg[j] \genDensity_{\BmeasSquash_j\land\Aclassical_j=\AclassicalVal\land \Ashield_j=\blockVar}]    \\
        &+ \sum_{\blockVar =\characCutoff+1}^\tagCutoff \prob_{\Aclassical_j, \blockVarReg_j}^{\mathrm{U}/\mathrm{L}}(\AclassicalVal, \blockVar) (\yield^{\AclassicalVal, \blockVar})_{\BclassicalVal}+\Tr[\BpovmelTarg[j] \genDensity_{\BmeasSquash_j\land \Ashield_j=\AclassicalVal}] \Bigg)\unitStatsVec[\bar c]\Bigg)
    \end{aligned}
    \]
    \end{constraintblock}

    \begin{constraintblock}{Decoy constraints}
    \[
    \begin{aligned}
        & \left|(\yield^{\AclassicalVal_a, \AclassicalVal_\mu, \blockVar})_{\BclassicalVal} - (\yield^{\AclassicalVal_a, \AclassicalVal_{\mu'}, \blockVar})_{\BclassicalVal}\right| \leq \sqrt{\zeta_{\AclassicalVal_a,\AclassicalVal_\mu, \AclassicalVal_{\mu'},\blockVar,j}} & \hspace*{-4cm}\forall \BclassicalVal\in\BclassicalAlph, \blockVar\in\{\characCutoff+1, \ldots,\tagCutoff\}, \AclassicalVal_a\in\tilde{\AclassicalAlph}, \AclassicalVal_\mu, \AclassicalVal_\mu' \in \intensityAlph \nonumber \\
        & \frac{\Tr\left[\BpovmelTarg[j]\genDensity_{\BmeasSquash_j\land\Aclassical_j=\AclassicalVal_a\AclassicalVal_\mu\land \Ashield_j=\blockVar}\right]}{\prob_{\blockVarReg_j,\Aclassical_j}^{\mathrm{L}}(\blockVar,\AclassicalVal_a\AclassicalVal_\mu)} \geq \frac{\Tr\left[\BpovmelTarg[j]\genDensity_{\BmeasSquash_j\land\Aclassical_j=\AclassicalVal_a \AclassicalVal_\mu'\land\Ashield_j=\blockVar }\right]}{\prob_{\blockVarReg_j,\Aclassical_j}^{\mathrm{U}}(\blockVar,\AclassicalVal_a\AclassicalVal_\mu')} - \sqrt{\zeta_{\AclassicalVal_a,\AclassicalVal_\mu, \AclassicalVal_{\mu'},\blockVar,j}}  \\ &\forall \BclassicalVal\in\BclassicalAlph, \blockVar\in\{0, \ldots,\characCutoff\}, \AclassicalVal_a\in \tilde{\AclassicalAlph}, \AclassicalVal_\mu, \AclassicalVal_\mu' \in \intensityAlph \nonumber \\
        & \frac{\Tr\left[\proj_{\blockVar'}^{\BmeasSquash_j}\genDensity_{\BmeasSquash_j\land\Aclassical_j=\AclassicalVal_a\AclassicalVal_\mu\land \Ashield_j=\blockVar}\right]}{\prob_{\blockVarReg_j,\Aclassical_j}^{\mathrm{L}}(\blockVar,\AclassicalVal_a\AclassicalVal_\mu)} \geq \frac{\Tr\left[\proj_{\blockVar'}^{\BmeasSquash_j}\genDensity_{\BmeasSquash_j\land\Aclassical_j=\AclassicalVal_a \AclassicalVal_\mu'\land\Ashield_j=\blockVar }\right]}{\prob_{\blockVarReg_j,\Aclassical_j}^{\mathrm{U}}(\blockVar,\AclassicalVal_a\AclassicalVal_\mu')} - \sqrt{\zeta_{\AclassicalVal_a,\AclassicalVal_\mu, \AclassicalVal_{\mu'},\blockVar,j}}  \\
        &\forall \blockVar'\in\{0, \ldots,\fssCutoff\}, \blockVar\in\{0, \ldots,\characCutoff\}, \AclassicalVal_a\in \tilde{\AclassicalAlph}, \AclassicalVal_\mu, \AclassicalVal_\mu' \in \intensityAlph
    \end{aligned}
    \]
    \end{constraintblock}
    \end{figure}

\section{Decoy-state BB84 with active basis choice}
\label{ap:active_basis_choice}
\noindent Subspace weight estimation is a necessary step when applying the flag-state squasher (see \cref{lem:flag_state_squasher}). However, for active basis choice on the detector side, determining an outcome \(W^{\Bmeas_j}\) and a constant \(\lambdaMin>0\) satisfying \(W^{\Bmeas_j}\geq\lambdaMin\proj_{>\fssCutoff}^{\Bmeas_j}\), where \(\proj_{>\fssCutoff}^{\Bmeas_j}\) denotes the projector onto the subspace with \(\blockVar>\fssCutoff\), remains an open problem. Once such a subspace weight estimate is available, the flag-state squasher can also be used for active setups.

Instead, here we describe an alternative approach using the \textit{simple squasher} \cite[Theorem~10]{PhysRevA.89.012325} to easily incorporate active detection setups in \cref{th:security_imperfect_devices} without subspace weight estimation. The simple squasher applies to ideal detectors. We therefore first apply the noise channel construction from \cref{lem:noise_channel}, which reduces the analysis to ideal active BB84 POVMs, and subsequently apply the simple squasher. Importantly, the simple squasher requires every double-click event to be mapped uniformly at random to one of the two bit outcomes in the selected basis. This approach to active detection setups is also discussed in \cite[Technical Aside~6.3 and App.~D.2]{nahar_proof-technique-independent_2026}.

\begin{lemma}
\label{lem:combination_noise_channel_simple_squasher}
Consider the decoy-state BB84 protocol $\big\{\big\{\Astate^{(j)}_{\Aclassical_j\Aprime_j},\{\Bpovmel[j]\}_{\BclassicalVal\in\BclassicalAlph},\annKeyMap[j]\big\}_{j=1}^{\totRounds},\ppMap\big\}$, where every double-click event is mapped uniformly at random to one of the two bit outcomes in the selected basis. Bob's measurement POVMs are block diagonal in Fock space, i.e. $\Bpovmel[j]=\oplus_{\blockVar=0}^{\infty}\BpovmelBlock{j}{\blockVar}$, and Bob performs the basis choice actively\footnote{Without loss of generality, we absorb any common loss for the POVM elements $\Bpovmel[j]$ into the quantum channel.}. Let $\{\BpovmelBlockTilde{j}{\blockVar}\}_{j,\BclassicalVal\in\BclassicalAlph,\blockVar}$ be the ideal active BB84 POVMs\footnote{The ideal POVMs describe unit efficiency threshold detectors without dark counts (see \cite[Sec.~V]{PhysRevA.89.012325}).} and assume that there exist $0\leq\dtImpBnd[0,j]<1$ and $0\leq\dtImpBnd[\geq1,j]<1$ such that
\begin{align}
    \BpovmelBlock{j}{0}-(1-\dtImpBnd[0,j])\BpovmelBlockTilde{j}{0}&\geq0\,,\label{eq:active_bb84_noise_vacuum}\\
    \BpovmelBlock{j}{\blockVar}-(1-\dtImpBnd[\geq1,j])\BpovmelBlockTilde{j}{\blockVar}&\geq0\,,\label{eq:active_bb84_noise_nonvacuum}
\end{align}
for all $\blockVar\geq1$ and all $\BclassicalVal\in\BclassicalAlph$. Define a new set of POVMs $\{\BpovmelTarg[j]\}_{j,\BclassicalVal\in\BclassicalAlph}$ where
\begin{equation}
\label{eq:target_povm_active_bb84}
    \BpovmelTarg[j]\coloneqq\BpovmelBlockTilde{j}{0}\oplus\BpovmelBlockTilde{j}{1}\oplus\ketbra{\BclassicalVal}{\BclassicalVal}_{\flagSpaceReg}\,.
\end{equation}
Then there exist squashing maps $\squashMap_j\in\cptp(\Bmeas_j, \BmeasSquash_j)$ such that
    \begin{align}
    \label{eq:squash_condition_active_bb84}
        \squashMap_j^\dagger \left[\BpovmelTarg[j]\right] &= \Bpovmel[j]\,,
    \end{align}
    and if the restricted attack channel is constructed as
    \begin{align}
    \label{eq:attack_channel_noise_simple_squasher}
         \setAttackCh_j = \left\{\attackCh_j \in \cptp(\Ereg_{j-1}, \BmeasSquash_j \Ereg_j) : \frac{1}{1 - \dtImpBnd[0, j]}\attackCh_j^\dagger\left[\proj_0^{\BmeasSquash_j} \otimes \identity_{\Ereg_j} \right] + \frac{1}{1 - \dtImpBnd[\geq1, j]}\attackCh_j^\dagger\left[\proj_1^{\BmeasSquash_j} \otimes \identity_{\Ereg_j} \right] = \identity_{\Ereg_{j-1}}\right\}
    \end{align}
    then
    \begin{equation}
        \setAttackCh_j \supset \squashMap_j \circ \cptp(\Ereg_{j-1}, \Bmeas_j \Ereg_j)
    \end{equation}
    for all rounds $j$.
\end{lemma}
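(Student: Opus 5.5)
The plan is to build $\squashMap_j$ as a composition $\squashMap_j = \squashMap^{\mathrm{ss}}_j\circ\noiseChMap_j$. Here $\noiseChMap_j$ is a noise channel in the sense of \cref{lem:noise_channel}, and $\squashMap^{\mathrm{ss}}_j$ is the simple squasher of \cite[Theorem~10]{PhysRevA.89.012325}, extended by the identity on the flag space.

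\textbf{Step 1: the noise channel.} For every block $\blockVar$, set $q_\blockVar\coloneqq\dtImpBnd[0,j]$ if $\blockVar=0$ and $q_\blockVar\coloneqq\dtImpBnd[\geq1,j]$ if $\blockVar\geq1$. Define $\widetilde R_{\BclassicalVal,j,\blockVar}$ exactly as in \cref{eq:def_r_tilde}. It is positive by \cref{eq:active_bb84_noise_vacuum,eq:active_bb84_noise_nonvacuum}, and its elements sum to $\proj_\blockVar$ over $\BclassicalVal$, because both families are POVMs on block $\blockVar$.

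The noise channel then acts on each block separately:
\begin{itemize}
\item it measures the block label $\blockVar$;
\item with probability $1-q_\blockVar$ it leaves the state unchanged;
\item with probability $q_\blockVar$ it measures $\{\widetilde R_{\BclassicalVal,j,\blockVar}\}$ and writes $\ketbra{\BclassicalVal}{\BclassicalVal}_\flagSpaceReg$.
\end{itemize}
There is no $\blockVar>\fssCutoff$ remainder this time; all blocks, including infinitely many, are treated this way. The same block-by-block computation as in the proof of \cref{lem:combination_FSS_noise_channel} gives
\[
\noiseChMap_j^\dagger\big[\big(\oplus_\blockVar\BpovmelBlockTilde{j}{\blockVar}\big)\oplus\ketbra{\BclassicalVal}{\BclassicalVal}_\flagSpaceReg\big]=\Bpovmel[j]
\]
and
\[
\noiseChMap_j^\dagger[\proj_\blockVar\oplus 0_\flagSpaceReg]=(1-q_\blockVar)\proj_\blockVar .
\]
Convergence over the infinite direct sum is harmless, because each term is a block-diagonal positive operator and the sum is strongly convergent.

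\textbf{Step 2: the simple squasher.} The simple squasher $\squashMap^{\mathrm{ss}}$ maps the Fock space to vacuum $\oplus$ qubit. Its adjoint takes the ideal active BB84 POVM (with double clicks assigned at random) to the ideal infinite-dimensional POVM $\oplus_\blockVar \BpovmelBlockTilde{j}{\blockVar}$. It preserves photon-number blocks in the sense that the vacuum goes to vacuum and every $\blockVar\geq1$ block goes into the qubit block. Consequently
\[
(\squashMap^{\mathrm{ss}})^\dagger[\proj_0]=\proj_0
\qquad\text{and}\qquad
(\squashMap^{\mathrm{ss}})^\dagger[\proj_1]=\proj_{\geq1}.
\]
I extend it by the identity on $\flagSpaceReg$. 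Composing the two adjoints yields \cref{eq:squash_condition_active_bb84}, with the target POVMs of \cref{eq:target_povm_active_bb84}.

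\textbf{Step 3: the attack-set inclusion.} From Steps 1 and 2,
\[
\squashMap_j^\dagger[\proj_0^{\BmeasSquash_j}]=(1-\dtImpBnd[0,j])\proj_0
\qquad\text{and}\qquad
\squashMap_j^\dagger[\proj_1^{\BmeasSquash_j}]=(1-\dtImpBnd[\geq1,j])\proj_{\geq1}.
\]
Therefore the weighted sum appearing in \cref{eq:attack_channel_noise_simple_squasher} pulls back to $\proj_0+\proj_{\geq1}=\identity_{\Bmeas_j}$. For $\attackCh_j=\squashMap_j\circ\attackCh'_j$ we get $\attackCh_j^\dagger=\attackCh_j'^\dagger\circ\squashMap_j^\dagger$, and unitality of the adjoint of the CPTP map $\attackCh'_j$ then gives the required equality with $\identity_{\Ereg_{j-1}}$. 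This proves the inclusion.

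\textbf{Main obstacle.} The step I expect to need the most care is Step 2. The claim that the simple squasher sends $\proj_1$ back exactly to $\proj_{\geq1}$ relies on it being block-preserving (and not merely acting correctly on the measurement statistics), and on the noise channel's flag outputs commuting past it unchanged. To settle this I would check the explicit construction in \cite{PhysRevA.89.012325} to confirm its block structure. I would also verify that it is legitimate to apply it to the non-flag part of the output of $\noiseChMap_j$ after the flags have been split off.
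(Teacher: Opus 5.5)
Your proposal is correct and follows the paper's proof: $\squashMap_j$ is the noise channel of \cref{lem:noise_channel}, applied to every photon-number block with deviation $\dtImpBnd[0,j]$ on the vacuum and $\dtImpBnd[\geq1,j]$ on all $m\geq1$ blocks, followed by the simple squasher extended trivially on the flag space; the attack-set inclusion then follows from $\squashMap_j^\dagger[\proj_0^{\BmeasSquash_j}]=(1-\dtImpBnd[0,j])\proj_0^{\Bmeas_j}$, $\squashMap_j^\dagger[\proj_1^{\BmeasSquash_j}]=(1-\dtImpBnd[\geq1,j])\proj_{\geq1}^{\Bmeas_j}$ and unitality of $\attackCh_j^{\prime\dagger}$. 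The paper simply asserts the block-structure property you flag as the main obstacle, but it also follows from the squashing condition alone: for ideal detectors the no-click elements summed over bases are the vacuum projectors on both sides, so $(\squashMap^{\mathrm{ss}})^\dagger[\proj_0]=\proj_0$, and since the simple squasher's output space is vacuum $\oplus$ qubit, unitality gives $(\squashMap^{\mathrm{ss}})^\dagger[\proj_1]=\proj_{\geq1}$ without inspecting its explicit construction.
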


\begin{proof}
Let $\noiseChMap_j$ denote the noise channel obtained from \cref{lem:noise_channel} using the ideal active BB84 POVMs as target POVMs. By \cref{eq:active_bb84_noise_vacuum,eq:active_bb84_noise_nonvacuum}, it satisfies $\noiseChMap_j^\dagger[\proj_0]=(1-\dtImpBnd[0,j])\proj_0^{\Bmeas_j}$ and $\noiseChMap_j^\dagger[\proj_{\geq1}]=(1-\dtImpBnd[\geq1,j])\proj_{\geq1}^{\Bmeas_j}$. The simple squasher \cite[Theorem~10]{PhysRevA.89.012325} maps the ideal active BB84 optical POVMs to the vacuum-plus-qubit POVMs $\BpovmelBlockTilde{j}{0}\oplus\BpovmelBlockTilde{j}{1}$, acts as the identity on the vacuum and single-photon subspace, and maps every higher photon-number subspace to the qubit subspace. Extend this map trivially on the flag space, then the composite map $\squashMap_j$ satisfies \cref{eq:squash_condition_active_bb84} and $\squashMap_j^\dagger[\proj_0^{\BmeasSquash_j}]=(1-\dtImpBnd[0,j])\proj_0^{\Bmeas_j}$, $\squashMap_j^\dagger[\proj_1^{\BmeasSquash_j}]=(1-\dtImpBnd[\geq1,j])\proj_{\geq1}^{\Bmeas_j}$, from which \cref{eq:attack_channel_noise_simple_squasher} follows.
\end{proof}

If the noise channel and simple squasher construction from \cref{lem:combination_noise_channel_simple_squasher} is used to handle active detection setups, the final optimization problem in \cref{fig:opti_problem_final} changes as follows:
\begin{enumerate}
    \item Replace Bob's POVMs by the target POVMs in \cref{eq:target_povm_active_bb84} and map double-click events uniformly at random to one of the two bit outcomes in the selected basis.
    \item Replace the detector constraint by the constraints induced by \cref{eq:attack_channel_noise_simple_squasher} (analogously to \cref{sec:construction_set_output_states}). Since Alice's state is block-diagonal in $\Ashield_j$, these constraints can be imposed separately on every block,
    \begin{align}
    \Tr[K_j\rho_{\Ameas_j\BmeasSquash_j\land\Ashield_j=\blockVar}] &= \Tr[\rho_{\Ameas_j\BmeasSquash_j\land\Ashield_j=\blockVar}] \quad \forall \blockVar\in\{0,\ldots,\characCutoff\} \\
    \Tr[K_j\rho_{\Ameas_j\BmeasSquash_j\land\Ashield_j=\AclassicalVal}] &= \Tr[\rho_{\Ameas_j\BmeasSquash_j\land\Ashield_j=\AclassicalVal}] \quad \forall \AclassicalVal\in\AclassicalAlph\,,
    \end{align}
    where $K_j \coloneqq \frac{1}{1 - \dtImpBnd[0, j]} \proj_0^{\BmeasSquash_j} + \frac{1}{1 - \dtImpBnd[\geq1, j]} \proj_1^{\BmeasSquash_j}$. Note that we do not need an additional constraint for the blocks from $\characCutoff+1$ to $\tagCutoff$, since they are not fully characterized and their total contribution is already bounded through the trace of the marginal $\rho_{\Ameas_j\BmeasSquash_j}$.
    \item In the decoy constraints, replace $\{\proj_{\blockVar}^{\BmeasSquash_j}\}_{\blockVar=0}^{\fssCutoff}$ by $\{\proj_0^{\BmeasSquash_j},\proj_1^{\BmeasSquash_j}\}$. Equivalently, one may effectively set $\fssCutoff=1$.
    \item All remaining constraints and the objective function remain unchanged.
\end{enumerate}

\begin{remark}
\label{rem:active_basis_choice_q_params}
    To apply \cref{lem:combination_noise_channel_simple_squasher}, we need the detector POVM deviations $1>\dtImpBnd[0, j]\geq 0$ and $1>\dtImpBnd[\geq1, j]\geq 0$. Assume Bob uses two threshold detectors described by POVM elements $\{M_{\BclassicalVal}^{B_{j,\blockVar}, \mathbf{d_B},\boldsymbol{\eta}}\}_{\BclassicalVal\in \BclassicalAlph, \blockVar}$ with (unknown) dark count probabilities $\mathbf{d_B}^{(j)} = (d_1^{(j)}, d_{2}^{(j)})$ and (unknown) detection efficiencies $\boldsymbol{\eta} = (\eta_1, \eta_{2})$ in each round $j$. The dark count probabilities are imperfectly characterized in the sense that we have upper bounds
    \begin{equation}
        d_k^{(j)} \leq \tilde d_k(1+\Delta_{\mathrm{d}_\mathrm{B}})
    \end{equation}
    for all $k\in\{1, 2\}$ where $\tilde d_k$ and $\Delta_{\mathrm{d}_\mathrm{B}}$ are known. Additionally, the detection efficiencies are imperfectly characterized in the sense that they are in some range
    \begin{equation}
        \eta_k \in [\tilde \eta_k(1-\Delta_{\eta}), \tilde \eta_k(1+\Delta_{\eta})]
    \end{equation}
    for all $k\in\{1, 2\}$, where $\tilde \eta_k$ and $\Delta_{\eta}$ are known. Then, following \cite[App.~D]{nahar_proof-technique-independent_2026}, we have
    \begin{align}
        \dtImpBnd[0, j] &= 1 - \left((1-\tilde d_1(1+\Delta_{\mathrm{d}_\mathrm{B}}))(1-\tilde d_2(1+\Delta_{\mathrm{d}_\mathrm{B}}))\right)\,,\\
        \dtImpBnd[\geq1, j] &= 1 - \eta_\mathrm{min}'\left(1 - \frac{d_\mathrm{max}}{2}\right)\,,
    \end{align}
    in \cref{lem:combination_noise_channel_simple_squasher}, where $\eta_\mathrm{min}' \coloneq \frac{\min_k \tilde \eta_k(1-\Delta_{\eta})}{\max_k \tilde \eta_k(1+\Delta_{\eta})}$ is a lower bound on the minimum efficiency after pulling out the common loss into the channel and $d_\mathrm{max} \coloneqq \max_{k} \tilde d_k(1+\Delta_{\mathrm{d}_\mathrm{B}})$.
\end{remark}

\section{Device imperfections}
\label{ap:models_device_imperfections}

\noindent In this section, we detail the models for the device imperfections considered in the example of the polarization-encoded decoy-state BB84 setup from \cref{sec:example_pol_encoding}, as well as a few additional imperfections, and show how they can be combined.

\subsection{Source imperfections}
\label{ap:models_source_imperfections}

\noindent Most independent source imperfections enter the analysis through the source fidelity bound in \cref{as:imperfectly_characterized_source} or bounds on the photon-number distribution from \cref{as:probability_bounds}. Therefore, to incorporate source imperfection, the aim is to find a set of target states $\{(\tilde \Astate_{\AclassicalVal,\blockVar}^{(j)})_{\Aprime_j}\}_{j, \AclassicalVal\in \AclassicalAlph, \blockVar\leq\tagCutoff}$, and the source fidelity bounds $\{\srcImpBnd[\AclassicalVal,\blockVar,j]\}_{\AclassicalVal\in\AclassicalAlph, \blockVar\leq \tagCutoff, j}$ such that \cref{as:imperfectly_characterized_source} holds, and find bounds $\prob_{\blockVarReg_j|\Aclassical_j}^{\mathrm{L}}(\blockVar|\AclassicalVal)$ and $\prob_{\blockVarReg_j|\Aclassical_j}^{\mathrm{U}}(\blockVar|\AclassicalVal)$ such that \cref{as:probability_bounds} holds. For simplicity, we choose the same set of target states each round and assume that the source fidelity bounds are identical but note that we can readily incorporate varying sets of target states as long as the corresponding source fidelity bounds are known. 

\subsubsection{Imperfections affecting the source fidelity bounds (\cref{as:imperfectly_characterized_source})}

\textbf{Imperfect state encoding:} When encoding her signals, Alice may prepare states that differ from the ideal $H, V, D$ and $A$ states. Recall from \cref{as:block_diagonal_source} that Alice's signal states are block-diagonal in photon number. The vacuum block is given by $\ket{\mathrm{vac}}$. In each $\blockVar$-photon block with $\blockVar \geq 1$, she prepares $\blockVar$ photons in the mode corresponding to the single-photon states $\{\ket{\AstatePrepared_\AclassicalVal}\}_{\AclassicalVal\in\AclassicalAlph}$.\footnote{For example, the single-photon component of Alice's signal states is given by $(\Astate_{\AclassicalVal,1}^{(j)})_{\Aprime_j} = \ketbra{\AstatePrepared_\AclassicalVal}{\AstatePrepared_\AclassicalVal}_{\Aprime_j}$ for setting choice $\AclassicalVal\in\AclassicalAlph$. Under the assumption that all photons in a given block occupy the same mode, as is naturally the case for the models considered here, it suffices to characterize the single-photon states to obtain source fidelity bounds for all higher photon-number blocks, as formalized below.} Let $\big\{\lvert\tilde{\AstatePrepared}_\AclassicalVal\rangle\big\}_{\AclassicalVal\in\AclassicalAlph}$ be a known set of target single-photon states. These targets need not be perfectly encoded and may themselves incorporate other imperfections. For example, in \cref{sec:example_pol_encoding}, we consider the following model where single-photon target states are imperfectly encoded and given by \cite[Eq.~(4)]{sixto2026finitekeysecurityanalysisdecoystate}
\begin{equation}
\label{eq:state_preparation_flaw_model}
    \lvert\tilde{\AstatePrepared}_\AclassicalVal\rangle = \cos(\theta_\AclassicalVal) \ket{H} + \sin(\theta_\AclassicalVal)\ket{V}\,,
\end{equation}
where $\theta_\AclassicalVal = (1+\delta_{\mathrm{SPF}}/\pi)\phi_\AclassicalVal/2$ with state-preparation flaw term $\delta_{\mathrm{SPF}}$ and $\phi_\AclassicalVal\in \{0,\pi,\pi/2,3\pi/2\}$ for $\AclassicalVal_a\in\{H, V, D, A\}$. Note, however, that any known target states can be chosen. Suppose that the prepared single-photon states are imperfectly characterized with angular uncertainty $\Delta\theta$ relative to the targets. Then they satisfy
\begin{equation}
\label{eq:relation_source_fidelity_delta_enc}
    \left|\braket{\tilde{\AstatePrepared}_\AclassicalVal}{\AstatePrepared_\AclassicalVal}\right|^2 \geq 1 - \sin^2(\Delta\theta) = 1-\srcImpBndEnc
\end{equation}
for all $\AclassicalVal\in\AclassicalAlph$, where we define $\srcImpBndEnc\coloneqq \sin^2(\Delta\theta)$. For any pure single-photon modes $\ket{\psi_\AclassicalVal}$ and $\lvert\tilde{\AstatePrepared}_\AclassicalVal\rangle$, the corresponding states with $\blockVar$ photons occupying the same mode satisfy
\begin{equation}
    \left|\braket{\blockVar_{\tilde \psi_\AclassicalVal}}{\blockVar_{\psi_\AclassicalVal}}\right|^2 = \left|\braket{\tilde\psi_\AclassicalVal}{\psi_\AclassicalVal}\right|^{2\blockVar}.
\end{equation}
Then, one may choose the source fidelity bound $\srcImpBndEnc[\blockVar]=1 - (1-\srcImpBndEnc)^\blockVar \leq \blockVar\srcImpBndEnc$, for the $\blockVar$-photon block, and \cref{as:imperfectly_characterized_source} is satisfied.

\textbf{Leakage from non-encoding degrees of freedom:} Alice may additionally leak information about her setting choice through non-encoding degrees of freedom. Such leakage is included in the previous model by taking the prepared and target single-photon states to describe both encoding and non-encoding degrees of freedom.

\textbf{Trojan-horse attack:} In a Trojan-horse attack, Eve injects light into Alice's system and analyses the back-reflections. Eve's light travels through Alice's system and is encoded in the same way as Alice's states, therefore leaking information about Alice's setting choice. Let $\tilde{E}_j^{\mathrm{THA}}$ denote Eve's back-reflected light. The model is detailed in \cref{ap:trojan_horse_attack}. We choose the target states to be
\begin{equation}
    \ketbra{\mathrm{vac}}{\mathrm{vac}}_{\tilde{E}_j^{\mathrm{THA}}} \otimes (\Astate_{x,\blockVar})_{\Aprime}
\end{equation}
for all $x\in\AclassicalAlph$, i.e. the target states correspond to the signal states Alice prepares without back-reflections. Assuming that the mean photon number of the back-reflected light is upper-bounded by $\srcImpBndTHA/2$ for all rounds $j$, we have, cf. \cref{ap:trojan_horse_attack},
\begin{equation}
    F\left((\Astate_{x,\blockVar})_{\Aprime\tilde{E}_j^{\mathrm{THA}}}, \ketbra{\mathrm{vac}}{\mathrm{vac}}_{\tilde{E}_j^{\mathrm{THA}}}\otimes (\Astate_{x,\blockVar})_{\Aprime} \right) \geq 1-\srcImpBndTHA
\end{equation}
for all $x\in\AclassicalAlph$ and all $\blockVar\leq\tagCutoff$. Then, the source fidelity bounds are given by $\srcImpBndTHA[\blockVar] = \srcImpBndTHA$ for all $\blockVar\leq \tagCutoff$ and \cref{as:imperfectly_characterized_source} is satisfied.

The previous imperfections modify the states prepared by Alice and therefore contribute to \cref{as:imperfectly_characterized_source}. These imperfections can be combined into a total source fidelity bound by successively applying fidelity bounds between intermediate target states. For the imperfections discussed above, we choose the total target state $\ketbra{\mathrm{vac}}{\mathrm{vac}}_{\tilde{E}_j^{\mathrm{THA}}}\otimes (\tilde{\Astate}_{\AclassicalVal,\blockVar})_{\Aprime}$ for setting choice $\AclassicalVal\in\AclassicalAlph$ with
\begin{align}
    (\tilde{\Astate}_{\AclassicalVal,0})_{\Aprime} & = \ketbra{\mathrm{vac}}{\mathrm{vac}}_{\Aprime}\\
    (\tilde{\Astate}_{\AclassicalVal,1})_{\Aprime} &= \lvert\tilde{\psi}_\AclassicalVal\rangle\langle\tilde{\psi}_\AclassicalVal\lvert_{\Aprime}\label{eq:mode_single_photon_target}
\end{align}
and the $m$-photon blocks $(\tilde{\Astate}_{\AclassicalVal,\blockVar})_{\Aprime}$ are given by $\blockVar$ photons in the mode corresponding to the single-photon state given by \cref{eq:mode_single_photon_target}.
The resulting source fidelity bound is then given by combining the previous fidelity bounds
\begin{equation}
\label{eq:formula_combination_source_fidelity_bound}
    \srcImpBnd_\blockVar= \left(\sqrt{\srcImpBndEnc[\blockVar]} + \sqrt{\srcImpBndTHA[\blockVar]}\right)^2 
\end{equation}
using the relation between fidelity and purified distance and the triangle inequality of the purified distance.

\subsubsection{Imperfections affecting the photon-number distribution (\cref{as:probability_bounds})}

\noindent The next two source imperfections affect \cref{as:probability_bounds} on the photon-number distribution. The goal is therefore to derive upper and lower bounds on the photon-number probability distribution.

\textbf{Intensity fluctuations:} Alice prepares optical pulses with photon-number distribution $\prob_{\blockVarReg_j|\Aclassical_j}$ (see \cref{as:block_diagonal_source}). For each intensity choice $\mu_i$, assume that the corresponding ``true'' intensity $\tilde{\mu}_i$ is characterized to lie within the interval
\begin{equation}
    \tilde{\mu}_i \in [\mu_i(1-\srcImpBndInt),\mu_i(1+\srcImpBndInt)]
\end{equation}
where $\mu_i$ denotes a known target intensity. We assume that, conditioned on an intensity choice, the intensity is not fixed but is independently sampled in each round from within this interval, denoted by the r.v. $I_{\AclassicalVal,j}$ (with unknown probability density function).\footnote{Note that if the probability density function of $I_{\AclassicalVal,j}$ is known, such as in \cite{sixto2026finitekeysecurityanalysisdecoystate}, then the photon-number distribution $\prob_{\blockVarReg_j|\Aclassical_j}$ can directly be computed.} Then, bounds can be obtained by minimizing the photon-number probability distribution over all possible intensities
\begin{align}
    \prob_{\blockVarReg_j|\Aclassical_j}^{\mathrm{L}}(\blockVar|\AclassicalVal_a\mu_i) &= \min_{\tilde{\mu}_i \in [\mu_i(1-\srcImpBndInt),\mu_i(1+\srcImpBndInt)]} \prob_{\blockVarReg_j|I_{\AclassicalVal,j}}(\blockVar|\tilde \mu_i)\,, \\
    \prob_{\blockVarReg_j|\Aclassical_j}^{\mathrm{U}}(\blockVar|\AclassicalVal_a\mu_i) &= \max_{\tilde{\mu}_i \in [\mu_i(1-\srcImpBndInt),\mu_i(1+\srcImpBndInt)]} \prob_{\blockVarReg_j|I_{\AclassicalVal,j}}(\blockVar|\tilde \mu_i)
\end{align}
for all $\blockVar\leq\tagCutoff$, all intensity choices $\mu_i$ and all $\AclassicalVal\in\AclassicalAlph$, and \cref{as:probability_bounds} is satisfied.

\begin{remark}
    Following the first and last source constraints in \cref{fig:opti_problem_final}, the probability that Alice emits a flag, conditioned on setting choice $\AclassicalVal\in\AclassicalAlph$, is upper bounded by
    \begin{equation}
        1-\sum_{\blockVar=0}^{\tagCutoff} \probMatrix^{(\blockVar,j)}_{\AclassicalVal,\AclassicalVal} \leq 1-\sum_{\blockVar=0}^{\tagCutoff} \prob_{\blockVarReg_j|\Aclassical_j}^{\mathrm{L}}(\blockVar|\AclassicalVal).
    \end{equation}
    However, for intensity fluctuations, we can exploit the additional structure to tighten this bound. In particular, the bound above may be very loose, since the minima of the individual photon-number probabilities may occur at different intensities within the interval $[\mu_i(1-\srcImpBndInt),\mu_i(1+\srcImpBndInt)]$. Whenever the weight in the flags is monotonically increasing in the intensity (which is the case for Poisson distributions), it is upper bounded by evaluating the tail at the largest allowed intensity. Therefore, we can additionally impose
    \begin{equation}
        1-\sum_{\blockVar=0}^{\tagCutoff} \probMatrix^{(\blockVar,j)}_{\AclassicalVal,\AclassicalVal} \leq 1-\sum_{\blockVar=0}^{\tagCutoff} \prob_{\blockVarReg_j|I_{\AclassicalVal,j}} (\blockVar|\mu_i(1+\srcImpBndInt))
    \end{equation}
    for all setting choices $\AclassicalVal=\AclassicalVal_a\mu_i\in\AclassicalAlph$ and all rounds $j$ in the optimization problem from \cref{fig:opti_problem_final}.
\end{remark}

\textbf{Imperfect photon-number distribution:} Additionally, let the photon-number distribution $\prob_{\blockVarReg_j|I_{\AclassicalVal, j}}$ be imperfectly characterized with known target distribution, i.e.
\begin{equation}
    \left| \prob_{\blockVarReg_j|I_{\AclassicalVal, j}}(\blockVar|\tilde \mu_i) - \tilde{\prob}_{\blockVarReg_j|I_{\AclassicalVal, j}}(\blockVar|\tilde \mu_i) \right| \leq \srcImpBndPN[\blockVar]
\end{equation}
for all $\blockVar\leq \tagCutoff$, all intensity choices $\tilde{\mu}_i \in [\mu_i(1-\srcImpBndInt),\mu_i(1+\srcImpBndInt)]$ and all $\AclassicalVal\in\AclassicalAlph$, where $\tilde{\prob}_{\blockVarReg_j|I_{\AclassicalVal, j}}$ denotes the known target distribution. Then, bounds can be obtained as
\begin{align}
    \prob_{\blockVarReg_j|\Aclassical_j}^{\mathrm{L}}(\blockVar|\AclassicalVal_a\mu_i) &= \min_{\tilde{\mu}_i \in [\mu_i(1-\srcImpBndInt),\mu_i(1+\srcImpBndInt)]} \tilde{\prob}_{\blockVarReg_j|I_{\AclassicalVal,j}}(\blockVar|\tilde \mu_i) - \srcImpBndPN[\blockVar]\,, \\
    \prob_{\blockVarReg_j|\Aclassical_j}^{\mathrm{U}}(\blockVar|\AclassicalVal_a\mu_i) &= \max_{\tilde{\mu}_i \in [\mu_i(1-\srcImpBndInt),\mu_i(1+\srcImpBndInt)]} \tilde{\prob}_{\blockVarReg_j|I_{\AclassicalVal,j}}(\blockVar|\tilde \mu_i) + \srcImpBndPN[\blockVar]
\end{align}
for all $\blockVar\leq\tagCutoff$, all intensity choices $\mu_i$ and all $\AclassicalVal\in\AclassicalAlph$, and \cref{as:probability_bounds} is satisfied.

\subsubsection{Imperfections affecting the block-diagonality (\cref{as:block_diagonal_source})} 

\textbf{Imperfectly block-diagonal source:} Imperfect block-diagonality is handled separately in \cref{sec:imperfect_block_diagonal}, as it constitutes a special case where \cref{as:block_diagonal_source} is no longer exactly satisfied, requiring more specialized arguments. For example, imperfect phase randomization in decoy-state protocols leads to states that are only approximately block-diagonal in the photon-number basis. Another example is an imperfect single-photon source exhibiting coherences between different photon-number blocks.

\subsection{Detector imperfections}
\label{ap:detector_imperfections}

\noindent Independent detector imperfections enter the analysis through the detector POVM deviation in \cref{as:imperfectly_characterized_detector}.
Therefore, to incorporate these imperfections, the aim is to find target POVM elements $\{\BpovmelBlockTilde{j}{\blockVar}\}_{\BclassicalVal\in \BclassicalAlph, \blockVar\leq\fssCutoff, j}$ and the detector POVM deviations $\{\dtImpBnd[\blockVar, j]\}_{\blockVar\leq \fssCutoff, j}$ such that \cref{as:imperfectly_characterized_detector} holds. As for source imperfection, for simplicity, we assume that Bob's target POVM elements and detector POVM deviations are the same each round, but note that we can readily incorporate varying sets of target POVM elements if the corresponding detector POVM deviations are known. We set $\fssCutoff = 1$ as it is sufficient to compute practical key rates (see \cref{sec:example_pol_encoding}).

\subsubsection{Imperfections affecting the detector POVM deviation (\cref{as:imperfectly_characterized_detector})}

\textbf{Dark-count rate mismatch:} We assume Bob uses $\nbDetec$ threshold detectors described by POVM elements $\{M_{\BclassicalVal}^{B_{j,\blockVar}, \mathbf{d_B}}\}_{\BclassicalVal\in \BclassicalAlph, \blockVar\leq 1}$ with (unknown) dark count probabilities $\mathbf{d_B}^{(j)} = (d_1^{(j)}, \ldots, d_{N_\mathrm{det}}^{(j)})$ in each round $j$. The dark count probabilities are imperfectly characterized in the sense that we have upper bounds
\begin{equation}
    d_k^{(j)} \leq \tilde d_k(1+\Delta_{\mathrm{d}_\mathrm{B}})
\end{equation}
for all $k\in\{1, \ldots, \nbDetec\}$ where $\tilde d_k$ and $\Delta_{\mathrm{d}_\mathrm{B}}$ are known. Following \cite[Theorem~1]{nahar_imperfect_2026}, by choosing the target POVM elements to be threshold detectors without dark counts $\{M_{\BclassicalVal}^{B_{j,\blockVar}}\}_{\BclassicalVal\in \BclassicalAlph, \blockVar\leq 1}$ we find
\begin{equation}
    M_{\BclassicalVal}^{B_{j,\blockVar}, \mathbf{d_B}} - (1-\dtImpBndDCR[\blockVar]) M_{\BclassicalVal}^{B_{j,\blockVar}} \geq 0
\end{equation}
for all $\BclassicalVal\in\BclassicalAlph$ and $\blockVar\leq 1$, where
\begin{equation}
\label{eq:q_param_dark_counts}
    \dtImpBndDCR[\blockVar=0] = \dtImpBndDCR[\blockVar=1] = 1 - \Pi_{k=1}^{\nbDetec}(1-d_k^{(j)}) \leq 1 - \Pi_{k=1}^{\nbDetec}(1 - \tilde d_k(1+\Delta_{\mathrm{d}_\mathrm{B}}))\,,
\end{equation}
and \cref{as:imperfectly_characterized_detector} is satisfied.

\textbf{Detection-efficiency mismatch:} We assume that the detectors described by POVM elements $\{M_{\BclassicalVal}^{B_{j,\blockVar}, \boldsymbol{\eta}}\}_{\BclassicalVal\in \BclassicalAlph, \blockVar\leq 1}$ have (unknown) detection efficiencies $\boldsymbol{\eta} = (\eta_1, \ldots \eta_{N_\mathrm{det}})$ which are imperfectly characterized in the sense that they are in some range
\begin{equation}
    \eta_k \in [\tilde \eta_k(1-\Delta_{\eta}), \tilde \eta_k(1+\Delta_{\eta})]
\end{equation}
for all $k\in\{1, \ldots, \nbDetec\}$, where $\tilde \eta_k$ and $\Delta_{\eta}$ are known. Following \cite[Theorem~2]{nahar_imperfect_2026}, for any value $\eta_* \in \left[\frac{\eta_\mathrm{min}}{1 - (\eta_\mathrm{max} - \eta_\mathrm{min})}, 1\right]$, where $\eta_\mathrm{min} \coloneq \min_{k} \tilde \eta_k(1-\Delta_{\eta})$ and $\eta_\mathrm{max} \coloneq \max_{k} \tilde \eta_k(1+\Delta_{\eta})$ we have
\begin{equation}
    M_{\BclassicalVal}^{B_{j,\blockVar}, \boldsymbol{\eta}} - (1-\dtImpBndEff[\blockVar, \eta_*]) M_{\BclassicalVal}^{B_{j,\blockVar}, \eta_*} \geq 0\,,
\end{equation}
for all $\BclassicalVal\in\BclassicalAlph$ and $\blockVar\leq 1$, where the POVM elements $M_{\BclassicalVal}^{B_{j,\blockVar}, \eta_*}$ describe detectors without dark counts and with common detection efficiency $\eta_*$ and
\begin{align}
    \dtImpBndEff[\blockVar = 0, \eta_*] &= 0\\
    \dtImpBndEff[\blockVar = 1, \eta_*] &= 1 - \frac{\eta_\mathrm{min}}{\eta_*}\,,
    \label{eq:q_param_efficiency}
\end{align}
and \cref{as:imperfectly_characterized_detector} is satisfied.

\begin{remark}
\label{rem:pulling_out_loss}
    A convenient normalization is to absorb a common detector loss $\eta_{\mathrm{max}}$ into the quantum channel before applying the noise channel and subspace weight estimation. This can always be done when the common loss commutes with the passive optical setup \cite{nahar_proof-technique-independent_2026}. The new detector efficiencies are $\eta'_k \coloneq \eta_k/\eta_{\mathrm{max}}$ and satisfy
    \begin{equation} 
        \eta'_k \in \left[\frac{\tilde\eta_k(1-\Delta_\eta)}{\eta_\mathrm{max}}, \frac{\tilde\eta_k(1+\Delta_\eta)}{\eta_\mathrm{max}}\right]. 
    \end{equation}
    Thus, the normalized bounds are given by
    \begin{equation} 
        \eta'_\mathrm{min} = \frac{\eta_\mathrm{min}}{\eta_\mathrm{max}}, \qquad \eta'_\mathrm{max} = 1. 
    \end{equation}
    Applying the analysis above with the new detector efficiencies and setting $\eta_* = 1$ yields
    \begin{align} 
        \dtImpBndEff[\blockVar=0,\eta_*=1] &= 0, \\ \dtImpBndEff[\blockVar=1,\eta_*=1] &= 1-\frac{\eta_\mathrm{min}}{\eta_\mathrm{max}}. 
    \end{align}
    All subsequent detector bounds, including the subspace-weight bound $\lambda_\mathrm{min}$, must then be evaluated for the normalized efficiencies.
\end{remark}

\textbf{Imperfect passive basis choice:} We assume that the basis choice is implemented using a beamsplitter with imperfectly characterized splitting ratio $s$, satisfying
\begin{equation}
    s \in [\tilde{s}(1-\Delta_s), \tilde{s}(1+\Delta_s)]\,,
\end{equation}
where $\tilde{s}$ and $\Delta_s$ are known. Consider detectors without dark counts and with common detection efficiency $\eta_*$. Choosing the target POVM elements to have splitting ratio $s'$, we have
\begin{equation}
    M_{\BclassicalVal}^{B_{j,\blockVar}, \eta_*, s} - (1-q_{\blockVar}^{s}) M_{\BclassicalVal}^{B_{j,\blockVar}, \eta_*, s'} \geq 0\,,
\end{equation}
for all $\BclassicalVal\in\BclassicalAlph$ and $\blockVar\leq 1$, where \cite[Sec.~5.5.5]{nahar_proof-technique-independent_2026}
\begin{align}
    q_{\blockVar=0}^{s} &= 0\,, \\
    q_{\blockVar=1}^{s} &= 1-\min\left\{\frac{\tilde s(1-\Delta_s)}{s'}, \frac{1-\tilde s(1+\Delta_s)}{1-s'}\right\} \,. \label{eq:q_param_splitting}
\end{align}
Thus, \cref{as:imperfectly_characterized_detector} is satisfied. 

\begin{remark}
    To minimize the detector POVM deviation in \cref{eq:q_param_splitting}, we can choose the target splitting ratio to be
    \begin{equation}
        s' = \frac{\tilde{s}(1-\Delta_s)}{1-2\tilde{s}\Delta_s}\,,
    \end{equation}
    then
    \begin{equation}
        q_{\blockVar=1}^{s} = 2\tilde{s}\Delta_s\,,
    \end{equation}
    while $q_{\blockVar=0}^{s}=0$ remains unchanged.
\end{remark}

We can combine dark-count rate mismatch, detection-efficiency mismatch, and imperfect passive basis choice following Ref.~\cite{nahar_imperfect_2026} to satisfy
\begin{equation}
    M_{\BclassicalVal}^{B_{j,\blockVar}, \mathbf{d_B}, \boldsymbol{\eta}, s} - \left(1-q_{\blockVar,\eta_*}^{\mathbf{d_B},\boldsymbol{\eta},s}\right) M_{\BclassicalVal}^{B_{j,\blockVar}, \eta_*, s'} \geq 0\,,
\end{equation}
for all $\BclassicalVal\in\BclassicalAlph$ and $\blockVar\leq 1$, as the detector POVM deviations need only be multiplied, resulting in
\begin{align}
    q_{\blockVar=0,\eta_*}^{\mathbf{d_B},\boldsymbol{\eta},s} &= 1-\prod_{k=1}^{\nbDetec}\left(1-\tilde d_k(1+\Delta_{\mathrm{d}_\mathrm{B}})\right)\,, \label{eq:q_param_0_full} \\
    q_{\blockVar=1,\eta_*}^{\mathbf{d_B},\boldsymbol{\eta},s} &= 1-\left(1-q_{\blockVar=1}^{s}\right)\frac{\eta_\mathrm{min}}{\eta_*}\prod_{k=1}^{\nbDetec}\left(1-\tilde d_k(1+\Delta_{\mathrm{d}_\mathrm{B}})\right)\,, \label{eq:q_param_1_full}
\end{align}
and \cref{as:imperfectly_characterized_detector} is satisfied. Note that if the common detector loss is first absorbed into the channel as described in \cref{rem:pulling_out_loss}, the factor $\eta_\mathrm{min}/\eta_*$ is replaced by $\eta_\mathrm{min}/\eta_\mathrm{max}$, and the target detectors have unit efficiency. 

\subsubsection{Imperfections affecting the subspace weight estimation (\cref{th:security_imperfect_devices})}

\noindent Commonly, for passive detection setups, the lower bound on the weight in the flag space is obtained through the multi-click POVM element. When detection-efficiency mismatch, dark counts and an imperfectly characterized splitting ratio are present, described respectively by $\boldsymbol{\eta}$, $\mathbf{d_B}$ and $s$, the corresponding multi-click POVM element is given by $W^{\Bmeas_{j}, \mathbf{d_B}, \boldsymbol{\eta},s}$, which depends on these parameters. Following \cite[App.~C]{wang2025phaseerrorestimationpassive}, the multi-click POVM element satisfies
\begin{equation}
    W^{\Bmeas_{j}, \mathbf{d_B}, \boldsymbol{\eta},s} \geq \lambdaMin^{\mathbf{d_B},\boldsymbol{\eta},s} \proj_{>\fssCutoff}^{\Bmeas_j}\,,
\end{equation}
where
\begin{equation}
    \label{eq:lambda_min_expression_imperfect}\lambdaMin^{\mathbf{d_B},\boldsymbol{\eta},s} \geq 2\eta_\mathrm{min}^2 \bar{s}(1-\bar{s})
\end{equation}
and
\begin{equation}
    \bar{s} \coloneq \frac{1}{2} + \max\left\{
        \left|\tilde{s}(1-\Delta_s)-\frac{1}{2}\right|,
        \left|\tilde{s}(1+\Delta_s)-\frac{1}{2}\right|
    \right\}\,.
\end{equation}
Here, $\bar{s}$ corresponds to the value of $s$ furthest from the ideal balanced splitting ratio $\frac{1}{2}$. Consequently, the worst-case bound on $\lambdaMin^{\mathbf{d_B},\boldsymbol{\eta}}$ is obtained by taking the most asymmetric splitting ratio compatible with the characterization uncertainty.

\subsection{Trojan-horse attack model}
\label{ap:trojan_horse_attack}

\noindent In this section, we detail the model used for the Trojan-horse attack (see \cref{sec:example_pol_encoding,ap:models_source_imperfections}) without assuming that the emitted Alice-Eve states have a product structure, in contrast to \cite[App.~C]{pereira_modified_2023}. In the $j$th round, Eve holds onto the state $\rho_{\Ereg_j}$. We can split it up into two parts $\Ereg_j = E_j^{\mathrm{THA}}\hat{E_j}$ where $E_j^{\mathrm{THA}}$ is the part that Eve injects into Alice's lab. We do not assume a specific optical state for the injected light.
However, we assume there exists a power limiting mechanism (e.g. due to the laser-induced fiber threshold) that limits how much power Eve can inject, i.e. we assume we have a bound $\nu_{\mathrm{max}}^{(j)}$ on the mean number of photons Eve can inject, i.e.
\begin{equation}
    \langle{N_{E_j^{\mathrm{THA}}}} \otimes \identity_{\hat{E_j}}\rangle_{\rho_{\Ereg_j}} \leq \nu_{\mathrm{max}}^{(j)}
    \label{eq:assumption_power_lim_mechanism}
\end{equation}
where $N_{E_j^{\mathrm{THA}}}$ is the photon number operator acting on $E_j^{\mathrm{THA}}$. Then, Eve's light travels through Alice's system, may become correlated with Alice's state setting choice $\AclassicalVal\in\AclassicalAlph$ and is back-reflected. We assume that the Trojan-horse attack does not affect the state Alice prepares in register $\Aprime_j$. Let $\tilde{E}_j^{\mathrm{THA}}$ be the system exiting Alice's lab. Alice therefore effectively prepares the new state 
\begin{equation}
    \sigma_{\Aclassical_j\Aprime_j\tilde{E}_j^{\mathrm{THA}}} = \sum_{\AclassicalVal\in\AclassicalAlph} \prob_{\Aclassical_j}(\AclassicalVal) \ketbra{\AclassicalVal}{\AclassicalVal}_{\Aclassical_j}\otimes(\sigma_\AclassicalVal^{(j)})_{\Aprime_j\tilde{E}_j^{\mathrm{THA}}}\,,
\end{equation}
with additional register $\tilde{E}_j^{\mathrm{THA}}$ due to the Trojan-horse attack. The application of the modified timing argument in \cref{lem:modified_timing_protocol} requires additional care in the presence of Trojan-horse attacks. If the protocol allows for on-the-fly announcements, Eve's injected light may depend on previous announcements. Consequently, the signal states above may depend on these announcements and cannot generally be regarded as states prepared in advance. Without on-the-fly announcements, this problem does not arise. For protocols with on-the-fly announcements, we therefore restrict the present analysis to independent Trojan-horse attacks in each round. Under this assumption, the signals states retain the round independence required by the modified timing argument. 

Let $\eta_{\mathrm{THA}}^{(j)}$ denote the attenuation experienced by system $E_j^{\mathrm{THA}}$ between the power-limiting mechanism and exiting Alice's lab. Due to this attenuation and \cref{eq:assumption_power_lim_mechanism}, we have
\begin{equation}
    \langle{N_{\tilde{E}_j^{\mathrm{THA}}}\otimes\identity_{\Aprime_j}}\rangle_{\sigma_\AclassicalVal^{(j)}} \leq \nu_{\mathrm{max}}^{(j)} \eta_{\mathrm{THA}}^{(j)}
\end{equation}
for all $x \in \mathcal{X}$. This allows us to bound the weight of the reduced state $\rho_{\tilde{E}_j^{\mathrm{THA}}}$ outside the vacuum subspace
\begin{align}
    \langle{N_{\tilde{E}_j^{\mathrm{THA}}}\otimes\identity_{\Aprime_j}}\rangle_{\sigma_\AclassicalVal^{(j)}} &= \sum_m m\Tr(\big(\sigma_\AclassicalVal^{(j)}\big)_{\Aprime_j\tilde{E}_j^{\mathrm{THA}}}\ketbra{m}{m}_{\tilde{E}_j^{\mathrm{THA}}}\otimes \identity_{\Aprime_j}) \\
    &= \sum_{m\neq 0}m\bra{m}\rho_{\tilde{E}_j^{\mathrm{THA}}}\ket{m} \\
    &\geq \sum_{m\neq 0}\bra{m}\rho_{\tilde{E}_j^{\mathrm{THA}}}\ket{m}\,.
\end{align}
Since $\sum_{m}\bra{m}\rho_{\tilde{E}_j^{\mathrm{THA}}}\ket{m}= 1$, it follows that
\begin{equation}
    \bra{\mathrm{vac}}\rho_{\tilde{E}_j^{\mathrm{THA}}}\ket{\mathrm{vac}} = p_{\mathrm{vac}} \geq 1-\nu_{\mathrm{max}}^{(j)} \eta_{\mathrm{THA}}^{(j)}\,,
\end{equation}
and we have
\begin{equation}
    \proj_{\mathrm{vac}}^{\tilde{E}_j^{\mathrm{THA}}}\big(\sigma_\AclassicalVal^{(j)}\big)_{\Aprime_j\tilde{E}_j^{\mathrm{THA}}}\proj_{\mathrm{vac}}^{\tilde{E}_j^{\mathrm{THA}}} = p_{\mathrm{vac}} \ketbra{\mathrm{vac}}{\mathrm{vac}}_{\tilde E_j^{\mathrm{THA}}}\otimes \big(\bar \sigma_\AclassicalVal^{(j)}\big)_{\Aprime_j}
\end{equation}
with 
\begin{equation}
    \big(\bar \sigma_\AclassicalVal^{(j)}\big)_{\Aprime_j}=\frac{\bra{\mathrm{vac}}_{\tilde E_j^{\mathrm{THA}}}\big(\sigma_\AclassicalVal^{(j)}\big)_{\Aprime_j\tilde{E}_j^{\mathrm{THA}}}\ket{\mathrm{vac}}_{\tilde E_j^{\mathrm{THA}}}}{p_{\mathrm{vac}}}
\end{equation}
and therefore $\big( \sigma_\AclassicalVal^{(j)}\big)_{\Aprime_j}\succeq p_{\mathrm{vac}}\big(\bar \sigma_\AclassicalVal^{(j)}\big)_{\Aprime_j}$, which implies $F\left(\big(\bar \sigma_\AclassicalVal^{(j)}\big)_{\Aprime_j},\big( \sigma_\AclassicalVal^{(j)}\big)_{\Aprime_j}\right)\geq p_{\mathrm{vac}}$. Using this, we find a lower bound
\begin{align}
    F\left((\sigma_\AclassicalVal^{(j)})_{\Aprime_j\tilde E_j^{\mathrm{THA}}}, \ketbra{\mathrm{vac}}{\mathrm{vac}}_{\tilde E_j^{\mathrm{THA}}}\otimes (\sigma_\AclassicalVal^{(j)})_{\Aprime_j}\right) &= p_{\mathrm{vac}} F\left(\big(\bar \sigma_\AclassicalVal^{(j)}\big)_{\Aprime_j},\big( \sigma_\AclassicalVal^{(j)}\big)_{\Aprime_j}\right)\\
    &\geq p_{\mathrm{vac}}^2 \geq \left(1-\nu_{\mathrm{max}}^{(j)} \eta_{\mathrm{THA}}^{(j)}\right)^2 \geq 1-2\nu_{\mathrm{max}}^{(j)} \eta_{\mathrm{THA}}^{(j)}\,.
    \label{eq:fidelity_bound_THA}
\end{align}
Then we can choose the set of target states $\left\{\ketbra{\mathrm{vac}}{\mathrm{vac}}_{\tilde E_j^{\mathrm{THA}}}\otimes (\sigma_\AclassicalVal^{(j)})_{\Aprime_j}\right\}_x$ with the bound $\srcImpBndTHA[j] = 2\nu_{\mathrm{max}}^{(j)} \eta_{\mathrm{THA}}^{(j)}$ given by \cref{eq:fidelity_bound_THA}. 

The only quantities that need to be characterized are the attenuation $\eta_{\mathrm{THA}}^{(j)}$ experienced by Eve's injected light and the maximum allowed input power $\nu_{\mathrm{max}}^{(j)}$. Determining $\eta_{\mathrm{THA}}^{(j)}$ only requires characterizing the total path attenuation $\eta_{\mathrm{THA}}^{(j)}(\omega)$ as a function of all relevant modes $\omega \in \mathcal{W}$, where $\omega$ may correspond to polarization, wavelength, or other degrees of freedom. Then,
\begin{equation}
    \eta_{\mathrm{THA}}^{(j)} = \sup_{\omega \in \mathcal{W}}\eta_{\mathrm{THA}}^{(j)}(\omega).
\end{equation}
This guarantees security of the QKD protocol against all Trojan-horse attacks with modes $\omega \in \mathcal{W}$ under the model described above. We note that it may not be straightforward to determine a tight bound on the maximum allowed input power $\nu_{\mathrm{max}}^{(j)}$ in practice. Indeed, the breakdown mechanism of optical fibers is difficult to model and depends strongly on the properties of the injected light. A detailed discussion is beyond the scope of this work, and we therefore treat $\nu_{\mathrm{max}}^{(j)}$ as a characterized parameter of the implementation.

\section{Explicit construction of the set of marginal states}
\label{ap:construction_marginal_constraint}

\begin{lemma}
    Consider the \nameref{prot:entanglement_qkd_protocol} $\big\{\big\{\AstateVirt_{\Aclassical_j\Aprime''_j}^{(j)}, \big\{\BpovmelTarg[j]\big\}_{\BclassicalVal\in\BclassicalAlph}, \annKeyMap[j]\big\}_{j=1}^\totRounds, \ppMap\big\}$ from \cref{lem:security_virtual_after_reductions}, which resulted from the sequence of source and squashing maps from \cref{sec:problem_reductions}, with (known) POVMs given by \cref{eq:def_new_virtual_povm_elements} and imperfectly characterized signal states satisfying \cref{eq:def_new_virtual_states_alice}. Let $\AstateVirt_{\Amarg_j\Aprime''_j}^{(j)}$ denote the corresponding source-replaced state in the $j$th round and $\setAliceMarginalsVirt[j]$ be a set of possible marginal states. Then, $\AstateVirt^{(j)}_{\Amarg_j} \in \setAliceMarginalsVirt[j]$ if the set of possible marginal states is constructed as
    \begin{align}
        \setAliceMarginalsVirt[j] = \Big\{& \genDensity_{\Amarg_j} \in \setDensity_=(\Amarg_j) : \genDensity_{\Amarg_j} = \sum_{\AclassicalVal, \AclassicalVal'\in\AclassicalAlph} \sqrt{\prob_{\Aclassical_j}(\AclassicalVal) \prob_{\Aclassical_j}(\AclassicalVal')} \ketbra{\AclassicalVal}{\AclassicalVal'}_{\Ameas_j} \otimes \sum_{\blockVar=0}^{\tagCutoff}  \sqrt{\prob_{\blockVarReg_j|\Aclassical_j}(\blockVar|\AclassicalVal)\prob_{\blockVarReg_j|\Aclassical_j}(\blockVar|\AclassicalVal')} \braket{\AstateVirt_{\AclassicalVal',\blockVar}^{(j)}}{\AstateVirt_{\AclassicalVal,\blockVar}^{(j)}} \ketbra{\blockVar}{\blockVar}_{\Ashield_j}\nonumber \\
        & + \sum_{\AclassicalVal\in\AclassicalAlph} \prob_{\Aclassical_j}(\AclassicalVal) \ketbra{\AclassicalVal}{\AclassicalVal}_{\Ameas_j} \otimes \left(1 - \sum_{\blockVar=0}^{\tagCutoff} \prob_{\blockVarReg_j|\Aclassical_j}(\blockVar|\AclassicalVal)\right) \ketbra{\AclassicalVal}{\AclassicalVal}_{\Ashield_j}\,, \label{eq:explicit_construction_marginal_constraint} \\
        & \braket{\AstateVirt_{\AclassicalVal,\blockVar}^{(j)}}{\AstateVirt_{\AclassicalVal',\blockVar}^{(j)}} = \sqrt{\left(1-\srcImpBnd[\AclassicalVal,\blockVar,j]\right)\left(1-\srcImpBnd[\AclassicalVal',\blockVar,j]\right)}\bra{\tilde \Astate_{\AclassicalVal,\blockVar}^{(j)}}\ket{\tilde \Astate_{\AclassicalVal',\blockVar}^{(j)}} + \sqrt{\left(1-\srcImpBnd[\AclassicalVal,\blockVar,j]\right)\srcImpBnd[\AclassicalVal',\blockVar,j]} \bra{\tilde \Astate_{\AclassicalVal,\blockVar}^{(j)}}\ket{\tilde \Astate_{\AclassicalVal',\blockVar}^{(j)\perp}} \nonumber\\
        &+ \sqrt{\srcImpBnd[\AclassicalVal,\blockVar,j]\left(1-\srcImpBnd[\AclassicalVal',\blockVar,j]\right)} \bra{\tilde \Astate_{\AclassicalVal,\blockVar}^{(j)\perp}}\ket{\tilde \Astate_{\AclassicalVal',\blockVar}^{(j)}} + \sqrt{\srcImpBnd[\AclassicalVal,\blockVar,j]\srcImpBnd[\AclassicalVal',\blockVar,j]} \bra{\tilde \Astate_{\AclassicalVal,\blockVar}^{(j)\perp}}\ket{\tilde \Astate_{\AclassicalVal',\blockVar}^{(j)\perp}}\nonumber \\
        & \bra{\tilde \Astate_{\AclassicalVal,\blockVar}^{(j)}}\ket{\tilde \Astate_{\AclassicalVal,\blockVar}^{(j)\perp}} = 0 \text{ for all } \AclassicalVal\in\AclassicalAlph, \blockVar\leq\tagCutoff\,,  \nonumber \\
        & \prob_{\blockVarReg_j|\Aclassical_j}^{\mathrm{L}}(\blockVar|\AclassicalVal) \leq \prob_{\blockVarReg_j|\Aclassical_j}(\blockVar|\AclassicalVal) \leq \prob_{\blockVarReg_j|\Aclassical_j}^{\mathrm{U}}(\blockVar|\AclassicalVal) \text{ for all } \AclassicalVal\in\AclassicalAlph, \blockVar\leq\tagCutoff\Big\}\,, \nonumber
    \end{align}
    where $\left\{\ket{\tilde \Astate_{\AclassicalVal,\blockVar}^{(j)}}_{\Aprime''_j}\right\}_{j, \AclassicalVal\in \AclassicalAlph, \blockVar\leq\tagCutoff}$ is defined in terms of any set of purifications of the target states $\big\{\tilde \Astate_{\AclassicalVal,\blockVar}^{(j)}\big\}_{j, \AclassicalVal\in \AclassicalAlph, \blockVar\leq\tagCutoff}$ from \cref{as:imperfectly_characterized_source}, while the states $\ket{\tilde \Astate_{\AclassicalVal,\blockVar}^{(j)\perp}}_{\Aprime''_j}$ are unknown and $\bra{\tilde \Astate_{\AclassicalVal,\blockVar}^{(j)}}\ket{\tilde \Astate_{\AclassicalVal,\blockVar}^{(j)\perp}} = 0$ for all $\AclassicalVal\in\AclassicalAlph$, $\blockVar\leq\tagCutoff$, and all rounds $j$. The source fidelity bounds $\{\srcImpBnd[\AclassicalVal,\blockVar,j]\}_{\AclassicalVal\in\AclassicalAlph,\blockVar\leq\tagCutoff,j}$
    are given by \cref{as:imperfectly_characterized_source} while the upper and lower bounds on the photon-number distributions $\prob_{\blockVarReg_j|\Aclassical_j}^{\mathrm{U}}(\blockVar|\AclassicalVal)$ and $\prob_{\blockVarReg_j|\Aclassical_j}^{\mathrm{L}}(\blockVar|\AclassicalVal)$ are given by \cref{as:probability_bounds}. 
\end{lemma}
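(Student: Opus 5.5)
The approach is to compute the source-replaced state of the virtual protocol explicitly and read off its marginal on $\Amarg_j=\Ameas_j\Ashield_j$. Then we check that the actual (unknown) parameters of the virtual source are a feasible instance of the constraints defining $\setAliceMarginalsVirt[j]$. The underlying principle is that $\setAliceMarginalsVirt[j]$ is the image, under the map ``parameters $\mapsto$ marginal'', of a parameter set containing the true parameters.

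First, I will fix the purification used in the source-replacement scheme (\cref{lem:source_replacement_scheme}) for the states $(\AstateVirt_\AclassicalVal^{(j)})_{\Aprime''_j}$ of \cref{eq:def_new_virtual_states_alice}. Because these states are a mixture of the pure states $\ket{\AstateVirt_{\AclassicalVal,\blockVar}^{(j)}}$ and the flags $\ket{\AclassicalVal}$, I will take the shield $\Ashield_j$ to carry the block label $\blockVar\in\{0,\dots,\tagCutoff\}$ or the flag label $\AclassicalVal$. The state is then
\begin{equation*}
\ket{\Psi}=\sum_{\AclassicalVal}\sqrt{\prob_{\Aclassical_j}(\AclassicalVal)}\ket{\AclassicalVal}_{\Ameas_j}\Big(\sum_{\blockVar\le\tagCutoff}\sqrt{\prob_{\blockVarReg_j|\Aclassical_j}(\blockVar|\AclassicalVal)}\ket{\blockVar}_{\Ashield_j}\ket{\AstateVirt_{\AclassicalVal,\blockVar}^{(j)}}_{\Aprime''_j}+\sqrt{1-\textstyle\sum_{\blockVar}\prob_{\blockVarReg_j|\Aclassical_j}(\blockVar|\AclassicalVal)}\,\ket{\AclassicalVal}_{\Ashield_j}\ket{\AclassicalVal}_{\Aprime''_j}\Big).
\end{equation*}
Measuring $\Ameas_j$ in the computational basis, which is the POVM $\Apovmel[j]$, and tracing out $\Ashield_j$ reproduces \cref{eq:def_new_virtual_states_alice} with probability $\prob_{\Aclassical_j}$. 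So this is a valid source replacement, and the protocol it gives is the one the lemma refers to.

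Second, I will trace out $\Aprime''_j$ directly. Cross terms between different shield labels vanish because the shield states are orthonormal. The flag states $\ket{\AclassicalVal}_{\Aprime''_j}$ are orthonormal and orthogonal to the span of the block states. The result is the diagonal flag term in \cref{eq:explicit_construction_marginal_constraint}. The block term has coefficients $\sqrt{\prob(\blockVar|\AclassicalVal)\prob(\blockVar|\AclassicalVal')}\braket{\AstateVirt_{\AclassicalVal',\blockVar}^{(j)}}{\AstateVirt_{\AclassicalVal,\blockVar}^{(j)}}$ multiplying $\ketbra{\AclassicalVal}{\AclassicalVal'}\otimes\ketbra{\blockVar}{\blockVar}$, with the order of the bra and ket as in the statement.

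Third, I will verify the side constraints by substituting \cref{eq:definition_tau_x_m}. Expanding the inner product bilinearly gives exactly the four-term expression in the set, with the $\perp$ overlaps being those of the actual (unknown) orthogonal components. Orthogonality $\bra{\tilde \Astate_{\AclassicalVal,\blockVar}^{(j)}}\ket{\tilde \Astate_{\AclassicalVal,\blockVar}^{(j)\perp}}=0$ holds by \cref{lem:source_map_imperfect_source}. The bounds on $\prob_{\blockVarReg_j|\Aclassical_j}$ hold by \cref{as:probability_bounds}, since the tagging and partial-characterization source maps leave the photon-number weights unchanged. Normalization of the marginal is automatic. The true parameters are therefore a witness, and $\AstateVirt^{(j)}_{\Amarg_j}\in\setAliceMarginalsVirt[j]$.

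The main obstacle I expect is bookkeeping rather than any real difficulty. One point is keeping the complex-conjugation and ordering conventions consistent between the marginal's coefficient and the expanded overlap, which I will settle by writing both out explicitly. The other point is that the freedom in the source-replacement purification does not matter. Any other purification differs by an isometry on the shield, and I will fix the canonical one above, as the lemma implicitly does. The main text needs convexity of $\setAliceMarginalsVirt[j]$, but that is not part of this statement, so I will not prove it here.
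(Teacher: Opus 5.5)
Your route is the same as the paper's: write the source-replaced state with the shield carrying the block or flag label, trace out $\Aprime''_j$, and expand the overlaps via \cref{eq:definition_tau_x_m}. However, one step is justified incorrectly, and the correct justification is exactly where the block-diagonal structure must enter.

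When you trace out $\Aprime''_j$, the shield $\Ashield_j$ is kept. The block part of the marginal is therefore
\begin{equation*}
\sum_{\AclassicalVal,\AclassicalVal'}\sqrt{\prob_{\Aclassical_j}(\AclassicalVal)\prob_{\Aclassical_j}(\AclassicalVal')}\ketbra{\AclassicalVal}{\AclassicalVal'}_{\Ameas_j}\otimes\sum_{\blockVar,\blockVar'=0}^{\tagCutoff}\sqrt{\prob_{\blockVarReg_j|\Aclassical_j}(\blockVar|\AclassicalVal)\prob_{\blockVarReg_j|\Aclassical_j}(\blockVar'|\AclassicalVal')}\braket{\AstateVirt^{(j)}_{\AclassicalVal',\blockVar'}}{\AstateVirt^{(j)}_{\AclassicalVal,\blockVar}}\ketbra{\blockVar}{\blockVar'}_{\Ashield_j}\,.
\end{equation*}
Orthonormality of the shield basis does not remove the terms with $\blockVar\neq\blockVar'$. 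That property is what your first step needs, namely that tracing out $\Ashield_j$ returns \cref{eq:def_new_virtual_states_alice}. The coherences vanish only if $\braket{\AstateVirt^{(j)}_{\AclassicalVal',\blockVar'}}{\AstateVirt^{(j)}_{\AclassicalVal,\blockVar}}=0$ for $\blockVar\neq\blockVar'$.

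This orthogonality does not follow from the form of \cref{eq:def_new_virtual_states_alice,eq:definition_tau_x_m} alone. Perpendicular components that are only required to be orthogonal to their own target can make block states with different $\blockVar$ overlap. The true marginal would then have shield coherences and fall outside the set, whose elements are diagonal in $\Ashield_j$.

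You need to use how the virtual states were built. By \cref{as:block_diagonal_source}, $\Astate^{(j)}_{\AclassicalVal,\blockVar}$ is supported on the $\blockVar$-photon subspace. By the conclusion of \cref{lem:source_map_imperfect_source}, applied block-wise, $\ket{\AstateVirt^{(j)}_{\AclassicalVal,\blockVar}}$ is a purification of $\Astate^{(j)}_{\AclassicalVal,\blockVar}$ on the enlarged register. Hence it lies in the $\blockVar$-photon subspace tensored with the purifying registers, so block states with different $\blockVar$ are orthogonal. Combined with the orthogonality of the flags to all block states, which you do state correctly, this gives the shield-diagonal form.

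The paper's proof, which only says the expression follows from tracing out $\Aprime''_j$, leaves this implicit. Once it is supplied, the rest of your argument goes through as you describe: the four-term expansion, $\braket{\tilde\Astate^{(j)}_{\AclassicalVal,\blockVar}}{\tilde\Astate^{(j)\perp}_{\AclassicalVal,\blockVar}}=0$, and the bounds from \cref{as:probability_bounds}.
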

\begin{proof}
    The source-replaced states $\AstateVirt_{\Amarg_j\Aprime''_j}^{(j)} = \ketbra{\AstateVirt^{(j)}}{\AstateVirt^{(j)}}_{\Amarg_j\Aprime''_j}$ corresponding to $\AstateVirt_{\Aclassical_j\Aprime''_j}^{(j)}$ are given by 
    \begin{align}
    \ket{\AstateVirt^{(j)}}_{\Amarg_j\Aprime''_j} &= \sum_{\AclassicalVal\in\AclassicalAlph} \sqrt{\prob_{\Aclassical_j}(\AclassicalVal)}\ket{\AclassicalVal}_{\Ameas_j}\ket{\AstateVirt_\AclassicalVal^{(j)}}_{\Ashield_j\Aprime''_j} \label{eq:source_replaced_virtual_state} \\
    \ket{\AstateVirt_\AclassicalVal^{(j)}}_{\Ashield_j\Aprime''_j} &= \sum_{\blockVar=0}^\tagCutoff \sqrt{\prob_{\blockVarReg_j|\Aclassical_j}(\blockVar|\AclassicalVal)} \ket{\blockVar}_{\Ashield_j}\ket{\AstateVirt_{\AclassicalVal,\blockVar}^{(j)}}_{\Aprime''_j} + \sqrt{1 - \sum_{\blockVar=0}^\tagCutoff \prob_{\blockVarReg_j|\Aclassical_j}(\blockVar|\AclassicalVal)}\ket{\AclassicalVal}_{\Ashield_j}\ket{\AclassicalVal}_{\Aprime''_j} \\
    \ket{\AstateVirt_{\AclassicalVal,\blockVar}^{(j)}}_{\Aprime''_j} &= \sqrt{1-\srcImpBnd[\AclassicalVal,\blockVar,j]}\ket{\tilde \Astate_{\AclassicalVal,\blockVar}^{(j)}}_{\Aprime''_j} +\sqrt{\srcImpBnd[\AclassicalVal,\blockVar,j]} \ket{\tilde \Astate_{\AclassicalVal,\blockVar}^{(j)\perp}}_{\Aprime''_j}\,,
    \end{align}
    as follows from the source-replacement scheme (\cref{lem:source_replacement_scheme}) and the expression for Alice's signal states given by \cref{eq:def_new_virtual_states_alice}. Then, the expression for the set of possible marginal states \cref{eq:explicit_construction_marginal_constraint} directly follows from tracing out the register $\Aprime''_j$ in \cref{eq:source_replaced_virtual_state}.
\end{proof}

\section{Imperfectly block-diagonal source}
\label{sec:imperfect_block_diagonal}

\noindent In this section, we no longer assume Alice's signal states are block diagonal in Fock space and therefore replace the source \cref{as:block_diagonal_source,as:imperfectly_characterized_source,as:probability_bounds} by the model described in \cref{ap:source_model_block}, while keeping the detector \cref{as:block_diagonal_detector,as:imperfectly_characterized_detector}. The new source model allows for imperfectly characterized non-block-diagonal signal states. It captures, for instance, imperfect phase randomization of weak coherent pulses (as illustrated in \cref{sec:imperfect_phase_randomization}),\footnote{The analysis can also incorporate imperfect single-photon sources which exhibit photon-number coherence (see \cref{rem:single_photon_sources}).} combined with intensity fluctuations and source and detector imperfections, which, to the best of our knowledge, no previous work handles simultaneously. Our treatment of imperfectly block-diagonal sources is inspired by Refs.~\cite{kamin_renyi_2025,nahar_imperfect_2023}, although the approach taken in this work differs significantly, in part due to the additional device imperfections. Moreover, the non-block-diagonal source model introduced is strictly more general than in Refs.~\cite{kamin_renyi_2025,nahar_imperfect_2023}, as it neither assumes coherent states nor restricts the form of the uncertainty about the coherences (e.g. also including imperfectly characterized single-photon sources).

The overall strategy is the following. Since Alice's signal states are not block diagonal in the same basis, Eve can no longer be assumed to perform a QND measurement of the block label, which the numerical analysis of \cref{ap:numerics} heavily relies on. Moreover, the photon-number coherence matrix of Alice's signal states may itself be imperfectly known.\footnote{For example, due to imperfect characterization or intensity fluctuations.} We therefore first construct nearby states with a known photon-number coherence matrix in \cref{ap:continuity_bound_block}, while keeping the underlying $n$-photon states imperfectly characterized.

We then apply the MEAT to the protocol with Alice's original signal states in \cref{ap:apply_meat_block}. At the level of the resulting single-round optimization, we replace these states by the states with known coherence matrix, and account for this replacement through a continuity bound and a relaxed constraint. The reason for keeping the $n$-photon states imperfectly characterized is that their uncertainty can be treated through the Gram matrix constraints, as in the main text, without paying an additional continuity bound cost.

Then, we essentially repeat the same analysis as in the main text. The main difference is that the source maps are now applied at the level of the single-round entropy, rather than at the level of the trace-norm (\cref{lem:epsilon_security_source_maps,lem:epsilon_security_squashing}). In \cref{ap:source_map_block}, we first introduce a source map which restores block diagonality. We then apply the tagging source map and partial characterization source map following the same steps as in \cref{sec:problem_reductions}. The resulting virtual states have the same block-diagonal form as those considered in the main text, so the numerical analysis of \cref{ap:numerics} applies, yielding the convex optimization problem presented in \cref{ap:convex_optimization_problem_block}, which is very similar to the one for block-diagonal source (\cref{th:convex_opti_numerics_imperfect}).

\subsection{Model assumptions}
\label{ap:source_model_block}
\noindent We replace the source \cref{as:block_diagonal_source,as:imperfectly_characterized_source,as:probability_bounds} by the following three assumptions (which can be viewed as generalizations of the previous assumptions, with just one additional restriction) while keeping the detector \cref{as:block_diagonal_detector,as:imperfectly_characterized_detector}.

\begin{assumptionbis}[Source Fock basis decomposition]
\label{as:source_fock_decomp}
    For each setting choice $\AclassicalVal\in\AclassicalAlph$, Alice's signal state in round $j$ can be expanded in the Fock basis as
    \begin{equation}
    \label{eq:signal_states_imperf_block_diagonal}
        (\bar{\Astate}_\AclassicalVal^{(j)})_{\Aprime_j} = \sum_{n,n'=0}^{\infty} \bar L_{n,n'|\AclassicalVal,j}\, \ketbra{\Astate_{\AclassicalVal,n}^{(j)}}{\Astate_{\AclassicalVal,n'}^{(j)}}_{\Aprime_j}\,,
    \end{equation}
    where $\ket{\Astate_{\AclassicalVal,n}^{(j)}}_{\Aprime_j}$ denotes Alice's $n$-photon state for setting choice $\AclassicalVal\in\AclassicalAlph$. We refer to the coefficient matrix $\bar L_{n,n'|\AclassicalVal,j}$ as the photon-number coherence matrix. Its diagonal elements describe the photon-number distribution, while its off-diagonal elements describe coherences between different photon-number blocks. 
\end{assumptionbis}

\begin{assumptionbis}[Imperfectly characterized photon-number states]
\label{as:imperfect_charac_nonblock}
    Alice's $n$-photon states are imperfectly characterized up to a cutoff $\phaseCutoff$, i.e. let $\ket{\tilde \Astate_{\AclassicalVal,n}^{(j)}}_{\Aprime_j}$ be known target states satisfying
    \begin{equation}
    \label{eq:mode_characterization_block}
        \left|\braket{\Astate_{\AclassicalVal,n}^{(j)}}{\tilde \Astate_{\AclassicalVal,n}^{(j)}}\right|^2 \geq 1-\srcImpBndMode[\AclassicalVal,n,j]
    \end{equation}
    for all $\AclassicalVal\in\AclassicalAlph$ and all $n\leq \phaseCutoff$.
\end{assumptionbis}

\begin{assumptionbis}[Imperfectly characterized coherence matrix]
\label{as:imperfect_charac_coherence_matrix}
    The coherence matrices $\bar L_{n,n'|\AclassicalVal,j}$ are imperfectly characterized, i.e. they belong to known sets $\mathcal{L}_{\AclassicalVal,j}$.
\end{assumptionbis}

\cref{as:block_diagonal_source} is relaxed by \cref{as:source_fock_decomp} by allowing coherences between different photon numbers through the coherence matrix $\bar L_{n,n'|\AclassicalVal,j}$ in \cref{eq:signal_states_imperf_block_diagonal}. The imperfect characterization condition from \cref{as:imperfectly_characterized_source} is retained in \cref{eq:mode_characterization_block}, with the only additional restriction that the $n$-photon states are pure.\footnote{We note that previous models for imperfect phase randomization also assume purity in each photon-number block \cite{nahar_imperfect_2023,kamin_renyi_2025}. These works however do not handle imperfect state preparation, intensity fluctuations, or detector imperfections combined.} This covers most source imperfections (including state preparation flaws), but not, for instance, Trojan-horse attacks, where Alice's $n$-photon states may be mixed due to Eve's injected light (see \cref{ap:trojan_horse_attack}). 

The set $\mathcal{L}_{\AclassicalVal,j}$ describes the uncertainty about both the photon-number distribution and the coherences, which may arise, for example, from imperfect characterization or intensity fluctuations. Thus, the imperfect characterization of the photon-number distribution in \cref{as:probability_bounds} is generalized to imperfect characterization of the full coherence matrix in \cref{as:imperfect_charac_coherence_matrix}.

\subsection{Target state and coherence cutoff}
\label{ap:continuity_bound_block}

\noindent In order to numerically diagonalize Alice's signal states and apply the analysis from \cref{ap:numerics}, we must address two issues. First, the coherence matrix is only imperfectly characterized (by \cref{as:imperfect_charac_coherence_matrix}), so we replace each signal state by a state with known coherence matrix. Second, the states may be infinite-dimensional (by \cref{as:source_fock_decomp}), in which case we introduce a cutoff $\phaseCutoff$ and remove the coherences between the $\leq\phaseCutoff$ and $>\phaseCutoff$ photon-number blocks. We can then numerically diagonalize the finite-dimensional $\leq\phaseCutoff$ block of the new states.

We remove the uncertainty about the coherence matrix by choosing a target state $(\Astate_{\AclassicalVal}^{(j)})_{\Aprime_j}$ given by \cref{eq:signal_states_imperf_block_diagonal} but with known coherence matrix $L_{n,n'|\AclassicalVal,j} \in \mathcal{L}_{\AclassicalVal,j}$,\footnote{In other words, the target states are chosen to have the same $n$-photon states as in \cref{eq:signal_states_imperf_block_diagonal}, but the corresponding coherence matrices are now known.} and bounding its distance to the states Alice actually prepares,
\begin{equation}
\label{eq:bound_uncertainty_block}
    \frac{1}{2}\norm{(\Astate_{\AclassicalVal}^{(j)})_{\Aprime_j} - (\bar{\Astate}_\AclassicalVal^{(j)})_{\Aprime_j}}_1  \leq \epsilon^\mathrm{coh}_{\AclassicalVal,j}\,.
\end{equation}
The trace distance can be evaluated by computing the maximum distance between any possible state given by \cref{eq:signal_states_imperf_block_diagonal} with $\bar L_{n,n'|\AclassicalVal,j} \in \mathcal{L}_{\AclassicalVal,j}$ from the target state. Next, similarly to \cite[Sec.~XII]{kamin_renyi_2025}, we use the cutoff $\phaseCutoff$ and define
\begin{equation}
\label{eq:def_coherence_cutoff_map}
    \mathcal{Z}^\mathrm{cut}\big[(\Astate_{\AclassicalVal}^{(j)})_{\Aprime_j}\big] \coloneqq \proj_{\leq\phaseCutoff}(\Astate_{\AclassicalVal}^{(j)})_{\Aprime_j}\proj_{\leq\phaseCutoff} + \proj_{>\phaseCutoff}(\Astate_{\AclassicalVal}^{(j)})_{\Aprime_j}\proj_{>\phaseCutoff}\,,
\end{equation}
which removes the coherences between the $\leq\phaseCutoff$ and $>\phaseCutoff$ photon-number blocks, and bound
\begin{equation}
\label{eq:bound_cutoff_block}
    \frac{1}{2}\norm{(\Astate_{\AclassicalVal}^{(j)})_{\Aprime_j} - \mathcal{Z}^\mathrm{cut}\big[(\Astate_{\AclassicalVal}^{(j)})_{\Aprime_j}\big]}_1 \leq \epsilon^{\mathrm{cut}}_{\AclassicalVal,j}\,.
\end{equation}
Both $\epsilon^\mathrm{coh}_{\AclassicalVal,j}$ and $\epsilon^{\mathrm{cut}}_{\AclassicalVal,j}$ are computable from the source model, since the coherence matrices $L_{n,n'|\AclassicalVal,j}$ of the target states and the set of possible coherence matrices $\mathcal{L}_{\AclassicalVal,j}$ for the actual states are known. We show how to compute these bounds for imperfectly phase-randomized weak-coherent pulses in \cref{sec:imperfect_phase_randomization}. By the triangle inequality, the states Alice prepare satisfy
\begin{equation}
\label{eq:bound_total_block}
    \frac{1}{2}\norm{(\bar{\Astate}_\AclassicalVal^{(j)})_{\Aprime_j} - \mathcal{Z}^\mathrm{cut}\big[(\Astate_{\AclassicalVal}^{(j)})_{\Aprime_j}\big]}_1 \leq \epsilon^\mathrm{coh}_{\AclassicalVal,j} + \epsilon^{\mathrm{cut}}_{\AclassicalVal,j}\,.
\end{equation}
where the coherence matrix of $\mathcal{Z}^\mathrm{cut}\big[(\Astate_{\AclassicalVal}^{(j)})_{\Aprime_j}\big]$ is fully known in Fock space, and coherence between the blocks $\leq\phaseCutoff$ and $>\phaseCutoff$ removed, which is what allows us to diagonalize it in the $\leq \phaseCutoff$ block.

\begin{remark}
    We remove the uncertainty about the coherence matrix (and account for the corresponding replacement through the continuity bound in \cref{ap:apply_meat_block}) because a known coherence matrix is required to diagonalize the reference state. By contrast, we keep the uncertainty about the $n$-photon signal states described by \cref{eq:mode_characterization_block}. Rather than removing this uncertainty as well, and paying an additional penalty as continuity bound depending on $\srcImpBndMode[\AclassicalVal,n,j]$, we later treat it through the Gram matrix constraints, analogously to the main text. 
    
    This treatment of intensity fluctuations is very conservative and leads to a substantial penalty in the continuity bound in \cref{ap:apply_meat_block}. Consequently, this approach is considerably less robust to intensity fluctuations than the analysis for block-diagonal sources in the main text. Developing a tighter treatment of intensity fluctuations with non-block-diagonal source is a natural direction for future work.
\end{remark}

\subsection{Applying the MEAT}
\label{ap:apply_meat_block}

\noindent Consider the protocol $\big\{\big\{\bar{\Astate}^{(j)}_{\Aclassical_j\Aprime_j}, \big\{\Bpovmel[j]\big\}_{\BclassicalVal\in\BclassicalAlph}, \annKeyMap[j]\big\}_{j=1}^\totRounds, \ppMap\big\}$ (see \cref{def:tuple_protocol}), where Alice's signal states satisfy \cref{as:source_fock_decomp,as:imperfect_charac_nonblock,as:imperfect_charac_coherence_matrix} and Bob's measurement POVMs satisfy \cref{as:block_diagonal_detector,as:imperfectly_characterized_detector}. Let $\bar{\Astate}^{(j)}_{\Amarg_j\Aprime_j}$ denote the corresponding source-replaced state (\cref{lem:source_replacement_scheme}) and $\setAliceMarginalsBar[j]$ denote the set of possible marginals (defined in terms of the source model, \cref{ap:source_model_block}). As in \cref{sec:problem_reductions}, we apply the flag-state squasher (\cref{lem:flag_state_squasher}) and the noise channel (\cref{lem:noise_channel}) to reduce the analysis to a new protocol $\big\{\big\{\bar{\Astate}^{(j)}_{\Aclassical_j\Aprime_j}, \big\{\BpovmelTarg[j]\big\}_{\BclassicalVal\in\BclassicalAlph}, \annKeyMap[j]\big\}_{j=1}^\totRounds, \ppMap\big\}$ (see \cref{lem:epsilon_security_squashing,lem:combination_FSS_noise_channel}) with the POVM elements from \cref{eq:def_new_virtual_povm_elements}, and with Eve restricted to the attack set $\setAttackChVirt_j$ given by \cref{eq:restricted_attack_set_fss_noise}. 

In contrast to the main text, we do not yet apply any source map. Instead, we apply the MEAT via \cite[Theorem~8.2 and Definition~8.3]{tupkary_rigorous_2026} directly to this protocol, and keep the formulation in terms of attack channels (instead of the marginal of attack channels). This is done so that the continuity bound below can be applied. 

\begin{remark}
    Note that Alice's signal states are still infinite dimensional at this stage, and thus \textit{technically} the MEAT cannot be applied directly. However, \cite[Theorem~A.1 and Remark~A.2]{tupkary_rigorous_2026} show that a security statement holding for all finite-dimensional truncations of the relevant registers extends to the infinite-dimensional setting. Therefore, we ignore this technical obstacle in this work, and leave its rigorous resolution to later work.
\end{remark}  Since we do not know $\bar{\Astate}^{(j)}_{\Amarg_j}$, we obtain a lower bound by optimizing over all possible marginals in $\setAliceMarginalsBar[j]$, analogously to \cref{sec:construction_set_output_states}. Applying \cite[Theorem~8.2 and Definition~8.3]{tupkary_rigorous_2026} therefore yields the single-round optimization
\begin{equation}
\label{eq:opti_imperfect_block_attack_channel}
    \inf_{\nu\in\stateSet_j(\setAliceMarginalsBar[j],\setAttackChVirt_j)} \fRenyiUpEnt{1}{j-1}(\secretReg_j|\eveCopyReg_j\varDecReg_j\Ereg_j\widetilde{E})_\nu\,,
\end{equation}
with
\begin{equation}
\label{eq:state_set_imperfect_block}
    \stateSet_j(\setAliceMarginalsBar[j],\setAttackChVirt_j) \coloneqq \left\{ \gMapVirt[j]\circ\attackCh_j\big[\omega_{\Amarg_j\Ereg_{j-1}\widetilde{E}}\big] : \Tr_{\Ereg_{j-1}\widetilde{E}}\big[\omega_{\Amarg_j\Ereg_{j-1}\widetilde{E}}\big] \in\setAliceMarginalsBar[j],\ \attackCh_j\in\setAttackChVirt_j \right\}\,,
\end{equation}
where $\widetilde{E}$ purifies $\Amarg_j\Ereg_{j-1}$. We expand the objective as in \cref{eq:expanded_entropy_f}, but without the block register $\blockVarReg_j$ (since we don't have block-diagonal structure yet), yielding
\begin{equation}
\label{eq:expanded_entropy_imperfect_block}
    \fRenyiUpEnt{1}{j-1}(\secretReg_j|\eveCopyReg_j\varDecReg_j\Ereg_j\widetilde{E})_\nu \geq \frac{\alpha}{1-\alpha}\log\left( (1-\ptest^{(j)}) 2^{\frac{1-\alpha}{\alpha}\left( \renyiup(\secretReg_j|\eveCopyReg_j\Ereg_j\widetilde{E})_{\nu_{|\mathtt{gen}}} - \ftradeoff{1}{j-1}(\varDecGen)\right)} + \sum_{\varDecVal\in\varDecAlph_{\mathtt{test}}} \nu_{\varDecReg_j}(\varDecVal)\, 2^{-\frac{1-\alpha}{\alpha}\ftradeoff{1}{j-1}(\varDecVal)} \right)\,.
\end{equation}

We now replace Alice's signal states by the states $\mathcal{Z}^\mathrm{cut}\big[(\Astate_{\AclassicalVal}^{(j)})_{\Aprime_j}\big]$ given by \cref{eq:def_coherence_cutoff_map} using the continuity bound for sandwiched Rényi entropies \cite{bluhm_unified_2026}. Applying \cite[Lemma~20]{kamin_renyi_2025} to \cref{eq:bound_total_block}, for every $\nu\in\stateSet_j(\setAliceMarginalsBar[j],\setAttackChVirt_j)$ there exists a state $\nu'\in\stateSet_j(\setAliceMarginalsPrime[j],\setAttackChVirt_j)$, such that
\begin{align}
\label{eq:trace_distance_full}
    \frac{1}{2}\norm{\nu - \nu'}_1 &\leq \sum_{\AclassicalVal\in\AclassicalAlph} \prob_{\Aclassical_j}(\AclassicalVal) \left(\epsilon^\mathrm{coh}_{\AclassicalVal,j}+\epsilon^{\mathrm{cut}}_{\AclassicalVal,j}\right) \eqqcolon \epsilon^{\mathrm{full}}_j\,, \\
    \label{eq:trace_distance_gen}
    \frac{1}{2}\norm{\nu_{|\mathtt{gen}} - \nu'_{|\mathtt{gen}}}_1 &\leq \sum_{\AclassicalVal\in\AclassicalAlph} \prob_{\Aclassical_j}(\AclassicalVal|\mathtt{gen}) \left(\epsilon^\mathrm{coh}_{\AclassicalVal,j}+\epsilon^{\mathrm{cut}}_{\AclassicalVal,j}\right) \eqqcolon \epsilon^{\mathrm{gen}}_j\,,
\end{align}
where \cref{eq:trace_distance_gen} is obtained by conditioning on generation rounds, and with the set $\setAliceMarginalsPrime[j]$ of possible marginals of the source-replaced states corresponding to $\mathcal{Z}^\mathrm{cut}[(\Astate_{\AclassicalVal}^{(j)})_{\Aprime_j}]$. Applying the continuity bound for sandwiched Rényi entropies \cite{bluhm_unified_2026} to \cref{eq:trace_distance_gen} yields
\begin{equation}
\label{eq:continuity_bound_block}
    \renyiup(\secretReg_j|\eveCopyReg_j\Ereg_j\widetilde{E})_{\nu_{|\mathtt{gen}}} \;\geq\; \renyiup(\secretReg_j|\eveCopyReg_j\Ereg_j\widetilde{E})_{\nu'_{|\mathtt{gen}}} - g(\epsilon^{\mathrm{gen}}_j)
\end{equation}
where we use the same improvement on the continuity bound from \cite[Eq.~(215)]{kamin_renyi_2025} since the register $\secretReg_j$ is classical, i.e.
\begin{equation}
\label{eq:continuity_bound_function}
    g(\epsilon) \coloneqq \min\begin{cases}
        \log(1+\epsilon) + \dfrac{1}{\alpha-1}\log\left(1 + \epsilon\, |\secretAlph|^{\alpha-1} - \dfrac{\epsilon^{\alpha}}{(1+\epsilon)^{\alpha-1}}\right)\,, \\
        \dfrac{\alpha}{\alpha-1} \log\left(1 + \epsilon\, |\secretAlph|^{\frac{\alpha-1}{\alpha}}\right)\,, \\
        \log(1+\epsilon) + \dfrac{\alpha}{\alpha-1}\log\left(1 + \epsilon\, |\secretAlph|^{\frac{\alpha-1}{\alpha}} - \dfrac{\epsilon^{2-\frac{1}{\alpha}}}{(1+\epsilon)^{\frac{\alpha-1}{\alpha}}}\right)\,.
    \end{cases}
\end{equation}
We can also relate the announcement distributions by restricting \cref{eq:trace_distance_full} to the register $\varDecReg_j$ which gives
\begin{equation}
\label{eq:announcement_trace_distance}
    \frac{1}{2}\norm{\nu_{\varDecReg_j}-\nu'_{\varDecReg_j}}_1 \leq \epsilon^{\mathrm{full}}_j\,.
\end{equation}
This directly links the distribution $\nu_{\varDecReg_j}$ of the actual source, which is the one the tradeoff function is evaluated on in \cref{eq:expanded_entropy_imperfect_block}, to the distribution $\nu'_{\varDecReg_j}$ generated by the source preparing $\mathcal{Z}^\mathrm{cut}\big[(\Astate_{\AclassicalVal}^{(j)})_{\Aprime_j}\big]$. This differs from Ref.~\cite{kamin_renyi_2025}, where the two distributions are related through the eigenvalues and eigenvectors of the two states. Here, the relation is expressed purely as a constraint on the observed statistics.

At this point, we pass to the formulation with marginal $\marginalSetVirt_j$ of the set of attack channels. Applying \cite[Lemma~8.8]{tupkary_rigorous_2026} (with zero tradeoff function) yields

\begin{align}
    \inf_{\nu'\in\stateSet_j(\setAliceMarginalsPrime[j],\setAttackChVirt_j)}\renyiup(\secretReg_j|\eveCopyReg_j\Ereg_j\widetilde{E})_{\nu'_{|\mathtt{gen}}} &\geq \inf_{\genDensity_{\Amarg_j\BmeasSquash_j}\in\stateSet'_j(\setAliceMarginalsPrime[j],\marginalSetVirt_j)}\renyiup   (\secretReg_j|\eveCopyReg_j\evePurReg)_{\gMapVirt[j][\purFunc(\genDensity_{\Amarg_j\BmeasSquash_j})]_{|\mathtt{gen}}}\label{eq:opti_problem_virtual_block} \\
    \stateSet'_j(\setAliceMarginalsPrime[j],\marginalSetVirt_j)&=\left\{\marginalMap_j[\genDensity_{\Amarg_j\Aprime_j}]:\genDensity_{\Amarg_j}\in\setAliceMarginalsPrime[j],\marginalMap_j\in\marginalSetVirt_j\right\}\,,
\end{align}
where $\purFunc$ is a purifying function of $\Amarg_j\BmeasSquash_j$ onto $\evePurReg$ and $\marginalSetVirt_j$ denotes the marginal of the restricted set of attack channels $\setAttackChVirt_j$, defined analogously to \cref{eq:marginal_definition_fss_noise} (but on the register $\Aprime_j$).

We have now replaced Alice's original signal states $(\bar{\Astate}_{\AclassicalVal}^{(j)})_{\Aprime_j}$ by the states $\mathcal{Z}^{\mathrm{cut}}[(\Astate_{\AclassicalVal}^{(j)})_{\Aprime_j}]$, with known coherence matrices and the coherences between the $\leq\phaseCutoff$ and $>\phaseCutoff$ photon-number blocks have been removed. For convenience, define $\nu' \coloneqq \gMapVirt[j]\big[\purFunc(\genDensity_{\Amarg_j\BmeasSquash_j})\big]\,$. Combining \cref{eq:expanded_entropy_imperfect_block,eq:continuity_bound_block,eq:announcement_trace_distance,eq:opti_problem_virtual_block}, we obtain the following lower bound on the single-round optimization in \cref{eq:opti_imperfect_block_attack_channel}
\begin{align}
    \inf_{\substack{\genDensity_{\Amarg_j\BmeasSquash_j}\in\stateSet'_j(\setAliceMarginalsPrime[j],\marginalSetVirt_j)\\ \nu_{\varDecReg_j}\in\probSimplex_{|\varDecAlph|}\\ \frac{1}{2}\norm{\nu_{\varDecReg_j}-\nu'_{\varDecReg_j}}_1\leq\epsilon^{\mathrm{full}}_j}} \frac{\alpha}{1-\alpha}\log&\Bigg((1-\ptest^{(j)})2^{\frac{1-\alpha}{\alpha}\left(\renyiup(\secretReg_j|\eveCopyReg_j\evePurReg)_{\nu'_{|\mathtt{gen}}}-g(\epsilon^{\mathrm{gen}}_j)-\ftradeoff{1}{j-1}(\varDecGen)\right)}+\sum_{\varDecVal\in\varDecAlph_{\mathtt{test}}}\nu_{\varDecReg_j}(\varDecVal)\,2^{-\frac{1-\alpha}{\alpha}\ftradeoff{1}{j-1}(\varDecVal)}\Bigg)
    \label{eq:block_opti_middle_step}
\end{align}
Here, $\nu'_{\varDecReg_j}$ is determined by $\genDensity_{\Amarg_j\BmeasSquash_j}$ (which only includes the nice states with known coherence matrix and finite cutoff), while $\nu_{\varDecReg_j}$ is an optimization variable describing the announcement distribution of the original protocol and is optimized subject to the trace distance constraint.

The remaining steps broadly repeat the analysis from the main text. We first diagonalize the known coherence matrix within the $\leq\phaseCutoff$ photon-number block and introduce a source map which restores the block-diagonal structure. We then apply the tagging source map and partial characterization source map as in \cref{sec:problem_reductions}, but now at the level of the single-round entropy. This yields virtual states of the same form as those considered in the main text, to which the numerical analysis of \cref{ap:numerics} applies.

\subsection{Application of source maps}
\label{ap:source_map_block}

\noindent We now introduce a series of source maps, similarly to \cref{sec:problem_reductions} but where we first introduce the block-tagging source map to ensure the signal states are block-diagonal in the same basis. Since the coherence matrix of the states $\mathcal{Z}^\mathrm{cut}\big[(\Astate_{\AclassicalVal}^{(j)})_{\Aprime_j}\big]$ is known and coherences between the $\leq\phaseCutoff$ and $>\phaseCutoff$ blocks removed, we may diagonalize it, yielding
\begin{equation}
\label{eq:diagonalization_nice_state}
    \mathcal{Z}^\mathrm{cut}\big[(\Astate_{\AclassicalVal}^{(j)})_{\Aprime_j}\big] = \sum_{\blockVar=0}^{\infty}\omega_{\blockVar|\AclassicalVal,j}\, \ketbra{\omega_{\AclassicalVal,\blockVar}^{(j)}}{\omega_{\AclassicalVal,\blockVar}^{(j)}}_{\Aprime_j}\,,
\end{equation}
where the eigenvalues $\omega_{\blockVar|\AclassicalVal,j}$ and the coefficients of the eigenvectors $\ket{\omega_{\AclassicalVal,\blockVar}^{(j)}}$ with $\blockVar\leq\phaseCutoff$ can be obtained numerically.\footnote{The eigenvectors themselves remain unknown, since only the coefficients $c_{n|\AclassicalVal,\blockVar,j}$ in \cref{eq:eigenstate_expansion} are known, whereas the $n$-photon states are not.} We refer to $\blockVar$ as the block label (in general, it does not correspond to the photon number).

\textbf{Block-tagging source map:} Let $A_{B_j}$ be a register with orthonormal basis $\{\ket{\blockVar}\}_\blockVar$, and
\begin{equation}
\label{eq:block_tagged_state}
    \sum_{\blockVar=0}^{\infty}\omega_{\blockVar|\AclassicalVal,j}\, \ketbra{\omega_{\AclassicalVal,\blockVar}^{(j)}}{\omega_{\AclassicalVal,\blockVar}^{(j)}}_{\Aprime_j} \otimes\ketbra{\blockVar}{\blockVar}_{A_{B_j}}\,.
\end{equation}
be Alice's new signal states for setting choice $\AclassicalVal\in\AclassicalAlph$. Then $\Tr_{A_{B_j}}$ is a source map from \cref{eq:block_tagged_state} to \cref{eq:diagonalization_nice_state}. By construction, Alice's signal states are now block-diagonal in the same basis with respect to the label $\blockVar$, which is precisely the structure required by the tagging source map.

\textbf{Tagging source map:} Applying \cref{lem:tagging_source_map} with cutoff $\tagCutoff\leq\phaseCutoff$, where the two blocks of \cref{lem:tagging_source_map} are taken to be $\blockVar\leq\tagCutoff$ and $\blockVar>\tagCutoff$, yields
\begin{equation}
\label{eq:tagged_state_imperfect_block}
    (\Astate'^{(j)}_{\AclassicalVal})_{\Aprime'_j} = \sum_{\blockVar=0}^{\tagCutoff}\omega_{\blockVar|\AclassicalVal,j}\, \ketbra{\omega_{\AclassicalVal,\blockVar}^{(j)}}{\omega_{\AclassicalVal,\blockVar}^{(j)}}_{\Aprime'_j} + \left(1-\sum_{\blockVar=0}^{\tagCutoff}\omega_{\blockVar|\AclassicalVal,j}\right) \ketbra{\AclassicalVal}{\AclassicalVal}_{\Aprime'_j}\,,
\end{equation}
where we have absorbed the block register $A_{B_j}$ into $\Aprime'_j$, and where $\{\ket{\AclassicalVal}\}_{\AclassicalVal\in\AclassicalAlph}$ is orthonormal and orthogonal to the span of the retained eigenstates. This is exactly the form of the tagged states in \cref{sec:problem_reductions}, with the photon-number distribution replaced by $\omega_{\blockVar|\AclassicalVal,j}$. In contrast to \cref{sec:problem_reductions}, these weights are known exactly, since any uncertainty about them was removed through the continuity bound in \cref{eq:continuity_bound_block}.

\textbf{Partial characterization source map:} The retained eigenstates lie in the $\leq\phaseCutoff$ photon-number subspace and can therefore be written as
\begin{equation}
\label{eq:eigenstate_expansion}
    \ket{\omega_{\AclassicalVal,\blockVar}^{(j)}}_{\Aprime_j} = \sum_{n=0}^{\phaseCutoff} c_{n|\AclassicalVal,\blockVar,j} \ket{\Astate_{\AclassicalVal,n}^{(j)}}_{\Aprime_j}\,,
\end{equation}
with coefficients $c_{n|\AclassicalVal,\blockVar,j}$ known from the diagonalization. We choose the corresponding target states to have the same photon-number coefficients,
\begin{equation}
\label{eq:target_eigenstate}
    \ket{\tilde{\omega}_{\AclassicalVal,\blockVar}^{(j)}}_{\Aprime_j} \coloneqq \sum_{n=0}^{\phaseCutoff} c_{n|\AclassicalVal,\blockVar,j} \ket{\tilde \Astate_{\AclassicalVal,n}^{(j)}}_{\Aprime_j}\,,
\end{equation}
where we recall that the $n$-photon states $\ket{\tilde \Astate_{\AclassicalVal,n}^{(j)}}_{\Aprime_j}$ are known (see \cref{ap:source_model_block}). Hence,
\begin{equation}
\label{eq:overlap_eigenstates}
    \braket{\tilde{\omega}_{\AclassicalVal,\blockVar}^{(j)}}{\omega_{\AclassicalVal,\blockVar}^{(j)}} = \sum_{n=0}^{\phaseCutoff}\left|c_{n|\AclassicalVal,\blockVar,j}\right|^2 \braket{\tilde \Astate_{\AclassicalVal,n}^{(j)}}{\Astate_{\AclassicalVal,n}^{(j)}}\,,
\end{equation}
and, using the source fidelity bound from \cref{eq:mode_characterization_block} yields
\begin{align}
\label{eq:fidelity_bound_eigenstates}
    \left|\braket{\tilde{\omega}_{\AclassicalVal,\blockVar}^{(j)}}{\omega_{\AclassicalVal,\blockVar}^{(j)}}\right|^2 &\geq \inf_{\substack{ \{v_{\AclassicalVal,n,j}\}_{n=0}^{\phaseCutoff} \\ \sqrt{1-\srcImpBndMode[\AclassicalVal,n,j]}\leq |v_{\AclassicalVal,n,j}| \leq 1}}\left|\sum_{n=0}^{\phaseCutoff} \left|c_{n|\AclassicalVal,\blockVar,j}\right|^2 v_{\AclassicalVal,n,j}\right|^2 \\
    &= \max\left\{0, \max_{0\leq n\leq \phaseCutoff}\left(\left|c_{n|\AclassicalVal,\blockVar,j}\right|^2\left(1+\sqrt{1-\srcImpBndMode[\AclassicalVal,n,j]}\right)-1\right)\right\}^2
    \eqqcolon 1-\srcImpBnd[\AclassicalVal,\blockVar,j]\,.
\end{align}
The inequality directly follows from \cref{eq:overlap_eigenstates,eq:mode_characterization_block} while the equality follows from the reverse triangle inequality and the normalization $\sum_{n=0}^{\phaseCutoff}|c_{n|\AclassicalVal,\blockVar,j}|^2=1$. Intuitively, the bound above accounts for possible interference between the contributions from different photon-number states (as their phases may not align, see \cref{rem:phase_characterization}). 

\begin{remark}
\label{rem:phase_characterization}
    In contrast to the block-diagonal source model considered in the main text (and in previous analyses \cite{sixto2026finitekeysecurityanalysisdecoystate,navarrete2026numericalsecurityanalysispractical,pereira_optimal_2025}), setting $\srcImpBndMode[\AclassicalVal,n,j]=0$ does not imply that the source is perfectly characterized when coherences between different photon-number states are present. Indeed, the condition in \cref{eq:mode_characterization_block} determines each $n$-photon state only up to a phase, which may depend on the photon number. The corresponding contributions to the eigenstate overlap in \cref{eq:fidelity_bound_eigenstates} may therefore interfere destructively, such that $\srcImpBnd[\AclassicalVal,\blockVar,j]$ need not vanish even when $\srcImpBndMode[\AclassicalVal,n,j]=0$ for every $n$.
    
    If additional information about the phases of the overlaps is available, including restrictions on how they may vary with the photon number, it can be incorporated by restricting the infimum in \cref{eq:fidelity_bound_eigenstates}, thereby yielding a tighter bound. Here, we assume that only the squared overlaps in \cref{eq:mode_characterization_block} are known (no information about the relative phase) and therefore optimize over all compatible phases.
\end{remark}

Following \cref{eq:fidelity_bound_eigenstates}, the fidelity of the eigenstates with respect to known target states is bounded. We may therefore apply the partial characterization source map from \cref{lem:source_map_imperfect_source}, yielding
\begin{equation}
\label{eq:final_virtual_state_imperfect_block}
    (\AstateVirt_{\AclassicalVal}^{(j)})_{\Aprime''_j} = \sum_{\blockVar=0}^{\tagCutoff}\omega_{\blockVar|\AclassicalVal,j}\, \ketbra{\AstateVirt_{\AclassicalVal,\blockVar}^{(j)}}{\AstateVirt_{\AclassicalVal,\blockVar}^{(j)}}_{\Aprime''_j} + \left(1-\sum_{\blockVar=0}^{\tagCutoff}\omega_{\blockVar|\AclassicalVal,j}\right) \ketbra{\AclassicalVal}{\AclassicalVal}_{\Aprime''_j}\,,
\end{equation}
with
\begin{equation}
\label{eq:final_virtual_eigenstate_imperfect_block}
    \ket{\AstateVirt_{\AclassicalVal,\blockVar}^{(j)}}_{\Aprime''_j} = \sqrt{1-\srcImpBnd[\AclassicalVal,\blockVar,j]} \ket{\tilde{\omega}_{\AclassicalVal,\blockVar}^{(j)}}_{\Aprime''_j} + \sqrt{\srcImpBnd[\AclassicalVal,\blockVar,j]} \ket{\tilde{\omega}_{\AclassicalVal,\blockVar}^{(j)\perp}}_{\Aprime''_j}\,,
\end{equation}
where $\braket{\tilde{\omega}_{\AclassicalVal,\blockVar}^{(j)}}{\tilde{\omega}_{\AclassicalVal,\blockVar}^{(j)\perp}} = 0$ for all $\AclassicalVal\in\AclassicalAlph$, $\blockVar\leq\tagCutoff$ and all rounds $j$. Note that \cref{eq:final_virtual_state_imperfect_block,eq:final_virtual_eigenstate_imperfect_block} have exactly the form of \cref{eq:def_new_virtual_states_alice}, with the photon-number distribution replaced by $\omega_{\blockVar|\AclassicalVal,j}$ and with the target states and source fidelity bound given by \cref{eq:target_eigenstate,eq:fidelity_bound_eigenstates} instead. Denoting by $\srcMap_j^\mathrm{tot}$ the composition of the block-tagging source map, the tagging source map (\cref{lem:tagging_source_map}) and the partial characterization source map (\cref{lem:source_map_imperfect_source}), we have
\begin{equation}
\label{eq:source_map_composition_block}
    \srcMap_j^\mathrm{tot}\big[(\AstateVirt_{\AclassicalVal}^{(j)})_{\Aprime''_j}\big] = \mathcal{Z}^\mathrm{cut}\big[(\Astate_{\AclassicalVal}^{(j)})_{\Aprime_j}\big]
\end{equation}
for every $\AclassicalVal\in\AclassicalAlph$ and every round $j$. We now wish to apply this source map argument to \cref{eq:opti_problem_virtual_block}, where we have a restricted set of attack channels (in contrast to the argument in \cref{sec:problem_reductions}, where we first applied the source maps and then the squashing maps).

\begin{lemma}[Source maps with restricted marginal channels]
\label{lem:source_map_restricted_attacks}
    Let $\srcMap_j\in\cptp(\Aprime'_j,\Aprime_j)$ be a source map relating $\srcMap_j\big[(\sigma_\AclassicalVal^{(j)})_{\Aprime'_j}\big] = (\rho_\AclassicalVal^{(j)})_{\Aprime_j}$ for all $\AclassicalVal\in\AclassicalAlph$. Let $\sigma^{(j)}_{\Amarg_j\Aprime'_j}$ and $\rho^{(j)}_{\Amarg_j\Aprime_j}$ be the corresponding source-replaced states. Let $H_j$ be a Hermitian operator on $\BmeasSquash_j$ and let $\marginalMap_j\in\cptp(\Aprime_j,\BmeasSquash_j)$ satisfy $\marginalMap_j^\dagger[H_j]\succeq0$. Then the marginal channel $\widetilde{\marginalMap}_j\coloneqq\marginalMap_j\circ\srcMap_j$ satisfies the same constraint,
    \begin{equation}
    \label{eq:lemma_constraint_preserved}
        \widetilde{\marginalMap}_j^\dagger[H_j] = \srcMap_j^\dagger\big[\marginalMap_j^\dagger[H_j]\big]\succeq0\,.
    \end{equation}
    Let 
    \begin{equation}
        \nu' \coloneqq \gMapVirt[j]\big[\purFunc(\marginalMap_j\big[\rho^{(j)}_{\Amarg_j\Aprime_j}\big])\big]\,,
        \qquad
        \widetilde{\nu} \coloneqq \gMapVirt[j]\big[\purFunc(\widetilde{\marginalMap}_j\big[\sigma^{(j)}_{\Amarg_j\Aprime'_j}\big])\big]\,,
    \end{equation}
    then
    \begin{equation}
    \label{eq:announcement_identity}
        \nu'_{\varDecReg_j} = \widetilde{\nu}_{\varDecReg_j}
    \end{equation}
    and
    \begin{equation}
    \label{eq:lemma_entropy_inequality}
        \renyiup(\secretReg_j|\eveCopyReg_j\evePurReg)_{\nu'_{|\mathtt{gen}}} \geq \renyiup(\secretReg_j|\eveCopyReg_j\evePurReg)_{\widetilde{\nu}_{|\mathtt{gen}}}\,,
    \end{equation}
    where $\evePurReg$ denotes a purifying register.
\end{lemma}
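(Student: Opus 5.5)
The plan has three parts, one per claim. The constraint claim is immediate. The adjoint of a composition reverses order, so $\widetilde{\marginalMap}_j^\dagger = \srcMap_j^\dagger\circ\marginalMap_j^\dagger$. Since $\srcMap_j$ is completely positive, its adjoint is positive. Hence $\marginalMap_j^\dagger[H_j]\succeq 0$ implies \cref{eq:lemma_constraint_preserved}.

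For the remaining two claims, the first step is to relate the two source-replaced states. Write them as $\ket{\sigma^{(j)}}=\sum_x\sqrt{\prob_{\Aclassical_j}(x)}\ket{x}_{\Ameas_j}\ket{\sigma_x}_{\Ashield_j\Aprime'_j}$ and analogously for $\rho$, with the shield $\Ashield_j$ purifying each conditional state. Applying $\srcMap_j$ to $\sigma^{(j)}_{\Amarg_j\Aprime'_j}$ gives a state $\srcMap_j[\sigma^{(j)}]$ on $\Amarg'_j\Aprime_j$. Conditioned on Alice's outcome $x$ in $\Ameas_j$, its $\Aprime_j$ part is exactly $\rho_x$. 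The point is that Alice's measurement $\{\Apovmel[j]\}$ only acts on $\Ameas_j$ (identity on the shield), and $\srcMap_j$ acts only on $\Aprime'_j$.

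It then follows that the joint distribution of $(x,y)$, and hence of $\varDecReg_j$ and $\eveCopyReg_j$, produced by $\gMapVirt[j]$ is the same in both scenarios. Indeed, $\Tr[(\Apovmel[j]\otimes\BpovmelTarg[j])\,\marginalMap_j\circ\srcMap_j[\sigma]] = \prob_{\Aclassical_j}(x)\Tr[\BpovmelTarg[j]\marginalMap_j[\rho_x]] = \Tr[(\Apovmel[j]\otimes\BpovmelTarg[j])\,\marginalMap_j[\rho]]$. This proves \cref{eq:announcement_identity} and also shows that the probability of a generation round is the same for $\nu'$ and $\widetilde\nu$.

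For the entropy inequality \cref{eq:lemma_entropy_inequality}, I would show that the $\widetilde\nu$ scenario can be obtained from the $\nu'$ scenario by a map acting on Eve's side only, and then invoke data processing. The two pure states $\ket{\rho^{(j)}}_{\Amarg_j\Aprime_j}$ and the purification of $\srcMap_j[\sigma^{(j)}]$ have the same reduced state on $\Ameas_j$ after Alice's (classical) measurement, but not necessarily the same marginal on $\Amarg_j$. This is exactly the obstacle.

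The route I would take is to work with the classical register $\Aclassical_j$ instead of the marginal on $\Amarg_j$. The outputs of $\gMapVirt[j]$ depend on $\Amarg_j$ only through Alice's measurement outcome, which is performed before the purification is relevant. Consequently, $\purFunc$ applied to $\genDensity_{\Amarg_j\BmeasSquash_j}$ gives Eve a register $\evePurReg$. After measuring $\Ameas_j$, this $\evePurReg$ together with $\Ashield_j$ purifies the classical-quantum state $\sum_x \ketbra{x}{x}\otimes\marginalMap_j[\rho_x]$.

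For $\widetilde\nu$, Eve's purification $\evePurReg$ together with the shield $\Ashield'_j$ purifies $\sum_x\ketbra{x}{x}\otimes\marginalMap_j\circ\srcMap_j[\sigma_x]$, which is the same cq-state. By Uhlmann's theorem, the combined purifying systems are related by an isometry. In $\widetilde\nu$ Eve holds less than this combined system, since the shield $\Ashield'_j$ and the Stinespring environment of $\srcMap_j$ are not given to her. Therefore $\widetilde\nu_{\secretReg_j\eveCopyReg_j\evePurReg}$ is obtained from $\nu'_{\secretReg_j\eveCopyReg_j\evePurReg}$ (with $\Ashield_j$ folded into Eve's register in the $\nu'$ case, which only lowers entropy) by a channel on Eve's side. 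If this works out, data processing for $\renyiup$ gives \cref{eq:lemma_entropy_inequality}, after conditioning on the generation event, whose probability is equal in both scenarios.

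The step I expect to need the most care is this Uhlmann/purification bookkeeping. In particular, I must check that the direction of the inequality is right. In the $\nu'$ scenario the shield $\Ashield_j$ is Alice's and not Eve's, so I need that $\purFunc$ already gives Eve a purification of $\Amarg_j\BmeasSquash_j$, which in effect includes shield information. If the clean version fails, my fallback is to pass through \cite[Lemma~9.1]{tupkary_rigorous_2026}-type reasoning at the single-round level and compare both quantities with the common cq-state.
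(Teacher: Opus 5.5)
Your arguments for \cref{eq:lemma_constraint_preserved} and \cref{eq:announcement_identity} are correct and coincide with the paper's. The entropy part has a genuine gap: it rests on a false premise and runs in the wrong direction.

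In $\widetilde{\nu}$, the map $\purFunc$ purifies the whole Alice--Bob state $\marginalMap_j\circ\srcMap_j[\sigma^{(j)}_{\Amarg_j\Aprime'_j}]$. But $V_{\marginalMap_j}V_{\srcMap_j}\ket{\sigma^{(j)}}$ is already a pure state on $\Amarg_j\BmeasSquash_j E_j^{\mathrm{src}}E_j^{\marginalMap}$. So $\evePurReg$ in $\widetilde{\nu}$ is equivalent, up to an isometry on the purifying system, to $E_j^{\mathrm{src}}E_j^{\marginalMap}$. The Stinespring environment of $\srcMap_j$ \emph{is} given to Eve, contrary to what you assert, so Eve holds more in $\widetilde{\nu}$, not less.

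If $\widetilde{\nu}$ were obtainable from $\nu'$ by a channel on Eve's side, data processing would give $\renyiup(\secretReg_j|\eveCopyReg_j\evePurReg)_{\widetilde{\nu}_{|\mathtt{gen}}}\geq\renyiup(\secretReg_j|\eveCopyReg_j\evePurReg)_{\nu'_{|\mathtt{gen}}}$. That is the reverse of \cref{eq:lemma_entropy_inequality}, and it is false in general. Take the tagging source map of \cref{lem:tagging_source_map} with an $\AclassicalVal$-independent discarded block: the flag $\ket{\AclassicalVal}$ ends up in $E_j^{\mathrm{src}}$. So in $\widetilde{\nu}$ Eve learns $\AclassicalVal$ on flagged rounds, while in $\nu'$ that block carries no information about $\AclassicalVal$.

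Folding $\Ashield_j$ into Eve's register does not rescue this. It only lowers the $\nu'$ side, leaving two quantities that both dominate the same smaller number, with no comparison between them. Your Uhlmann step is also not an Eve-side operation. The isometries relating $\ket{\rho^{(j)}_\AclassicalVal}$ to $V_{\srcMap_j}\ket{\sigma^{(j)}_\AclassicalVal}$ depend on $\AclassicalVal$. They therefore act on Alice's shield, controlled on $\Ameas_j$, and part of their output ($E_j^{\mathrm{src}}$) lands with Eve. This is exactly where her extra information comes from.

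The missing idea is to use the Stinespring dilation of $\srcMap_j$ directly rather than Uhlmann. With $V_{\widetilde{\marginalMap}_j}\coloneqq V_{\marginalMap_j}V_{\srcMap_j}$ one has $\Tr_{E_j^{\mathrm{src}}}[V_{\widetilde{\marginalMap}_j}\sigma^{(j)}_\AclassicalVal V_{\widetilde{\marginalMap}_j}^\dagger]=V_{\marginalMap_j}\rho^{(j)}_\AclassicalVal V_{\marginalMap_j}^\dagger$ for every $\AclassicalVal$. After Alice measures $\Ameas_j$ and $\gMapVirt[j]$ discards the shields, this gives $\nu'=\Tr_{E_j^{\mathrm{src}}}[\widetilde{\nu}]$, with $E_j^{\mathrm{src}}$ part of the purifying register of $\widetilde{\nu}$. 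Conditioning on the generation event commutes with this partial trace, since the event lives on the classical announcement registers and has the same probability in both cases. Data processing under discarding $E_j^{\mathrm{src}}$ then yields \cref{eq:lemma_entropy_inequality} in the stated direction.
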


\begin{proof}
    \cref{eq:lemma_constraint_preserved} follows because the adjoint of a completely positive map preserves positivity. The states on Bob's register satisfy
    \begin{equation}
    \label{eq:attack_relation_virtual}
        \widetilde{\marginalMap}_j\big[(\sigma_\AclassicalVal^{(j)})_{\Aprime'_j}\big] = \marginalMap_j\circ\srcMap_j\big[(\sigma_\AclassicalVal^{(j)})_{\Aprime'_j}\big] = \marginalMap_j\big[(\rho_\AclassicalVal^{(j)})_{\Aprime_j}\big]
    \end{equation}
    for every $\AclassicalVal\in\AclassicalAlph$. Since the two source-replaced states have the same setting choice probabilities, and following \cref{eq:attack_relation_virtual}, Bob receives the same state, the Alice-Bob states (after Alice's measurement) are identical and produce the same distribution on $\varDecReg_j$, from which \cref{eq:announcement_identity} follows. 
    
    The argument for \cref{eq:lemma_entropy_inequality} is similar to \cite[Theorem~18]{kamin_renyi_2025}. Let $V_{\marginalMap_j}:\Aprime_j\to\BmeasSquash_jE_j^{\marginalMap}$ be a Stinespring isometry of $\marginalMap_j$, $V_{\srcMap_j}\colon\Aprime'_j\to\Aprime_jE_j^{\mathrm{src}}$ be a Stinespring isometry of $\srcMap_j$, and define $V_{\widetilde{\marginalMap}_j}\coloneqq V_{\marginalMap_j}V_{\srcMap_j}$.
    For every $\AclassicalVal\in\AclassicalAlph$,
    \begin{equation}
    \label{eq:trace_relation_source_map_meat}
        \Tr_{E_j^{\mathrm{src}}}\Big[V_{\widetilde{\marginalMap}_j}(\sigma_\AclassicalVal^{(j)})_{\Aprime'_j}V_{\widetilde{\marginalMap}_j}^\dagger\Big] = V_{\marginalMap_j}(\rho_\AclassicalVal^{(j)})_{\Aprime_j}V_{\marginalMap_j}^\dagger
    \end{equation}
    is the state Bob and Eve share. Therefore, after Alice measures the marginal $\Amarg_j$, we have
    \begin{equation}
        \Tr_{E_j^{\mathrm{src}}}\left[\sum_{\AclassicalVal\in\AclassicalAlph} \prob_{\Aclassical_j}(\AclassicalVal)\ketbra{\AclassicalVal}{\AclassicalVal}_{\Aclassical_j} \otimes V_{\widetilde{\marginalMap}_j}(\sigma_\AclassicalVal^{(j)})_{\Aprime'_j}V_{\widetilde{\marginalMap}_j}^\dagger \right] = \sum_{\AclassicalVal\in\AclassicalAlph} \prob_{\Aclassical_j}(\AclassicalVal)\ketbra{\AclassicalVal}{\AclassicalVal}_{\Aclassical_j} \otimes V_{\marginalMap_j}(\rho_\AclassicalVal^{(j)})_{\Aprime_j}V_{\marginalMap_j}^\dagger
    \end{equation}
    and as the remaining protocol steps only act on Alice and Bob's registers, the corresponding protocol outputs are related by
    \begin{equation}
        \nu' = \Tr_{E_j^{\mathrm{src}}}\big[\widetilde{\nu}\big]\,.
    \end{equation}
    Giving $E_j^{\mathrm{src}}$ to Eve and applying data processing \cite[Corollary~5.5]{tomamichel_quantum_2016} gives \cref{eq:lemma_entropy_inequality}, where $E_j^{\mathrm{src}}$ is included in the purifying register $\evePurReg$ of the output $\widetilde{\nu}$. 
\end{proof}

We now apply \cref{lem:source_map_restricted_attacks} to the source map $\srcMap_j^{\mathrm{tot}}$ in \cref{eq:source_map_composition_block} to lower bound \cref{eq:block_opti_middle_step}. Let $\setAliceMarginalsVirt[j]$ denote the set of possible source-replaced marginals corresponding to the virtual signal states in \cref{eq:final_virtual_state_imperfect_block}. For every $\marginalMap_j\in\marginalSetVirt_j$, \cref{lem:source_map_restricted_attacks} shows that the induced marginal channel $\widetilde{\marginalMap}_j \coloneqq \marginalMap_j\circ\srcMap_j^{\mathrm{tot}}$ belongs to $\marginalSetTilde_j$ defined as in \cref{eq:marginal_definition_fss_noise}, i.e. it satisfies the same constraints as $\marginalSetVirt_j$. Then, following \cref{eq:announcement_identity,eq:lemma_entropy_inequality}, for every fixed $\nu_{\varDecReg_j}$ we have
\begin{align}
\label{eq:source_map_infimum_block}
    \inf_{\substack{\genDensity_{\Amarg_j\BmeasSquash_j}\in\stateSet'_j(\setAliceMarginalsPrime[j],\marginalSetVirt_j)\\ \frac{1}{2}\norm{\nu_{\varDecReg_j}-\nu'_{\varDecReg_j}}_1\leq\epsilon^{\mathrm{full}}_j}}\renyiup(\secretReg_j|\eveCopyReg_j\evePurReg)_{\nu'_{|\mathtt{gen}}} \geq\inf_{\substack{\genDensity_{\Amarg_j\BmeasSquash_j}\in\stateSet'_j(\setAliceMarginalsVirt[j],\marginalSetTilde_j)\\ \frac{1}{2}\norm{\nu_{\varDecReg_j}-\nu'_{\varDecReg_j}}_1\leq\epsilon^{\mathrm{full}}_j}}\renyiup(\secretReg_j|\eveCopyReg_j\evePurReg)_{\nu'_{|\mathtt{gen}}} 
\end{align}
after enlarging the set on the r.h.s to include all possible output states in $\stateSet'_j\big(\setAliceMarginalsVirt[j],\marginalSetTilde_j\big)$ (and not just the attacks restricted to applying the source map first) which satisfy the constraint
\begin{equation}
\label{eq:virtual_announcement_trace_distance}
    \frac{1}{2}\norm{\nu_{\varDecReg_j}-\nu'_{\varDecReg_j}}_1\leq\epsilon_j^{\mathrm{full}}\,,
\end{equation}
where we recall that $\nu' = \gMapVirt[j]\big[\purFunc(\genDensity_{\Amarg_j\BmeasSquash_j})\big]\,$. Finally, applying \cref{eq:source_map_infimum_block} and taking the infimum over $\nu_{\varDecReg_j}$ directly yields a lower bound on \cref{eq:block_opti_middle_step}.

\subsection{Relaxed decoy constraints}
\label{eq:relaxed_decoy_block}

\noindent Before stating the final optimization problem, we derive the relaxed decoy constraints analogously to \cref{ap:relaxed_decoy}. In contrast to \cref{eq:overlap_target_intensity_leakage}, due to the non-block-diagonal source, the target states defined in \cref{eq:target_eigenstate} cannot in general be assumed to be independent of the intensity choice. Thus, even for the same encoding choice, the target states corresponding to different intensity choices may differ. The triangle inequality for the purified distance, together with \cref{eq:final_virtual_eigenstate_imperfect_block}, implies that
\begin{equation}
\label{eq:relaxed_decoy_bound_imperfect_block}
    \left|\braket{\AstateVirt_{\AclassicalVal_a\AclassicalVal_\mu,\blockVar}^{(j)}} {\AstateVirt_{\AclassicalVal_a\AclassicalVal_{\mu'},\blockVar}^{(j)}}\right|^2 \geq 1 -\tilde{\zeta}_{\AclassicalVal_a,\AclassicalVal_\mu,\AclassicalVal_{\mu'},\blockVar,j}\,,
\end{equation}
for all $\AclassicalVal_a\in\tilde{\AclassicalAlph}$, $\AclassicalVal_\mu,\AclassicalVal_{\mu'}\in\intensityAlph$, and $\blockVar\leq\tagCutoff$, where we define
\begin{equation}
    \tilde{\zeta}_{\AclassicalVal_a,\AclassicalVal_\mu,\AclassicalVal_{\mu'},\blockVar,j}\coloneqq\left(\sqrt{\srcImpBnd[\AclassicalVal_a,\AclassicalVal_\mu,\blockVar,j]}+\sqrt{1-\left|\bra{\tilde\omega_{\AclassicalVal_a,\AclassicalVal_\mu,\blockVar}^{(j)}}\ket{\tilde\omega_{\AclassicalVal_a,\AclassicalVal_{\mu'},\blockVar}^{(j)}}\right|^2}+\sqrt{\srcImpBnd[\AclassicalVal_a,\AclassicalVal_{\mu'},\blockVar,j]}\right)^2
\end{equation}
for convenience. Compared with \cref{eq:fidelity_relaxed_decoy}, the second term accounts for the possible dependence of the target states on the intensity choice. The overlap $\left|\bra{\tilde\omega_{\AclassicalVal_a,\AclassicalVal_\mu,\blockVar}^{(j)}}\ket{\tilde\omega_{\AclassicalVal_a,\AclassicalVal_{\mu'},\blockVar}^{(j)}}\right|^2$ can be computed directly from \cref{eq:target_eigenstate}, since the coefficients of the target states are known from the diagonalization. Consequently, the relaxed decoy constraints derived in \cref{ap:relaxed_decoy} remain valid with the bound $\zeta_{\AclassicalVal_a,\AclassicalVal_\mu,\AclassicalVal_{\mu'},\blockVar,j}$ replaced by $\tilde{\zeta}_{\AclassicalVal_a,\AclassicalVal_\mu,\AclassicalVal_{\mu'},\blockVar,j}$.

\subsection{Security statement and convex optimization problem}
\label{ap:convex_optimization_problem_block}

\noindent We now combine the previous argument to obtain a security statement and the corresponding convex optimization problem for imperfectly block-diagonal sources.

\begin{theorem}[Security with an imperfectly block-diagonal source and imperfectly characterized devices]
\label{th:convex_opti_numerics_imperfect_block}
Consider the \nameref{prot:generic_qkd_protocol} given by $\big\{\big\{\bar{\Astate}^{(j)}_{\Aclassical_j\Aprime_j},\big\{\Bpovmel[j]\big\}_{\BclassicalVal\in\BclassicalAlph},\annKeyMap[j]\big\}_{j=1}^{\totRounds},\ppMap\big\}$, where Alice's signal states satisfy the source model from \cref{ap:source_model_block} and \cref{as:block_diagonal_detector,as:imperfectly_characterized_detector} are satisfied on the detector side. Let $W^{\Bmeas_j}$ be an outcome and $\lambdaMin>0$ satisfying $W^{\Bmeas_j}\geq\lambdaMin\proj_{>\fssCutoff}^{\Bmeas_j}$, where $\proj_{>\fssCutoff}^{\Bmeas_j}$ is the projector onto the subspace with more than $\fssCutoff$ photons. For each round $j$ and every value $\varDecVal_1^{j-1}$, let $\ftradeoff{1}{j-1}$ be a tradeoff function on the register $\varDecReg_j$. Then the protocol is $(\epsSecr+\epsCor)$-secure for all $\alpha\in(1,2)$ if the final key length $\keyl$ satisfies
\begin{equation}
\keyl(\varDecVal_1^\totRounds)=\max\left\{0,\left\lfloor\fFull(\varDecVal_1^\totRounds)-\ECcost(\varDecVal_1^\totRounds)-\left\lceil\log\left(\frac{1}{\epsCor}\right)\right\rceil-\frac{\alpha}{\alpha-1}\log\left(\frac{1}{\epsSecr}\right)+2\right\rfloor\right\}\,,
\end{equation}
where
\begin{equation}
\fFull(\varDecVal_1^\totRounds)\coloneqq\sum_{j=1}^\totRounds\left(\ftradeoff{1}{j-1}(\varDecVal_j)+\kappaQKDVirt_j\left(\ftradeoff{1}{j-1},\setAliceMarginalsVirt[j],\gMapVirt[j]\right)\right)\,,
\end{equation}
and
\begin{equation}
\label{eq:optimization_imperfect_block_devices}
\kappaQKDVirt_j\left(\ftradeoff{1}{j-1},\setAliceMarginalsVirt[j],\gMapVirt[j]\right)\leq\inf_{\nu\in\stateSet_j(\setAliceMarginalsBar[j],\setAttackChVirt_j)}\fRenyiUpEnt{1}{j-1}(\secretReg_j|\eveCopyReg_j\varDecReg_j\Ereg_j\widetilde{E})_\nu\,.
\end{equation}
The quantity $\kappaQKDVirt_j\left(\ftradeoff{1}{j-1},\setAliceMarginalsVirt[j],\gMapVirt[j]\right)$ may be chosen according to the convex optimization problem described in \cref{fig:opti_problem_imperfect_block}. Here, the block weights $\omega_{\blockVar|\AclassicalVal,j}$ are given by \cref{eq:diagonalization_nice_state}, the target states $\ket{\tilde{\omega}_{\AclassicalVal,\blockVar}^{(j)}}$ are given by \cref{eq:target_eigenstate}, the source fidelity bounds $\srcImpBnd[\AclassicalVal,\blockVar,j]$ are given by \cref{eq:fidelity_bound_eigenstates}, and the deviations $\epsilon_j^{\mathrm{gen}}$ and $\epsilon_j^{\mathrm{full}}$ are given by \cref{eq:trace_distance_gen,eq:trace_distance_full}. We let $\characCutoff\leq\tagCutoff\leq\phaseCutoff$ be fixed cutoffs and use the shorthand
\begin{equation}
\prob_{\Aclassical_j,\blockVarReg_j}(\AclassicalVal,\blockVar)\coloneqq\prob_{\Aclassical_j}(\AclassicalVal)\omega_{\blockVar|\AclassicalVal,j}\,.
\end{equation}
Furthermore, $\renyiMblock(\nu'_{|\mathtt{gen}})$ is defined as in \cref{th:convex_opti_numerics_imperfect}, with $\nu$ replaced by $\nu'$. The state $\bar{\Astate}^{(j)}_{\Amarg_j\Aprime_j}$ denotes the source-replaced state in the $j$th round corresponding to $\bar{\Astate}^{(j)}_{\Aclassical_j\Aprime_j}$, with marginal contained in $\setAliceMarginalsBar[j]$, and $\widetilde{E}$ purifies $\Amarg_j\Ereg_{j-1}$. The set $\setAliceMarginalsVirt[j]$ contains the source-replaced marginals corresponding to the virtual signal states in \cref{eq:final_virtual_state_imperfect_block}. The protocol maps $\{\gMapVirt[j]\}_j$ from the \nameref{prot:entanglement_qkd_protocol} are given by the squashed POVMs $\{\BpovmelTarg[j]\}_{j,\BclassicalVal\in\BclassicalAlph}$ from \cref{eq:def_new_virtual_povm_elements}. Finally, $\squashMap_j^\dagger\big[\widetilde{W}^{\Bmeas_j'}\big]=W^{\Bmeas_j}$, where $\squashMap_j$ is given by \cref{eq:explicit_noise_channel_construction}.
\end{theorem}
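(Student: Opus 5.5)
The proof chains together the reductions already established in \cref{sec:imperfect_block_diagonal}, mirroring the proof of \cref{th:security_imperfect_devices}. First, \cref{lem:combination_FSS_noise_channel} together with \cref{lem:epsilon_security_squashing} (using \cref{as:block_diagonal_detector,as:imperfectly_characterized_detector}) reduces security of the original protocol to security of the protocol with Alice's original signal states $\bar{\Astate}^{(j)}_{\Aclassical_j\Aprime_j}$, squashed POVMs $\BpovmelTarg[j]$, and Eve restricted to $\setAttackChVirt_j$. Next I apply the MEAT via \cite[Theorem~8.2 and Definition~8.3]{tupkary_rigorous_2026}, in the attack-channel formulation, to this protocol. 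Correctness is handled by the error-verification step exactly as in \cite[Lemma~8.6]{tupkary_rigorous_2026}. This reduction justifies the key-length formula with $\fFull$, provided each $\kappaQKDVirt_j$ lower bounds the single-round infimum in \cref{eq:optimization_imperfect_block_devices}. Because the true marginal is unknown, I lower bound by taking the infimum over $\setAliceMarginalsBar[j]$, as in \cref{eq:lower_bound_with_marginal_set}. The infinite-dimensionality issue is deferred to \cite[Theorem~A.1]{tupkary_rigorous_2026}, as in the preceding remark.

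It then remains to show that the optimal value of \cref{fig:opti_problem_imperfect_block} lower bounds this infimum. This follows from the chain of inequalities already derived. \cref{eq:expanded_entropy_imperfect_block} expands the $f$-weighted entropy. The trace-distance bounds \cref{eq:trace_distance_full,eq:trace_distance_gen}, obtained from \cite[Lemma~20]{kamin_renyi_2025} and \cref{eq:bound_total_block}, together with the continuity bound \cref{eq:continuity_bound_block}, replace $\nu$ by $\nu'$ at a cost of $g(\epsilon_j^{\mathrm{gen}})$. The announcement distribution $\nu_{\varDecReg_j}$ becomes a free variable subject to \cref{eq:announcement_trace_distance}. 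Monotonicity of the objective in the entropy term (the outer function is increasing, since $\frac{\alpha}{1-\alpha}<0$ and the exponent has the factor $\frac{1-\alpha}{\alpha}<0$) justifies inserting lower bounds inside. Passing to the marginal formulation with \cite[Lemma~8.8]{tupkary_rigorous_2026} gives \cref{eq:block_opti_middle_step}.

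Next, \cref{lem:source_map_restricted_attacks} applied to $\srcMap_j^{\mathrm{tot}}$ gives \cref{eq:source_map_infimum_block}. This moves the problem to the virtual states \cref{eq:final_virtual_state_imperfect_block}, with the constraint set $\marginalSetTilde_j$ of the form \cref{eq:marginal_definition_fss_noise}. The announcement distribution is preserved by \cref{eq:announcement_identity}, so the trace-distance constraint carries over to the new $\nu'$. The virtual states now have exactly the block-diagonal form of \cref{eq:def_new_virtual_states_alice}, with known weights $\omega_{\blockVar|\AclassicalVal,j}$ and fidelity bounds from \cref{eq:fidelity_bound_eigenstates}. \cref{lem:lower_bound_meat_output_states} therefore relaxes to output states, and all of \cref{ap:numerics} applies verbatim: QND block measurement, conditioning \cref{eq:expand_entropy_block}, the duality reduction, Gram-matrix source constraints (with $\probMatrix$ fixed by the known $\omega$), the detector constraint \cref{eq:detector_constraints}, the yields, and the relaxed decoy constraints with $\zeta$ replaced by $\tilde\zeta$ from \cref{eq:relaxed_decoy_bound_imperfect_block}. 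Convexity follows as in \cref{th:convex_opti_numerics_imperfect}: all new constraints, namely the $\ell_1$ ball on $\nu_{\varDecReg_j}$ and the affine constraints, are convex, and the objective is jointly convex in $(\rho,\nu_{\varDecReg_j})$, since it adds only a constant shift $-g$ and the test-round term is linear in $\nu$ inside the log-sum.

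The main obstacle I expect is the step where infima are exchanged across the continuity bound. The bound \cref{eq:continuity_bound_block} is stated pointwise for each pair $(\nu,\nu')$, while the announcement-distribution term must simultaneously use the same $\nu'$. I would therefore argue carefully that, for each feasible $\nu$, the constructed $\nu'$ together with $\nu_{\varDecReg_j}$ is jointly feasible in \cref{eq:block_opti_middle_step}, so that the combined objective at $\nu$ dominates the objective at this pair. I would also check that the generation-probability factor $1-\ptest^{(j)}$ is unaffected, which holds by \cref{rem:test_gen_decision}.
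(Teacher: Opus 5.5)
Your proposal is correct and follows essentially the same route as the paper. You squash first via \cref{lem:combination_FSS_noise_channel}, apply the MEAT in the attack-channel formulation, replace $\nu$ by $\nu'$ through the continuity bound while treating $\nu_{\varDecReg_j}$ as a free variable under the trace-distance constraint, pass to marginals with \cite[Lemma~8.8]{tupkary_rigorous_2026}, and use \cref{lem:source_map_restricted_attacks} for $\srcMap_j^{\mathrm{tot}}$ to reach block-diagonal virtual states, where \cref{ap:numerics} applies with $\zeta$ replaced by $\tilde\zeta$; your explicit check that each feasible $\nu$ yields a jointly feasible pair $(\nu',\nu_{\varDecReg_j})$ spells out a step the paper's short proof leaves implicit.
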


\begin{proof}
    The proof for $\epsSecu$-security follows the same argument as \cref{th:security_imperfect_devices}. The right-hand side of \cref{eq:optimization_imperfect_block_devices} is lower bounded by \cref{eq:block_opti_middle_step}. Applying \cref{eq:source_map_infimum_block} for each fixed $\nu_{\varDecReg_j}$ and then minimizing over $\nu_{\varDecReg_j}$ yields a lower bound in terms of the virtual states in \cref{eq:final_virtual_state_imperfect_block}, subject to the announcement constraint in \cref{eq:virtual_announcement_trace_distance}.
    
    The virtual states in \cref{eq:final_virtual_state_imperfect_block} have the same block-diagonal form as the states considered in \cref{th:convex_opti_numerics_imperfect}. The source, detector, and statistics constraints therefore follow by the same argument, with the photon-number probabilities replaced by the exact block weights $\omega_{\blockVar|\AclassicalVal,j}$ and with the target states and source fidelity bounds given by \cref{eq:target_eigenstate,eq:fidelity_bound_eigenstates}.
    
    The relaxed decoy constraints follow from \cref{eq:relaxed_decoy_bound_imperfect_block} by the same data-processing and measurement arguments as in \cref{ap:relaxed_decoy}, with the bound $\zeta_{\AclassicalVal_a,\AclassicalVal_\mu,\AclassicalVal_{\mu'},\blockVar,j}$ replaced by $\tilde{\zeta}_{\AclassicalVal_a,\AclassicalVal_\mu,\AclassicalVal_{\mu'},\blockVar,j}$. Convexity follows as in \cref{th:convex_opti_numerics_imperfect}. Indeed, the additional statistics constraint is a norm inequality, while the modified source and decoy constraints are affine and positive-semidefinite constraints.
\end{proof}

\begin{figure}[!htbp]
    \centering
    \caption{Convex optimization problem for \cref{th:convex_opti_numerics_imperfect_block}.}
    \vspace*{-0.2cm}
    \label{fig:opti_problem_imperfect_block}
\begin{constraintblock}{Objective}
\[
\begin{aligned}
    &\mathrm{minimize}\quad\frac{\alpha}{1-\alpha}\log\left((1-p_\mathtt{test}^{(j)})2^{\frac{1-\alpha}{\alpha}\left(\renyiMblock(\nu'_{|\mathtt{gen}})-g(\epsilon^{\mathrm{gen}}_j)-f(\varDecGen)\right)}+\sum_{\varDecVal\in\varDecAlph_\mathtt{test}}\nu_{\varDecReg_j}(\varDecVal)\,2^{\frac{\alpha-1}{\alpha}f(\varDecVal)}\right)
\end{aligned}
\]
\end{constraintblock}
\begin{constraintblock}{Optimization variables}
\[
\begin{aligned}
    &\{\rho_{\Ameas_j\BmeasSquash_j\land\Ashield_j=\blockVar}\in\setDensity_\leq(\Ameas_j\BmeasSquash_j)\}_{\blockVar=0}^{\characCutoff} &\{\gramMatrix^{(\blockVar,j)}\in\setPos(2|\AclassicalAlph|)\}_{\blockVar\in\{0,\ldots,\characCutoff\}} &\qquad\nu_{\varDecReg_j}\in\probSimplex_{|\varDecAlph|} \\
    &\{\rho_{\BmeasSquash_j\land\Ashield_j=\AclassicalVal}\in\setDensity_\leq(\BmeasSquash_j)\}_{\AclassicalVal\in\AclassicalAlph} &\{\yield^{\AclassicalVal,\blockVar}\in\probSimplex_{|\BclassicalAlph|}\}_{\AclassicalVal\in\AclassicalAlph,\blockVar\in\{\characCutoff+1,\ldots,\tagCutoff\}} &
\end{aligned}
\]
\end{constraintblock}
\begin{constraintblock}{Source constraints}
\[
\begin{aligned}
    &\Tr[\genDensity_{\BmeasSquash_j\land\Ashield_j=\AclassicalVal}]=\prob_{\Aclassical_j}(\AclassicalVal)\left(1-\sum_{\blockVar=0}^{\tagCutoff}\omega_{\blockVar|\AclassicalVal,j}\right) &\forall\AclassicalVal\in\AclassicalAlph \\
    &\frac{\bra{\AclassicalVal'}_{\Ameas_j}\genDensity_{\Ameas_j\land\Ashield_j=\blockVar}\ket{\AclassicalVal}_{\Ameas_j}}{\sqrt{\prob_{\Aclassical_j}(\AclassicalVal)\prob_{\Aclassical_j}(\AclassicalVal')}}=\sqrt{\left(1-\srcImpBnd[\AclassicalVal,\blockVar,j]\right)\left(1-\srcImpBnd[\AclassicalVal',\blockVar,j]\right)}\gramMatrix^{(\blockVar,j)}_{\AclassicalVal,\AclassicalVal'}+\sqrt{\left(1-\srcImpBnd[\AclassicalVal,\blockVar,j]\right)\srcImpBnd[\AclassicalVal',\blockVar,j]}\gramMatrix^{(\blockVar,j)}_{\AclassicalVal,\AclassicalVal'+|\AclassicalAlph|} \hspace*{-5cm}\nonumber \\
    &+\sqrt{\srcImpBnd[\AclassicalVal,\blockVar,j]\left(1-\srcImpBnd[\AclassicalVal',\blockVar,j]\right)}\gramMatrix^{(\blockVar,j)}_{\AclassicalVal+|\AclassicalAlph|,\AclassicalVal'}+\sqrt{\srcImpBnd[\AclassicalVal,\blockVar,j]\srcImpBnd[\AclassicalVal',\blockVar,j]}\gramMatrix^{(\blockVar,j)}_{\AclassicalVal+|\AclassicalAlph|,\AclassicalVal'+|\AclassicalAlph|} &\forall\AclassicalVal,\AclassicalVal'\in\AclassicalAlph,\blockVar\in\{0,\ldots,\characCutoff\}\nonumber \\
    &\gramMatrix^{(\blockVar,j)}_{\AclassicalVal,\AclassicalVal'}=\sqrt{\omega_{\blockVar|\AclassicalVal,j}\omega_{\blockVar|\AclassicalVal',j}}\,\bra{\tilde\omega_{\AclassicalVal,\blockVar}^{(j)}}\ket{\tilde\omega_{\AclassicalVal',\blockVar}^{(j)}} &\forall\AclassicalVal,\AclassicalVal'\in\AclassicalAlph,\AclassicalVal\neq\AclassicalVal',\blockVar\in\{0,\ldots,\characCutoff\}\nonumber \\
    &\gramMatrix^{(\blockVar,j)}_{\AclassicalVal,\AclassicalVal}=\gramMatrix^{(\blockVar,j)}_{\AclassicalVal+|\AclassicalAlph|,\AclassicalVal+|\AclassicalAlph|}=\omega_{\blockVar|\AclassicalVal,j} &\forall\AclassicalVal\in\AclassicalAlph,\blockVar\in\{0,\ldots,\characCutoff\}\nonumber \\
    &\gramMatrix^{(\blockVar,j)}_{\AclassicalVal+|\AclassicalAlph|,\AclassicalVal}=\gramMatrix^{(\blockVar,j)}_{\AclassicalVal,\AclassicalVal+|\AclassicalAlph|}=0 &\forall\AclassicalVal\in\AclassicalAlph,\blockVar\in\{0,\ldots,\characCutoff\}
\end{aligned}
\]
\end{constraintblock}
\begin{constraintblock}{Detector constraints}
\[
\begin{aligned}
    &\sum_{\blockVar=0}^{\characCutoff}\Tr\left[\widetilde{W}^{\Bmeas_j'}\rho_{\BmeasSquash_j\land\Ashield_j=\blockVar}\right]+\sum_{\blockVar=\characCutoff+1}^{\tagCutoff}\sum_{\AclassicalVal\in\AclassicalAlph}\prob_{\Aclassical_j,\blockVarReg_j}(\AclassicalVal,\blockVar)(\yield^{\AclassicalVal,\blockVar})_{\BclassicalVal=o}+\sum_{\AclassicalVal\in\AclassicalAlph}\Tr\left[\widetilde{W}^{\Bmeas_j'}\rho_{\BmeasSquash_j\land\Ashield_j=\AclassicalVal}\right]\nonumber \\
    &\geq\lambdaMin\left(1-\sum_{\blockVar=0}^{\fssCutoff}\frac{1}{1-\dtImpBnd[\blockVar,j]}\left(\sum_{\blockVar'=0}^{\characCutoff}\Tr\left[\proj_\blockVar^{\BmeasSquash_j}\rho_{\BmeasSquash_j\land\Ashield_j=\blockVar'}\right]+\sum_{\AclassicalVal\in\AclassicalAlph}\Tr\left[\proj_\blockVar^{\BmeasSquash_j}\rho_{\BmeasSquash_j\land\Ashield_j=\AclassicalVal}\right]\right)-\sum_{\blockVar'=\characCutoff+1}^{\tagCutoff}\sum_{\AclassicalVal\in\AclassicalAlph}\frac{\prob_{\Aclassical_j,\blockVarReg_j}(\AclassicalVal,\blockVar')}{1-\dtImpBndMax[j]}\right)
\end{aligned}
\]
\end{constraintblock}
\begin{constraintblock}{Statistics constraints}
\[
\begin{aligned}
    \frac{1}{2}\norm{\nu_{\varDecReg_j}-\nu'_{\varDecReg_j}}_1\leq\epsilon^{\mathrm{full}}_j
\end{aligned}
\]
\[
\begin{aligned}
    \nu'_{\varDecReg_j}\coloneqq\sum_{\bar c\in\interAlph}\sum_{\substack{\AclassicalVal\in\AclassicalAlph\\\BclassicalVal\in\BclassicalAlph}}\Bigg(&\annFunc[j](\bar c|\AclassicalVal,\BclassicalVal)\Bigg(\sum_{\blockVar=0}^{\characCutoff}\Tr[\BpovmelTarg[j]\genDensity_{\BmeasSquash_j\land\Aclassical_j=\AclassicalVal\land\Ashield_j=\blockVar}] \\
    &+\sum_{\blockVar=\characCutoff+1}^{\tagCutoff}\prob_{\Aclassical_j,\blockVarReg_j}(\AclassicalVal,\blockVar)(\yield^{\AclassicalVal,\blockVar})_{\BclassicalVal}+\Tr[\BpovmelTarg[j]\genDensity_{\BmeasSquash_j\land\Ashield_j=\AclassicalVal}]\Bigg)\unitStatsVec[\bar c]\Bigg)
\end{aligned}
\]
\end{constraintblock}

\begin{constraintblock}{Decoy constraints}
\[
\begin{aligned}
    & \left|(\yield^{\AclassicalVal_a,\AclassicalVal_\mu,\blockVar})_{\BclassicalVal}-(\yield^{\AclassicalVal_a,\AclassicalVal_{\mu'},\blockVar})_{\BclassicalVal}\right|\leq\sqrt{\tilde{\zeta}_{\AclassicalVal_a,\AclassicalVal_\mu,\AclassicalVal_{\mu'},\blockVar,j}} & \hspace*{-4cm}\forall \BclassicalVal\in\BclassicalAlph,\blockVar\in\{\characCutoff+1,\ldots,\tagCutoff\},\AclassicalVal_a\in\tilde{\AclassicalAlph},\AclassicalVal_\mu,\AclassicalVal_{\mu'}\in\intensityAlph \nonumber\\
    & \frac{\Tr\left[\BpovmelTarg[j]\genDensity_{\BmeasSquash_j\land\Aclassical_j=\AclassicalVal_a\AclassicalVal_\mu\land\Ashield_j=\blockVar}\right]}{\prob_{\Aclassical_j,\blockVarReg_j}(\AclassicalVal_a\AclassicalVal_\mu,\blockVar)}\geq\frac{\Tr\left[\BpovmelTarg[j]\genDensity_{\BmeasSquash_j\land\Aclassical_j=\AclassicalVal_a\AclassicalVal_{\mu'}\land\Ashield_j=\blockVar}\right]}{\prob_{\Aclassical_j,\blockVarReg_j}(\AclassicalVal_a\AclassicalVal_{\mu'},\blockVar)}-\sqrt{\tilde{\zeta}_{\AclassicalVal_a,\AclassicalVal_\mu,\AclassicalVal_{\mu'},\blockVar,j}} \\ 
    &\forall \BclassicalVal\in\BclassicalAlph,\blockVar\in\{0,\ldots,\characCutoff\},\AclassicalVal_a\in\tilde{\AclassicalAlph},\AclassicalVal_\mu,\AclassicalVal_{\mu'}\in\intensityAlph \nonumber\\
    & \frac{\Tr\left[\proj_{\blockVar'}^{\BmeasSquash_j}\genDensity_{\BmeasSquash_j\land\Aclassical_j=\AclassicalVal_a\AclassicalVal_\mu\land\Ashield_j=\blockVar}\right]}{\prob_{\Aclassical_j,\blockVarReg_j}(\AclassicalVal_a\AclassicalVal_\mu,\blockVar)}\geq\frac{\Tr\left[\proj_{\blockVar'}^{\BmeasSquash_j}\genDensity_{\BmeasSquash_j\land\Aclassical_j=\AclassicalVal_a\AclassicalVal_{\mu'}\land\Ashield_j=\blockVar}\right]}{\prob_{\Aclassical_j,\blockVarReg_j}(\AclassicalVal_a\AclassicalVal_{\mu'},\blockVar)}-\sqrt{\tilde{\zeta}_{\AclassicalVal_a,\AclassicalVal_\mu,\AclassicalVal_{\mu'},\blockVar,j}} \\
    & \forall \blockVar'\in\{0,\ldots,\fssCutoff\},\blockVar\in\{0,\ldots,\characCutoff\},\AclassicalVal_a\in\tilde{\AclassicalAlph},\AclassicalVal_\mu,\AclassicalVal_{\mu'}\in\intensityAlph
\end{aligned}
\]
\end{constraintblock}

\end{figure}

\subsection{Example: imperfect phase randomization}
\label{sec:imperfect_phase_randomization}
\noindent We illustrate the application of \cref{th:convex_opti_numerics_imperfect_block} for an imperfectly phase-randomized weak-coherent source with intensity fluctuations and imperfectly characterized modes satisfying \cref{eq:mode_characterization_block}. We consider the same detector imperfections as in the main text, i.e. \cref{as:block_diagonal_detector,as:imperfectly_characterized_detector} hold. To apply the theorem, we require bounds on \cref{eq:bound_uncertainty_block,eq:bound_cutoff_block}, which we present in this section. 

Assume that the unknown phase distribution is independent of the intensity fluctuations and has a probability density lower bounded by $\delta^{\mathrm{ph}}_j/(2\pi)$ for some known $\delta^{\mathrm{ph}}_j\in[0,1]$. Following \cite[Sec.~III.A]{nahar_imperfect_2023}, we introduce the model states
\begin{equation}
\label{eq:phase_model_state}
    \big(\Astate_{\AclassicalVal_a\tilde{\AclassicalVal}_\mu}^{\mathrm{mod},(j)}\big)_{\Aprime_j} \coloneqq \delta^{\mathrm{ph}}_j\sum_{n=0}^{\infty}e^{-\tilde{\AclassicalVal}_\mu}\frac{\tilde{\AclassicalVal}_\mu^n}{n!}\ketbra{n_{\AclassicalVal_a}}{n_{\AclassicalVal_a}}_{\Aprime_j}+(1-\delta^{\mathrm{ph}}_j)\ketbra{\sqrt{\tilde{\AclassicalVal}_\mu},\AclassicalVal_a}{\sqrt{\tilde{\AclassicalVal}_\mu},\AclassicalVal_a}_{\Aprime_j}\,,
\end{equation}
where $\ket{n_a}$ denotes an $n$-photon Fock state in the optical mode corresponding to the encoding choice $a$ and $\ket{\sqrt{\mu},a}$ denotes a coherent state with mean photon number $\mu$ in the optical mode corresponding to the encoding choice $a$. 

Then, there exists a source map $\srcMap_j^{\mathrm{phase}}$ mapping the model states in \cref{eq:phase_model_state} to the actual imperfectly phase-randomized states \cite{nahar_imperfect_2023,kamin_renyi_2025} for all $\AclassicalVal\in\AclassicalAlph$. Since the phase and intensity fluctuations are independent, the same source map relates the intensity-averaged states. We may therefore apply \cref{lem:epsilon_security_source_maps} and consider the model states in the following. Let $I_{\AclassicalVal,j}$ denote the r.v. describing the intensity fluctuations (and imperfect characterization) with unknown distribution and support in $[\AclassicalVal_\mu(1-\srcImpBndInt),\AclassicalVal_\mu(1+\srcImpBndInt)]$. We define the intensity-averaged model state
\begin{equation}
    (\bar{\Astate}_{\AclassicalVal_a,\AclassicalVal_\mu}^{\mathrm{mod},(j)})_{\Aprime_j} \coloneqq \int_{\tilde{\AclassicalVal}_\mu\in[\AclassicalVal_\mu(1-\srcImpBndInt),\AclassicalVal_\mu(1+\srcImpBndInt)]}(\Astate_{\AclassicalVal_a\tilde{\AclassicalVal}_\mu}^{\mathrm{mod},(j)})_{\Aprime_j}\,d\prob_{I_{\AclassicalVal,j}}(\tilde{\AclassicalVal}_\mu)\,.
\end{equation}
Choosing the model state at the intensity $\AclassicalVal_\mu$ as the reference state, convexity of the trace distance and the coherent state overlap give
\begin{align}
    \frac{1}{2}\norm{(\bar{\Astate}_{\AclassicalVal_a,\AclassicalVal_\mu}^{\mathrm{mod},(j)})_{\Aprime_j}-(\Astate_{\AclassicalVal_a,\AclassicalVal_\mu}^{\mathrm{mod},(j)})_{\Aprime_j}}_1
    &\leq \sup_{\tilde{\AclassicalVal}_\mu\in[\AclassicalVal_\mu(1-\srcImpBndInt),\AclassicalVal_\mu(1+\srcImpBndInt)]}\sqrt{1-\exp\left[-\left(\sqrt{\tilde{\AclassicalVal}_\mu}-\sqrt{\AclassicalVal_\mu}\right)^2\right]} \\
    &= \sqrt{1-\exp\left[-\AclassicalVal_\mu\left(1-\sqrt{1-\srcImpBndInt}\right)^2\right]} \eqqcolon \epsilon^{\mathrm{int}}_{\AclassicalVal,j}\,.
\label{eq:intensity_uncertainty_bound_coherent_simplified}
\end{align}
For the coherence cutoff in \cref{eq:bound_cutoff_block}, following \cite{nahar_imperfect_2023}, we may choose
\begin{equation}
\label{eq:phase_cutoff_bound}
    \epsilon^{\mathrm{cut}}_{\AclassicalVal,j} \coloneqq (1-\delta^{\mathrm{ph}}_j)\sqrt{W_{\AclassicalVal,j}}\,,
\end{equation}
where we define the weight contained outside the $\leq\phaseCutoff$ subspace
\begin{equation}
    W_{\AclassicalVal,j} \coloneqq 1-e^{-\AclassicalVal_\mu}\sum_{n=0}^{\phaseCutoff}\frac{\AclassicalVal_\mu^n}{n!}\,.
\end{equation}
Consequently, the bounds required in \cref{eq:bound_uncertainty_block,eq:bound_cutoff_block} hold with $\epsilon^\mathrm{coh}_{\AclassicalVal,j}=\epsilon^{\mathrm{int}}_{\AclassicalVal,j}$ and $\epsilon^{\mathrm{cut}}_{\AclassicalVal,j}$ given by \cref{eq:phase_cutoff_bound}, and \cref{th:convex_opti_numerics_imperfect_block} applies.

\begin{remark}[Application to imperfect single-photon sources]
\label{rem:single_photon_sources}
\cref{th:convex_opti_numerics_imperfect_block} also applies to imperfectly characterized single-photon sources. In this case, imperfect characterization of the coherence matrix can be used to derive a bound on \cref{eq:bound_uncertainty_block} with some reference coherence matrix, directly yielding $\epsilon^\mathrm{coh}_{\AclassicalVal,j}$. If the reference coherence matrix is finite-dimensional, it can be diagonalized exactly, and no coherence cutoff is required, i.e. one may choose $\epsilon^{\mathrm{cut}}_{\AclassicalVal,j}=0$.
\end{remark}

\section{Miscellaneous} \label{app:misc}

\noindent Classical registers are represented by Hilbert spaces with computational basis indexed by the corresponding alphabet, as formalized in \cref{def:cq_state} below.

\begin{definition}[Classical-quantum state]
\label{def:cq_state}
    A state $\genDensity_{\genClassReg\genQReg} \in \setDensity_\leq(\genClassReg\genQReg)$ is called a \textit{classical-quantum (cq) state} with respect to the classical register $\genClassReg$ if it can be written as
    \begin{equation}
        \genDensity_{\genClassReg\genQReg} = \sum_{\genClassIndex \in \genClassAlph} \prob_{\genClassReg}(\genClassIndex) |\genClassIndex\rangle\langle\genClassIndex|_{\genClassReg} \otimes \genDensity_{\genQReg|\genClassReg=\genClassIndex}\,,
    \end{equation}
    where $\prob_{\genClassReg}$ is a probability distribution over $\genClassAlph$ and $\genDensity_{\genQReg|\genClassReg=\genClassIndex} \in \setDensity_\leq(\genQReg)$ for all $\genClassIndex \in \genClassAlph$.
\end{definition}

\begin{definition}[Conditioning on classical events]
    Let $\genDensity_{\genClassReg\genQReg} \in \setDensity_\leq(\genClassReg\genQReg)$ be classical on $\genClassReg$ with alphabet $\genClassAlph$ and let $\Omega\subseteq\genClassAlph$ be an event on register $\genClassReg$. We define the corresponding \textit{conditional state} as
    \begin{equation}
        \genDensity_{\genQReg\genClassReg|\Omega} \coloneqq \frac{\Tr[\genDensity_{\genQReg\genClassReg}]}{\Tr[\genDensity_{\genQReg\genClassReg\land\Omega}]} \genDensity_{\genQReg\genClassReg\land\Omega}\,,
    \end{equation}
    with \textit{sub-normalized conditional state}
    \begin{equation}
        \genDensity_{\genQReg\genClassReg\land\Omega} \coloneqq \sum_{\genClassIndex\in\Omega} \prob_{\genClassReg}(\genClassIndex) \ketbra{\genClassIndex}{\genClassIndex}_\genClassReg \otimes \genDensity_{\genQReg|\genClassReg=\genClassIndex}\,.
    \end{equation}
    The operator $\land$ has higher precedence than the operator $|$, i.e. it acts before the operator $|$. An expression such as $\genDensity_{\genQReg\genClassReg|\Omega_1\land \Omega_2}$ is to be interpreted as $\genDensity_{\genQReg\genClassReg|(\Omega_1\land \Omega_2)}$.
\end{definition}

\begin{definition}[Probability simplex]
\label{def:probablity_simplex}
For $d\in\mathbb{N}$, we define the probability simplex of dimension $d$ as
\begin{equation}
    \probSimplex_d \coloneqq \left\{\boldsymbol{p}\in\mathbb{R}^{d} : p_i \geq 0 \text{ for all } i\in\{1,\dots,d\}, \sum_{i=1}^{d} p_i = 1 \right\}.
\end{equation}
\end{definition}

We manipulate Rényi entropies throughout this work as they form a core part of the MEAT framework, and therefore define some of the relevant quantities for convenience below.
\begin{definition}[Minimal quantum Rényi divergence {\protect\cite[Def.~4.8]{tomamichel_quantum_2016}}]
\label{def:min_renyi_div}
    Let $\alpha \in (0, 1) \cup (1,\infty)$, $\genDensity, \genDensityAlt \in \setDensity_\leq(\genQReg)$ with $\Tr[\genDensity]\neq 0$. The \textit{minimal quantum Rényi divergence} of $\genDensity$ with $\genDensityAlt$ is defined as
    \begin{align}
        \renyidiv(\genDensity\Vert \genDensityAlt) \coloneqq \frac{1}{\alpha-1} \log\frac{\Tr\left[\right(\genDensityAlt^{\frac{1-\alpha}{2\alpha}}\genDensity \genDensityAlt^{\frac{1-\alpha}{2\alpha}}\left)^\alpha \right]}{\Tr[\genDensity]}
    \end{align}
    if $(\alpha < 1 \land \genDensity \not\perp \genDensityAlt) \lor \mathrm{supp}(\genDensity) \subseteq \mathrm{supp}(\genDensityAlt)$ and $\infty$ otherwise.
\end{definition}

\begin{definition}[Quantum conditional Rényi entropy {\protect\cite[Def.~5.2]{tomamichel_quantum_2016}}]
\label{def:condtional_renyi_entropy}
    Let $\alpha \geq 0$. The \textit{quantum conditional Rényi entropies} of $\genQReg$ given $\genQReg'$ for a state $\genDensity\in\setDensity_=(\genQReg\genQReg')$ are defined as
    \begin{align}
        \renyidown(\genQReg|\genQReg')_\genDensity &\coloneqq - \renyidiv(\genDensity_{\genQReg\genQReg'}\Vert \identity_\genQReg \otimes \genDensity_{\genQReg'})\\
        \renyiup(\genQReg|\genQReg')_\genDensity &\coloneqq\sup_{\genDensityAlt_{\genQReg'} \in \setDensity_=(\genQReg')} - \renyidiv(\genDensity_{\genQReg\genQReg'}\Vert \identity_\genQReg \otimes \genDensityAlt_{\genQReg'})\,.
    \end{align}
\end{definition}

\begin{definition}[$f$-weighted Rényi entropies {\protect\cite{inprep_weightentropy,arqand_marginal-constrained_2025}}]
\label{def:f_weighted_renyi_entropy}
    Let $\genDensity_{\genQReg\genQReg'\genClassReg} \in \setDensity_\leq(\genQReg\genQReg'\genClassReg)$ be classical on $\genClassReg$. Let $f:\genClassReg \rightarrow \mathbb{R}$ be a function, which is called a \textit{tradeoff function} in this context. Then, for any $\alpha \in (1, \infty]$, we define the following version of the \textit{$f$-weighted Rényi entropies}
    \begin{align}
        \fRenyiUpGen(\genQReg | \genQReg'\genClassReg)_\genDensity \coloneqq \frac{\alpha}{1 -\alpha}\log\left(\sum_{\genClassIndex\in\genClassAlph} \genDensity_{\genClassReg}(\genClassIndex)2^{\frac{1-\alpha}{\alpha} \left(\renyiup(\genQReg|\genQReg')_{\genDensity_{|\genClassIndex}} - f(\genClassIndex)\right)}\right)\,.
    \end{align}
    where $\genDensity_{\genClassReg}(\genClassIndex)\genDensity_{\genQReg\genQReg'|\genClassReg=\genClassIndex} = \Tr_\genClassReg[\Pi_\genClassIndex \genDensity_{\genQReg\genQReg'\genClassReg} \Pi_\genClassIndex]$ with projectors $\Pi_\genClassIndex = \ketbra{\genClassIndex}{\genClassIndex}_\genClassReg$ and $\genDensity_{\genClassReg}(\genClassIndex)$ denotes the probability distribution on $\genClassReg$ induced by $\genDensity_{\genQReg\genQReg'\genClassReg}$.
\end{definition}

In the \nameref{prot:entanglement_qkd_protocol}, instead of viewing Alice as randomly choosing a setting and preparing the corresponding signal state, we equivalently view her as preparing a single fixed entangled state, sending one half to Bob and keeping the other, and only later measuring her half to determine which setting was chosen. This standard argument is known as the \textit{source-replacement scheme} and formalized below.

\begin{lemma}[Source-replacement scheme {\protect\cite{PhysRevLett.68.557, PhysRevLett.92.217903, tupkary_rigorous_2026}}]
\label{lem:source_replacement_scheme}
    Consider a scenario where an $\totRounds$-round state is prepared as follows
    \begin{align}
        \genDensity_{\genClassReg_1^\totRounds (\genQReg')_1^\totRounds} &= \bigotimes_{j=1}^\totRounds \genDensity^{(j)}_{\genClassReg_j \genQReg_j'} \\
        \genDensity^{(j)}_{\genClassReg_j \genQReg_j'} &= \sum_{\genClassIndex\in \genClassAlph}\prob_{\genClassReg_j}(\genClassIndex)\ketbra{\genClassIndex}{\genClassIndex}_{\genClassReg_j} \otimes (\genDensity_\genClassIndex^{(j)})_{\genQReg'_j}\,,\label{eq:source_replacement_original_form}
    \end{align}
    and the register $(\genQReg')_1^\totRounds$ is sent through a channel. Then, the state $\genDensity_{\genClassReg_1^\totRounds (\genQReg')_1^\totRounds}$ can be obtained by preparing
    \begin{align}
        \genDensity_{\bar \genQReg_1^\totRounds \hat \genQReg_1^\totRounds (\genQReg')_1^\totRounds} &= \bigotimes_{j=1}^\totRounds \genDensity^{(j)}_{\bar \genQReg_j \hat \genQReg_j \genQReg_j'} \\
        \genDensity^{(j)}_{\bar \genQReg_j \hat \genQReg_j \genQReg_j'} &= \sum_{\genClassIndex, \genClassIndex'\in \genClassAlph}\sqrt{\prob_{\genClassReg_j}(\genClassIndex)\prob_{\genClassReg_j}(\genClassIndex')}\ketbra{\genClassIndex}{\genClassIndex'}_{\bar \genQReg_j} \otimes \ketbra{\genDensity_\genClassIndex^{(j)}}{\genDensity_{\genClassIndex'}^{(j)}}_{\hat \genQReg_j \genQReg_j'}\,,
    \end{align}
    where $\ket{\genDensity_{\genClassIndex}^{(j)}}_{\hat \genQReg_j \genQReg_j'}$ is a purification of $(\genDensity_\genClassIndex^{(j)})_{\genQReg'_j}$, $\{\ket{\genClassIndex}_{\bar \genQReg_j}\}_\genClassIndex$ is an orthonormal basis for $\bar \genQReg_j$ and $\hat \genQReg_j$ is called the shield system. Then the register $(\genQReg')_1^\totRounds$ is sent through the channel. Finally, for each $j\in\{1, \ldots,\totRounds\}$, the system $\bar \genQReg_j \hat \genQReg_j$ is measured using the POVM $\big\{\ketbra{\genClassIndex}{\genClassIndex}_{\bar \genQReg_j} \otimes \identity_{\hat \genQReg_j}\big\}$ and the outcome stored in a register $\genClassReg_j$.
\end{lemma}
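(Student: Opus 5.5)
The plan is to verify the source-replacement statement round by round and then take tensor products. Fix a round $j$. For each $\genClassIndex\in\genClassAlph$, choose a purification $\ket{\genDensity_\genClassIndex^{(j)}}_{\hat\genQReg_j\genQReg_j'}$ of $(\genDensity_\genClassIndex^{(j)})_{\genQReg_j'}$; one exists by taking $\hat\genQReg_j$ large enough, for instance a copy of $\genQReg_j'$, with one common shield space for all $\genClassIndex$. Then form the pure state
\begin{equation*}
\ket{\genDensity^{(j)}}=\sum_{\genClassIndex}\sqrt{\prob_{\genClassReg_j}(\genClassIndex)}\ket{\genClassIndex}_{\bar\genQReg_j}\ket{\genDensity_\genClassIndex^{(j)}}_{\hat\genQReg_j\genQReg_j'},
\end{equation*}
whose density operator is exactly the state $\genDensity^{(j)}_{\bar\genQReg_j\hat\genQReg_j\genQReg_j'}$ displayed in the statement.

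The key step is to compute the effect of Alice's measurement. Apply the measure-and-record channel $\genDensity\mapsto\sum_{\genClassIndex}\ketbra{\genClassIndex}{\genClassIndex}_{\genClassReg_j}\otimes\Tr_{\bar\genQReg_j\hat\genQReg_j}[(\ketbra{\genClassIndex}{\genClassIndex}_{\bar\genQReg_j}\otimes\identity_{\hat\genQReg_j})\genDensity]$. The projector kills all off-diagonal terms $\genClassIndex\neq\genClassIndex'$. It leaves $\prob_{\genClassReg_j}(\genClassIndex)\Tr_{\hat\genQReg_j}\ketbra{\genDensity_\genClassIndex^{(j)}}{\genDensity_\genClassIndex^{(j)}}=\prob_{\genClassReg_j}(\genClassIndex)(\genDensity^{(j)}_\genClassIndex)_{\genQReg_j'}$, which is precisely \cref{eq:source_replacement_original_form}.

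Next, handle the channel ordering. The channel acting on $(\genQReg')_1^\totRounds$, which may involve Eve's registers and act jointly across rounds, acts on registers disjoint from $\bar\genQReg_j\hat\genQReg_j$. Alice's measurement acts only on $\bar\genQReg_j\hat\genQReg_j$. Operations on disjoint tensor factors commute: $(\mathcal{E}\otimes\mathrm{id})\circ(\mathrm{id}\otimes\mathcal{N})=(\mathrm{id}\otimes\mathcal{N})\circ(\mathcal{E}\otimes\mathrm{id})$. So measuring after the channel produces the same joint state on $\genClassReg_1^\totRounds$ and the channel output as measuring first. Measuring first gives, by the per-round computation and linearity over the tensor product, the state $\bigotimes_j\genDensity^{(j)}_{\genClassReg_j\genQReg_j'}$ fed into the channel. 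That is the original scenario.

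No step is a real obstacle. The only point needing care is being explicit that the channel may act jointly on all rounds and include ancillas. The commutation argument is therefore stated at the level of the full $\totRounds$-round state, not per round, and the shield system is never touched by the channel.
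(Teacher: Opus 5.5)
Your proof is correct. Purifying every signal state in one common shield space, checking that the projective measurement on $\bar Q_j$ removes the cross terms and returns the cq-state of \cref{eq:source_replacement_original_form}, and commuting Alice's measurement past the channel because they act on disjoint registers is the standard source-replacement argument. The paper gives no proof of this lemma; it cites the literature and describes the same idea informally in the entanglement-based picture, so your argument matches the reasoning the paper relies on.
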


\section{Notation and symbols}
\label{app:notation}

\begin{table}[h]
\centering
\renewcommand{\arraystretch}{1.4}
\begin{tabular}{ll}
\hline
\textbf{Symbol} & \textbf{Meaning} \\
\hline

$\totRounds$ & Total number of protocol rounds \\

$j$ & Protocol round index, $j \in \{1,\ldots,\totRounds\}$ \\

$\attackCh_j\in\setAttackCh_j$ & Eve's attack channel in the $j$th round \\

$\setAttackCh_j \subseteq \cptp(\Ereg_{j-1}, \Bmeas_j \Ereg_j')$ & Eve's set of possible attacks in the $j$th round\\

$\gMapTildeFull_j\in\cptp(\Aclassical_j\Bmeas_j, \secretReg_j\Aclassical_j\BclassicalReg_j\varDecReg_j\eveCopyReg_j)$ & Full protocol map for the modified protocol (\cref{lem:modified_timing_protocol}) \\

$\widetilde{\gMap}_j\in \cptp(\Aclassical_j\Bmeas_j, \secretReg_j \varDecReg_j \widetilde C_j)$ & Protocol map for the modified protocol (\cref{lem:modified_timing_protocol}), given by $\Tr_{\Aclassical_j\BclassicalReg_j}\circ \gMapTildeFull_j$ \\

$\widetilde{\qkdMap}_j\in \cptp(\Ereg_{j-1}\Aclassical_j, \secretReg_j\varDecReg_j \Ereg_j)$ & Full QKD map for the modified protocol (\cref{lem:modified_timing_protocol}), given by $\widetilde{\gMap}_j\circ\attackCh_j$ \\

$\qkdMap_j\in \cptp(\Ereg_{j-1}\Amarg_j, \secretReg_j\varDecReg_j \Ereg_j)$ & Full QKD map for the \nameref{prot:entanglement_qkd_protocol} \\

$\gMap_j\in \cptp(\Amarg_j\Bmeas_j, \secretReg_j \varDecReg_j \widetilde C_j)$ & Protocol map for the \nameref{prot:entanglement_qkd_protocol} \\

$\ppMap\in\cptp(\Aclassical_1^\totRounds\BclassicalReg_1^\totRounds\secretReg_1^\totRounds\varDecReg_1^\totRounds, \AkeyReg\BkeyReg\cppReg\varDecReg_1^\totRounds)$ & Classical post-processing map \\

$\annFunc[j]$ & Stochastic announcement map $\annFunc[j]:\AclassicalAlph\times\BclassicalAlph\rightarrow\probSimplex_{|\varDecAlph|}$ \\

$\keymapFunc[j]$ & Key map \\

$\annKeyMap[j] \in \cptp(\Aclassical_j\BclassicalReg_j, \secretReg_j \widetilde C_j)$ & \makecell[l]{Map representing the combination of the announcements map and key\\ map (\cref{def:tuple_protocol})} \\

$\epsSecu$ & Security parameter \\

$\epsCor$ & Correctness parameter \\

$\epsSecr$ & Secrecy parameter \\

$\keyl(\varDecVal^\totRounds_1)$ & Secure key length function \\

$\ECcost(\varDecVal^\totRounds_1)$ & Bits leaked for error correction \\

$\ptest$ & Test round probability \\

$\srcImpBnd[\AclassicalVal,\blockVar]$ & \makecell[l]{Source fidelity bound for setting choice $\AclassicalVal$ and source photon-number \\ block $\blockVar$ (\cref{as:imperfectly_characterized_source})} \\

$\dtImpBnd[\blockVar]$ & \makecell[l]{Detector POVM deviation for detector photon-number block $\blockVar$ \\(\cref{as:imperfectly_characterized_detector})} \\

\hline
\end{tabular}
\caption{Notation for the QKD protocol.}
\label{tab:protocol_notation}
\end{table}

\begin{table}[]
\centering
\renewcommand{\arraystretch}{1.4}
\begin{tabular}{ll}
\hline
\textbf{Symbol} & \textbf{Meaning} \\
\hline

$\genQReg$ & Generic quantum register \\

$\genClassReg$, $\genClassIndex$, $\genClassAlph$ & Generic classical register, label, and alphabet \\

$\hilbert_\genQReg$ & Hilbert space corresponding to register $\genQReg$ \\

$\cptp(\genQReg, \genQReg')$ & Set of completely positive and trace-preserving maps from $\genQReg$ to $\genQReg'$ \\

$\setDensity_=(\genQReg)$ & Set of density operators on $\genQReg$ \\

$\setDensity_\leq(\genQReg)$ & Set of sub-normalized density operators on $\genQReg$ \\

$\setPos(d)$ & Set of positive semi-definite $d\cross d$ matrices \\

$\probSimplex_d$ & Probability simplex of dimension $d$ (\cref{def:probablity_simplex}) \\

$\srcMap$ & Source map (\cref{def:source_map}) \\

$\squashMap$ & Squashing map (\cref{def:squashing_map}) \\

$\renyidiv$ & Minimal quantum Rényi divergence (\cref{def:min_renyi_div}) \\

$\renyidown$ & Down-arrow sandwiched Rényi entropy (\cref{def:condtional_renyi_entropy}) \\

$\renyiup$ & Up-arrow sandwiched Rényi entropy (\cref{def:condtional_renyi_entropy}) \\

$\ftradeoff{1}{j-1}$ & Tradeoff function for round $j$ \\

$\fFull$ & Global tradeoff function (\cref{th:security_imperfect_devices}) \\

$\fRenyiUpGen$ & $f$-weighted sandwiched Rényi entropy (\cref{def:f_weighted_renyi_entropy})\\

\hline
\end{tabular}
\caption{Generic notation used in this work.}
\label{tab:generic_notation}
\end{table}

\begin{table}[]
\centering
\renewcommand{\arraystretch}{1.4}
\begin{tabular}{ll}
\hline
\textbf{Symbol} & \textbf{Meaning} \\
\hline

$\Aclassical_j$, $\AclassicalVal$, $\AclassicalAlph$ & Alice's setting choice register, value, and alphabet \\

$\prob_{\Aclassical_j}$ & Probability distribution according to which Alice chooses her setting \\

$(\Astate_\AclassicalVal^{(j)})_{\Aprime_j}$ & Alice's signal states for setting choice $\AclassicalVal$ \\

$\Amarg_j$ & Alice's source-replacement marginal register \\

$\Aprime_j$ & Alice's signal state register \\

$\Ameas_j$ & Alice's source-replacement measurement register \\

$\Ashield_j$ & Alice's shield system \\

$\AkeyReg$ & Alice's key register \\

$\Apovmel[j]$ & Alice's POVM element after source-replacement for setting choice $\AclassicalVal$\\

$\BclassicalReg_j$, $\BclassicalVal$, $\BclassicalAlph$ & Bob's measurement outcome register, value, and alphabet \\

$\Bmeas_j$ & Bob's measurement input register \\

$\Bpovmel$ & Bob's POVM element for outcome $\BclassicalVal$\\

$\BpovmelTarg$ & Bob's target POVM element after squashing for outcome $\BclassicalVal$\\

$W^{\Bmeas_j}$ & Observable used for non-preserved subspace weight estimation\\

$\BkeyReg$ & Bob's key register \\

$\Ereg_j$ & Eve's side-information \\

$\eveCopyReg_j$ & Eve's copy of the public announcements \\

$\evePurReg$ & Eve's purifying register \\

$\varDecReg_j$, $\varDecVal$, $\varDecAlph$ & Public announcements register, value, and alphabet \\

$\secretReg_j$, $\secretVal$, $\secretAlph$ & Secret register, value, and alphabet \\ 

$\cppReg$ & Register storing the public announcements during classical post-processing\\

\hline
\end{tabular}
\caption{Registers and operators denoting Alice, Bob and Eve.}
\label{tab:notation_alice_bob_eve}
\end{table}

\makeatletter
\let\addcontentsline\oldaddcontentsline
\makeatother

\end{document}